\documentclass[reqno,12pt,final]{amsart} 
   % AMS-classes:   amsart, amsbook, amsproc - disse classes
   %                loader amsmath, amsgen, amstext, amsbsy, 
   %                amsopn, amsfonts og amsthm automatisk og tager derfor
   %                options til amsmath
   % LaTeX-classes: article, book, report, letter, slides, proc
   %
   % Alle disse classes (og mange andre) er beskrevet paa WWW-siden:
   % <a href="file://localhost/usr/unic/share/texmf/1.0/doc/helpindex.html">The TeX system</a>

%\NeedsTeXFormat{LaTeX2e}[1994/12/01]

%-------------Hvilke pakker vil vi/jeg bruge------------------------------------
%\usepackage[backend=biber, style=numeric]{biblatex}
%HUSK AT SKIFTE BACKEND TIL BIBER I TEXMAKERS INDSTILLINGER
%\bibliography{Bibliotek.bib}

\usepackage[T1]{fontenc}
\usepackage[utf8]{inputenc}

\usepackage{hyperref}
\usepackage[capitalize]{cleveref}

\usepackage{vmargin}
\setpapersize{A4}

\usepackage{amsmath, mathtools} 

\usepackage{amssymb}      

\usepackage{amscd}      

\usepackage{pgfplots}

\usepackage{amsthm}      

\usepackage{graphicx}

\usepackage[english]{babel}

\frenchspacing

\usepackage{microtype}

\usepackage{caption}

\usepackage{subcaption}

\captionsetup{
  font=small,
  labelfont=bf,
  justification=centerlast,
  margin=2em,
}

\captionsetup[subfigure]{
  font=small,
  labelfont=,
  justification=centering
  %justification=centerlast,
  %margin=2em,
}

\pgfplotsset{compat=1.12}
%----------Saetnings strukturer og nummerering af disse-------------------------

%  Hvilke saetninger, lemmaer etc. og hvordan
%  Paa denne maade faar de alle en faelles nummerering:
   \theoremstyle{plain}%default
   \newtheorem{thm}{Theorem}[section]
   \newtheorem{prop}[thm]{Proposition}
   \newtheorem{lem}[thm]{Lemma}  
   \newtheorem{cor}[thm]{Corollary}
   \newtheorem*{thm*}{Theorem}

   \theoremstyle{definition}   
   \newtheorem*{Convention}{Convention}
   
   \newtheorem{defn}[thm]{Definition}
   \newtheorem*{defn*}{Definition}
   \newtheorem{example}[thm]{Example}
   \theoremstyle{remark}
   
   \newtheorem{rem}[thm]{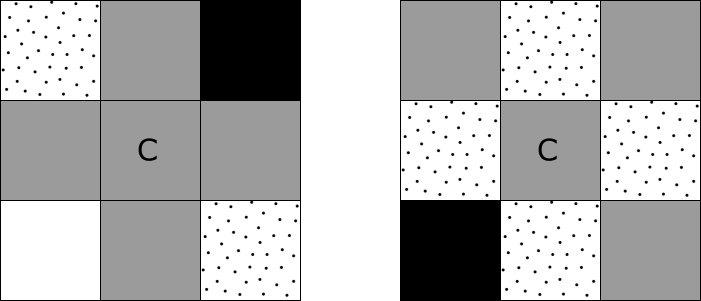}

%   Et par makroer til at referere til disse saetninger etc.
%   Saa kan du skrive: jvf. \refthm{yndlings_label} har vi....

%\providecommand{\abs}[1]{\lvert#1\rvert}
%\providecommand{\norm}[1]{\lVert#1\rVert}    

\usepackage{mdframed,xcolor}

\definecolor{mybgcolor}{gray}{0.8}
\definecolor{myframecolor}{rgb}{.647,.129,.149}

\mdfdefinestyle{mystyle}{
  usetwoside=false,
  skipabove=0.6em plus 0.8em minus 0.2em,
  skipbelow=0.6em plus 0.8em minus 0.2em,
  innerleftmargin=.25em,
  innerrightmargin=0.25em,
  innertopmargin=0.25em,
  innerbottommargin=0.25em,
  leftmargin=-.75em,
  rightmargin=-0em,
  topline=false,
  rightline=false,
  bottomline=false,
  leftline=false,
  backgroundcolor=mybgcolor,
  splittopskip=0.75em,
  splitbottomskip=0.25em,
  innerleftmargin=0.5em,
  leftline=true,
  linecolor=myframecolor,
  linewidth=0.25em,
}

\newmdenv[style=mystyle]{important}

% fyldtekst
\usepackage{lipsum}

%------------------Diverse hints og tricks--------------------------------------
%   \makeindex
%   Hvis der skal laves et index, inkluder da "makeidx".
%   Bemaerk: Der kan IKKE laves index i "article" og "amsart" styles.
%   Dette kan dog omgaas:
%      For at faa indexet printet ud, skal man enten vaelge optionen
%   amsbook eller ogsaa skal idx-filen som genereres af \makeindex
%   compileres som "makeindex fil.idx -p X", hvor X angiver det sidenr
%   hvor indexet skal begynde. Derefter TeX'es indexgeneratorfilen
%   Indexgenerator.tex (se forklaring i denne). Derved faas det oenskede
%   output.
        
        %Lav evt. dato-teksten om, eller undlad helt dato.
       % \date{\today}
        %\thanks{My warmest thanks to my mom}
        %\email{matkt@imf.au.dk}
        %\address{Institut for matematiske fag, Ny Munkegade, 8000 Aarhus C, Denmark}
%--------------------------------------------------------
%\usepackage{enumerate}
\usepackage[shortlabels]{enumitem}
  \definecolor{linkcolour}{rgb}{0,0.2,0.6}
  \definecolor{citecolour}{rgb}{0,0.6,0.2}
  \definecolor{urlcolour} {rgb}{0.8,0,0}
\usepackage[draft]{fixme} % til at skrive \fixme kommentarer til sig
%%                          % selv (i marginen), når man er færdig
%%                          % udskiftes ’draft’ med ’final’ for at
%%                          % tjekke om der er flere \fixme’s tilbage

\makeatletter
\g@addto@macro\@floatboxreset\centering
\makeatother
%%% OPSÆTNING AF SÆTNINGER etc.
%%%%\usepackage{amsthm} % bedre fleksimibitet
%%\usepackage{soulutf8}  % Version af Soul til æ, ø og å.
%%
%%
%%
%%\sodef\spread{}{.2em}{.9em plus.4em}{1em plus.1em minus.1em}
%%
%%%%\newtheorem{thm}{Theorem}[chapter] % nummerering underlagt \chapter \newtheorem{lem}[thm]{Lemma} % anvender samme tæller som thm 
%%%%\newtheorem{prop}[thm]{Proposition} 
%%%%\newtheorem{cor}[thm]{Corollary} 
%%%%\newtheorem{lem}[thm]{Lemma}
%%%%\theoremstyle{definition} 
%%%%\newtheorem{defn}[thm]{Definition} 
%%%%\newtheorem{exa}[thm]{Example}
%%%%\newtheorem{rem}{Remark}
%%%%\newtheorem*{rem*}{Remark}
%%%%\newtheorem*{thm*}{Theorem}
%%%%\newtheorem*{prop*}{Proposition} 
%%%%\newtheorem*{cor*}{Corollary} 
%%%%\newtheorem*{lem*}{Lemma}
%%%%\theoremstyle{definition} 
%%%%\newtheorem*{defn*}{Definition} 
%%% opsætning af et proof-environment.
%%%\theoremstyle{nonumberplain} % beviser har ikke noget nummer
%%
%%\usepackage{graphicx} % pakke til inklusion af grafik
%%\usepackage[longnamesfirst,sort&compress,numbers]{natbib} % opsætning
%%                                 % til bibtex, vedr. referencer
%%%\makeindex % hvis man ønsker at lave et index
%%\usepackage[all,line,dvips]{xy}

% hjemmelavede makroer
\hyphenation{grund-læg-gen-de}
\newcommand\R{\mathbb{R}}

\newcommand\Z{\mathbb{Z}}

\renewcommand{\phi}{\varphi}

% til at lave f: A -> B, med pænt kolon
% pilen kan skiftes med \morf[\hookrightarrow]{f}{A}{B}

% til partielt afledt, ’d’-er fås med
% \Diff[d]{f}{x}

\newcommand{\norm}[1]{\Vert #1 \Vert}  
\newcommand{\abs}[1]{\vert #1 \vert}

%Egne kommandoer:

%\newcommand{\Tt}{\mathbb{T}}

\newcommand{\del}{\partial}
\newcommand{\sett}[1]{\lbrace #1\rbrace}

 \newcommand{\Int}{\text{Int}}

\bigskip

%\usepackage{xr-hyper}
%\usepackage[
%colorlinks,
%breaklinks,
%plainpages=false,
%pdfpagelabels,
%pdfencoding = auto,
%bookmarksnumbered,
%unicode,
%]{hyperref}
%
%\hypersetup{
% % pdftitle={Titel},
% % pdfauthor={\copyright\ Steen Thorbjørnsen 2017},
%  linkcolor=linkcolour,
%  citecolor=citecolour,
%  filecolor=urlcolour,
%  urlcolor=urlcolour,
%}

%\providecommand\href[2]{}

%\setfloatlocations{figure}{htp}

\makeatletter
\newcommand\InsertTheoremBreak{%
    \@ifstar{\item[\vbox{\null}]}{%
      \begingroup % keep changes local
      \setlength\itemsep{0pt}%
      \setlength\parsep{0pt}%
       \item[\vbox{\null}]%
      \endgroup%
     }}
\makeatother

%---------Nu til forsiden-------------------------------------------------------

\title{Reconstruction of R-regular Objects from Trinary Images}

\author{Helene Matilde Svane}
\author{Andrew du Plessis}

%Nu kommer nogle linier som kun er aktuelle hvis du bruger
%'amsart' style

        %\renewcommand{\datename}{Version: }
        %Lav evt. dato-teksten om, eller undlad helt dato.
        %\date{\today}
%        \thanks{AMS 2010 classification: 46L60, 37E10}
        %\email{matkt@imf.au.dk}
        %\address{Institut for matematiske fag, Ny Munkegade, 8000 Aarhus C, Denmark}
        %\dedicatory{Dedicated to S\o ren Have Hansen on the occasion of his birth}
        %\keywords{TeX, Mathematics}
        %\subjclass{14C10, 93D20}
        %\commby{Jens Peter S\ae r}

%Slut paa  'amsart' style specifikke ting.

%\date{\today}

\email{helenesvane@math.au.dk}
\address{Institut for Matematik, Aarhus University, Ny Munkegade, 8000 Aarhus C, Denmark}

%\addbibresource{Bibliotek.bib}

\begin{document}

\begin{abstract} We study digital images of $r$-regular objects where a pixel is black if it is completely inside the object, white if it is completely inside the complement of the object, and grey otherwise. We call such images \emph{trinary}. We discuss possible configurations of pixels in trinary images of $r$-regular objects at certain resolutions and propose a method for reconstructing objects from such images. We show that the reconstructed object is close to the original object in Hausdorff norm, {\color{black}and that there is a homeomorphism of $\R^2$ taking the reconstructed set to the original}.
\end{abstract}

\maketitle

\section{Introduction}

The purpose of this paper will be to introduce a way to reconstruct objects from their grey-scale digital images. More specifically, we focus on objects that are small compared to the image resolution and satisfy a certain regularity constraint called \emph{$r$-regularity}. The notion of $r$-regularity was developed independently by Serra \cite{Serra} and Pavlidis \cite{Pavlidis} to describe a class of objects for which reconstruction from digital images preserved certain topological features. They both consider \emph{subset digitisation}, that is, digitisation formed by placing an image grid on top of an object and then colouring an image cell black if its midpoint is on top of the object, and white if the cell midpoint is on top of the complement of the object. This way a binary image is produced, and they consider the set of black cells as the reconstructed set. Serra showed that if the grid is hexagonal and the object satisfies certain constraints, the original and reconstructed sets have the same homotopy, and Pavlidis showed that for a square grid and for certain $r$-regular sets, the set and its reconstruction are homeomorphic. Later on, Stelldinger and Köthe \cite{StelldingerKothe},\cite{SK} argued that the concepts of homotopies or homeomorphisms were not strong enough to fully capture human perception of shape similarity. Instead they proposed two new similarity criterions called \emph{weak} and \emph{strong $r$-similarity}, and showed that under certain conditions, an $r$-regular set and its reconstruction by a square grid are both weakly and strongly $r$-similar. {\color{black}We, too, will consider the notion of weak $r$-similarity in this paper.}

However, Serra, Pavlidis, Stelldinger and Köthe were modelling images using subset digitisation, which outputs a binary image. In contrast to this approach, Latecki et al. \cite{LateckiConradGross} modelled an image by requiring that the intensity in each pixel be a monotonic function of the fraction of that object covered by that pixel. This way they seek to model a pixel intensity as the light intensity measured by a sensor in the middle of the pixel, and the result is a grey-level image much like the ones obtained in real situations. They show that after applying any threshold to such an image of an $r$-regular object with certain constraints, the set of black pixels has the same homotopy type as the original object and, in the case where the original object is a manifold with boundary, the two are even homeomorphic. They also conjecture that all $r$-regular objects are manifolds with boundary. This was later proven by Duarte and Torres in \cite{DTs}.

We will model our images in the same way as Latecki et al. did, namely by requiring each pixel intensity to be a monotonic function of the fraction of the pixel covered by the object. In contrast to the above reconstruction approaches, we do not wish to use a set of pixels as our reconstructed set, but rather to construct a new set with smooth boundary that we may then use as the reconstruction. Also in contrast to the above, we will not consider binary images, but keep the information stored in the grey values in our endeavour to make a more precise reconstruction.

When reconstruction, one should decide which properties one wishes the reconstructed object to share with the original one. Should the reconstructed set have the same topological features as the original one? Should the reconstructed set be close to the original one? Should a digitisation of the reconstructed set yield the same image as the original set? Should the reconstructed set be $r'$-regular for some $r'$ close to $r$? Though all of these comparison criteria are interesting to work with and  an ideal reconstruction should satisfy them all, it is hard to construct such a set. In this paper, we will therefore focus on constructing a set that is close to the original one in Hausdorff distance (which will be introduced in the following), has a smooth boundary{\color{black}, and is homeomorphic to the original set. This means that we show that our reconstructed set and the original are weakly $r$-similar in the sense of \cite{SK}}.

\section{Basic definitions and theorems about \texorpdfstring{$r$}{r}-regular sets}\label{SecRBasics}
Let us start by establishing some terminology.
Let $X\subset \R^2$ be a set. We will denote the closure of $X$ by $\overline{X}$, the interior of $X$ by $\text{Int}(X)$ and the boundary of $X$ by $\del X$. The complement $\R^2\backslash X$ will be denoted by $X^C$. The set $X$ is compact if and only if $X$ is closed and bounded.

The Euclidean distance between two points $x$ and $y$ in $\R^2$ will be denoted by $d(x,y)$ or, occasionally, by $\norm{x-y}$.

For an $s>0$, we let $B_s(x)=\sett{y\in \R^2\mid d(x,y)<s}$ be the open ball with centre $x$ and radius $s$. For a line segment $L$ we will denote the length of $L$ by $\abs{L}$.

{\color{black}A part of the }%The
 goal will be to construct a set from a digital image whose boundary is close to the boundary of the original set. The intuitive concept of closeness between two sets is captured by the Hausdorff distance: For $X,Y\subseteq \R^2$, the Hausdorff distance $d_H$ between $X$ and $Y$ is given by
\begin{equation*}
d_H(X,Y)=\max\sett{\sup_{x\in X} \inf_{y\in Y}d(x,y), \sup_{y\in Y} \inf_{x\in X}d(x,y)}.
\end{equation*}
The set of compact sets of $\R^2$ equipped with the Hausdorff metric is a complete metric space.

The digital images that we will be working with in this paper are formed in the following way:

\begin{defn}
Let $X\subseteq \R^2$ be a set and $d\Z^2\subseteq\R^2$ a grid
with side length $d$. To each grid square $C$, we assign an \emph{intensity} $\lambda$ given by
\begin{equation*}
  \lambda=\phi\left(\frac{\text{area}(X\cap C)}{d^2}\right)\in[0,1],
\end{equation*}
where $\phi:[0,1]\to [0,1]$ is a monotonic function with $\phi(0)=0$, $\phi(1)=1$ and $\phi((0,1))\subseteq (0,1)$.

The \emph{digitisation} of $X$ is the matrix of
intensities. We will visualise it as the collection of pixels of side length $d$, each coloured a shade of grey corresponding to the value of $\lambda$.

Let $V(X)$ denote \emph{the black pixels} of this
digitisation of $X$. We will sometimes refer to $V(X)$ as the black
digitisation pixels of $X$.
\end{defn}

To make sure that the objects in the images we are considering are not arbitrarily strange, we will follow in the footsteps of previous approaches and only consider $r$-regular sets:

\begin{defn}
Let $r>0$. A closed set $X\subseteq \R^n$ is said to be \emph{$r$-regular} if for each $x\in\del X$ there exists two $r$-balls $B_r(x_b)\subseteq X$ and $B_r(x_w)\subseteq X^C$ such that $\overline{B_r(x_b)}\cap \overline{B_r(x_w)}=\sett{x}$, see \cref{FigR-reg}.
\end{defn}

\begin{figure}
\includegraphics[scale=1]{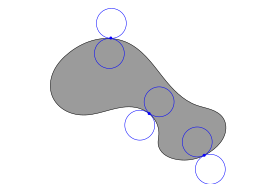}
\caption{An $r$-regular set $X$ is a set where each boundary point belongs to both the boundary of an $r$-ball contained in $X$ and the boundary of an $r$-ball contained in $X^C$}
\label{FigR-reg}
\end{figure}

In general, we believe that a reconstruction can be made more accurately by taking the intensities of the grey pixels into account, and we are currently working on this idea. However, in this paper we restrict ourselves to looking at images where each pixel is considered to be either black, grey or white, without taking the exact intensities of the grey pixels into account:

\begin{defn}
A \emph{trinary} digital image is a digital image where the intensities of all grey pixels are set to 0.5.
\end{defn}

These trinary images will be our main interest in this paper. Note that the colour of a pixel (black, grey or white) does not depend on the monotonic function $\phi$ used for calculating the pixel intensities - in fact, a pixel in a trinary image of an object $X$ is black if it is contained in $X$, white if it is contained in $X^C$ and grey if $\del X$ passes through it.

When we make the digital image of an $r$-regular object by a lattice $d\Z^2$, we can in general not be certain that there are any black or white pixels in the image - for instance, if $d$ is large compared to $r$, all pixels could contain an $r$-ball, which would mean that the image would be all grey. Since we cannot hope to make a very good reconstruction in this case, we will put a restriction on the relationship between the $r$ and $d$:

\begin{Convention}
Throughout the following, we assume that $X$ is a bounded $r$-regular set and that $d\sqrt{2}<r$. We also assume that $\del X$ does not pass through  a pixel corner.
\end{Convention}

Note that the boundedness condition on $X$ implies that $X$ is compact.

Pavlidis \cite{Pavlidis} defines a grid $d\Z^2$ and a set $X$ to be \emph{compatible} if $X$ is $r$-regular with $d\sqrt{2}<r$. With this restriction, since $d\sqrt{2}$ is the diameter of a pixel, each black $r$-ball contains the pixel that its centre belongs to, meaning that each black $r$-ball is centered in a black pixel. Similarly the centre of each white $r$-ball is contained in a white pixel. This means that for each component of $X$ yields at least one black pixel, and each component of $X^C$ yields at least one white pixel. Latecki et al. showed that for a compatible grid $d\Z^2$ and set $X$, the set $V(X)$ of black pixels is homeomorphic to $X$. Hence $X$ and $V(X)$ have the same topological features. Furthermore, the above conditions ensure that we do not get too large grey areas, as will be clear in the following section. We will only concern ourselves with images that capture all of the objects photographed, and not just a part of them.

{\color{black}Let us introduce the notion of \emph{weak $r$-similarity}, as introduced in \cite{SK},\cite{SLS}.

\begin{defn}
Let $A, B\subseteq \R^2$ be bounded sets and $r>0$. We call $A$ and $B$ weakly $r$-similar if there exists a homeomorphism $f:\R^2\to\R^2$ such that $x\in A\iff f(x)\in B$ and the Hausdorff distance between the set boundaries satisfies $d_H(\del A, \del B)<r$
\end{defn}

The overall purpose of this paper will be to show the following:
\begin{thm}\label{ThmMainResult}
Let $I$ be a digital image of an $r$-regular set $X$ by a lattice $d\Z^2$ with $d\sqrt{2}<r$. We may construct an object $\Gamma$ from $I$ such that $\Gamma$ and $X$ are weakly $d$-similar, where $d$ is the pixel side length.
\end{thm}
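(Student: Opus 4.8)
The plan is to read off from the trinary image $I$ a partition of the pixels into black, grey and white, and to build $\del\Gamma$ entirely inside the grey region $\mathcal{G}$ (the union of grey pixels), which by definition contains $\del X$. First I would analyse the topology of $\mathcal{G}$. Since $X$ is $r$-regular it is a compact $2$-manifold with boundary (Duarte--Torres), so $\del X$ is a finite disjoint union of smooth simple closed curves. Using $d\sqrt{2}<r$ together with the classification of admissible local pixel configurations developed earlier, I would show that the grey pixels meeting a fixed component of $\del X$ assemble into a single topological annulus, that distinct components yield disjoint annuli, and that each connected component of $\mathcal{G}^C$ is \emph{monochromatic}, i.e.\ contained entirely in $\Int(X)$ or entirely in $X^C$ (since $\del X\subseteq\mathcal{G}$ leaves no boundary in $\mathcal{G}^C$), consistently with the black and white pixels it contains.

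Next, within each grey annulus I would produce the reconstructed boundary. Locally, in each grey pixel I would draw a boundary arc determined by the configuration of its neighbours, namely which of its edges face black pixels and which face white pixels; the configuration analysis should guarantee that these arcs join into a piecewise-linear simple closed curve threading the annulus exactly once. I would then smooth this curve to a $C^\infty$ embedded simple closed curve $\gamma_i$ lying in the interior of the grey annulus, and define $\Gamma$ as the closed region bounded by the $\gamma_i$, oriented so that black components lie inside and white components lie outside. By construction $\del\Gamma=\bigcup_i\gamma_i\subseteq\Int(\mathcal{G})$ is smooth, which already delivers the smooth-boundary property promised in the introduction.

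To see that $\Gamma$ and $X$ are weakly $d$-similar I must first produce an ambient homeomorphism $f\colon\R^2\to\R^2$ with $x\in\Gamma\iff f(x)\in X$. Here I would exploit that outside $\mathcal{G}$ the two sets coincide: every black pixel lies in both $X$ and $\Gamma$, every white pixel lies in both complements, and each component of $\mathcal{G}^C$ is monochromatic, so I may set $f=\id$ on $\mathcal{G}^C$. Inside each grey annulus $A_i$, both the relevant component of $\del X$ and the curve $\gamma_i$ are essential simple closed curves separating the two boundary circles of $A_i$ with matching sides, so there is a homeomorphism of $A_i$, fixing $\del A_i$ pointwise, carrying $\gamma_i$ onto that component of $\del X$; gluing these with the identity yields $f$. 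The matching of components and of their nesting is underwritten by the result of Latecki et al.\ that $V(X)$ is homeomorphic to $X$, so the combinatorial data on the two sides agree.

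Finally, it remains to bound $d_H(\del X,\del\Gamma)<d$, and I expect this to be the main obstacle. Both boundaries lie in $\mathcal{G}$, so the estimate reduces to controlling the width of the grey band transverse to $\del X$ and the placement of $\gamma_i$ within it. This is exactly where $r$-regularity does the decisive work: the tangent inner and outer $r$-balls at each point of $\del X$, with $r>d\sqrt{2}$, force $\del X$ to be so flat at the scale of a single pixel that only a thin transverse stack of grey pixels can occur, and the local construction keeps $\gamma_i$ inside this stack. Converting this flatness into the clean constant $d$ --- rather than the naive diameter bound $d\sqrt{2}$ that a single grey pixel would give --- is the delicate point, and I would expect it to rest on the detailed configuration classification, used both to exclude thick transverse arrangements of grey pixels and to pin down where $\gamma_i$ sits relative to $\del X$.
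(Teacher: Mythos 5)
Your overall architecture matches the paper's: a configuration classification, a smooth curve built inside the grey region, $\Gamma$ defined as the region that curve bounds, a Hausdorff estimate, and an ambient homeomorphism produced by annulus-type arguments. But there is a genuine gap: the bound $d_H(\del X,\del\Gamma)\leq d$ is never actually proved --- you explicitly defer it to ``the detailed configuration classification'', and that deferred step is precisely where the bulk of the paper's work lies (Theorems \ref{ThmHausdorffA} and \ref{ThmHausdorffB}, each resting on a chain of case-by-case lemmas). Moreover, your proposed reduction, namely controlling ``the width of the grey band transverse to $\del X$ and the placement of $\gamma_i$ within it'', does not by itself yield the constant $d$: grey regions genuinely can be two pixels thick ($2\times2$ grey blocks occur, see Figure \ref{FigKonfigurationer13}), so any argument that only bounds the band width and keeps $\gamma_i$ somewhere inside it gives a constant strictly worse than $d$. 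The paper instead pins down, configuration by configuration, both where $\del X$ can sit inside a grey pixel (via spindles and osculating $\sqrt{2}d$-balls, e.g.\ Lemmas \ref{Lem2x3grey} and \ref{LemCornerPixel}) and where the reconstructed arcs sit (via the explicit circle-arc geometry of Figure \ref{FigCircleArcs} and Lemma \ref{LemGammaInC}), and only the interplay of the two placements squeezes the worst case down to $d$. Without this, what you obtain is weak $cd$-similarity for some unspecified $c>1$, not the claimed weak $d$-similarity.

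A smaller but real issue sits in your gluing construction of the homeomorphism. You want $f$ to fix the frontier $\del A_i$ of each grey annulus pointwise while carrying $\gamma_i$ onto the corresponding component of $\del X$. This requires both curves to lie in the open interior of $A_i$: the curve $\gamma_i$ does by construction, but $\del X$ is only known to lie in the closed grey region, and a priori it could touch a grey/non-grey pixel edge tangentially, in which case no homeomorphism fixing $\del A_i$ pointwise can map an interior curve onto it. You must either rule out such tangencies (the tangency clause of Lemma \ref{LemTwoIntersections} gives traction here) or relax the pointwise-fixing requirement. The paper sidesteps this entirely by working with the tubular neighbourhoods $N_{d\sqrt{2}}$ of the components of $\del X$, which are honest annuli with $\del X$ as core, applying Schoenflies and the Annulus Theorem there (Lemma \ref{LemHomeomorphism}), and composing the resulting homeomorphisms from the outermost components inward rather than gluing boundary-fixing maps. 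Relatedly, the annulus structure of the grey components (no pinch points, exactly two boundary circles --- the paper's Theorem \ref{ThmGammaProperties}) is itself something you would need to extract from the configuration analysis rather than assert; your appeal to Latecki et al.\ handles the matching of components but not this.
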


We believe that the above result may be strengthened to prove strong $d+\varepsilon$-similarity between the two for a suitable $\varepsilon$, but such a result is beyond the scope of this paper. 

A large part of the proof of Theorem \ref{ThmMainResult} will be to prove the following:
}

%Now, the overall purpose of this paper is to prove the following theorem:
\begin{thm}
Let $I$ be a digital image of an $r$-regular set by a lattice $d\Z^2$ with $d\sqrt{2}<r$. We may construct an object $\Gamma$ from $I$ such that $d_H(\del \Gamma,\del X)<d$, where $d_H$ is the Hausdorff distance.
\end{thm}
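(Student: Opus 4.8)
The plan is to read off from the trinary image the thin band of grey pixels surrounding each boundary component of $\del X$, to thread a smooth simple closed curve through each band in a way that mimics how $\del X$ itself crosses the band, and to let $\Gamma$ be the region these curves enclose. The whole estimate then reduces to a local, one-pixel comparison, and the only real work is to show that the band is simple enough — thanks to the standing assumption $d\sqrt2<r$ — that this local comparison gives the sharp constant $d$ rather than the trivial pixel-diameter bound $d\sqrt2$.

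First I would fix the structure of the grey region. Since $X$ is $r$-regular, $\del X$ is a disjoint union of smooth simple closed curves whose radius of curvature is at least $r$, and because $d\sqrt2<r$ each such curve looks, on the scale of one pixel, like a straight chord. Using the configuration analysis of the preceding section I would argue that for each component $\gamma$ of $\del X$ the grey pixels it meets form a closed annular band $G_\gamma$, that distinct components give disjoint bands, and — crucially — that $\gamma$ crosses each grey pixel \emph{simply}, entering through one edge and leaving through another, so that exactly two of the pixel's four edges are crossed and each such edge is shared with a grey neighbour. This is exactly the point at which the curvature bound is spent: it rules out the degenerate local pictures (a curve re-entering a pixel, a pixel crossed along a near-corner in an uncontrolled way, or a band more than a bounded number of pixels thick) that would otherwise wreck the sharp constant.

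Next I would construct $\Gamma$. Within each band $G_\gamma$ I would build a $C^\infty$ simple closed curve $\delta_\gamma\subseteq G_\gamma$ that, in every grey pixel, runs from one to the other of the \emph{same} pair of edges through which $\gamma$ passes — these edges being detectable from $I$ as the two edges shared with grey neighbours. Setting $\del\Gamma=\bigcup_\gamma\delta_\gamma$ and letting $\Gamma$ be the bounded region it encloses produces a set with smooth boundary built only from the image. The estimate $d_H(\del\Gamma,\del X)<d$ then follows from a single per-pixel lemma, applied symmetrically to both terms in the definition of $d_H$: if two simple arcs each cross a $d\times d$ square from a point of edge $A$ to a point of edge $B$, then every point of one arc lies within $d$ of the other. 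When $A,B$ are opposite edges this is immediate by projecting along the across-the-square direction, where the relevant spread is the \emph{side} $d$ and not the diagonal $d\sqrt2$; when $A,B$ are adjacent edges both arcs cut off the common corner, and a short computation (using that by convention $\del X$ meets no pixel corner, so the entry and exit points are interior to their edges) again gives a strict bound $<d$. Ranging over all grey pixels of all bands yields $\sup_{q\in\del\Gamma}\inf_{p\in\del X}d(q,p)<d$ and, symmetrically, $\sup_{p\in\del X}\inf_{q\in\del\Gamma}d(p,q)<d$, hence $d_H(\del\Gamma,\del X)<d$.

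The main obstacle is precisely this sharpening from $d\sqrt2$ to $d$, and it cannot be reached by crude diameter estimates: it forces one to identify, for each admissible grey-pixel configuration, that both curves cross through the same pair of edges and to verify the adjacent-edge (turning) case, where the two arcs may cut the corner by different amounts and the bound is genuinely tight — here the conventions $d\sqrt2<r$ and ``$\del X$ avoids pixel corners'' are what keep the inequality strict. A secondary but real difficulty, again resting on the annular structure of $G_\gamma$ established in the first step, is the global assembly: checking that the locally defined pieces patch into a single smooth \emph{simple} closed curve $\delta_\gamma$ inside the band, without self-crossings or pinching, so that $\Gamma$ is a well-defined region with smooth boundary.
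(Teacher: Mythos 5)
Your overall architecture (grey bands, one smooth curve per band, a per-pixel comparison) matches the paper's, but two of your load-bearing claims are false, and they are exactly the points where the paper has to work hardest.

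First, the structural claim that $\del X$ crosses each grey pixel \emph{simply}, through exactly two distinct edges, each shared with a grey neighbour, is not a consequence of $d\sqrt{2}<r$, and the paper's own configuration analysis contradicts it. Lemma \ref{LemTwoIntersections} explicitly allows $\del X$ to cross a single pixel edge twice (it only forces colours on nearby pixels when this happens), and the configuration of Figure \ref{FigZeroAuxiliaries} (case iv in the proof of Theorem \ref{ThmHausdorffA}) is a grey pixel with exactly \emph{one} grey d-neighbour, which $\del X$ enters and leaves through the \emph{same} edge. Moreover a grey pixel can have three or four grey d-neighbours (Remark \ref{Rem3x3Pix}, and every pixel of a $2\times 2$ grey block), so ``the two edges shared with grey neighbours'' is not well defined and does not identify where $\del X$ actually crosses. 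This is why the paper cannot simply thread a curve through ``the grey edges'': it needs the auxiliary-point machinery, with special treatment of complex ($2\times2$) grey pixels and the removal of spurious single auxiliary points, before a globally consistent crossing pattern exists.

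Second, your per-pixel lemma is false in the adjacent-edge case: two arcs crossing a $d\times d$ square between the same pair of adjacent edges can be nearly $d\sqrt{2}$ apart (one arc hugging the shared corner, the other sweeping past the opposite corner), so knowing only that both curves join edge $A$ to edge $B$ cannot give the bound $d$. The paper closes this gap with metric input from $r$-regularity on \emph{both} curves: Lemma \ref{LemCornerPixel} (via the spindle results, Corollary \ref{CorPathInSpindle} and Lemma \ref{LemSpindleIsIntersection}) confines $\del X\cap C$ to the ball $B_d(p)$ about the far corner $p$, and Lemma \ref{LemGammaPassingThroughBall} shows the constructed arc $\gamma_C$ is confined to the same ball, after which a radial projection inside $B_d(p)$ gives displacement at most $d-\tfrac{1}{2\sqrt{2}}d<d$. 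Your proposal has no analogue of this confinement --- neither for $\del X$ (where it must be \emph{proved} from regularity, not assumed) nor for your curve $\delta_\gamma$ (where it must be \emph{built into} the construction, as the paper does with circle arcs through prescribed auxiliary points). The conventions you invoke ($d\sqrt{2}<r$, corner avoidance) do not by themselves repair the lemma; without the spindle-type estimates the sharp constant $d$ is out of reach.
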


To start working with $r$-regular sets, we first sum up some basic statements about them:

\begin{prop}[Tang Christensen and du Plessis, \cite{TC}, Proposition A.1]\label{PropEquivalentRregDefinitions}
Let $A\subseteq \R^n$ be a closed set and $r>0$. Then the following are equivalent:
\begin{enumerate}
\item At any point $x\in\del X$ there exist two closed $r$-balls $B_r\subseteq A$ and $B_r'\subseteq\overline{A^C}$ such that $B_r\cap B_r'=\sett{x}$.
\item The sets $A$ and $\overline{A^C}$ are equal to unions of closed $r$-balls.
\end{enumerate}
\end{prop}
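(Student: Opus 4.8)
The plan is to prove the two implications separately, using that the roles of $A$ and $\overline{A^C}$ are interchangeable in the hypothesis: condition (1) provides, at each boundary point, simultaneously an $r$-ball in $A$ and an $r$-ball in $\overline{A^C}$ tangent there, so any argument showing that $A$ is a union of closed $r$-balls applies verbatim to $\overline{A^C}$. Hence in the direction $(1)\Rightarrow(2)$ it suffices to treat $A$, and in $(2)\Rightarrow(1)$ it suffices to produce the tangent pair.

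For $(2)\Rightarrow(1)$, which I expect to be the easy direction, fix $x\in\del A$. Since $x\in A=\bigcup(\text{closed }r\text{-balls}\subseteq A)$ there is a closed ball $B_r=\overline{B_r(c_b)}\subseteq A$ with $x\in B_r$; because $x\in\del A$ cannot lie in the open ball $\Int B_r\subseteq\Int A$, it lies on the bounding sphere, so $d(x,c_b)=r$. As $x\in\del A$ also gives $x\in\overline{A^C}=\bigcup(\text{closed }r\text{-balls}\subseteq\overline{A^C})$, the same reasoning yields $B_r'=\overline{B_r(c_w)}\subseteq\overline{A^C}$ with $d(x,c_w)=r$. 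The two open balls are disjoint, since $\Int B_r\subseteq\Int A$ and $\Int B_r'\subseteq\Int(\overline{A^C})$ while $\Int A\cap\overline{A^C}=\emptyset$; hence $d(c_b,c_w)\ge 2r$. Combined with $d(c_b,c_w)\le d(c_b,x)+d(x,c_w)=2r$ from the triangle inequality, this forces $d(c_b,c_w)=2r$ with $x$ the midpoint, so the balls are externally tangent and $B_r\cap B_r'=\sett{x}$.

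For $(1)\Rightarrow(2)$ I must show every $x\in A$ lies in some closed $r$-ball contained in $A$, the reverse inclusion being trivial. If $x\in\del A$ this is immediate from the hypothesis. If $x\in\Int A$, let $p\in\del A$ realise $\rho:=d(x,\del A)$. When $\rho\ge r$ the open ball $B_\rho(x)$ meets no boundary point, so $B_\rho(x)\subseteq\Int A$ and therefore $\overline{B_r(x)}\subseteq A$ already contains $x$. When $\rho<r$, the plan is to take the inner $r$-ball $\overline{B_r(c_b)}\subseteq A$ furnished at $p$ and show it contains $x$. The key geometric lemma is that $x$ lies on the segment from $p$ to $c_b$: because $\del A$ is squeezed at $p$ between the tangent balls $\overline{B_r(c_b)}\subseteq A$ and $\overline{B_r(c_w)}\subseteq\overline{A^C}$, it admits a tangent hyperplane there (the common tangent plane of the two spheres), and the minimality of $d(x,p)$ forces $x-p$ to be orthogonal to that hyperplane, hence parallel to $c_b-p$ and pointing into $A$, i.e. $x=p+\rho\,\nu$ with $\nu=(c_b-p)/r$. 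Then $d(x,c_b)=r-\rho<r$, so $x\in\overline{B_r(c_b)}\subseteq A$.

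I expect the main obstacle to be precisely this last step: establishing that the nearest boundary point $p$ sees $x$ along the inward normal. This requires showing that the two tangent balls pin down a genuine tangent plane of $\del A$ at $p$ — the standard squeezing argument that $\del A$, lying outside both open $r$-balls near $p$, is differentiable there — and then invoking the first-order minimality condition for the distance from $x$ to $\del A$. Once the normal direction is identified the containment is a one-line computation, and the corresponding statement for $\overline{A^C}$ follows by the interchange of roles noted at the outset.
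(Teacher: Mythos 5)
The paper itself offers no proof of this proposition---it is imported verbatim from \cite{TC} (Proposition A.1)---so your proposal can only be judged on its own merits. Your direction $(2)\Rightarrow(1)$ is correct: the open balls $\Int B_r\subseteq\Int A$ and $\Int B_r'$ are disjoint, giving $d(c_b,c_w)\geq 2r$, while the triangle inequality through $x$ gives $d(c_b,c_w)\leq 2r$, and the equality case forces external tangency at $x$. (One small remark: ``the same reasoning'' does not literally yield $d(x,c_w)=r$, since for a general closed set $\del A$ may meet $\Int(\overline{A^C})$---take $A=\sett{0}$---but your inequalities self-repair: $d(x,c_b)=r$, $d(x,c_w)\leq r$ and $d(c_b,c_w)\geq 2r$ already force $d(x,c_w)=r$.)

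The genuine gap is the key step of $(1)\Rightarrow(2)$, exactly where you predicted the obstacle. You propose: squeeze $\del A$ between the two tangent balls to get a tangent hyperplane at $p$, then invoke first-order minimality of $q\mapsto d(x,q)$ on $\del A$. This chain does not close. The squeeze only shows that near $p$ the boundary is \emph{contained} in the pinched region $\sett{p+h \mid \abs{\innerp{h,u}}\leq \abs{h}^2/2r}$ ($u$ the common normal); it does not show that $\del A$ contains any points in the tangential directions. Minimality is likewise only an exclusion statement, $\del A\cap B_\rho(x)=\emptyset$. Two exclusion constraints cannot contradict each other (the set $\sett{p}$ satisfies both, whatever the angle between $x-p$ and $u$), so no first-order argument can be run on them; to use differentiability of $\del A$ you would need it to be locally a graph over the tangent hyperplane, which is essentially the much deeper Duarte--Torres manifold theorem cited elsewhere in the paper, not a consequence of the squeeze. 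The repair uses the positive information you already have and is shorter than the heuristic: minimality plus connectedness gives $B_\rho(x)\subseteq\Int A$, and this open ball is disjoint from the open ball $B_r(c_w)\subseteq\overline{A^C}$ (which misses $\Int A$), both having $p$ on their bounding spheres. Hence
\begin{equation*}
\rho+r\;\leq\; d(x,c_w)\;\leq\; d(x,p)+d(p,c_w)\;=\;\rho+r,
\end{equation*}
and the equality case of the triangle inequality places $p$ on the segment from $x$ to $c_w$. Since tangency gives $c_b-p=p-c_w$, this yields $x-p=\tfrac{\rho}{r}(c_b-p)$, hence $d(x,c_b)=r-\rho$ and $x\in\overline{B_r(c_b)}\subseteq A$, which is exactly the containment you wanted. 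With this substitution (and the role-swap for $\overline{A^C}$, which works as you indicate, noting that a closed $r$-ball in $A$ automatically lies in $\overline{\Int A}$), the proof is complete.
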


\begin{defn}
  For $\delta>0$, we denote the $\delta$-tubular neighbourhood of
  $\del X$ in $\R^2$ by
  $N_\delta=\sett{x\in\R^2\mid d(x,\del X)<\delta}$.
\end{defn}

\begin{lem}[Duarte \& Torres, \cite{DTs}, Lemma 5]\label{LemProjection}
  Let $X$ be an $r$-regular set. For each $x\in N_{r}$ there is a
  unique point $\pi(x)\in\del X$ such that $d(x,\del
  X)=d(\pi(x),x)$. Hence there is a well-defined projection
  $\pi:N_{r}\to\del X$.
\end{lem}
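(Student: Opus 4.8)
The plan is to prove existence and uniqueness separately, extracting uniqueness from the rigidity that $r$-regularity imposes on the pair of tangent $r$-balls at each boundary point. For existence, fix $x\in N_r$, so that $\delta:=d(x,\del X)<r$. Since $\del X$ is closed, the set $\del X\cap\overline{B_r(x)}$ is compact and nonempty, and the continuous map $y\mapsto d(x,y)$ attains its infimum there; any minimiser is a point $p\in\del X$ with $d(x,p)=\delta$. Existence thus needs nothing beyond compactness, and the real content is uniqueness.

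For uniqueness I would first dispose of the case $x\in\del X$, where $\delta=0$ forces $\pi(x)=x$; so assume $x\notin\del X$. By \refprop{PropEquivalentRregDefinitions} the set $\overline{X^C}$ is again $r$-regular with $\del\overline{X^C}=\del X$, so after possibly replacing $X$ by $\overline{X^C}$ I may assume $x\in\Int(X)$. The geometric heart is the claim that any nearest point $p\in\del X$ to $x$ lies along the common normal of the two tangent balls at $p$, with $x$ strictly inside the inner ball. To prove it, note that $B_\delta(x)$ meets neither $\del X$ (by minimality of $\delta$) nor, being connected and containing $x\in\Int(X)$, the complement; hence $B_\delta(x)\subseteq\Int(X)$. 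Taking the outer tangent ball $B_r(x_w)\subseteq X^C$ at $p$, the open balls $B_\delta(x)$ and $B_r(x_w)$ are disjoint, so $d(x,x_w)\ge\delta+r$; but $p$ lies in both closures, with $d(x,p)=\delta$ and $d(p,x_w)=r$, so the triangle inequality gives $d(x,x_w)\le\delta+r$. Equality forces $x$, $p$, $x_w$ to be colinear with $p$ between them. Since $\overline{B_r(x_b)}\cap\overline{B_r(x_w)}=\sett{p}$ places the inner and outer centres on a single line through $p$, the point $x$ sits on this normal line on the $x_b$ side at distance $\delta<r$, so $d(x,x_b)=r-\delta$ and $x\in B_r(x_b)$. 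Writing $u=(p-x)/\delta$, this records $x_b=x-(r-\delta)u$.

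With this rigidity the contradiction is a short computation. Suppose $p\neq q$ were two nearest points, with unit directions $u=(p-x)/\delta$ and $v=(q-x)/\delta$, necessarily $u\neq v$. The inner tangent ball $B_r(x_b)$ at $p$, with $x_b=x-(r-\delta)u$, lies in $\Int(X)$, so the boundary point $q$ cannot lie in it: $d(q,x_b)\ge r$. Expanding $d(q,x_b)^2=\abs{\delta v+(r-\delta)u}^2=\delta^2+(r-\delta)^2+2\delta(r-\delta)(u\cdot v)\ge r^2$ and simplifying reduces to $(r-\delta)(u\cdot v)\ge r-\delta$, whence $u\cdot v\ge 1$; as $u$ and $v$ are unit vectors this forces $u=v$ and hence $p=q$, contradicting $p\neq q$. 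I expect the main obstacle to be the rigidity claim itself --- specifically, arguing cleanly that two disjoint open balls sharing a boundary point must be externally tangent, and that $r$-regularity genuinely forces the inner and outer centres onto one normal line --- since the closing algebraic step is routine once the directions $u$ and $v$ have been identified.
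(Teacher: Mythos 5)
The paper never proves this lemma --- it is imported as a citation from Duarte--Torres \cite{DTs}, Lemma 5 --- so there is no internal proof to compare against; your argument stands or falls on its own, and it stands. Existence by compactness of $\del X\cap\overline{B_r(x)}$ is fine. The rigidity step is sound: $B_\delta(x)\subseteq\Int(X)$ follows from connectedness, disjoint open balls do have centres at distance at least the sum of the radii, and $\overline{B_r(x_b)}\cap\overline{B_r(x_w)}=\sett{p}$ does force external tangency (the centres are at distance exactly $2r$, with $p$ the midpoint of the segment joining them), so the equality case of the triangle inequality puts $x$, $p$, $x_w$, $x_b$ on one line with $x\in B_r(x_b)$ and $x_b=x-(r-\delta)u$. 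The closing computation is also correct: $d(q,x_b)\geq r$ expands to $u\cdot v\geq 1$, hence $u=v$ and $p=q$. The only assertion you lean on without justification is the reduction step, that $\overline{X^C}$ is $r$-regular with $\del\overline{X^C}=\del X$; this is true (swap the roles of the two tangent balls and use $(\overline{X^C})^C=\Int(X)$, so the black balls witness exterior regularity of $\overline{X^C}$), but it deserves a line --- or you can sidestep it entirely by noting the argument for $x\in\Int(X^C)$ is verbatim the same with the roles of $x_b$ and $x_w$ exchanged. This tangent-ball rigidity argument is the standard route to the lemma, essentially that of the cited source.
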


\begin{thm}[Duarte and Torres, \cite{DTr}]
The projection map $\pi: N_r\to\del X$ is continuous.
\end{thm}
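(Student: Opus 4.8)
The plan is to prove continuity sequentially: since $N_r$ is a metric space, it suffices to show that whenever $x_n\to x$ in $N_r$ we have $\pi(x_n)\to\pi(x)$. The two ingredients I would lean on are the \emph{uniqueness} of the foot point, already supplied by \reflemma{LemProjection}, and the elementary fact that the distance-to-boundary function $g(y)=d(y,\del X)$ is $1$-Lipschitz, hence continuous on all of $\R^2$.

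First I would secure a compactness property to house the subsequential limits. Fix a convergent sequence $x_n\to x$ in $N_r$. For $n$ large we have $d(x_n,x)\le 1$, and since $\pi(x_n)$ realises $d(x_n,\del X)<r$, the triangle inequality gives $d(\pi(x_n),x)<r+1$. Thus all the points $\pi(x_n)$ lie in the set $K=\overline{B_{r+1}(x)}\cap\del X$, which is closed and bounded, hence compact. (In the setting of this paper $X$ is bounded, so one may simply take $K=\del X$.) Consequently every subsequence of $\{\pi(x_n)\}$ admits a further subsequence converging to some point of $K\subseteq\del X$.

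Next I would identify every such subsequential limit with $\pi(x)$. Suppose $\pi(x_{n_k})\to y$. Then $y\in\del X$ because $\del X$ is closed, and by joint continuity of the Euclidean distance together with the continuity of $g$,
\begin{equation*}
d(x,y)=\lim_{k\to\infty}d(x_{n_k},\pi(x_{n_k}))=\lim_{k\to\infty}d(x_{n_k},\del X)=d(x,\del X).
\end{equation*}
Hence $y$ is a point of $\del X$ realising the distance from $x$ to $\del X$; since $x\in N_r$, the uniqueness clause of \reflemma{LemProjection} forces $y=\pi(x)$. I would then close the argument by the standard subsequence principle: a sequence lying in a compact set, all of whose convergent subsequences share the common limit $\pi(x)$, must itself converge to $\pi(x)$; concretely, if $\pi(x_n)\not\to\pi(x)$ one extracts a subsequence bounded away from $\pi(x)$ and then a convergent sub-subsequence whose limit both equals and differs from $\pi(x)$, a contradiction. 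This gives $\pi(x_n)\to\pi(x)$ and hence continuity of $\pi$.

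I do not expect a genuine obstacle here, since the only substantial ingredient, uniqueness of the foot point, has already been established in \reflemma{LemProjection}. The one step requiring care is the final subsequence-closing argument, which is precisely where the compactness of $\del X$ (or of the localised set $K$) is indispensable: without it the subsequential limits would not control the full sequence, and one could not upgrade ``every convergent subsequence tends to $\pi(x)$'' to convergence of $\{\pi(x_n)\}$ itself.
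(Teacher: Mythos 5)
Your proof is correct and complete. Note, however, that the paper itself offers no proof of this statement to compare against: it is quoted verbatim as a result of Duarte and Torres \cite{DTr}, so the "paper's own proof" lives in that reference. What you have written is the standard argument for continuity of a nearest-point projection with unique foot points: localise the image points $\pi(x_n)$ in a compact piece of $\del X$, identify every subsequential limit with $\pi(x)$ via the $1$-Lipschitz distance function and the uniqueness clause of Lemma~\ref{LemProjection}, and close with the subsequence principle. Each step checks out, including the two places needing care (compactness to control the full sequence, and uniqueness to pin down the limit), so your argument is a legitimate self-contained substitute for the external citation.
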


%\begin{proof}
%Suppose that $\pi$ is not continuous at $p\in N_r$. Let $\varepsilon>0$ be such that for any neighbourhood $U$ of $p$ there is a point $q\in U$ such that $d(\pi(p),\pi(q))>\varepsilon$. In particular, for each $n\in\N$, the ball $B_{1/n}(p)$ contains a point $q_n$ with $d(\pi(p),\pi(q_n))>\varepsilon$. Clearly, $\sett{q_n}_{n\in\N}$ is then a sequence converging to $p$, hence the set $S:=\sett{q_1,q_2,\dots}$ is bounded. Let $a$ be the maximal distance between points in $S$. The sequence $\sett{\pi(q_n)}_{n\in\N}$ is also bounded, since
%\begin{equation*}
%d(\pi(q_i),S)\leq d(\pi(q_i),q_i)\leq d(\pi(p),q_i)\leq d(\pi(p),p)+d(p,\pi(q_i))\leq d(\pi(p),p)+a.
%\end{equation*}
%Then $\sett{\pi(q_n)}_{n\in \N}$ has a subsequence $\sett{\pi(q_{n_i})}_{i\in \N}$ converging to some $y\in N_r$. Let us rename $q_{n_i}=v_i$.
%
%By assumption, $d(\pi(p),\pi(v_i))>\varepsilon$ for all $i$, hence $d(\pi(p),y)>\varepsilon$. In particular, $\pi(p)\neq y$, so by definition of $\pi$, $d(p,y)> d(p,\pi(p))$. Hence if we put $\delta=d(p,y)-d(p,\pi(p))$, then $\delta>0$.
%
%Now, for $i\in\N$ so large that $d(v_i,p)<\frac{\delta}{3}$ and $d(\pi(v_i),y)<\frac{\delta}{3}$ we have that
%\begin{align*}
%d(\pi(v_i),v_i)&\geq d(v_i,y)-d(\pi(v_i),y)\\
%&\geq d(p,y)-d(p,v_i)-d(\pi(v_i),y)\\
%&=\delta +d(\pi(p),p)-d(p,v_i)-d(\pi(v_i),y)\\
%&\geq \delta +d(\pi(p),v_i)-2d(p,v_i)-d(\pi(v_i),y)>d(\pi(p),v_i),
%\end{align*}
%contradicting that $\pi(v_i)$ is the unique nearest point to $v_i$. So $\pi$ is continuous at $p\in N_r$.
%\end{proof}

Another important fact that we will be using heavily is the following:

There is a retraction $\rho_{X^C}:N_r\to X^C\cup\del X$ (that we will
sometimes just denote by $\rho$) defined by
\begin{equation*}
  \rho_{X^C}(x)=\begin{cases}
    x & \text{if $x\in X^C\cup\del X$},
    \\
    \pi(x) & \text{otherwise},
  \end{cases}
\end{equation*}
and likewise a retraction $\rho_{X}:N_r\to X$ defined by
\begin{equation*}
  \rho_{X}(x)=\begin{cases}
    x & \text{if $x\in X$},
    \\
    \pi(x) & \text{otherwise}.
  \end{cases}
\end{equation*}
These retractions will prove to be crucial in later arguments, since
they have some nice properties.

We now state some results about $\rho=\rho_{X^C}$. However, the
similar results for $\rho_X$ also hold.

\begin{prop}[Stelldinger et al., \cite{SLS}] Let $x, y\in X^C$ with
  $d(x,y)<2r$ and let $L\subseteq\R^n$ be the line segment between
  them. Then
  \begin{enumerate}[(i)]
  \item The line segment $L$ is a subset of $ X^C\cup N_r$, and
    $\rho\vert_L$ is injective,
  \item For $s<r$ and $B_s$ any $s$-ball containing $x$ and $y$,
    $\rho(L)$ is a subset of $B_s$.
  \end{enumerate}
\end{prop}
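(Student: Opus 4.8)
My plan is to treat the two statements in turn, reducing each to an elementary fact about chords of $r$-balls contained in $X$ or $X^C$. Throughout I will use the observation that, for $w\in N_r$, the retraction $\rho=\rho_{X^C}$ is exactly the nearest-point projection of $w$ onto the closed set $X^C\cup\del X$: for $w\in\Int X$ the nearest point of $X^C\cup\del X$ lies on $\del X$ and equals $\pi(w)$ by \reflemma{LemProjection}, while for $w\in X^C\cup\del X$ it is $w$ itself.

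For the first half of (i) I will show $L\subseteq X^C\cup N_r$ by ruling out ``deep'' interior points. Suppose some $z\in L$ lay in $\Int X$ with $d(z,\del X)\ge r$. Then $B_r(z)$ meets neither $\del X$ nor $X^C$, so $B_r(z)\subseteq X$; since $x,y\in X^C$ we get $x,y\notin B_r(z)$, whence $d(x,z)\ge r$ and $d(z,y)\ge r$. As $z$ lies strictly between $x$ and $y$ on $L$, this forces $d(x,y)=d(x,z)+d(z,y)\ge 2r$, contradicting $d(x,y)<2r$. Every point of $L$ lying in $X$ is therefore within distance $r$ of $\del X$, i.e. in $N_r$, which gives $L\subseteq X^C\cup N_r$ and shows $\rho$ is defined on all of $L$.

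For injectivity of $\rho|_L$, suppose $w_1\ne w_2$ on $L$ with $\rho(w_1)=\rho(w_2)=:q$. If both lay in $X^C\cup\del X$ then $w_i=\rho(w_i)=q$, a contradiction, so at least one lies in $\Int X$ and $q=\pi(w_1)\in\del X$. The whole fibre $\rho^{-1}(q)$ is the inward-normal segment $\sett{q+\tau\nu_q:0\le\tau<r}$ at $q$, so $w_1$ and $w_2$ both lie on the normal line at $q$; being also two distinct points of $L$, they force $L$ to coincide with that normal line. But the inward normal at $q$ passes through the centre of the inner $r$-ball $B_r(q+r\nu_q)\subseteq X$, so $L$ contains a full diameter of this ball, all of which lies in $X$. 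Since $x,y\in X^C$ must then lie beyond its two endpoints on opposite sides, again $d(x,y)\ge 2r$, a contradiction; hence $\rho|_L$ is injective.

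For (ii), fix $w\in L$. If $\rho(w)=w$ then $\rho(w)\in L\subseteq B_s$ by convexity of $B_s$, so assume $w\in\Int X$ and set $p=\pi(w)=\rho(w)$, with inner $r$-ball $B_r(p_b)\subseteq X$ tangent at $p$. Since $x,y\in X^C$ lie outside $B_r(p_b)$ while $w\in(x,y)$ lies inside it, the segment $L$ meets the circle $\Gamma=\del B_r(p_b)$ in two points $x',y'\in L\subseteq B_s$, with $w$ on the chord between them; moreover $p$ is exactly the point of $\Gamma$ nearest to $w$, lying on the arc $A$ of $\Gamma$ cut off by that chord on the side away from $p_b$. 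It then remains to prove the planar fact that a disk $B_s$ of radius $s<r$ containing two points $x',y'$ of the circle $\Gamma$ of radius $r$ contains the whole sub-arc $A$ between them on the side away from the centre, and in particular contains $p$. This is the step I expect to be the main obstacle. I would prove it by noting that $B_s\cap\Gamma$ is a single arc, namely the set of points of $\Gamma$ closer than $s$ to the centre $c$ of $B_s$, which is an arc around the point of $\Gamma$ nearest $c$; a short computation with the two constraints $d(x',c),d(y',c)<s$ shows that $c$ lies on the same side of the chord as $A$ \emph{precisely because} $s<r$, so this nearest point, and hence the arc $B_s\cap\Gamma$ containing both $x'$ and $y'$, lies on the $A$ side. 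Therefore $A\subseteq B_s\cap\Gamma$ and $p\in B_s$, which together with the trivial case completes the proof that $\rho(L)\subseteq B_s$.
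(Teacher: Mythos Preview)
The paper does not give its own proof of this proposition; it is quoted from \cite{SLS}. So there is nothing in the paper to compare against, and I assess your argument on its own merits. Part~(i) is correct: the containment $L\subseteq X^C\cup N_r$ is clean, and the injectivity argument via the fibre description $\rho^{-1}(q)=\{q+\tau\nu_q:0\le\tau<r\}$ is sound. The sentence ``$L$ contains a full diameter of this ball'' is a small jump---what you really know is that $L$ lies on the normal line and contains a point with parameter $\tau\in(0,r)$, while its endpoints $x,y\in X^C$ have parameters outside the interval $(0,2r)$, which then forces them to opposite sides and gives $d(x,y)\ge 2r$---but the conclusion stands.

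Part~(ii) has a real gap in your justification of the planar fact. The claim that the centre $c$ of $B_s$ must lie on the same side of the chord $x'y'$ as the arc $A$ is false: take $\Gamma$ of radius $r=2$ centred at $p_b=0$, chord along the line $y=1$ (so $x'=(-\sqrt3,1)$, $y'=(\sqrt3,1)$ and $A$ is the upper arc), and $B_s$ of radius $s=1.9$ centred at $c=(0,0.9)$. Then $d(c,x')=d(c,y')=\sqrt{3.01}<s$, yet $c$ lies \emph{below} the chord, on the $p_b$ side. (This is realised in the full set-up by $X=\overline{B_2(0)}$, $x=(-1.8,1)$, $y=(1.8,1)$, $w=(0,1)$.) The conclusion you want is nevertheless true, by a cleaner route: by the law of cosines the arc $B_s\cap\Gamma$ consists of those angles $\theta$ (measured at $p_b$ from the direction of $c$) with $\cos\theta>\dfrac{r^2+|p_b-c|^2-s^2}{2r\,|p_b-c|}$, and since $s<r$ this threshold is strictly positive, so the arc has angular measure strictly less than $\pi$. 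It therefore cannot contain the major arc $A'$ (which has measure $>\pi$, because $p_b\notin L$ by part~(i) and hence the chord is not a diameter), and being a connected arc through both $x'$ and $y'$ it must contain the minor arc $A$; in particular $p\in B_s$.
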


\begin{defn}
  Let $L\subseteq\R^n$ be a closed line segment of length
  $\vert L\vert<2r$. Then the $r$-spindle $S(L,r)$ around $L$ is the intersection
  of all closed balls of radius $r$ whose boundaries contain both endpoints of $L$. If $x$ and $y$ are the endpoints of $L$, we will sometimes write $S(x,y,r)$ in stead of $S(L,r)$.
\end{defn}

\begin{lem}[du Plessis, A.20 \cite{TC}]\label{LemSpindleWidth}
Let $L$ be a closed line segment in $\R^2$ of length $\abs{L}<2r$. Then the maximal distance from a point in the $r$-spindle $S(L,r)$ to $L$ is $r-\sqrt{r^2-\frac{L^2}{4}}$.
\end{lem}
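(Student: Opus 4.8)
The plan is to reduce the statement to elementary plane geometry by choosing good coordinates and then observing that, despite the phrase ``all closed balls'' in the definition, the spindle is in fact the intersection of just \emph{two} balls. First I would place $L$ along the $x$-axis with its midpoint at the origin, so that its endpoints are $(\pm\abs{L}/2,0)$; the hypothesis $\abs{L}<2r$ guarantees $\abs{L}/2<r$, which is what makes the quantities appearing below real.

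Next I would identify the balls occurring in the definition of $S(L,r)$. A closed $r$-ball whose bounding circle passes through both endpoints of $L$ must have its centre at distance $r$ from each endpoint, hence on the perpendicular bisector of $L$, which is the $y$-axis. Writing the centre as $(0,c)$ and imposing distance $r$ to the endpoint $(\abs{L}/2,0)$ yields $c^2=r^2-\abs{L}^2/4$, so $c=\pm h$ with $h=\sqrt{r^2-\abs{L}^2/4}$. Thus there are exactly two admissible balls, and $S(L,r)=\overline{B_r((0,h))}\cap\overline{B_r((0,-h))}$ is the lens-shaped region between them, symmetric about both coordinate axes.

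It then remains to maximise the distance to $L$ over this lens. The key observation is that every point of the lens lies in the vertical strip $\abs{x}\le\abs{L}/2$: the two bounding circles meet precisely at the endpoints of $L$, and on each of the two circular arcs forming the lens boundary one checks that $\abs{x}$ is maximal at $y=0$, where it equals $\sqrt{r^2-h^2}=\abs{L}/2$. Consequently the nearest point of the \emph{segment} $L$ to any point of the lens is the foot of its perpendicular onto the $x$-axis, so the distance to $L$ is simply the vertical distance $\abs{y}$, rather than a distance to an endpoint. Maximising $\abs{y}$ over the lens, by symmetry it suffices to take the top point $(0,r-h)$, which lies on the upper arc of the lower circle; this gives maximal distance $r-h=r-\sqrt{r^2-\abs{L}^2/4}$, as claimed.

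The only genuinely delicate step is the containment of the lens in the strip $\abs{x}\le\abs{L}/2$: without it one would be forced to compare the vertical distance $\abs{y}$ against the distances to the two endpoints of $L$, and the clean formula would no longer drop out directly. Everything else is a routine computation with the two circles centred at $(0,\pm h)$.
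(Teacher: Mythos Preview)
Your argument is correct. The coordinate choice, the identification of the two admissible $r$-balls centred at $(0,\pm h)$ with $h=\sqrt{r^2-\abs{L}^2/4}$, the observation that the lens lies in the strip $\abs{x}\le\abs{L}/2$ (so that distance to $L$ coincides with $\abs{y}$), and the extraction of the maximum $r-h$ at the apex are all sound.

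There is, however, nothing to compare against: the paper does not supply its own proof of this lemma. It is quoted as an external result (du Plessis, A.20 in \cite{TC}) and used as a black box in later arguments, e.g.\ in the proofs of Lemma~\ref{LemTwoIntersections} and Lemma~\ref{LemGammaPassing2x3Grey}. So your write-up is not a comparison case but simply a self-contained elementary verification of a cited fact, which is fine.
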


\begin{lem}[du Plessis, A.13 \cite{TC}]\label{LemSpindleIsIntersection}
  Let $L\subseteq\R^n$ be a closed line segment of length
  $\vert L\vert<2r$. Then the $r$-spindle $S(L,r)$ is the intersection
  of all balls of radius at most $r$ that contain $L$.
\end{lem}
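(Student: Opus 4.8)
The plan is to prove the two inclusions separately. Let $p,q$ be the endpoints of $L$, write $a=\abs{L}/2<r$, and let $\mathcal{E}$ denote the family of closed balls of radius exactly $r$ whose boundary contains both $p$ and $q$, and $\mathcal{F}$ the family of all closed balls of radius at most $r$ containing $L$. By definition $S(L,r)=\bigcap\mathcal{E}$, so the claim is $\bigcap\mathcal{F}=\bigcap\mathcal{E}$. One inclusion is immediate: each ball in $\mathcal{E}$ has radius $r\le r$ and, being convex with $p,q$ on its boundary, contains the segment $L$; hence $\mathcal{E}\subseteq\mathcal{F}$ as families, and intersecting over the larger family gives $\bigcap\mathcal{F}\subseteq\bigcap\mathcal{E}=S(L,r)$.

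The substance is the reverse inclusion $S(L,r)\subseteq\bigcap\mathcal{F}$, that is, every radius-$\le r$ ball containing $L$ already contains the whole spindle. First I would reduce to the plane. The defining family $\mathcal{E}$ is invariant under rotations about the line $\ell$ through $L$, since such a rotation fixes $p,q$ and preserves radii; hence $S(L,r)$ is a surface of revolution about $\ell$. Given $x\in S(L,r)$ and a ball $\overline{B_s(c)}\in\mathcal{F}$, after rotating about $\ell$ I may assume $x$ lies in a fixed $2$-plane $P$ containing $\ell$. The two balls of $\mathcal{E}$ whose centres lie in $P$ cut out of $P$ the planar lens $S_P(L,r)$, and since $x$ lies in both of these balls we get $x\in S_P(L,r)$; meanwhile $\overline{B_s(c)}\cap P$ is a disk of radius $\sqrt{s^2-\operatorname{dist}(c,P)^2}\le r$ still containing $p,q$. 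So it suffices to prove the planar statement: every disk of radius $\rho\le r$ containing $p,q$ contains the lens $S_P(L,r)$.

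For the planar statement I would put the midpoint of $L$ at the origin with $\ell$ the first axis, so $p=(-a,0)$, $q=(a,0)$, and the two defining circles have centres $c_\pm=(0,\pm m)$ with $m=\sqrt{r^2-a^2}>0$, giving $S_P(L,r)=\overline{B_r(c_+)}\cap\overline{B_r(c_-)}$. Let $\overline{B_\rho(o)}$ contain $p,q$. Adding $\abs{o-p}^2\le\rho^2$ and $\abs{o-q}^2\le\rho^2$ yields $\abs{o}^2\le\rho^2-a^2\le m^2$, so the second coordinate of $o$ satisfies $\abs{o_2}\le m$. By convexity it is enough to check the boundary of the lens lies in $\overline{B_\rho(o)}$, and this boundary is two circular arcs. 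On the lower arc, lying on the circle of radius $r$ about $c_+$, every offset vector from $c_+$ has strictly negative second coordinate, whereas the point of that circle farthest from $o$ lies in the direction $c_+-o$, whose second coordinate $m-o_2\ge0$ is nonnegative; hence the farthest point is not on the arc, and the distance from $o$ is maximised over the arc at an endpoint $p$ or $q$, each at distance $\le\rho$. The upper arc is symmetric. Thus the whole lens lies within distance $\rho$ of $o$, proving the planar statement and hence the lemma.

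I expect the main obstacle to be the last paragraph: making rigorous that the extreme distance from $o$ to a point of a bounding arc is attained at an endpoint rather than in the interior. The clean device is the bound $\abs{o_2}\le m$, which forces the unique farthest point of each defining circle from $o$ to lie on the opposite side of the axis from the corresponding arc, so the (unimodal) distance function has no interior maximum along that arc. One should also dispatch the degenerate cases $o=c_\pm$, where the containing disk is one of the two defining disks and the inclusion is trivial, and $x\in\ell$, where $x\in L$ lies in every ball of $\mathcal{F}$ by convexity.
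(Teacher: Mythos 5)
This lemma is imported by the paper from du Plessis (\cite{TC}, A.13) without proof, so there is no in-paper argument to compare against; judged on its own merits, your proof is correct and complete. The easy inclusion via $\mathcal{E}\subseteq\mathcal{F}$ is fine. The reduction to the plane works: taking $P$ to be the plane through $\ell$ and $x$ (which is really what you use --- the ``rotate $x$ into $P$'' phrasing is slightly off, since a rotation moves the given ball too, but rotating the whole configuration or just choosing $P\ni x$ both repair it instantly), the section $\overline{B_s(c)}\cap P$ is a disk of radius at most $r$ containing $p,q$, and $x$ lies in the planar lens because it lies in the two members of $\mathcal{E}$ centred in $P$. The planar core is sound: adding the two endpoint inequalities indeed gives $\abs{o}^2\le\rho^2-a^2\le m^2$, the distance from $o$ to a point $c_++ru(\theta)$ of the defining circle is $\sqrt{\abs{c_+-o}^2+r^2+2r(c_+-o)\cdot u(\theta)}$, a shifted cosine in $\theta$ whose unique interior maximum sits in the direction $c_+-o$ with nonnegative second coordinate, while the lower arc has offset second coordinates at most $-m<0$; hence the maximum over the closed arc is at $p$ or $q$, at distance at most $\rho$. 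Convexity of the lens (the convex hull of its two boundary arcs) then finishes, and you correctly dispatch the degeneracies: $o=c_\pm$ forces $\rho=r$ so the disk is a defining disk, and $x\in\ell$ gives $x\in L$. Two trivial nits: $x\in L$ lies in every member of $\mathcal{F}$ by the very definition of $\mathcal{F}$, not ``by convexity''; and since the paper's $B_s(x)$ is an open ball, it is worth observing that your inequalities only sharpen (all $\le$ become $<$) in the open-ball reading, so the statement holds in either convention.
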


\begin{cor}[du Plessis, A.16 \cite{TC}]\label{CorPathInSpindle}
  Let $x, y\in X^C$ with $d(x,y)<2r$ and let $L\subseteq\R^n$ be the
  line segment between them. Then $\rho(L)$ is a subset of the
  $r$-spindle $S(L,r)$.
\end{cor}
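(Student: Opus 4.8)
The plan is to leverage part~(ii) of the preceding proposition of Stelldinger et al., which controls $\rho(L)$ by $s$-balls for every $s<r$, and then to recover the spindle $S(L,r)$ as the limit of the $s$-spindles as $s\uparrow r$.

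First I would fix $s$ with $\tfrac12\abs{L}<s<r$. There are exactly two closed $s$-balls whose boundaries contain both $x$ and $y$ --- their centres are the two points on the perpendicular bisector of $L$ at distance $\sqrt{s^2-\abs{L}^2/4}$ from the midpoint of $L$ --- and by definition $S(L,s)$ is their intersection. Since each of these balls contains $x$ and $y$, part~(ii) gives $\rho(L)\subseteq \overline{B_s}$ for both of them (the passage from open to closed balls is harmless, since $\rho(L)$ is the continuous image of the compact set $L$ and hence closed). Intersecting, $\rho(L)\subseteq S(L,s)$, and this holds for every $s\in(\tfrac12\abs{L},r)$.

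Next I would let $s\uparrow r$. By \cref{LemSpindleWidth} the maximal distance from $S(L,s)$ to $L$ is $s-\sqrt{s^2-\abs{L}^2/4}$, a strictly decreasing function of $s$; combined with the lens description above, this shows the spindles are nested, $S(L,s)\supseteq S(L,s')$ for $s<s'<r$, and contract continuously onto $S(L,r)$. Hence $\bigcap_{s<r}S(L,s)=S(L,r)$, and since $\rho(L)$ lies in every $S(L,s)$ with $s<r$, we obtain $\rho(L)\subseteq S(L,r)$, as desired.

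The step that needs genuine care --- and which I expect to be the main obstacle --- is precisely this passage to the limit, since part~(ii) only supplies balls of radius strictly less than $r$ while $S(L,r)$ is built from balls of radius exactly $r$; one must check that the radius-$r$ balls impose no new constraint, i.e.\ that $\bigcap_{s<r}S(L,s)=S(L,r)$ rather than a strictly larger set. I would confirm this either through the explicit inequality describing membership in $S(L,s)$ (placing $L$ on the $x$-axis, a point off $L$ lies in $S(L,s)$ iff $\sqrt{s^2-\abs{L}^2/4}$ is at most a constant depending only on that point, a threshold approached continuously as $s\uparrow r$), or by contradiction: were some $p\in\rho(L)$ to lie strictly outside $S(L,r)$, it would lie strictly outside one of the two extreme closed $r$-balls, and shrinking that ball's radius slightly while keeping $x$ and $y$ on its boundary would yield an $s$-ball with $s<r$ containing $x$ and $y$ but excluding $p$, contradicting part~(ii).
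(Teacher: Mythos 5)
The paper offers no proof of this corollary at all---it is imported verbatim from du Plessis (\cite{TC}, A.16)---so there is no in-paper argument to compare yours against; judged on its own, your proof is correct and is built exactly from the ingredients the paper supplies (part (ii) of the Stelldinger et al.\ proposition plus the spindle lemmas), with the genuine crux---that balls of radius exactly $r$ impose no new constraint in the limit $s\uparrow r$---correctly identified and soundly handled by either of your two closing arguments. Two repairs/simplifications. First, the one step whose stated justification is off-target is the application of part (ii) to the two closed $s$-balls whose \emph{boundaries} contain $x$ and $y$: the paper's balls are open, and an open $s$-ball with $x,y$ on its boundary does not contain $x$ and $y$, so part (ii) does not literally apply; the closedness of $\rho(L)$ does not fix this (it addresses the harmless direction, open $\subseteq$ closed). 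The correct repair is to apply part (ii) to the open balls $B_{s'}(c)$ with the same centre $c$ and any radius $s'\in(s,r)$, which do contain $x$ and $y$, and then intersect over $s'\downarrow s$ to get $\rho(L)\subseteq\overline{B_s(c)}$. Second, your nesting claim $S(L,s)\supseteq S(L,s')$ for $s<s'$ follows more cleanly from Lemma \ref{LemSpindleIsIntersection} (every ball of radius at most $s$ containing $L$ is in particular a ball of radius at most $s'$ containing $L$, so $S(L,s')$ is an intersection over a larger family of balls) than from the width formula; in fact nesting is not needed at all, since the corollary only requires $\bigcap_{s<r}S(L,s)\subseteq S(L,r)$, which your continuity (or contradiction) argument establishes. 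Note also that your ``exactly two balls'' lens description is specific to the planar case, which is the setting of this paper; in $\R^n$ the same limiting argument runs over the full rotational family of centres, one perpendicular direction at a time.
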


\begin{rem}
Since $\pi=\rho_X\circ\rho_{X^C}=\rho_{X^C}\circ\rho_X$, the above
corollary is also true for $\pi$.
\end{rem}

\section{Impossible configurations at a resolution satisfying \texorpdfstring{$d\sqrt{2}<r$}{dsqrt(2)<r}}\label{SecConfigurations}
Before we start reconstructing the original $r$-regular object, we need to discuss which configurations of $3\times 3$ pixels of grey, black and white pixels can occur in the digital image of an $r$-regular object by a lattice $d\Z^2$ where $d\sqrt{2}<r$. We can make a computer put together all possible configurations of $3\times 3$ pixels by telling it that the only possible configurations of $2\times 2$ pixels are the ones in Figure \ref{Fig2x2Configurations}, up to rotation and interchanging of black and white. We can then make a MatLab programme that combines these configurations in all possible ways.
If we do this, we get (up to rotation, mirroring and switching of black and white pixels) the configurations in Figure \ref{FigMany3x3Configurations}.

\begin{figure}
\includegraphics[scale=0.3]{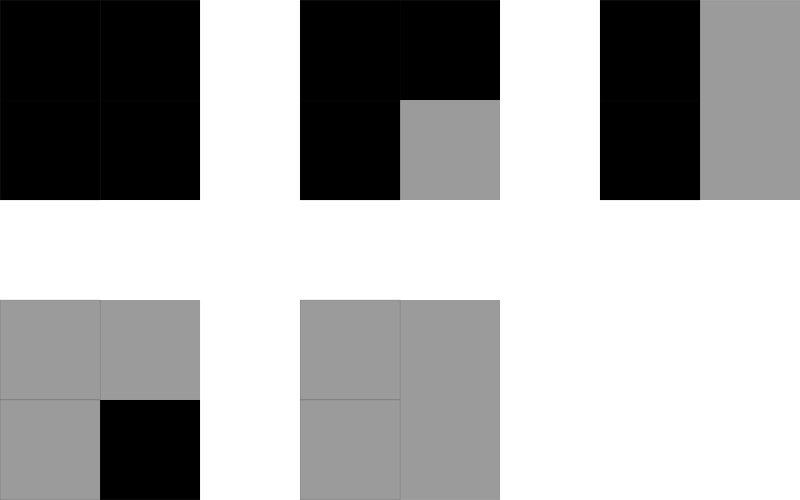}
\caption{The only possible configurations of $2\times2$ pixels, up to rotation and switching of black and white. Note that we have used Lemma \ref{LemSkalfarvessorte1}, part ii), which is stated below.}
\label{Fig2x2Configurations}
\end{figure}

\begin{figure}
\includegraphics[scale=0.7]{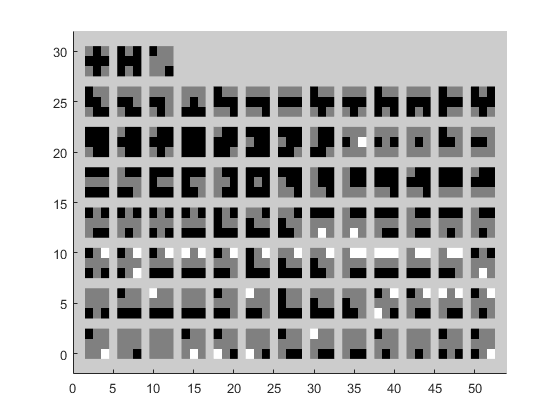}
\caption{All possible combinations of the allowed $2\times2$ pixel configurations, up to rotation, mirroring and interchanging of black and white pixels.}
\label{FigMany3x3Configurations}
\end{figure}

Note that not all these configurations can occur in the image of some $r$-regular object by a lattice $(d\Z)^2$ with $d\sqrt{2}<r$. We would like to remove configurations that do not occur from the list in Figure \ref{FigMany3x3Configurations}. To do so, we need to prove a series of lemmas. Their proofs are mainly geometric and rather technical, so we will put them in the appendix instead of presenting them here.

First of all, let us start with a definition, borrowed from Pavlidis' book \cite{Pavlidis}.

\begin{defn}
Two pixels are \emph{direct neighbours} (abbreviated \emph{d-neighbours}) if the respective cells share a side. Two pixels are \emph{indirect neighbours} (abbreviated \emph{i-neighbours}) if those cells touch only at a corner. The term \emph{neighbour} denotes either type.
\end{defn}

In the following lemmas, we will only be considering pixel configurations in images of $r$-regular objects by lattices $d\Z^2$ with $d\sqrt{2}<r$ according to our convention, but for brevity we will omit this requirement from the lemma statements. %We will also rely heavily on the fact than an $r$-regular object with $d\sqrt{2}<r$ is in particular $d\sqrt{2}$-regular.

\begin{lem}\label{LemTwoIntersections}
  Consider four pixels as in Figure \ref{FigToSkaeringer}. Suppose
  $\del X$ intersects the edge between the two pixels $B$ and $C$ more
  than once. Then one of the pixels $A$ and $D$ is black, and the
  other one is white.
  The same result is true if $L$ is tangent to $\del X$ in a point.
\end{lem}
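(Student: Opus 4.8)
The plan is to combine the confinement of the local boundary arc coming from the spindle estimates with the fact, recorded in \cref{PropEquivalentRregDefinitions}, that both $X$ and $\overline{X^C}$ are unions of closed $r$-balls. Throughout I use the result of Latecki et al.\ quoted above: a pixel is black exactly when it lies in $X$ and white exactly when it lies in $X^C$. Thus it suffices to produce a point of $A$ lying in $X$ and a point of $D$ lying in $X^C$ (or the reverse), together with the information that neither $A$ nor $D$ meets $\del X$ at all. By the symmetry built into the statement (interchanging black and white), I may assume that the middle portion of $L$ cut out by the two \emph{outermost} intersection points $p_1,p_2$ lies in $\Int(X)$, while the two end portions of $L$, and in particular the corners at the ends of $L$, lie in $X^C$. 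Since $p_1,p_2\in L$ we have $d(p_1,p_2)\le d<r$.

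First I would show that the crossings can only produce a small bump. Pushing $p_1,p_2$ slightly into $X^C$ and applying \cref{CorPathInSpindle} (together with the remark that it holds for $\pi$ as well), the sub-arc of $\del X$ that joins $p_1$ to $p_2$ across the protrusion lies in the spindle $S(p_1,p_2,r)$. By \cref{LemSpindleWidth} every point of this spindle is within
\[ r-\sqrt{r^2-\tfrac{d(p_1,p_2)^2}{4}}\ \le\ r-\sqrt{r^2-\tfrac{d^2}{4}} \]
of $L$, and a short estimate using $d\sqrt2<r$ shows this quantity is strictly smaller than $d$. Hence the protruding arc stays inside the two pixels $B$ and $C$ adjacent to $L$ and cannot reach $A$ or $D$. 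Consequently $\del X$ meets neither $A$ nor $D$, so each of them is entirely black or entirely white, and it remains only to check that they receive \emph{opposite} colours.

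To pin down the colours I would use the two tangent $r$-balls supplied by $r$-regularity at one of the crossing points. At $p_1$ there is an inward ball $B_r(x_b)\subseteq X$ and an outward ball $B_r(x_w)\subseteq X^C$ meeting only at $p_1$ and lying on opposite sides of the tangent line to $\del X$ there. Because $r>d\sqrt2$ exceeds the pixel diagonal, each of these balls is larger than a whole pixel, and the crux is that $A$ and $D$ lie on opposite sides of this tangent line, one contained in $B_r(x_b)$ and the other in $B_r(x_w)$. I would verify this by locating the centres: each lies at distance $r$ from $p_1$ along the corresponding normal, so, using that the four-pixel block has diameter at most $2d<r\sqrt2$ and the placement of $A,D$ relative to $L$ recorded in \cref{FigToSkaeringer}, one checks that the appropriate pixel falls entirely within the appropriate ball. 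This places $A$ in $X$ and $D$ in $X^C$ (or conversely), giving the desired opposite colours.

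Finally I would treat the tangential case. If $\del X$ is tangent to $L$ at a single point rather than crossing transversally twice, the spindle degenerates, but the two tangent $r$-balls at the point of tangency are still present and still lie on opposite sides of $L$, so the same engulfing argument applies; alternatively this case follows as a limit of the transverse one under a small perturbation of $L$. The hard part is the metric bookkeeping of the previous paragraph, namely verifying \emph{sharply} from $d\sqrt2<r$ that the large inward and outward $r$-balls indeed cover the correct far pixels on opposite sides of $\del X$. This is precisely the kind of technical geometric estimate that the authors defer to the appendix.
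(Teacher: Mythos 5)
There is a genuine gap, and it sits exactly at what you yourself call the crux. You anchor the ball-engulfing argument at a crossing point $p_1$ of $\del X$ with the edge $L$. But at a transversal crossing the tangent line to $\del X$ is \emph{not} parallel to $L$ --- it can be at any angle, including perpendicular to $L$. If $\del X$ crosses $L$ perpendicularly at $p_1$, the normal to $\del X$ there is parallel to $L$, so the centres $x_b$ and $x_w$ of the two osculating $r$-balls lie along the direction of the edge itself; they land in (or beyond) the pixels $B$ and $C$, not in $A$ and $D$, and neither ball engulfs the far pixels. Your claim that ``$A$ and $D$ lie on opposite sides of this tangent line'' simply fails in this situation, so the check you defer cannot be carried out. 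The missing idea --- and the key step of the paper's proof --- is to use the continuity of the projection $\pi$ (Lemma \ref{LemProjection}) on the sub-segment between the two intersection points: the arc $\pi(L')\subseteq\del X$ starts and ends on $L$, so it contains a point $p$ where the tangent to $\del X$ is \emph{parallel} to the edge. Only at such a point is the normal perpendicular to $L$; combined with the spindle bound $d(p,L)\le \sqrt{2}-\sqrt{2-\tfrac14}$ (in units of $d$), the ball centres at distance $\sqrt2$ from $p$ along that normal satisfy $1<d(x_b,L),d(x_w,L)<1.51$, placing one in $A$ and one in $D$, whence those pixels are engulfed and get opposite colours. (Ironically, your treatment of the tangential case is the one place where your argument is sound, because there the tangent \emph{is} parallel to $L$ by hypothesis --- which is exactly why the paper reduces the tangential case to the same computation.)

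A secondary flaw: you deduce that ``$\del X$ meets neither $A$ nor $D$'' from the spindle estimate. The spindle only confines the \emph{one} sub-arc of $\del X$ joining $p_1$ to $p_2$; other arcs or other components of $\del X$ are not controlled by it, so this deduction is unjustified (indeed, it is part of the conclusion being proved). The paper avoids needing any such statement: once a pixel contains the centre of a $\sqrt2 d$-ball contained in $X$ (resp.\ $X^C$), that pixel is contained in the ball and is therefore black (resp.\ white) outright.
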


\begin{figure}
  \includegraphics[height=5cm]{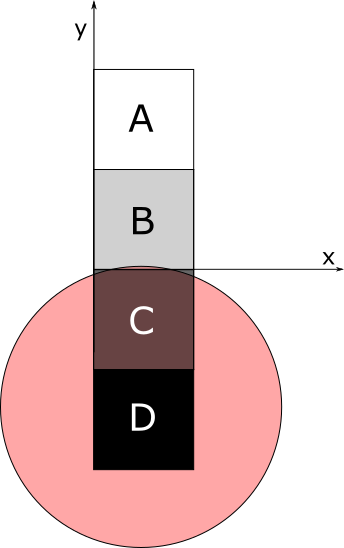}
  \caption{Consider four pixels as in the figure, where the boundary
    $\del X$ intersects the edge between the pixels $B$ and $C$
    twice. The proof consists of showing that there must be two $\sqrt{2}d$-balls with
    centres in $A$ and $D$, respectively, and that one of the balls is
    black and the other one white.}
  \label{FigToSkaeringer}
\end{figure}

\begin{figure}
\centering
\includegraphics[scale=0.45]{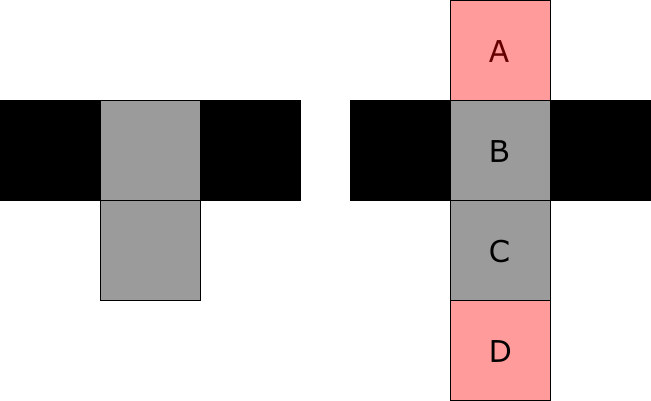}
\caption{If a configuration as the left one occurs in a digital image of an $r$-regular object with $d\sqrt{2}<r$, then pixel $A$ must be black, and pixel $D$ must be white.}
\label{FigYieldsTwoIntersections}
\end{figure}

\begin{lem}\label{LemYieldsTwoIntersections}
In a configuration as the one in Figure \ref{FigYieldsTwoIntersections} left, the pixel named $A$ in Figure \ref{FigYieldsTwoIntersections} right must be black, and the the pixel named $D$ must be white.
\end{lem}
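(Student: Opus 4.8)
The plan is to reduce the statement to Lemma~\ref{LemTwoIntersections}. That lemma already guarantees that whenever $\del X$ meets the edge between two adjacent pixels $B$ and $C$ more than once, or tangentially, exactly one of the flanking pixels $A$, $D$ is black and the other white. So the work divides into two tasks: (i) showing that the colour pattern displayed in Figure~\ref{FigYieldsTwoIntersections} (left) forces $\del X$ to meet the relevant edge at least twice, or to be tangent to it; and (ii) deciding which of the two flanking pixels is the black one.

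For task (i) I would first read off from Figure~\ref{FigYieldsTwoIntersections} (left) which pixels are grey — so that $\del X$ passes through their interiors — and which are black or white, fixing the side of $\del X$ on which those pixels lie. The admissible $2\times 2$ patterns of Figure~\ref{Fig2x2Configurations} then constrain how $\del X$ may sit around each shared edge. Within every grey pixel the boundary enters and leaves across its edges, and the surrounding colours restrict the possible entry and exit edges; tracing the single boundary arc through the grey pixels of the configuration, these constraints force the arc to enter the pair $\{B,C\}$ and to leave it on the same side, which a single transversal crossing of the shared edge cannot produce. Hence the arc must cross that edge more than once, or touch it tangentially. Here I would lean on $r$-regularity together with the resolution bound $d\sqrt 2 < r$: over a region of diameter comparable to $d$ the boundary is nearly straight, since its curvature is at most $1/r$, so Corollary~\ref{CorPathInSpindle} and Lemma~\ref{LemSpindleWidth} confine the arc to a thin spindle and rule out the detours that would otherwise let it exit without recrossing the edge. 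This verification is the crux of the proof.

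Once the double or tangential intersection is established, Lemma~\ref{LemTwoIntersections} applies and produces two $\sqrt 2\,d$-balls centred in $A$ and $D$, one contained in $X$ and the other in $X^C$; thus one of $A$, $D$ is black and the other white. For task (ii) I would use the residual asymmetry of the configuration: the orientation of the boundary arc fixed in task (i) determines which side of the U-turn carries $X$ and which carries $X^C$, and this is corroborated by the black and white pixels already present elsewhere in Figure~\ref{FigYieldsTwoIntersections} (left). Matching these identifies $A$ as the black pixel and $D$ as the white one. The main obstacle is task (i): the careful, somewhat tedious case analysis — organised by which edges of each grey pixel the boundary may use — showing that no $r$-regular boundary consistent with the prescribed colours can cross the $B$--$C$ edge only once.
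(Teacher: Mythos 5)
Your overall skeleton matches the paper's: reduce to Lemma~\ref{LemTwoIntersections} by showing that $\del X$ must meet the edge $e$ shared by the two grey pixels $B$ and $C$ more than once (or tangentially), then break the black/white symmetry between $A$ and $D$. But there is a genuine gap: the step you yourself call ``the crux'' --- forcing the double crossing --- is never actually carried out. You offer a prospective case analysis over entry/exit edges and a heuristic about the boundary being nearly straight because its curvature is at most $1/r$, and you assert, without proof, a ``U-turn'' mechanism (the arc enters and leaves $\{B,C\}$ on the same side). That is a plan, not an argument. The paper shows no case analysis is needed: since $B$ and $C$ are both grey, each contains a point of $\del X$; both points lie in the ball $B_{\sqrt{5}/2\,d}(p)$ centred at the midpoint $p$ of $e$, so by Corollary~\ref{CorPathInSpindle} and Lemma~\ref{LemSpindleIsIntersection} (applied to $\pi$) they are joined by a path in $\del X$ contained in that ball. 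This path must cross the line containing $e$, and it cannot do so outside $e$, because inside the ball the rest of that line lies in black pixels; hence it crosses $e$ itself. Finally, since both endpoints of $e$ are black, a single transversal crossing is impossible ($\del X$ separates black points from white ones along $e$), so $\del X$ meets $e$ at least twice --- exactly the hypothesis of Lemma~\ref{LemTwoIntersections}. This parity-at-the-endpoints observation replaces your entire proposed case analysis.

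Your task (ii) is also left vague: ``the orientation of the boundary arc determines which side of the U-turn carries $X$'' is not something you have pinned down, since the arc itself was never constructed. The paper settles the symmetry in one line using the standing Convention: if the colours were assigned the other way (with $A$ white and $D$ black), a black pixel and a white pixel of the resulting configuration would share a corner, which would force $\del X$ to pass through a pixel corner --- excluded by assumption. Your remark about ``matching with the black and white pixels already present'' points in the right direction, but you need to state that the forbidden corner-sharing is precisely what rules out the alternative assignment.
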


\begin{lem}\label{LemNo9greys}
  Consider a configuration of $3\times3$ pixels with the middle one grey. Then one of its 8 neighbour pixels is not grey.
\end{lem}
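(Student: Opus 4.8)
The plan is to argue directly: assuming the middle pixel is grey, I will produce a neighbour pixel that is entirely black or entirely white. Since the middle pixel is grey, $\del X$ meets it, so we may pick a boundary point $p$ lying in the middle pixel. By $r$-regularity there are tangent $r$-balls $B_r(x_b)\subseteq X$ and $B_r(x_w)\subseteq X^C$ meeting only at $p$; write $\mathbf n$ for the unit normal at $p$ pointing from $p$ towards $x_b$, so that $x_b=p+r\mathbf n$ and $x_w=p-r\mathbf n$. The first step is to shrink these balls to a scale comparable with the grid. Since any disk tangent to a fixed line at $p$ from a given side is contained in every larger disk tangent to the same line at $p$ on that side, and since $d\sqrt2<r$, we obtain
\begin{equation*}
B_{d\sqrt2}(p+d\sqrt2\,\mathbf n)\subseteq B_r(x_b)\subseteq X,\qquad B_{d\sqrt2}(p-d\sqrt2\,\mathbf n)\subseteq B_r(x_w)\subseteq X^C .
\end{equation*}
Thus it suffices to find one of the eight neighbour pixels lying entirely inside one of these two disks of radius $d\sqrt2$: such a pixel is contained in $X$ or in $X^C$ and is therefore black or white, hence not grey.

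Write $c_b=p+d\sqrt2\,\mathbf n$ and $c_w=p-d\sqrt2\,\mathbf n$ for the two disk centres. A pixel has half-diagonal $\tfrac{d}{\sqrt2}$, so a pixel is contained in $B_{d\sqrt2}(c_b)$ as soon as its centre lies within distance $d\sqrt2-\tfrac{d}{\sqrt2}=\tfrac{d}{\sqrt2}$ of $c_b$, and similarly for $c_w$. Place the grid so that the middle pixel is centred at the origin; then the eight neighbour centres are the points of $\sett{-d,0,d}^2\setminus\sett{(0,0)}$, while the middle point $p=(p_1,p_2)$ satisfies $\abs{p_1},\abs{p_2}<\tfrac d2$. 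Because $c_b+c_w=2p$, the two centres $c_b,c_w$ are antipodal on the circle of radius $d\sqrt2$ about $p$. Hence the statement reduces to the following purely elementary claim: for every admissible $p$ and every direction $\mathbf n$, at least one of the two antipodal points $c_b,c_w$ lies within $\tfrac{d}{\sqrt2}$ of one of the eight neighbour centres.

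To verify the claim one checks, as $\mathbf n$ ranges over all directions, that the union of the eight disks of radius $\tfrac{d}{\sqrt2}$ about the neighbour centres already covers a large enough arc of the circle of radius $d\sqrt2$ about $p$ that no antipodal pair of points on that circle can simultaneously avoid all eight disks. For $p=0$ this is a short computation: a direction within about $28^\circ$ of an edge direction pulls $c_b$ into the disk about the corresponding edge-neighbour centre, while a direction within about $29^\circ$ of a diagonal direction pulls $c_b$ into the disk about the corresponding corner-neighbour centre, and these angular windows overlap, so in fact $c_b$ alone always works when $p=0$. The main obstacle is the off-centre case: as $p$ moves towards a corner of the middle pixel, $c_b$ can be displaced by up to $\tfrac{d}{\sqrt2}$ and may leave every neighbour disk, and one must then show that its antipode $c_w$ is pulled into a neighbour disk. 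Establishing this uniformly over the compact parameter set $\sett{(p,\mathbf n)}$—equivalently, checking the worst positions $p$ near the pixel corners together with the worst angles, where the two angular windows are tightest—is the technical heart of the argument; the convention that $\del X$ avoids pixel corners removes the borderline equality cases. Once the covering claim is established, the neighbour pixel it produces is black or white, which is the desired non-grey neighbour.
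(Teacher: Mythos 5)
Your reduction is sound as far as it goes: shrinking the tangent $r$-balls to radius $d\sqrt2$ is legitimate, the criterion that a pixel lies in $B_{d\sqrt2}(c_b)$ once its centre is within $d/\sqrt2$ of $c_b$ is correct, and your $p=0$ computation (angular windows of about $27.9^\circ$ and $29.0^\circ$, which overlap) is right. But the proof stops exactly where the difficulty begins. The whole content of the lemma has been repackaged into the covering claim for off-centre $p$ --- that for every $p$ in the middle pixel and every direction, one of the antipodal points $c_b,c_w$ lies within $d/\sqrt2$ of one of the eight neighbour centres --- and that claim is asserted, not proved: you call it ``the technical heart'' and gesture at checking worst corners and worst angles, but carry out no such check. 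The claim is true, but it is genuinely delicate. In units $d=1$, the points reachable by $c_b$ or $c_w$ that avoid all eight neighbour disks form four regions lying in the half-planes $\sett{x\geq 1.5}$, $\sett{x\leq-1.5}$, $\sett{y\geq1.5}$, $\sett{y\leq-1.5}$; one must then exclude both points landing in the same region (the midpoint would leave the middle pixel), in opposite regions (then $\norm{c_b-c_w}\geq 3>2\sqrt2$), and in perpendicular regions, where the midpoint constraint forces $\norm{c_b-c_w}^2\geq(1.5+0.5)^2+(0.5+1.5)^2=8$ with equality precisely when $c_b=(1.5,0.5)$, $c_w=(-0.5,-1.5)$ and $p=(0.5,-0.5)$ is a pixel corner --- the case ruled out only by the Convention that $\del X$ avoids corners. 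Since the claim fails by equality exactly at such configurations, no soft compactness or perturbation argument can settle it; some version of this case analysis is unavoidable, and none appears in your proposal.

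The gap is also avoidable, because the paper's proof never touches a boundary point or tangent balls. It takes $c$ to be the \emph{centre} of the grey pixel, notes that $c\in X$ or $c\in X^C$, and applies Proposition~\ref{PropEquivalentRregDefinitions} (together with the fact that $r$-regularity with $r>d\sqrt2$ gives $d\sqrt2$-regularity) to obtain a closed ball of radius $d\sqrt2$ contained in $X$ (or in $\overline{X^C}$) and containing $c$. Its centre lies within $d\sqrt2<1.5d$ of $c$, hence inside the $3\times3$ block; and the pixel containing that centre --- whichever pixel it happens to be --- lies entirely inside the ball, since the ball's radius equals the pixel diameter. That pixel is therefore black (or white), and it cannot be the grey middle pixel, so it is one of the eight neighbours. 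The idea you missed is to let the ball's centre choose the pixel, rather than demanding that the ball swallow one of eight pre-assigned lattice pixels; your more rigid requirement is what produces the hard, equality-tight covering problem that your write-up leaves unresolved.
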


\begin{figure}
  \includegraphics[scale=0.5]{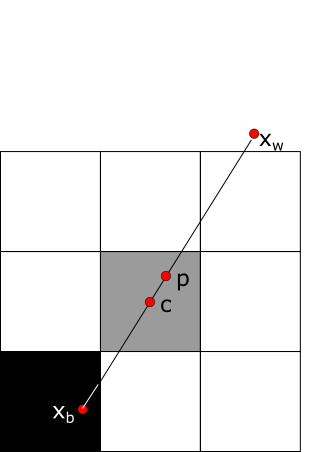}
  \caption{In Theorem \ref{Thm5greys}, we consider 9 pixels, of which
    the middle one is grey, as in the figure.}
  \label{Fig5greysAdd}
\end{figure}

\begin{lem}\label{Thm5greys}
  Consider a configuration of $3\times3$ pixels as in Figure
  \ref{Fig5greysAdd}, where the middle one is grey and has centre
  $c$. Let $p=\pi(c)$ be the point of $\del X$ that is nearest to $c$, and
  suppose the centre of the black ball $B_r(x_b)$ tangent to $\del X$
  at $p$ is closer than the centre of the white ball $B_r(x_w)$
  tangent to $\del X$ at $c$, and that $x_b$ belongs to the lower left
  pixel (which is hence black).

  Then the upper right pixel is white.
\end{lem}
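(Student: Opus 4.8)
The plan is to show that the upper right pixel $P$ is contained in the white $r$-ball $B_r(x_w)$. Since $B_r(x_w)\subseteq X^C$ by $r$-regularity, this gives $\text{area}(X\cap P)=0$, so the intensity of $P$ is $\phi(0)=0$ and $P$ is white. Because $B_r(x_w)$ is convex and $P$ is the convex hull of its four corners, it suffices to check that each corner of $P$ lies within distance $r$ of $x_w$. So the whole lemma reduces to a statement about the four corners of a single pixel.

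To set this up I would place $c$ at the origin with pixels of side $d$, so that $P=[d/2,3d/2]^2$ and the lower left pixel is $[-3d/2,-d/2]^2$. By $r$-regularity the points $x_b,p,x_w$ are collinear along the normal to $\del X$ at $p$, with $p$ the midpoint of the segment $x_bx_w$; moreover $c$ lies on this line since $\pi(c)=p$. The hypothesis that $x_b$ is closer to $c$ than $x_w$ is precisely the statement that $c$ lies on the $X$-side of $p$, so writing $\hat n$ for the unit outward normal (pointing into $X^C$) one gets $p=\abs{s}\hat n$ and $x_w=(r+\abs{s})\hat n$, where $\abs{s}=d(c,p)=d(c,\del X)$. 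Since the central pixel is grey, $\del X$ meets it, so $\abs{s}<d/\sqrt{2}$ (the half-diagonal), strictly, by the convention that $\del X$ avoids pixel corners. Finally, $x_b=-(r-\abs{s})\hat n$ lying in the lower left pixel forces $\hat n$ into the open first quadrant with $(r-\abs{s})\hat n\in[d/2,3d/2]^2$; in particular $\hat n$ is confined to a bounded angular range about the diagonal direction.

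With $\hat n$ in the first quadrant, the corner of $P$ farthest from $x_w$ is the one with the smallest projection onto $\hat n$, namely the inner corner $q_0=(d/2,d/2)$ shared by the four central pixels. Thus everything reduces to the single inequality $d(q_0,x_w)<r$. Writing $v=q_0-p$ and expanding, this is equivalent to $\abs{v}^2<2r\,(v\cdot\hat n)$, an inequality in the three quantities $\abs{s}$, the angle of $\hat n$, and $d$, which are coupled by the constraints $d\sqrt{2}<r$, $\abs{s}<d/\sqrt{2}$, and $(r-\abs{s})\hat n\in[d/2,3d/2]^2$.

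The hard part is that this inequality is genuinely tight: in the extreme case $\abs{s}\to d/\sqrt{2}$ with $\hat n$ along the diagonal one finds $d(q_0,x_w)\to r$, so it cannot be closed by bounding each quantity crudely and separately. The technical heart is to exploit the coupling — observing that as $\abs{s}$ approaches its maximum the constraint $(r-\abs{s})\hat n\in[d/2,3d/2]^2$ simultaneously forces $\hat n$ toward the diagonal, where the projection $v\cdot\hat n$ is largest — and to check that the admissible region for the pair (angle, $\abs{s}$) never meets the surface $d(q_0,x_w)=r$, the strictness coming from the convention excluding boundary passage through corners. This case analysis is exactly the kind of elementary but delicate trigonometric estimate that motivates deferring the computation to the appendix.
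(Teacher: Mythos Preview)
Your overall plan is the paper's: show that $B_r(x_w)$ contains the upper-right pixel $P$, equivalently that $x_w$ lies in the convex region $Y=\bigcap_i B_r(q_i)$ over the four corners $q_i$ of $P$. Your coordinate setup, the collinearity of $x_b,p,c,x_w$, the bound $\abs{s}<d/\sqrt2$, and the constraint $(r-\abs{s})\hat n\in[d/2,3d/2]^2$ are all correct, and your observation that the diagonal case $\abs{s}\to d/\sqrt2$, $\hat n=(1,1)/\sqrt2$ is tight is exactly right.

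The gap is the reduction to the single corner $q_0$. The assertion ``the farthest corner from $x_w$ is the one with smallest projection onto $\hat n$'' is false for a finite $x_w$: the farthest corner minimises $2\,q\cdot x_w-\abs{q}^2$, not $q\cdot\hat n$. Concretely, with $d=1$, $r=\sqrt2$, $s=0$ and $\hat n=\bigl(\tfrac{1}{2\sqrt2},\tfrac{\sqrt7}{2\sqrt2}\bigr)$, the point $x_b=(-\tfrac12,-\tfrac{\sqrt7}{2})$ lies on the right edge of the lower-left pixel, so the hypothesis holds; but then $x_w=(\tfrac12,\tfrac{\sqrt7}{2})\approx(0.5,1.32)$, and the corner of $P=[\tfrac12,\tfrac32]^2$ farthest from $x_w$ is $q_1=(\tfrac32,\tfrac12)$ at distance $\approx 1.29$, not your $q_0$ at distance $\approx 0.82$. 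Checking only $d(q_0,x_w)<r$ therefore proves nothing in this regime. The paper handles this by parametrising $x_w=(x_1,x_2)$ along the admissible family and comparing $x_2$ to the \emph{upper boundary of $Y$}, which is piecewise: one circular arc (about one corner) for $x_1\le\tfrac12$ and another (about a different corner) for $x_1\ge\tfrac12$. In your language, two different corners of $P$ can be ``farthest'' depending on how far $\hat n$ is from the diagonal, and both must be checked. Repairing your argument means carrying out the estimate for $q_1$ (and, before the diagonal symmetry reduction, $q_2$) as well; the method survives, but the ``single inequality'' claim does not.
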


\begin{figure}
\includegraphics[scale=0.4]{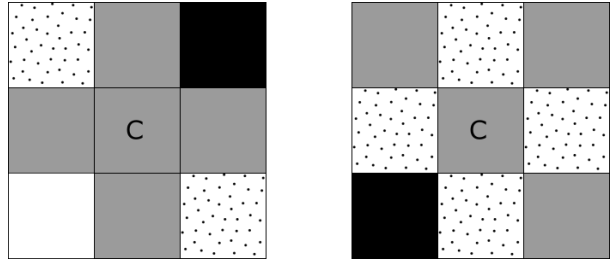}
\caption{If a grey pixel has four grey d-neighbours as in the left figure, it must have a black and a white i-neighbour sitting diagonally across from each other. Equivalently, if a grey pixel does not have a black and a white i-neighbour sitting across from each other as in the right figure, it cannot have four grey d-neighbours.}
\label{FigRemark}
\end{figure}

\begin{rem}\label{Rem3x3Pix}
  Theorem \ref{Thm5greys} and Lemma \ref{LemNo9greys} combined tell us
  that a grey pixel $C$ with four grey d-neighbours must always have a black and a white i-neighbour whose common vertices with $C$ sit diagonally across from each other, see Figure \ref{FigRemark}, left.
  Equivalently, if a grey pixel $C$ does not have a black and a white neighbour sitting opposite of each other, then at least one of its d-neighbours must not be grey.
\end{rem}

\begin{figure}
  \centering
  \begin{minipage}[b]{0.3\textwidth}
    \centering
    \includegraphics[scale=0.4]{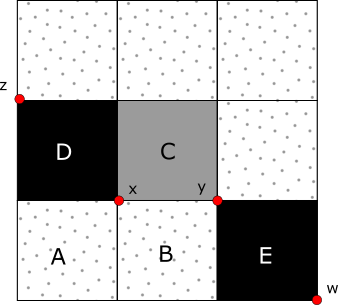}
    \subcaption{}
    \label{FigSkalfarvessorte1}
  \end{minipage}
  \begin{minipage}[b]{0.3\textwidth}
    \centering
    \includegraphics[scale=0.4]{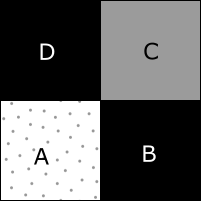}
    \subcaption{}
    \label{FigSkalfarvessorte3a}
  \end{minipage}
  \begin{minipage}[b]{0.3\textwidth}
    \centering
    \includegraphics[scale=0.4]{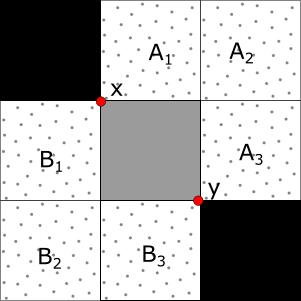}
    \subcaption{}
    \label{FigSkalfarvessorte2}
  \end{minipage}
  \caption{Consider configurations of two black and a grey pixel (we
    do not assume anything about the colour of the dotted pixels). If
    a grey and two black pixels sit in a configuration
    as in Figure \ref{FigSkalfarvessorte1} or
    \ref{FigSkalfarvessorte3a}, then the pixels $A$ and $B$ must also
    be black. If a grey and two black (or two
    white) pixels sit in a configuration as in Figure
    \ref{FigSkalfarvessorte2}, then either the pixels $A_1$, $A_2$,
    $A_3$ are all black, or the pixels $B_1$,
    $B_2$, $B_3$ are all black.}
\end{figure}

\begin{lem}\label{LemSkalfarvessorte1}\label{LemSkalfarvessorte2}
The following holds:
  \begin{enumerate}[(i)]
  \item Consider $2\times3$ pixels as in the lower part of Figure
    \ref{FigSkalfarvessorte1} with the grey and black pixels placed
    relatively to each other as in the figure. Then pixels $A$ and $B$
    must necessarily be black.

  \item Consider $2\times 2$ pixels as in Figure
    \ref{FigSkalfarvessorte3a}, with the grey and black pixels placed
    relatively to each other as in the figure. Then $A$ must necessarily
    be black.

  \item Consider $3\times 3$ pixels as in Figure
    \ref{FigSkalfarvessorte2}, with the grey and black pixels placed
    relatively to each other as in the figure. Then either the pixels
    $A_1$, $A_2$, $A_3$ are all black, or the pixels $B_1$, $B_2$,
    $B_3$ are all black.
  \end{enumerate}
  \medskip
  \noindent
  The similar result is also true if we replace the black pixels with
  white ones in the figures.
\end{lem}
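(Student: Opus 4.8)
The plan is to prove each part by contradiction, using a single geometric mechanism throughout: a black pixel certifies the presence of a black $r$-ball, a non-black target pixel would certify a white $r$-ball, and these two balls cannot overlap, which pins down the geometry so tightly that the stated configurations become impossible unless the claimed pixels are black.

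First I would record the two workhorses. By the Convention $d\sqrt2<r$, a black pixel is contained in $X$, and since $X$ is a union of closed $r$-balls by \refprop{PropEquivalentRregDefinitions}, every point of a black pixel lies in a closed $r$-ball $\overline{B_r(x_b)}\subseteq X$; because the pixel diameter equals $d\sqrt2<r$, I can choose such a black ball with its centre $x_b$ controlled relative to the pixel. Dually, if a target pixel $A$ were \emph{not} black it would contain a point $z\in\overline{X^C}$, and hence a closed white $r$-ball $\overline{B_r(x_w)}\subseteq\overline{X^C}$ with $d(z,x_w)\le r$ would meet $A$. The second workhorse is the separation inequality: since $\Int(X)$ and $\Int(X^C)$ are disjoint, $B_r(x_b)\cap B_r(x_w)=\varnothing$, so $d(x_b,x_w)\ge 2r$ for every black centre $x_b$ and white centre $x_w$ arising this way.

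Next, for each figure I would set up grid coordinates matching the stated relative placement of the grey and black pixels, and combine the constraint $d(z,x_w)\le r$ with the lower bounds $d(x_b,x_w)\ge 2r$ coming from each prescribed black pixel. Since the black centres are confined to the black pixels and the pixels are small compared to $r$ (again $d\sqrt2<r$), these inequalities over-constrain the admissible position of $x_w$: one checks, figure by figure, that no $x_w$ can be simultaneously within $r$ of a point of $A$ and at least $2r$ from all available black centres, which is the contradiction forcing $A$ (and $B$, and each $A_i$) to be black. To confirm that $\del X$ genuinely cannot enter the target pixel — rather than only that the pixel is not white — I would invoke \refcor{CorPathInSpindle} together with the width estimate of \reflemma{LemSpindleWidth}: joining two known interior points of $X$ by a segment $L$, the boundary arc is trapped in the spindle $S(L,r)$, whose distance from $L$ never exceeds $r-\sqrt{r^2-\abs{L}^2/4}$, and for $\abs{L}<d\sqrt2<r$ this is too small for $\del X$ to reach into $A$. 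The black/white interchange at the end of the statement is then immediate by swapping the roles of $X$ and $\overline{X^C}$, which $r$-regularity treats symmetrically.

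Part (iii) needs one extra idea on top of this. Here the white ball forced by a hypothetical non-black pixel can a priori sit on either side of the central grey pixel, and the conclusion is the dichotomy that one whole column ($A_1,A_2,A_3$) or the other ($B_1,B_2,B_3$) is black. I would argue that $\del X$, being confined to spindles by \refcor{CorPathInSpindle}, cannot pass through both the $A$-column and the $B$-column at this resolution; whichever column it avoids is cut off from $X^C$ and therefore contained in $X$, so all three of its pixels are black. I expect the main obstacle to be exactly this bookkeeping: extracting from each figure the precise coordinates of the pixel centres and corners guaranteed to lie in $X$, and then verifying the strict distance inequalities, which are genuinely tight and rely on $d\sqrt2<r$ rather than any weaker bound. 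The dichotomy in (iii) is the subtlest point, since it requires ruling out the boundary threading between the two columns, and this is precisely where the spindle-width estimate of \reflemma{LemSpindleWidth} does the essential work.
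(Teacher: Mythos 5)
Your central mechanism fails, and the failure is concrete. Since the only control you have on a black centre $x_b$ is $d(p,x_b)\le r$ for some point $p$ of a black pixel, your two constraints --- $x_w$ within $r$ of a point of the target pixel, and $d(x_b,x_w)\ge 2r$ for all available black centres --- are jointly equivalent to the single statement that the open ball $B_r(x_w)$ meets the target pixel while avoiding the black pixels of the figure. (Indeed, whenever $d(x_w,p)\ge r$, placing $x_b$ on the ray from $x_w$ through $p$ at distance $r$ beyond $p$ satisfies both $d(p,x_b)\le r$ and $d(x_b,x_w)\ge 2r$.) And that statement \emph{is} satisfiable in these figures: the white ball simply escapes diagonally away from the configuration. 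Take part (ii) with $d=1$, target $A=[0,1]^2$, black pixels $[1,2]\times[0,1]$ and $[0,1]\times[1,2]$: the ball of radius $r$ centred at $(\tfrac12-s,\tfrac12-s)$ with $s$ slightly below $r/\sqrt2$ contains the centre of $A$, yet its distance to either black pixel is roughly $\sqrt{r^2+\tfrac12}>r$. So no contradiction can come out of the distance bookkeeping, however carefully it is done; the configurations are not over-constrained in your sense, and the impossibility is not a metric fact about the pixels shown in the figure.

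What is missing --- and what the paper's proof runs on --- is the grey pixel together with a topological separation argument. The grey pixel $C$ contains points of $X^C$; if the target pixel also contained one, the retraction $\rho_{X^C}$ applied to the segment joining the two white points would give a path in $X^C\cup\del X$ trapped in the ball of radius $d\sqrt2$ around a common corner $x$ (the proposition of Stelldinger et al.\ and Corollary \ref{CorPathInSpindle}). On the other hand, the relevant pixel corners lie in $\Int(X)$ (the Convention keeps $\del X$ off corners, and for part (i) Lemma \ref{LemTwoIntersections} is used to put the whole segment between the black corner points inside $\Int(X)$), so a piecewise-linear black path through these corners separates $C$ from the target pixel inside that ball. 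The white path would then have to cross the black path, producing a point simultaneously in $\Int(X)$ and in $X^C\cup\del X$; this crossing, not a distance inequality, is the contradiction. Your part (iii) has the same problem one level up: the claim that $\del X$ cannot thread both columns is exactly the dichotomy to be proved, and nothing in your sketch delivers it. The paper instead first forces one of the six outer pixels to be black (Lemma \ref{Thm5greys} together with Remark \ref{Rem3x3Pix}) and then reduces to part (i) by a case analysis using Lemma \ref{LemYieldsTwoIntersections}.
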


\begin{figure}
 \includegraphics[height=5cm]{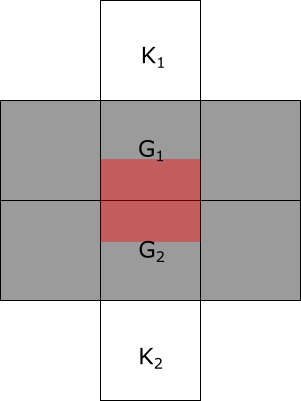}
 \caption{We are considering 6 grey pixels in a $2\times3$
    combination, and we have shown that $\del X\cap(G_1\cup G_2)$
    belongs to the red set in the figure, and that one of the pixels
    $K_1$, $K_2$ must be black, and the other one white.}
 \label{Fig6greys}
\end{figure}

\begin{lem}\label{Lem2x3grey}
  Suppose we have a configuration of 6 grey pixels as in Figure
  \ref{Fig6greys}, with pixels $G_1$, $G_2$, $K_1$ and $K_2$ as in the
  figure. Then the following holds:
  \begin{enumerate}[(i)]
  \item One of the pixels $K_1$, $K_2$ must be black, and the other
    one white,
  \item The set $\del X\cap(G_1\cup G_2)$ belongs to the set of points
    in $G_1\cup G_2$ that are no further than $(\sqrt{2}-1)d$ from the
    common edge of $G_1\cup G_2$ (i.e. the red set in the figure).
  \end{enumerate}
\end{lem}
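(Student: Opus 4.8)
The plan is to establish part (ii) first, since the colours of $K_1$ and $K_2$ in part (i) will follow once we know that $\del X$ stays close to the common edge. Choose coordinates so that the common edge of $G_1\cup G_2$ is the horizontal segment $\ell$ from $(0,0)$ to $(d,0)$, with $G_1$ the pixel above $\ell$ and $G_2$ the pixel below, and so that the remaining four grey pixels extend the block horizontally, parallel to $\ell$. Since all six pixels are grey, $\del X$ meets each of them, so it threads through the block as an arc running parallel to $\ell$, with $X$ on one side and $X^C$ on the other. By \reflemma{LemTwoIntersections} --- whose hypothesis of a double crossing would force a black and a white neighbour, contradicting the all-grey assumption --- this arc crosses each relevant pixel edge exactly once, so the crossing points used below are well defined.

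For part (ii) I would produce two points $x,y\in\del X$ lying on the outer (left and right) grey pixels, horizontally separated by essentially $2d$, so that the segment $L=[x,y]$ runs along the band and spans $G_1\cup G_2$ with $\abs{L}\le 2d<2r$. Taking points just inside $X^C$ and applying \refcor{CorPathInSpindle} together with the remark following it (so that the statement applies to the boundary arc $\pi(L)$), the portion of $\del X$ between $x$ and $y$ --- in particular $\del X\cap(G_1\cup G_2)$ --- lies in the spindle $S(L,r)$. By \reflemma{LemSpindleWidth} every point of this spindle is within $r-\sqrt{r^2-\abs{L}^2/4}$ of $L$. Since this quantity increases in $\abs{L}$ and decreases in $r$, and since $\abs{L}\le 2d$ while $r>\sqrt2\,d$, it is bounded by $r-\sqrt{r^2-d^2}<\sqrt2\,d-\sqrt{2d^2-d^2}=(\sqrt2-1)d$. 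As $L$ is, up to a controlled tilt, parallel to $\ell$ and passes near it, the deviation from $L$ transfers to the same bound on the distance to $\ell$, which is exactly the red set.

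Part (i) then follows. Part (ii) confines $\del X$ over the horizontal extent of the middle column to the strip of half-width $(\sqrt2-1)d<d$ about $\ell$; hence the pixels $K_1$ directly above $G_1$ and $K_2$ directly below $G_2$ contain no point of $\del X$ and are each entirely black or entirely white. To see that they receive opposite colours, pick $p\in\del X$ inside the strip and invoke $r$-regularity: the black $r$-ball $B_r(x_b)\subseteq X$ and white $r$-ball $B_r(x_w)\subseteq X^C$ tangent at $p$ lie on opposite sides of the band, and because $r>\sqrt2\,d$ each ball is large enough to meet the pixel on its side. Thus one of $K_1,K_2$ meets $X$ and the other meets $X^C$; being boundary-free, one is black and the other white.

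The main obstacle is the construction in part (ii): one must exhibit the two boundary points with segment length controlled to at most $2d$ and with $L$ nearly parallel to $\ell$, so that the spindle bound genuinely measures distance to the common edge rather than to a tilted chord. This is where the single-crossing conclusion of \reflemma{LemTwoIntersections} and the all-grey hypothesis do the work of pinning the positions of $x$ and $y$, while the strict inequality $r>\sqrt2\,d$ supplies the slack needed to absorb the small tilt and recover the clean constant $(\sqrt2-1)d$.
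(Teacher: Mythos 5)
Your proposal has the right toolkit (projection paths, spindles, osculating balls) but the two steps you defer are precisely where the difficulty lives, and one of your claims is actually false. The central gap is in part (ii): you never construct the chord $L$. Boundary points in the outer grey pixels can a priori sit anywhere in those pixels, so a segment joining them can be substantially tilted and longer than $2d$, and \reflemma{LemSpindleWidth} then bounds distance to that tilted chord, not to the common edge $\ell$. Your claim that the strict inequality $r>\sqrt{2}\,d$ ``supplies the slack needed to absorb the small tilt'' does not hold: the slack $(\sqrt{2}-1)d-\bigl(r-\sqrt{r^2-d^2}\bigr)$ tends to $0$ as $r\downarrow\sqrt{2}\,d$, while the possible tilt of $L$ does not shrink with it, so no uniform bound by $(\sqrt{2}-1)d$ on the distance to $\ell$ can come out of this construction. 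The paper evades the tilt problem entirely by anchoring the spindles on the mid-line $N$ itself: applying $\pi$ to segments joining boundary points of the grey pixels within each column, and using the argument that a vertical-tangent point would force an osculating $\sqrt{2}d$-ball centre into a grey pixel, it produces three crossing points $p_{34},p_{12},p_{56}$ of $\del X$ with $N$, one on each column's common edge; the spindles around $[p_{34},p_{12}]$ and $[p_{12},p_{56}]$ lie along $N$, so the width bound measures distance to the common edge directly. Moreover, even granted a chord, you only control the arc $\pi(L)$ between its endpoints; you still owe an argument that \emph{all} of $\del X\cap(G_1\cup G_2)$ lies on that arc (the paper proves this separately, again via a horizontal-normal point whose osculating ball centre would land in a grey pixel). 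Your opening claim that \reflemma{LemTwoIntersections} gives single crossings of ``each relevant pixel edge'' is also not justified for edges parallel to $\ell$, where the two flanking pixels in that lemma are $K_1,K_2$ or pixels outside the block, so no contradiction with greyness arises.

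Part (i) as you derive it inherits further gaps. Part (ii) constrains only $\del X\cap(G_1\cup G_2)$; it does not by itself show that $K_1$ and $K_2$ are free of boundary points, since a different piece of $\del X$ could a priori enter $K_1$ from above without entering $G_1$ (excluding this requires the fact that any two boundary points within distance $2r$ are joined by a boundary path inside a spindle, which would then have to cross the forbidden part of $G_1$ --- an argument you do not make). Likewise, your assertion that the two tangent $r$-balls at a strip point lie ``on opposite sides of the band'' silently assumes the normal there is roughly vertical; ruling out other normal directions is again an osculating-ball argument that is missing. The paper instead proves (i) first and independently of (ii): Remark \ref{Rem3x3Pix} forces $K_1$ and $K_2$ to be non-grey, and then applying $\rho_{X^C}$ to the segments joining the corners of $K_1$ to the corners of $K_2$ and extracting a vertical-tangent point (whose osculating ball centre would have to lie in a grey pixel) shows the two pixels cannot have the same colour. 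So both halves of your plan need the same missing ingredient --- control of normal directions via the ``ball centre in a grey pixel'' contradiction --- and without it the proof does not close.
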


\begin{lem}\label{LemImpossible2x3}
A configuration as the one in Figure \ref{Fig2x3Impossible} left cannot occur.
\end{lem}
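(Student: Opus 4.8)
The plan is to argue by contradiction, assuming that the configuration in Figure~\ref{Fig2x3Impossible} (left) appears in the trinary image of some $r$-regular set $X$ under a lattice $d\Z^2$ with $d\sqrt{2}<r$, and then to derive a geometric impossibility from the tangent-ball structure guaranteed by $r$-regularity. The two facts I expect to lean on throughout are: (a) because $d\sqrt{2}<r$ and $d\sqrt{2}$ is the diameter of a pixel, every $r$-ball whose centre lies in a given pixel contains that entire pixel, so the pixel containing the centre of a black (resp.\ white) $r$-ball is itself black (resp.\ white); and (b) the confinement and colouring statements already proved for $2\times 3$ grey blocks in Lemma~\ref{Lem2x3grey}, which restrict $\del X$ to a thin strip of width $(\sqrt{2}-1)d$ about the relevant shared edge and pin the two end pixels to opposite colours.

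First I would read off from the figure which pixels of the $2\times 3$ block are grey and which are prescribed black or white, and apply Lemma~\ref{Lem2x3grey} to the grey pixels to locate $\del X$. This confines the boundary arc inside the block to a narrow region and, together with the hypotheses of Lemmas~\ref{LemSkalfarvessorte1} and \ref{LemYieldsTwoIntersections}, fixes the colours of several neighbouring pixels. The aim is to show that the prescribed colouring of the block forces $\del X$ to enter and leave the block in such a way that it must cross one particular inter-pixel edge more than once: the black region and the white region dictated by the colouring lie on the same side of that edge, so the boundary separating them, being trapped in the thin strip, is obliged to turn back across the edge.

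Once the double crossing is established, I would invoke Lemma~\ref{LemTwoIntersections}: an edge crossed by $\del X$ more than once forces the two pixels flanking that edge (the analogues of $A$ and $D$ in Figure~\ref{FigToSkaeringer}) to be one black and one white. This is the contradiction, since in the configuration of Figure~\ref{Fig2x3Impossible} (left) those two flanking pixels are prescribed to be of a type incompatible with this conclusion. Hence the assumed configuration cannot occur.

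The main obstacle, I expect, is the step showing that the boundary is forced to cross a single edge twice. Establishing this requires combining the narrow-strip confinement from Lemma~\ref{Lem2x3grey} with the separation behaviour of $\del X$ and ruling out every alternative route the boundary could take through the block. For this I would use the spindle estimates: Corollary~\ref{CorPathInSpindle} and Lemma~\ref{LemSpindleWidth} bound how far $\del X$ can stray between two of its known crossing points, so that if the boundary tried to traverse the block without the double crossing it would have to leave the spindle of the corresponding chord and thereby mis-colour one of the pixels fixed in the first step. Carrying out this case analysis carefully—verifying that every admissible boundary path is excluded—is the technical heart of the argument.
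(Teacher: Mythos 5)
Your proposal contains a genuine gap: the step you yourself call ``the technical heart''---showing that the prescribed colouring forces $\del X$ to cross one specific inter-pixel edge twice---is never carried out. You do not identify which edge is doubly crossed, do not verify that the two pixels flanking that edge carry prescribed colours incompatible with the ``one black, one white'' conclusion of Lemma~\ref{LemTwoIntersections}, and defer the whole case analysis to future work. Since the forced double crossing \emph{is} the proof in your outline, what remains is a strategy, not an argument. There is a second, more structural problem: your confinement tool, Lemma~\ref{Lem2x3grey}, is a statement about a block of \emph{six grey} pixels in a $2\times 3$ arrangement, and the configuration of Lemma~\ref{LemImpossible2x3} cannot be such a block. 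Indeed, an all-grey $2\times 3$ block does occur in digitisations of $r$-regular sets---this is exactly the situation analysed in Lemma~\ref{Lem2x3grey} and used later, e.g.\ in Lemma~\ref{LemComplexPixelSameEdge}---so, since Lemma~\ref{LemImpossible2x3} asserts impossibility, its configuration necessarily contains non-grey pixels, and nothing in the paper licenses applying Lemma~\ref{Lem2x3grey} to a partially grey block. The very first step of your plan therefore lacks justification.

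For contrast, the paper's proof is purely combinatorial and outward-looking: it adjoins the column of three unknown pixels to the right of the block (Figure~\ref{Fig2x3Impossible}, right), observes that they cannot all be grey (Lemma~\ref{Thm5greys} together with Lemma~\ref{LemNo9greys}, i.e.\ Remark~\ref{Rem3x3Pix}), assumes without loss of generality that one of them is black, and kills each of the three resulting cases with an earlier forcing lemma: Lemma~\ref{LemYieldsTwoIntersections} when the top pixel is black, and parts (i) and (iii) of Lemma~\ref{LemSkalfarvessorte1} when the middle or bottom pixel is black. All of the spindle and double-crossing geometry you propose to redo is already encapsulated inside those cited lemmas, so no new estimate is needed. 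A double-crossing argument of the kind you sketch is indeed used elsewhere in the paper (Lemmas~\ref{LemLeftoverconfiguration} and \ref{Lem4Intersections2x2}), so your instinct is not unreasonable; but to make it work here you would have to supply the entire missing analysis for this particular configuration, which is precisely what the paper's route avoids.
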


\begin{figure}
\includegraphics[scale=0.45]{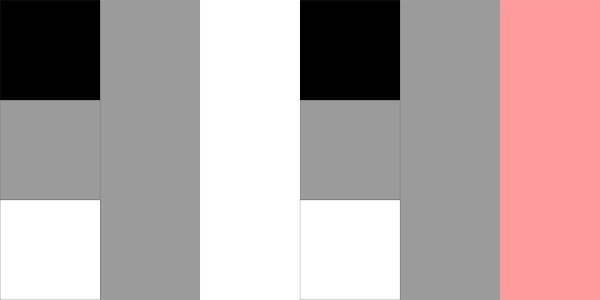}
\caption{The configuration to the left cannot occur in the digital image of an $r$-regular object with $d\sqrt{2}<r$, for if it did, one of the pixels that are coloured red to the right would have a colour that would not be compatible with any legal configuration.}
\label{Fig2x3Impossible}
\end{figure}

\begin{figure}
\includegraphics[scale=0.4]{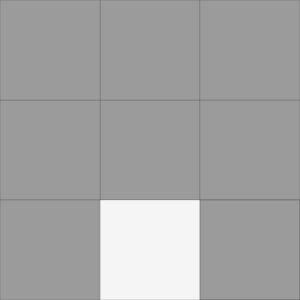}
\caption{The configuration in this figure cannot occur in the image of an $r$-regular object with $d\sqrt{2}<r$.}
\label{FigImpossible3x3A}
\end{figure}

\begin{lem}\label{LemTrickyConfiguration}
A configuration as the one in Figure \ref{FigImpossible3x3A}, left cannot occur.
\end{lem}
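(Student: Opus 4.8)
The plan is to argue by contradiction: suppose the $3\times 3$ configuration of Figure \ref{FigImpossible3x3A} (left) does occur in the trinary image of some $r$-regular set $X$ under a lattice with $d\sqrt{2}<r$, and then propagate colour constraints outward until a pixel is forced to be simultaneously black and white, or until the forbidden $2\times 3$ pattern of \reflemma{LemImpossible2x3} reappears as a sub-configuration. This mirrors the mechanism used for \reflemma{LemImpossible2x3} itself, where the assumed configuration forces neighbouring pixels into incompatible colours. First I would fix notation for the nine pixels and extend the grid by one layer, so that each boundary pixel of the block acquires its full complement of eight neighbours; this is necessary because the forcing lemmas require access to pixels lying just outside the $3\times 3$ window.

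Next I would apply the colour-forcing lemmas to the black and white pixels present in the configuration. Whenever two black pixels sit relative to a grey pixel as in Figure \ref{FigSkalfarvessorte1} or \ref{FigSkalfarvessorte3a}, \reflemma{LemSkalfarvessorte1}(i)--(ii) forces the designated adjacent pixels to be black as well, and the mirror statement does the same for white pixels; repeated application determines the colours of several of the newly adjoined pixels. Where the determination is only partial --- notably when \reflemma{LemSkalfarvessorte1}(iii) yields the dichotomy that \emph{either} $A_1,A_2,A_3$ \emph{or} $B_1,B_2,B_3$ are black --- I would split into the corresponding cases and carry each branch separately. In parallel I would exploit the geometry of the grey pixels: by \reflemma{Lem2x3grey}(ii) the boundary $\del X$ within any $2\times 3$ block of greys is pinned to within $(\sqrt{2}-1)d$ of the central edge, while \reflemma{Lem2x3grey}(i) forces the two end pixels of such a block to be one black and one white, and \refremark{Rem3x3Pix} forces any grey pixel with four grey d-neighbours to carry a black and a white i-neighbour diagonally opposite. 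Together these pin down the admissible surroundings of the grey pixels in the figure.

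In each surviving case I expect the forced colours either to reassemble into the $2\times 3$ configuration ruled out by \reflemma{LemImpossible2x3}, or to place a black and a white pixel in positions where \reflemma{LemTwoIntersections} forces $\del X$ to cross a shared edge twice --- which is incompatible with the narrow pinning band of \reflemma{Lem2x3grey}(ii) once the two-sided $r$-ball condition of $r$-regularity is imposed. The main obstacle will be the bookkeeping in this case analysis: because the forcing lemmas only partially constrain colours, several branches arise and each must be driven to one of the two contradictions above. The genuinely delicate point is geometric rather than combinatorial --- reconciling the band of width $(\sqrt{2}-1)d$ in which $\del X$ is confined with the black and white $r$-balls guaranteed at each boundary point; the argument must show that the specific adjacency of grey pixels in Figure \ref{FigImpossible3x3A} leaves no room for two tangent $r$-balls of opposite colour, so that no such $\del X$ can exist.
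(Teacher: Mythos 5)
Your first phase---propagating colours outward with Lemma \ref{Lem2x3grey}, Remark \ref{Rem3x3Pix}, Lemma \ref{LemSkalfarvessorte1} and Lemma \ref{LemTwoIntersections} to embed the configuration in a larger forced one---is exactly how the paper's proof begins, and that part of your plan is sound. The gap is in your endgame. After the propagation, all four corners of the centre pixel $C$ are white (the two bottom corners belong to the white pixel, and the two top corners $p_1,p_2$ are forced white because making them black would force a double crossing that Lemma \ref{LemTwoIntersections} turns into an illegal colouring). Since $C$ is grey with all corners white, $\del X$ must cross some edge of $C$ at least twice. Lemma \ref{LemTwoIntersections} rules this out for the bottom and the two vertical edges, because their flanking pixels are not a black/white pair. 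But for the top edge it gives \emph{no} contradiction: the pixel above that edge's neighbour is the forced black pixel and the pixel below $C$ is white, which is precisely the black/white pattern that Lemma \ref{LemTwoIntersections} permits. Likewise, the pinning band of Lemma \ref{Lem2x3grey}(ii) is centred on exactly this top edge, so a double crossing of it is fully compatible with the band---the band confines $\del X$ near the edge, it does not keep $\del X$ off the edge. Nor does the forbidden pattern of Lemma \ref{LemImpossible2x3} reappear in the forced configuration; you give no reason it should, and it does not. So both of your proposed contradiction mechanisms fail on the one edge that actually matters.

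What is missing is the genuinely new geometric step that the paper supplies: one must show that the entire top edge $p_1p_2$ of $C$ lies in the \emph{interior} of $X^C$, so that it can contain no point of $\del X$ at all, contradicting the forced double crossing. The paper does this by locating the possible centres of a white $\sqrt{2}d$-ball through $p_1$: such a centre must lie in a small region of the adjacent white pixel (it cannot contain the black corner, cannot swallow the grey pixels, etc.), and then, using an auxiliary convexity lemma (the intersection of all $r$-balls centred in a convex polygon equals the intersection of the $r$-balls centred at its vertices), every ball centred in that region is shown to contain the midpoint $m$ of $p_1p_2$. The same holds for $p_2$, so every point of the segment $p_1p_2$ lies in a white ball---the contradiction. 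Without this ball-covering argument (or a substitute for it), the case analysis you outline terminates in a configuration that none of the cited lemmas can kill, and the proof does not close.
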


\begin{thm}\label{ThmConfigurationsUsualResolution}
Up to rotation, mirroring and interchanging of black and white, any $3\times 3$ configuraton of pixels is one of those shown in Figure \ref{FigConfigurationsUsualResolution}.
\end{thm}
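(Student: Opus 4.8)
The plan is to argue by elimination. I would start from the computer-generated list in \cref{FigMany3x3Configurations} of all $3\times3$ patterns that can be assembled from the admissible $2\times2$ blocks of \cref{Fig2x2Configurations}, and then delete every pattern that the lemmas of this section forbid. Since every lemma is stated up to rotation, mirroring and the black/white interchange, it suffices to test each representative in the list against each lemma in all of its symmetry-related orientations; whatever survives will then be exactly \cref{FigConfigurationsUsualResolution}.

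First I would sort the available lemmas into two roles. The outright prohibitions---\reflemma{LemNo9greys} (no all-grey $3\times3$), \reflemma{LemImpossible2x3} and \reflemma{LemTrickyConfiguration}---delete any pattern that contains one of these forbidden sub-blocks in some orientation. The remaining lemmas are \emph{forcing} statements: \reflemma{LemTwoIntersections}, \reflemma{LemYieldsTwoIntersections} and the colouring result \reflemma{LemSkalfarvessorte1} pin down the colour of specified neighbours of a given black/grey arrangement, while \refthm{Thm5greys}, \refremark{Rem3x3Pix} and \reflemma{Lem2x3grey} constrain how grey pixels may cluster. For each candidate pattern I would locate a sub-block matching the hypothesis of a forcing lemma; if the colour the lemma dictates disagrees with the colour actually shown in the pattern, that pattern cannot occur and is removed from the list.

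Carrying this out, the bulk of the work is a bookkeeping pass through the list. Patterns with a black and a white pixel meeting across a grey edge are resolved, or killed, by \reflemma{LemTwoIntersections} and \reflemma{LemYieldsTwoIntersections}; patterns in which a grey pixel has four grey d-neighbours but lacks the diagonally-opposite black and white i-neighbours are excluded by \refremark{Rem3x3Pix}; long grey runs are cut down by the location constraint in \reflemma{Lem2x3grey}; and the remaining local mismatches are caught by the colouring lemma \reflemma{LemSkalfarvessorte1}. After each deletion I would re-inspect the list, since forcing a colour can turn a previously surviving pattern into one that a prohibition lemma now rejects, so the passes must be iterated until the list stabilises. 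Finally I would check that nothing admissible was discarded, namely that each member of \cref{FigConfigurationsUsualResolution} is actually realised by some $r$-regular $X$ with $d\sqrt{2}<r$ (a suitable ball or near-straight boundary arc furnishes an explicit witness), confirming the list is tight.

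The hard part will be completeness and symmetry bookkeeping rather than any single geometric estimate: one must be certain that every one of the many configurations in \cref{FigMany3x3Configurations} has been tested against every applicable lemma in each of its rotated, mirrored and colour-swapped forms, so that nothing impossible is overlooked and nothing possible is erroneously removed. This is exactly the kind of exhaustive but routine verification that a careful case table---or an extension of the MatLab enumeration already used to build the input list---can organise reliably.
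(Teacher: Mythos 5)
Your overall strategy---eliminating entries of Figure~\ref{FigMany3x3Configurations} by testing them, in all symmetry-related orientations, against the prohibition and forcing lemmas of the section---is exactly the paper's strategy, but as written it has a genuine gap: the listed lemmas do \emph{not} suffice, and your claim that ``whatever survives will then be exactly \cref{FigConfigurationsUsualResolution}'' is false. The configuration located at $(23,17)$ in Figure~\ref{FigMany3x3Configurations}, a grey centre pixel all eight of whose neighbours are black, survives every test you describe. It has a non-grey neighbour, so Lemma~\ref{LemNo9greys} is silent; it contains no pair of adjacent grey pixels, so Lemmas~\ref{LemYieldsTwoIntersections}, \ref{Lem2x3grey}, \ref{LemImpossible2x3}, \ref{LemTrickyConfiguration} and Remark~\ref{Rem3x3Pix} are silent; the forcing conclusions of Lemma~\ref{LemSkalfarvessorte1} (certain pixels must be black) are vacuously consistent with an all-black surround; and Lemma~\ref{Thm5greys} cannot be invoked because its hypothesis requires the tangent-ball centre $x_b$ to lie in a specific corner pixel of the block, which nothing in the configuration guarantees. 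Since no lemma applies at all, your iterated deletion passes stabilise immediately with this impossible pattern still on the list, and the proof of the theorem is incomplete at precisely this point.

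The paper closes this gap with a short separate argument that you would need to supply. The grey centre $C$ contains a boundary point $p\in\del X$; by $r$-regularity with $r>d\sqrt 2>d$ there is a white ball of radius $d$ tangent to $\del X$ at $p$, and its centre lies within distance $d$ of $p$, hence in $C$ or in one of the eight (black) neighbours. A centre in a black pixel is impossible, since the open white ball lies in $X^C$ while the black pixel lies in $X$; a centre in $C$ is impossible, because a $d$-ball centred anywhere in $C$ reaches past the nearest edge of $C$ and so contains interior points of an adjacent pixel, all of which are black. Separately, your final ``tightness'' step---exhibiting an $r$-regular witness for each surviving configuration---is not required by the statement, which claims only one inclusion; indeed the paper explicitly warns, for the analogous $4\times4$ classification of Theorem~\ref{Thm2x2}, that the converse can fail, so that step should be dropped rather than attempted.
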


\begin{figure}
\includegraphics[scale=0.4]{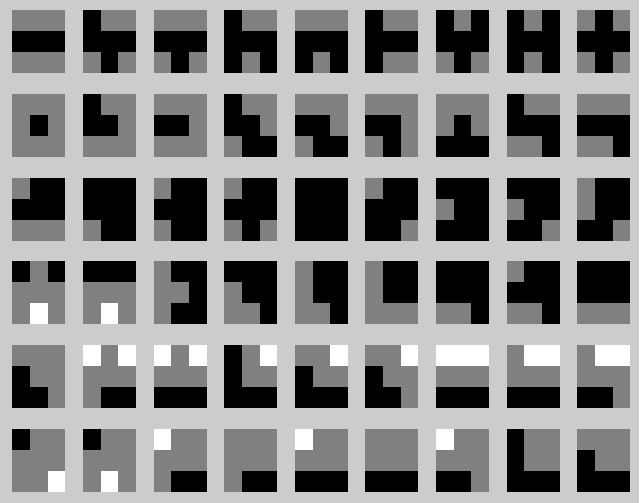}
\caption{Up to rotation, mirroring and switching of black and white colours, these are the only $3\times 3$-configurations that can occur in the digital image of an $r$-regular object by a lattice $(d\Z)^2$ with $d\sqrt{2}<r$.}
\label{FigConfigurationsUsualResolution}
\end{figure}

%We have been able to construct $r$-regular test sets that indicate that all configurations in Figure \ref{FigConfigurationsUsualResolution} can occur, so all of these configurations do actually occur in the digital image of some $r$-regular set.\fxnote{Prove this?}
In the following, it will also be useful to know which $4\times4$-configurations with a centre of $2\times 2$ grey pixels that may occur in a digital image of an $r$-regular object by a lattice $d\Z^2$. We may have a computer find these by combining all possible $3\times 3$-configurations from Figure \ref{FigConfigurationsUsualResolution}, together with the rotations, mirror images and inverses of these configurations.  After removing configurations that violate Lemma \ref{Lem2x3grey}, this yields the configurations in Figure \ref{FigKonfigurationer} (up to rotations, mirror images and interchanging of black and white). We aim to remove configurations from this list if they cannot occur in a digital image like the ones we are considering.

\begin{figure}
\includegraphics[scale=0.5]{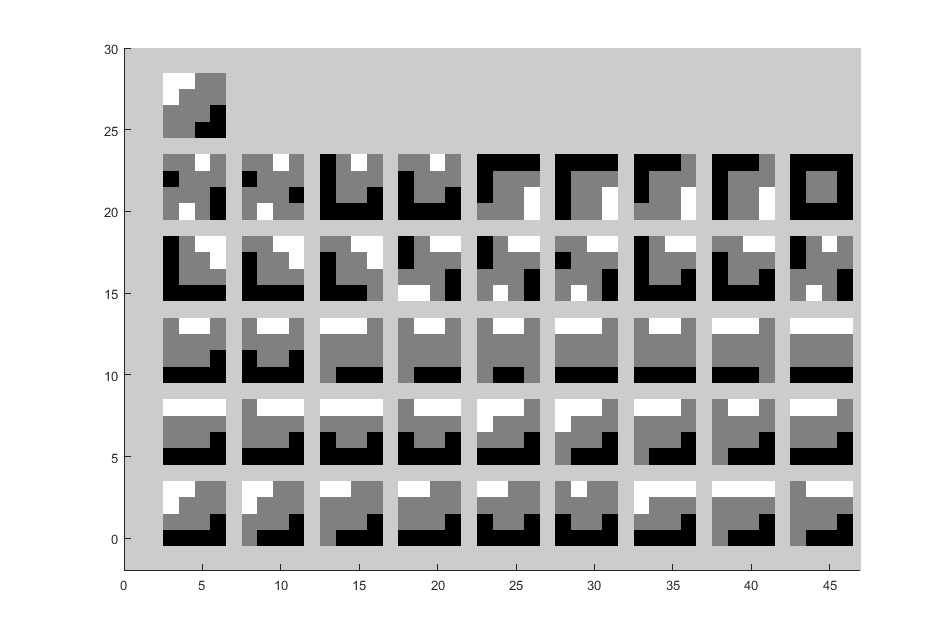}
\caption{These are all possible combinations of the configurations in Figure \ref{FigConfigurationsUsualResolution}, their inverses, rotations and mirror images.}
\label{FigKonfigurationer}
\end{figure}

\begin{lem}\label{LemLeftoverconfiguration}
  The configuration in Figure \ref{FigLeftoverconfiguration} cannot
  occur. %In particular, we may reduce the 37
  %configurations of Theorem \ref{ThmConfigurationsUsualResolution} to 36.
\end{lem}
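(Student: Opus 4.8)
The plan is to assume the configuration of \reflemma{LemLeftoverconfiguration} occurs and then derive a contradiction by forcing the colours of the surrounding pixels, exactly in the spirit of the proofs of \reflemma{LemImpossible2x3} and \reflemma{LemTrickyConfiguration}. The configuration sits in the $4\times4$ family with a $2\times2$ grey centre, so the natural first move is to exploit \reflemma{Lem2x3grey}: each of the four central grey pixels is the end of several $2\times3$ all-grey sub-blocks (read both horizontally and vertically), and for each such block part (ii) confines $\del X$ to the narrow strip of points within $(\sqrt{2}-1)d$ of the block's central edge, while part (i) forces the two cap pixels to be one black and one white.

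First I would tabulate, for every orientation of a $2\times3$ grey window contained in (or overhanging) the array, the two forced cap-colours and the accompanying localisation of $\del X$. Overlaying these strips pins $\del X$ into a small region and, crucially, determines the colours of several bordering pixels of the $4\times4$ array. Next I would feed these forced colours into the two-black-and-a-grey forcing rules of \reflemma{LemSkalfarvessorte1} (and their black/white mirror), together with \refremark{Rem3x3Pix}, which governs when a grey pixel may retain four grey d-neighbours. Since the $2\times2$ grey centre gives each central grey pixel grey d-neighbours on two sides, \refremark{Rem3x3Pix} demands a black and a white i-neighbour sitting diagonally opposite across that pixel; propagating this requirement around all four central pixels over-determines the border.

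The contradiction should then surface in one of two ways. Combinatorially, the forced border colours will either reproduce a configuration already excluded (by \reflemma{LemImpossible2x3} or \reflemma{LemTrickyConfiguration}), or assign some single pixel two incompatible colours. Geometrically, the localisation of $\del X$ to the overlapping strips pushes the black ball $B_r(x_b)$ and the white ball $B_r(x_w)$ tangent along $\del X$ into positions whose centres are forced too close together: because $d\sqrt{2}<r$ bounds the relevant chord lengths, \reflemma{LemSpindleWidth} and \refcor{CorPathInSpindle} cap how far $\del X$ can bow away from a chord joining two complementary points, so the two opposite-coloured caps required by the several grey windows cannot both be realised while the tangency condition $\overline{B_r(x_b)}\cap\overline{B_r(x_w)}=\sett{x}$ is maintained.

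I expect the geometric step to be the main obstacle: namely, showing that no $r$-regular boundary can simultaneously thread through all the strips dictated by the distinct $2\times3$ grey windows while still separating the forced black and white caps. This reduces to careful bookkeeping of spindle widths against the side length $d$, and it is precisely the kind of technical estimate that, following the paper's convention, belongs with the other configuration lemmas rather than in the main text.
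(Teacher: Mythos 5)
Your proposal is a plan, not a proof: it never exhibits a concrete contradiction, and you yourself flag the decisive step (the geometric estimate) as an unresolved ``obstacle''. Worse, the combinatorial engine you want to run is already exhausted before you start. The configuration of Lemma \ref{LemLeftoverconfiguration} belongs to the list in Figure \ref{FigKonfigurationer}, which was generated by gluing together only legal $3\times3$ configurations and then deleting those violating Lemma \ref{Lem2x3grey}; consequently every $3\times3$ or $2\times3$ window inside the $4\times4$ array already satisfies Lemma \ref{LemSkalfarvessorte1}, Remark \ref{Rem3x3Pix} and Lemma \ref{Lem2x3grey}, so re-applying those rules inside the window cannot over-determine anything. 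New information can only come from pixels outside the array (which is how Lemmas \ref{LemImpossible2x3} and \ref{Lem2x4Configuration} are proved) or, as the paper actually does here, from the colours of pixel \emph{vertices} rather than of pixels.

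The missing idea is the following dichotomy. The centre point $c$ of the $4\times4$ block is not on $\del X$ (by the Convention), so it is black or white; say white, the other case being symmetric under interchanging colours. Then one of the four central grey pixels --- in the figure, the lower left one --- has all four of its vertices white; call it $A$. Since $A$ is grey, $\del X$ must meet some edge of $A$, and because both endpoints of that edge lie in $X^C$ it must meet that edge at least twice (or tangentially, which Lemma \ref{LemTwoIntersections} also covers); moreover the edge in question is necessarily one shared with another grey pixel, since $\del X$ cannot enter a black or white pixel. Lemma \ref{LemTwoIntersections} then forces the two pixels continuing that edge's strip to be one black and one white, whereas in the configuration both are grey --- a contradiction. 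That single application of Lemma \ref{LemTwoIntersections} is the whole proof; no spindle-width bookkeeping or tangent-ball positioning is needed.
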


\begin{figure}
  \includegraphics[height=4cm]{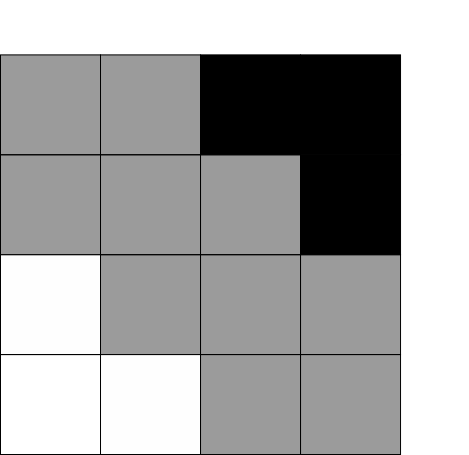}
  \caption{We show that this configuration cannot
    occur in the digitisation of an $r$-regular object.}
  \label{FigLeftoverconfiguration}
\end{figure}

\begin{figure}
\centering
  \includegraphics[scale=0.3]{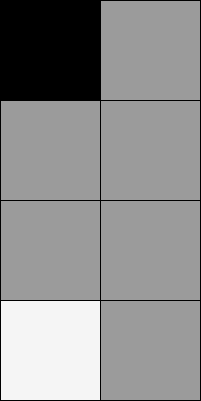}
  \hspace{1cm}
  \includegraphics[scale=0.3]{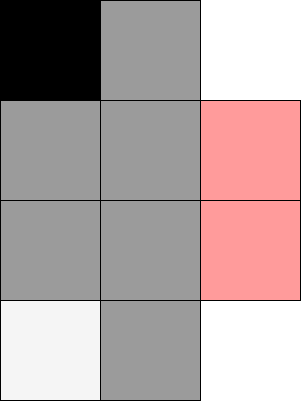}
  \caption{The right configuration does not occur in the
    digitisation. If it did, then one of the red pixels (right part of
    the figure) would have to be non-grey by Theorem
    \ref{Lem2x3grey}.}
  \label{Fig2x4Configuration}
\end{figure}

\begin{lem}\label{Lem2x4Configuration}
  The left configuration in Figure \ref{Fig2x4Configuration}
  cannot occur.
\end{lem}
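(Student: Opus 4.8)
The plan is to argue by contradiction, using Lemma \ref{Lem2x3grey} as the decisive tool, exactly as the caption of Figure \ref{Fig2x4Configuration} indicates. Suppose the left configuration did occur in the digitisation of some $r$-regular set $X$ by a lattice $d\Z^2$ with $d\sqrt{2}<r$. The strategy is to exhibit inside this $2\times 4$ pattern enough grey pixels to trigger Lemma \ref{Lem2x3grey}, whose part (i) then forces a pixel that the configuration declares grey to be either black or white.

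First I would locate within the prescribed $2\times 4$ block a $2\times 3$ sub-block of six grey pixels arranged as in Figure \ref{Fig6greys}, so that Lemma \ref{Lem2x3grey} applies after a suitable rotation or reflection (all configurations being taken up to these symmetries). This fixes the central pair $G_1, G_2$ sharing an edge, and identifies the two flanking pixels $K_1, K_2$ of that lemma; these are the pixels drawn red in the auxiliary picture on the right of Figure \ref{Fig2x4Configuration}.

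Next I would check that, in the given configuration, both flanking pixels are constrained to be grey: either they are grey outright in the prescribed $2\times 4$ pattern, or their colour is pinned down as grey by propagating the constraints coming from the admissible configurations of Figure \ref{FigConfigurationsUsualResolution} together with Lemmas \ref{LemSkalfarvessorte1} and \ref{Lem2x3grey}. I would then invoke Lemma \ref{Lem2x3grey}(i), which asserts that for six grey pixels in this arrangement one of $K_1, K_2$ must be black and the other white, so in particular neither can be grey. This contradicts the previous step, and hence the configuration cannot occur. If a single placement of the $2\times 3$ sub-block does not settle both red pixels, I would slide it by one column and repeat, each application removing a further possibility.

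The main obstacle I anticipate is the purely combinatorial bookkeeping of matching the $2\times 4$ pattern onto the hypotheses of Lemma \ref{Lem2x3grey} and verifying that the pixels it forces non-grey really are the ones the configuration requires to be grey. Part (ii) of Lemma \ref{Lem2x3grey}, which confines $\del X\cap(G_1\cup G_2)$ to within $(\sqrt{2}-1)d$ of the shared edge, should help here: by restricting where $\del X$ may cross the grey block it narrows the admissible colourings of the flanking pixels, and can be brought in to close any residual case that part (i) alone leaves open.
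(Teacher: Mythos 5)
Your plan starts from the right lemma, but the step that is supposed to deliver the contradiction cannot be carried out. You want to place a $2\times 3$ grey sub-block \emph{inside} the $2\times 4$ block and then argue that its flanking pixels $K_1$, $K_2$ are themselves forced to be grey, so that Lemma \ref{Lem2x3grey}(i) yields a contradiction. But the flanking pixels of any $2\times 3$ sub-block of the $2\times 4$ block lie \emph{outside} the configuration, on its two opposite long sides (so in particular they cannot both be the red pixels of Figure \ref{Fig2x4Configuration}, which sit side by side in a single column next to the block). The configuration prescribes nothing about their colours, and Lemma \ref{Lem2x3grey}(i) merely makes them non-grey --- one black, one white --- which is perfectly consistent: nothing so far prevents all pixels along one long side of the block from being black and all pixels along the other from being white. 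Sliding the sub-block along the block, or invoking part (ii), produces more constraints of the same harmless kind; no contradiction ever appears this way, and there is no legitimate mechanism for ``pinning down'' those outside pixels as grey.

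What your proposal is missing are exactly the two steps of the paper's proof. First, Lemma \ref{Lem2x3grey} is applied not to a sub-block of the configuration but to the $2\times 3$ block running \emph{across} it: the two red pixels together with the four middle pixels of the configuration. If both red pixels were grey, the flanking pixels $K_1$, $K_2$ of this block would be pixels \emph{of the configuration itself} (in its first and last rows), and part (i) would force them to be black and white --- contradicting that the configuration is all grey; hence at least one red pixel is non-grey, say black. Second, and decisively, Lemma \ref{LemSkalfarvessorte1} is then used to propagate that colour \emph{into} the block: a black pixel placed this way next to grey pixels forces further pixels inside the $2\times 4$ block to be black, which is the final contradiction. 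In your write-up Lemma \ref{LemSkalfarvessorte1} appears only as a device for declaring outside pixels grey, which is the opposite of what it does (it forces pixels to be black or white, never grey), so the argument as proposed cannot reach a contradiction.
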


\begin{lem}\label{Lem4Intersections2x2}
The boundary $\del X$ cannot intersect all four boundary edges of a configuration of $2\times 2$ grey pixels.
\end{lem}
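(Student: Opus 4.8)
The plan is to argue by contradiction and to reduce the geometry to the behaviour of $\del X$ near the centre of the block. Write $Q$ for the $2d\times 2d$ square bounded by the four edges in question and let $c$ be its centre. Since $c$ is a lattice point, our convention gives $c\notin\del X$, so $c\in\text{Int}(X)$ or $c\in\text{Int}(X^C)$; by the black--white symmetry I may assume $c\in\text{Int}(X^C)$. Every point of $Q$ lies within the half-diagonal $d\sqrt2<r$ of $c$, so $c\in N_r$ and $\pi(c)$ is defined. First I would record two structural facts about $\del X\cap Q$: no component of $\del X$ is a closed curve lying in $Q$ (such a curve would enclose a region of $X$ or $X^C$ inside $Q$, which by \refprop{PropEquivalentRregDefinitions} must contain an $r$-ball of diameter $2r>2d\sqrt2$, too large to fit), so $\del X\cap Q$ is a disjoint union of arcs with endpoints on $\del Q$; and, again by \refprop{PropEquivalentRregDefinitions}, that $\del X$ cannot have two strands passing within $2r$ of each other, since the region between them would be a black or white strip of width $<2r$ containing no $r$-ball. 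In particular, near $c$ the boundary is a single, almost straight arc.

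Assuming $\del X$ meets all four edges, there are at least four crossing points, one per edge, and the arcs of $\del X\cap Q$ match them in pairs. Up to symmetry there are three matchings of the cyclically ordered crossings $p_T,p_R,p_B,p_L$. The matching $\{p_Tp_B,\,p_Lp_R\}$ is impossible for purely planar reasons: the arc joining the two opposite edges separates $Q$ into two pieces containing $p_L$ and $p_R$ respectively, so any arc from $p_L$ to $p_R$ would have to cross it, contradicting that $\del X$ is embedded. This leaves the two \emph{corner-cutting} matchings, in which the two arcs cut off a pair of opposite corners of $Q$; these are the heart of the proof.

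To kill a corner-cutting matching I would combine the tangent $r$-balls with the spindle estimates. Since $c\in\text{Int}(X^C)$, each cut-off corner region receives the colour of $X$, so by \refprop{PropEquivalentRregDefinitions} it must contain a black $r$-ball; at the same time the white $r$-ball tangent at $\pi(c)$ contains $c$, and each arc is confined to the spindle $S(\cdot,r)$ around its chord, whose width is bounded by \reflemma{LemSpindleWidth}. Feeding the side length $2d$ (so chords of length $<2d\sqrt2<2r$) into these bounds pins the arcs close to the cut corners and forces the white ball through $\pi(c)$ to engulf a point of the opposite, black, corner cap --- a contradiction, since that ball lies in $X^C$. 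The hard part is exactly this last step when $\del X$ runs close to $c$, i.e.\ when the caps reach into the two central pixels: then the arcs are least constrained and the ball placement must be estimated explicitly rather than read off from the figure. Here I expect to lean on the already established grey-pixel restrictions --- the confinement of $\del X$ in \reflemma{Lem2x3grey} together with the excluded configurations of \reflemma{LemImpossible2x3}, \reflemma{LemTrickyConfiguration} and \reflemma{Lem2x4Configuration} --- to rule out the remaining positions, since requiring all four pixels to be grey already forces $\del X$ through the central pixels and hence a second near-central strand, in conflict with the single-arc fact established above.
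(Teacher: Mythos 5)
Your reduction is fine as far as it goes: excluding closed components of $\del X$ inside the square $Q$, and excluding the matching $\{p_Tp_B,\,p_Lp_R\}$ by planarity, are both correct, and the corner-cutting case you are left with is exactly the paper's situation of two black corners on one diagonal of $Q$ and two white corners on the other. But the part you yourself flag as ``the hard part'' is a genuine gap, and the sketch you give for it cannot be completed as written. Two of its ingredients are false or unjustified. First, a cut-off corner cap need not \emph{contain} a black $r$-ball: \refprop{PropEquivalentRregDefinitions} only places each of its points inside some $r$-ball contained in $X$, and that ball will in general extend across $\del Q$ far outside $Q$, so no size obstruction arises. (Your closed-curve exclusion works precisely because there the tangent ball cannot cross $\del X$ and is trapped in a region of diameter $<2r$; a ball crosses $\del Q$ freely.) Second, your structural fact that ``no two strands of $\del X$ pass within $2r$ of each other'' is unjustified for the same reason: the white region between the two arcs is not a closed narrow strip but opens out beyond $\del Q$, so nothing forces an $r$-ball to fit between the strands inside $Q$. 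Deferring the remaining positions to \reflemma{Lem2x3grey}, \reflemma{LemImpossible2x3}, \reflemma{LemTrickyConfiguration} and \reflemma{Lem2x4Configuration} is not a proof either; none of those statements constrains where the two arcs sit inside the four grey pixels. A smaller gap: you silently assume exactly one crossing per side, whereas the paper first rules out multiple crossings on a side by the argument of \reflemma{LemTwoIntersections} (a doubly crossed side forces one of the four pixels to be non-grey).

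The missing idea is the paper's crossing argument, which closes the corner-cutting case with no metric estimates at all. Once each side of $Q$ is crossed exactly once, the corners alternate in colour, so the black corners span one diagonal $L$ and the white corners the other diagonal $M$. Instead of trying to control the arcs of $\del X$, apply the retractions to the diagonals: $\rho_X(L)$ is a path in $X$ joining the two black corners, and $\rho_{X^C}(M)$, pushed slightly off $\del X$ along the normal field, is a path in $X^C$ joining the two white corners; by \refcor{CorPathInSpindle} and \reflemma{LemSpindleIsIntersection} both paths lie in the closed ball $B_{\sqrt{2}d}(p)$ about the centre $p$ of $Q$. Their endpoints interleave on the boundary of that ball, so the two paths must intersect, producing a point that lies in both $X$ and $X^C$ --- a contradiction. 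This single topological step is what your proposal lacks; the rest of your outline (minus the two faulty structural claims) can stay.
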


\begin{thm}\label{Thm2x2}
The only possible configurations of $4\times 4$-configurations with $2\times 2$ grey pixels in the middle are the ones shown in Figure \ref{FigKonfigurationer13}.
\end{thm}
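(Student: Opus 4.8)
The plan is to treat this as a filtering argument on top of the computer-generated enumeration, exactly as was done in \cref{ThmConfigurationsUsualResolution}. The list in \cref{FigKonfigurationer} is, by construction, an exhaustive catalogue of every $4\times4$ block with a $2\times2$ grey centre that is consistent with the admissible $3\times3$ patterns and with \cref{Lem2x3grey}; so the theorem is purely a \emph{necessity} statement, and it suffices to show that every configuration appearing in \cref{FigKonfigurationer} but \emph{not} in \cref{FigKonfigurationer13} violates one of the impossibility lemmas established above. I would therefore go through \cref{FigKonfigurationer} configuration by configuration and exhibit, for each discarded one, the sub-block and the lemma that forbids it.

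To organise the bookkeeping I would first extract the geometric constraint imposed by the $2\times2$ grey centre. Since all four central pixels are grey, $\del X$ must pass through each of them, and by \cref{Lem2x3grey}(ii) it is confined to the thin $(\sqrt2-1)d$-strip around each central shared edge; these constraints from the horizontal and the vertical grey strip combine to localise $\del X$ near the centre of the block, so that inside the $2\times2$ region $\del X$ behaves like a single nearly-straight arc. By \cref{Lem4Intersections2x2} this arc cannot cross all four outer edges of the $2\times2$ centre, so it enters and leaves through at most two of them; this is what pins down, up to the stated symmetries, the finite list of entry/exit patterns I want to run through. For each such pattern the colours of the surrounding ring of twelve pixels are then largely forced: the arc separates a ``black side'' from a ``white side'', and I would propagate the forced colours using \cref{LemSkalfarvessorte1} (which turns a grey-plus-two-black wedge into further black pixels) together with \cref{Rem3x3Pix} (a grey pixel with four grey d-neighbours must have oppositely placed black and white i-neighbours).

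With the cases so organised, the elimination is then a direct appeal to the two remaining lemmas: the leftover pattern singled out in \cref{FigLeftoverconfiguration} is excluded by \cref{LemLeftoverconfiguration}, and the $2\times4$ pattern of \cref{Fig2x4Configuration} is excluded by \cref{Lem2x4Configuration}; any configuration of \cref{FigKonfigurationer} containing one of these as a sub-block (up to rotation, mirroring and black–white interchange) is thereby removed. After deleting all such configurations, what survives is precisely \cref{FigKonfigurationer13}, which gives the theorem. The main obstacle I anticipate is not any single geometric step but the \emph{exhaustiveness} of the case check: I must verify that every configuration absent from \cref{FigKonfigurationer13} really does contain a forbidden sub-block, so that nothing is discarded without justification and, conversely, nothing that should be discarded is mistakenly retained. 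Because this matching of the computer's output against the displayed figure is tedious and error-prone, I would carry it out by the same MatLab enumeration that produced \cref{FigKonfigurationer}, having it flag automatically every block that contains one of the forbidden patterns of \cref{LemLeftoverconfiguration,Lem2x4Configuration,Lem4Intersections2x2}, and then confirm that the unflagged blocks coincide with \cref{FigKonfigurationer13}.
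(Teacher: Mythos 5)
Your overall strategy --- treat \cref{FigKonfigurationer} as an exhaustive catalogue and eliminate entries by exhibiting forbidden sub-blocks --- is exactly the paper's strategy, and your appeal to \cref{LemLeftoverconfiguration}, \cref{Lem2x4Configuration} and \cref{Lem4Intersections2x2} is the right first step. But there is a genuine gap: filtering by those three lemmas does \emph{not} leave precisely the configurations of \cref{FigKonfigurationer13}. According to the paper, exactly one configuration (the one centred at $(45,22)$ in \cref{FigKonfigurationer}) survives all three lemmas and yet cannot occur, so the final confirmation step you propose --- checking that the unflagged blocks coincide with \cref{FigKonfigurationer13} --- would fail, and at that point you have no tool left to remove the offending block. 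Note also that the extra propagation machinery you invoke (\cref{Lem2x3grey}(ii), \cref{LemSkalfarvessorte1}, \cref{Rem3x3Pix}) cannot rescue you here: those constraints are already baked into \cref{FigKonfigurationer}, since that list was generated from the admissible $3\times3$ configurations and pruned by \cref{Lem2x3grey}, so they add no further filtering power.

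The missing idea is a separate $r$-regularity argument for that one leftover configuration, and it is not a sub-block argument at all. In that configuration the $2\times2$ grey centre must contain a boundary point $p\in\del X$, and $r$-regularity forces a white $\sqrt{2}d$-ball tangent to $\del X$ at $p$. The centre of such a ball lies in a pixel which is then entirely contained in the ball, hence white; but no admissible location for that centre exists: it cannot lie inside the $4\times4$ block (no pixel there can be white in this configuration), and it cannot lie in a pixel neighbouring the block, since that white pixel would then share boundary points with a black pixel of the outer ring, contradicting the standing assumption that $\del X$ does not pass through pixel corners. You need to supply this ball-placement argument (or an equivalent one) for the surviving configuration before your proof is complete.
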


  \begin{figure}
    \begin{minipage}[b]{\linewidth}
      \centering
      \includegraphics[width=0.9\linewidth]{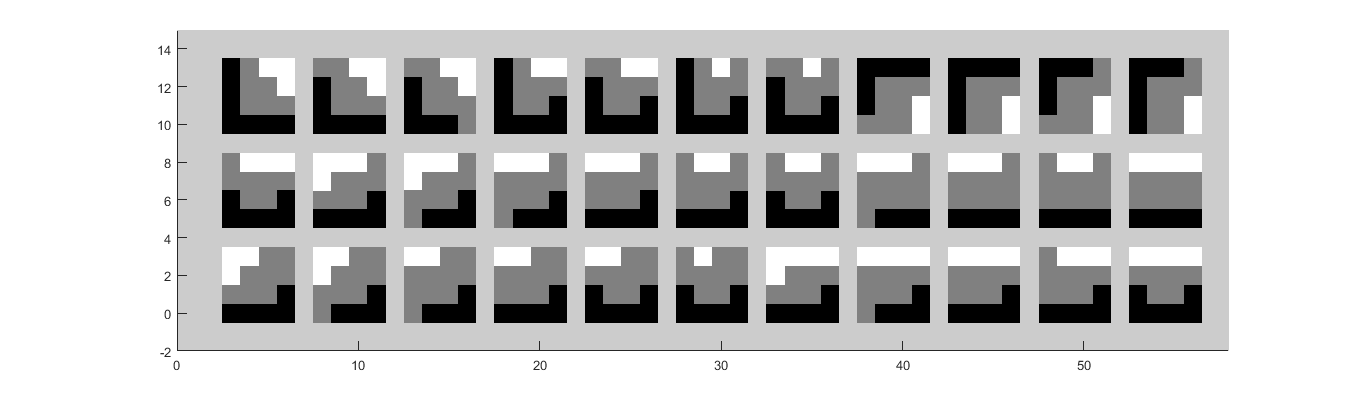}
      \caption{The 33 possible configurations of $4\times4$ pixels
        with the middle ones grey, up to rotation, reflection and
        switching of black and white pixels.}
      \label{FigKonfigurationer13}
    \end{minipage}
%    \hfill
%    \begin{minipage}[b]{0.4\linewidth}
%      \centering
%      \includegraphics[width=1\linewidth]{Hypotese}
%      \caption{Up to rotations, reflections and switching of black
%        and white pixels, these 9 configurations are the only
%        $4\times4$ configurations with $2\times 2$ grey pixels in the
%        middle, when $r>\frac{\sqrt{10}d}{2}$.}
%      \label{FigConfigurationsWithNewR}
%    \end{centering}\end{minipage}
  \end{figure}
  
  Note that the converse is not true: There are configurations in Figure \ref{FigKonfigurationer13} that does not occur in any image of an $r$-regular object by a lattice $d\Z^2$ with $d\sqrt{2}<r$. But since the proofs of this is rather technical and the result is not relevant to our further progress, we will not discuss them here.

\section{Reconstruction of the boundary of the set}
All the work done in the previous section was leading up to the development of a reconstruction algorithm, which we will introduce in this section. The idea will be to use circle arcs to approximate the boundary of the edge. The reconstructed set will not in general be $r$-regular.

Before we start, we will introduce some points, called auxiliary points, that our reconstructed boundary must pass through. These are defined differently for different grey pixels. Hence we define

\begin{defn}
A grey pixel sitting in a $2\times 2$ configuration of grey pixels is called complex. A grey pixel that is not sitting in a $2\times 2$ configuration of grey pixels is called simple. 
\end{defn}

%\begin{defn}
%Let $C$ be a pixel. In the following, we will call a pixel $D$ a \emph{neighbour pixel} of $C$ if $D$ and $C$ share an edge.
%\end{defn}
%
%\begin{thm}\label{ThmSimpleIntersectedEdges}
%  If a grey pixel in the digitisation of $X$ is simple, it is possible to determine
%  which pixel edges are intersected by the boundary of the original
%  set $X$.
%\end{thm}
%
%\begin{proof}
%See qualifying exam paper.
%\end{proof}

%\begin{thm}\label{Thm2x2}
%For any $2\times 2$-configuration $C$ of grey pixels, let $D$ denote the 10 pixels adjacent to $C$. Then the set of grey pixels of $D$ has exactly two connected component, and each of these connected components contains a pixel that shares an edge with a pixel of $C$.
%\end{thm}
%
%\begin{proof}
%\begin{figure}\begin{centering}
%\includegraphics[scale=0.5]{ConnectedComponentsD}
%\caption{The allowed configurations of Figure \ref{FigKonfigurationer13} with the two connected components of grey pixels in $D$ marked with red for each configuration.}
%\label{FigConnectedComponentsD}
%\end{centering}\end{figure}
%Look at Figure \ref{FigConnectedComponentsD}.
%\end{proof}

We now introduce the auxiliary points needed for the reconstruction. 

%Consider an edge $e$ shared by two grey pixels, with one end point belonging to a black pixel. If the other end point belongs to a white pixel, we put an auxiliary point on the midpoint of $e$. It the other endpoint belongs to a black pixel, we do nothing. And if the other end point only belongs to grey pixels, we mark it with an auxiliary point

Consider a pixel edge shared by two grey pixels $A$ and $B$. If the two grey pixels sit in the same $2\times 2$ configuration of grey pixels, we introduce an auxiliary point at the midpoint of this configuration. If they do not, we introduce a point on the midpoint of their common edge, see Figure \ref{FigAuxiliaryPoints}. (Note that the two grey pixels may be part of two different $2\times 2$ configurations of grey pixels. In that case, we introduce two auxiliary points, on at the centre of each of the two $2\times2$ configurations).

\begin{figure}
\includegraphics[scale=0.5]{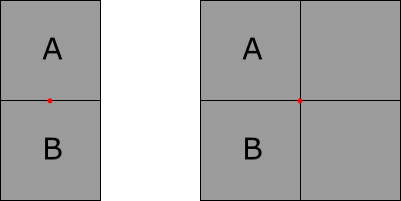}
\caption{If two grey pixels share an edge and are not a part of the same $2\times 2$ configuration of grey pixels as in the left figure, we introduce an auxiliary point (red) at the midpoint of their common edge. If on the other hand the two pixels are a part of the same $2\times2$ configuration of grey pixels, we introduce an auxiliary point at the centre of this $2\times2$ configuration.}
\label{FigAuxiliaryPoints}
\end{figure}

\begin{lem}
All simple grey pixels have between one and three auxiliary points on their boundary. All complex grey pixels have either one or two grey auxiliary points on their boundary.
\end{lem}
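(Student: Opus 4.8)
The plan is to turn the statement about auxiliary points into a purely combinatorial count over the neighbours of a pixel, and then to read that count off from the admissible configurations catalogued in \cref{SecConfigurations}. First I would record the bookkeeping. By the definition of auxiliary points, every auxiliary point lying on $\del C$ arises from an edge of $C$ shared with a grey d-neighbour: such an edge contributes the centre of each all-grey $2\times2$ block containing it (a corner of $C$), and, if it lies in no all-grey $2\times2$ block, the midpoint of the edge instead. Since distinct blocks have distinct centres, and corners of $C$ are distinct from edge-midpoints, the number of auxiliary points on $\del C$ equals $k+m$, where $k$ is the number of all-grey $2\times2$ blocks containing $C$ and $m$ is the number of grey d-neighbours of $C$ whose common edge with $C$ lies in no such block. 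If $C$ is simple then $k=0$, so the count is exactly the number of grey d-neighbours; if $C$ is complex then $k\ge1$. Thus the lemma reduces to showing $1\le m\le 3$ for simple pixels and $1\le k+m\le 2$ for complex ones.

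For the lower bounds the complex case is immediate since $k\ge1$. For a simple pixel $C$, note that $C$ is grey, so $\del X$ meets the interior of $C$. As $X$ is $r$-regular and compact, each component of $\del X$ bounds a region containing an $r$-ball and so has diameter at least $2r>d\sqrt2$, the diameter of $C$; hence no component of $\del X$ is contained in $\overline C$, and the component through $C$ must cross $\del C$. By the Convention this crossing occurs transversally in the interior of an edge, so the d-neighbour across that edge is also grey, and $C$ has at least one grey d-neighbour.

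The upper bounds are the substance of the proof. For a simple pixel I would show that having all four d-neighbours grey is impossible. If all four d-neighbours of $C$ were grey, then by \cref{Rem3x3Pix} two diagonally opposite i-neighbours are black and white, say the north-east and south-west corners. The remaining two corners cannot both be non-grey: otherwise one of the two $2\times2$ blocks obtained by adjoining such a corner to $C$ together with its two relevant grey arms would consist of three grey pixels and a single black or white pixel, a pattern excluded from the admissible $2\times2$ configurations in \cref{Fig2x2Configurations} (cf.\ \cref{LemSkalfarvessorte2}(ii)). So one of these corners is grey, and together with $C$ and two grey arms it forms an all-grey $2\times2$ block, making $C$ complex, a contradiction. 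Hence a simple pixel has at most three grey d-neighbours.

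For a complex pixel the task is to show $k+m\le2$, and this is where I expect the main obstacle. The idea is to run through the admissible $4\times4$ configurations with a $2\times2$ grey centre established in \cref{Thm2x2} (\cref{FigKonfigurationer13}) and verify the bound for each. The delicate point is to exclude the configurations that would give three or more auxiliary points, namely those with a single grey block and two further grey arms ($k=1$, $m=2$) and those with two \emph{adjacent} grey blocks plus an extra grey arm ($k=2$, $m\ge1$). These are ruled out using \cref{Lem2x3grey} together with \cref{Lem4Intersections2x2}, which force the two grey blocks containing $C$, when there are two, to sit on \emph{opposite} corners of $C$ (so that every grey arm already lies in a block and $m=0$) and prevent a lone block from coexisting with two stray grey arms. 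The finite check against \cref{FigKonfigurationer13} then yields $k+m\le2$. The hardest part is precisely this case analysis, since one must keep careful track of which $2\times2$ grey blocks can simultaneously contain $C$; the forced-colouring lemmas of \cref{SecConfigurations} are exactly what make that bookkeeping finite and tractable.
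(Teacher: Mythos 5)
Your bookkeeping reduction (auxiliary points on $\del C$ are exactly the centres of all-grey $2\times2$ blocks containing $C$ plus the midpoints of edges shared with grey d-neighbours that lie in no such block, so the count is $k+m$) is sound, and your treatment of complex pixels---checking the admissible $4\times4$ configurations of \cref{Thm2x2}---is in substance the paper's own proof, which for both halves of the statement simply reads the bound off the classifications in \cref{FigConfigurationsUsualResolution} and \cref{FigKonfigurationer13}. The problem lies in the simple-pixel upper bound, the one place where you replace the finite check by a direct argument.

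The step that fails is your claim that a $2\times2$ block consisting of three grey pixels and one black (or white) pixel is ``excluded from the admissible $2\times2$ configurations,'' citing \cref{LemSkalfarvessorte2}(ii). That lemma's hypothesis requires \emph{two} black pixels in the block (it forces the fourth pixel of a block containing one grey and two black pixels to be black), so it says nothing about a block with a single non-grey pixel. Worse, the pattern is genuinely realizable: let $X$ be a disc of enormous radius placed so that near the block its boundary is essentially the line $x+y=2.5$ (pixel side $1$, block $[0,2]^2$); then $[0,1]^2$ is black, the other three pixels are grey, no grid corner lies on $\del X$, and $X$ is $r$-regular. In fact your own configuration already contains the allegedly forbidden pattern: \cref{Rem3x3Pix} forces a black i-neighbour (say NE) flanked by two grey d-neighbours, so the block $C$--N--E--NE is precisely ``three greys plus one black.'' If that pattern were inadmissible, the hypothesis of your case (four grey d-neighbours) could never occur and \cref{Rem3x3Pix} would be vacuous---a sign the exclusion cannot be right. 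So your argument that the NW and SE corners cannot both be non-grey collapses, and with it the bound of three auxiliary points for simple pixels.

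The configuration you need to rule out---centre grey, all four d-neighbours grey, NE black, SW white, NW and SE non-grey---is indeed impossible, but a genuine argument is required. One route: since all four i-neighbours are non-grey, $\del X$ can exit a grey d-neighbour such as N only across its edge with $C$ or its outer edge; \cref{LemTwoIntersections} (applied to the column of four pixels through $C$ and N, using that $C$ and S are grey) shows each of these edges is crossed at most once, so $\del X$ crosses the $C$--N edge exactly once, and likewise for E, S, W. Hence $\del X\cap C$ consists of two disjoint arcs pairing adjacent edges of $C$, and in either pairing some component of $C\setminus\del X$ would have to lie both in $\Int(X)$ (it reaches the corner of $C$ at the black i-neighbour) and in $\Int(X^C)$ (it reaches the corner at the white one)---a contradiction. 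Alternatively, just do what the paper does and cite \cref{ThmConfigurationsUsualResolution}. A last, smaller slip: two all-grey blocks containing a complex $C$ need not sit on opposite corners (a $2\times3$ all-grey region gives two adjacent blocks); the bound survives there only because \cref{Lem2x3grey} then forces the remaining edge of $C$ to be non-grey, so $m=0$.
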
		

%\begin{figure}
%\begin{minipage}{0.45\linewidth}
%\includegraphics[width=\linewidth]{4AuxiliariesSimple}
%\caption{If a simple grey pixel $C$ has four auxiliary points on its boundary, it must have four grey neighbour pixels, and the rest of the neighbour pixels must not be grey.}
%\label{Fig4AuxiliariesSimple}
%\end{minipage}
%\hfill
%\begin{minipage}{0.45\linewidth}
%\includegraphics[width=\linewidth]{4AuxiliariesComplex}
%\caption{If a complex grey pixel $C$ has four auxiliary points on its boundary, two of them must sit in the corners of $C$. Hence $C$ must sit in one of the configurations above.}
%\label{Fig4AuxiliariesComplex}
%\end{minipage}
%\end{figure}

\begin{proof}
For the simple pixels, look at all possible configurations in Figure \ref{FigConfigurationsUsualResolution}. For the complex pixels, look at all possible configurations in Figure \ref{FigKonfigurationer13}.
\end{proof}

\begin{lem}\label{LemRemovedPoints}
A simple pixel with just one auxiliary point on its boundary must share this point with a simple pixel with three auxiliary points on its boundary. On the other hand, a simple pixel with three auxiliary points on its boundary must share exactly one of these points with a simple pixel with just one auxiliary point on its boundary.
\end{lem}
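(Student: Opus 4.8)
The plan is to reduce everything to a statement about \emph{grey $d$-neighbours}. First I would observe that a simple pixel $P$ cannot lie with any of its grey $d$-neighbours in a common $2\times 2$ block of grey pixels (otherwise $P$ would be complex), so by the construction of the auxiliary points every such point on $\del P$ is the midpoint of an edge that $P$ shares with a grey $d$-neighbour, and distinct grey $d$-neighbours produce distinct midpoints. Hence for a simple pixel the number of auxiliary points on its boundary equals its number of grey $d$-neighbours, and each auxiliary point is shared precisely with the grey $d$-neighbour across the corresponding edge. The lemma is therefore equivalent to two assertions: (1) the unique grey $d$-neighbour of a simple pixel having exactly one grey $d$-neighbour is itself a simple pixel with exactly three grey $d$-neighbours; and (2) a simple pixel with exactly three grey $d$-neighbours has exactly one grey $d$-neighbour that is a simple pixel with a single grey $d$-neighbour.

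Next I would analyse the local boundary structure. Let $P$ be a simple pixel with a single grey $d$-neighbour $Q$. Since $P$ is grey, $\del X\cap P\neq\emptyset$; because $d\sqrt{2}<r$, the set $\del X\cap P$ cannot close up inside $P$ (a full boundary component encloses an $r$-ball, of diameter $2r>2d\sqrt{2}$, so it cannot fit inside the pixel), hence it meets $\del P$. As $\del X$ can only cross an edge of $P$ whose opposite pixel is grey, it crosses only the edge $e=P\cap Q$, and since it must both enter and leave $P$ it crosses $e$ an even number of times, hence at least twice (counting a tangency). Applying \reflemma{LemTwoIntersections} to this double crossing shows the two pixels flanking $e$ are one black and one white. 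Reading off the resulting local picture -- equivalently, inspecting those configurations of \refthm{ThmConfigurationsUsualResolution} (Figure~\ref{FigConfigurationsUsualResolution}) whose centre has a single grey $d$-neighbour -- one sees that $\del X$ enters $Q$ through one side, bumps through $e$ into $P$ and back, and leaves $Q$ through the opposite side. Thus $Q$ has exactly three grey $d$-neighbours, namely $P$ and the two strip-continuation pixels through which $\del X$ passes, while the pixel across $e$ from $P$ is non-grey; the black/white flanking pixels together with \reflemma{Lem2x3grey} forbid any $2\times 2$ grey block through $Q$, so $Q$ is simple. This proves (1).

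For (2) I would argue conversely. If $Q$ is a simple pixel with three grey $d$-neighbours, then $\del X$ crosses only the three grey-neighbour edges and must enter and leave $Q$; checking the admissible configurations shows that exactly one of the three neighbours receives the ``dip'' (the boundary enters and leaves $Q$ through two of the edges and bumps once across the third), that this dipped-into neighbour is exactly a simple pixel with a single grey $d$-neighbour, while each of the two strip-continuation neighbours carries $\del X$ through and so has at least two grey $d$-neighbours. Hence exactly one of $Q$'s auxiliary points is shared with a one-neighbour simple pixel. The main obstacle is precisely this configuration bookkeeping in (1) and (2): one must rule out ``bystander'' grey $d$-neighbours that are not crossed by $\del X$, and hidden $2\times 2$ grey blocks, so that the dip picture is genuinely forced and the dipped-into neighbour is the unique one-neighbour pixel. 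This is where the hypothesis $d\sqrt{2}<r$ enters, through the spindle-width bound \reflemma{LemSpindleWidth} and \reflemma{LemTwoIntersections} and \reflemma{Lem2x3grey}, and in practice it is discharged by inspecting the finite list of admissible $3\times 3$ configurations in Figure~\ref{FigConfigurationsUsualResolution}.
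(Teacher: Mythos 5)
Your proposal is correct and ultimately takes the same route as the paper, whose entire proof of this lemma is the one-line instruction to check all cases in Figure \ref{FigConfigurationsUsualResolution}: after your (correct) reduction of auxiliary-point counts of simple pixels to grey d-neighbour counts, the decisive steps of your parts (1) and (2) are discharged by exactly that inspection of the admissible configurations. The geometric scaffolding you add (the exit argument, and Lemma \ref{LemTwoIntersections} forcing the pixel beyond the neighbour $Q$ to be non-grey, a fact no single $3\times 3$ window can show) is sound and usefully makes explicit what checking the cases has to mean, but it supplements rather than replaces the paper's case-check, so the two proofs are essentially the same.
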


%\begin{figure}
%\begin{minipage}{0.33\linewidth}
%\begin{center}
%\includegraphics[width=0.8\linewidth]{AuxiliaryComplex}
%\end{center}
%\caption{A complex pixel $C$ with three auxiliary points on its boundary must sit in a configuration as the above.}
%\label{FigAuxiliaryComplex}
%\end{minipage}
%\hfill
%\begin{minipage}{0.63\linewidth}
%\includegraphics[width=\linewidth]{1or3AuxiliariesSimple}
%\caption{A simple pixel $C$ with three auxiliary points on its boundary must sit in a configuration as the above left. A simple pixel with one auxiliary point must sit in a configuration as the above, right.}
%\vspace{1cm}
%\label{Fig1or3AuxiliariesSimple}
%\end{minipage}
%\end{figure}

\begin{proof}
Check all cases as presented in Figure \ref{FigConfigurationsUsualResolution}.
%Consider first a simple pixel with three auxiliary points on its boundary. Such a pixel must sit in a configuration as the one in Figure \ref{Fig1or3AuxiliariesSimple} left, by Lemmas \ref{LemYieldsTwoIntersections} and \ref{LemSkalfarvessorte1}. We need to show that $D$ cannot have any additional auxiliary points. But it follows from Lemma \ref{LemYieldsTwoIntersections} that $D$ cannot share an edge with any other grey pixels than $C$, so $D$ only has one auxiliary point on its boundary.
%
%Now, consider instead a simple pixel $C$ with one auxiliary point on its boundary. Such a pixel must sit in a configuration as the one in Figure \ref{Fig1or3AuxiliariesSimple} right. We aim to show that the pixel $A$ is simple and has three auxiliary points on its boundary.
%
%By Lemma \ref{LemYieldsTwoIntersections}, the pixel below $A$ must be white. This shows that $A$ must be simple. Furthermore, the vertical edges of $A$ have one black and one white vertex each, meaning that the boundary $\del X$ must intersect these edges. But then the pixels to the left and right of $A$ must contain points of $\del X$, and hence be grey. So $A$ must have auxiliary points on the midpoints of both of its vertical edges, meaning that $A$ has three auxiliary points in total.
\end{proof}

We will now remove the auxiliary point of all simple pixels that has only one auxiliary point. By the above lemmas, this now means that all simple pixels have zero or two auxiliary points on their boundary, and all complex pixels have one or two auxiliary points on their boundary.

\begin{lem}\label{LemAuxiliaryPoints2x2}
In each $2\times 2$ configuration of grey pixels there are exactly 3 auxiliary points - one at the centre and two on the configuration boundary.%there are at least two grey pixels with two auxiliary points on their boundary. If there are three pixels with this property, two of these pixels share two auxiliary points, and if there are four pixels with this property, two of these pixels share two auxiliary points, and the other two pixels also share two auxiliary points.
\end{lem}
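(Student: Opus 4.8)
The plan is to treat the single central point and the two boundary points separately, establishing the former directly from the defining rule and the latter by reducing to the finite classification already obtained. Label the four grey pixels of the configuration TL, TR, BL, BR. The only edges shared by two of these pixels are the four internal edges TL--TR, TL--BL, TR--BR, BL--BR, and for each of them both incident pixels lie in this same $2\times 2$ grey configuration. Hence the defining rule places an auxiliary point at the centre of the configuration for each of these edges, and all four coincide in a single centre point. No further auxiliary point can land there, since every other auxiliary point is either the midpoint of a pixel edge or the centre of a \emph{different} $2\times 2$ grey block, neither of which coincides with the centre of the present block. The removal step of the preceding paragraph only deletes the lone auxiliary point of a simple pixel, whereas the centre belongs to the four complex pixels of the block, so it survives; this gives exactly one central auxiliary point.

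Next I would explain why exactly two auxiliary points lie on the configuration boundary, and the heuristic that fixes this count is geometric. The block has diameter $2\sqrt2\,d<2r$ because $d\sqrt2<r$, so by the spindle statement (Corollary~\ref{CorPathInSpindle}) and the injectivity of $\rho$ on segments of length less than $2r$, the boundary $\del X$ meets the block in a single arc. By Lemma~\ref{Lem4Intersections2x2} this arc cannot cross all four sides of the block, so it enters through one boundary edge and leaves through another, crossing the boundary exactly twice. Each crossing forces the externally adjacent pixel to be grey and hence produces one auxiliary point on the corresponding boundary edge, while along the sides not met by $\del X$ the adjacent external pixels are black or white and contribute nothing. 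This is the reason the answer is two rather than some other number.

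Finally I would convert this heuristic into a proof by the finite case check that the surrounding configuration permits. By Theorem~\ref{Thm2x2} every $4\times4$ arrangement with a $2\times2$ grey centre is one of the $33$ configurations of Figure~\ref{FigKonfigurationer13}; for each of these I would read off the colours of the eight edge-adjacent external pixels, apply the auxiliary-point rule (including the ``common second block'' clause) and the removal step, and verify that precisely two boundary auxiliary points remain. The main obstacle is exactly this bookkeeping: a boundary edge may be shared by two grey pixels that also lie together in a second $2\times2$ block, producing an auxiliary point at that block's centre rather than at the edge midpoint, and one must check that such coincidences, together with the removal of single-point simple pixels guaranteed by Lemma~\ref{LemRemovedPoints}, never alter the final count. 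The single-arc picture of the previous paragraph serves as the invariant against which each of the $33$ cases is checked, guaranteeing that the count is always two.
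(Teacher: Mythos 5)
Your proposal is correct and ultimately rests on the same argument as the paper: every $2\times 2$ grey block sits inside one of the $33$ configurations of Figure~\ref{FigKonfigurationer13} classified by Theorem~\ref{Thm2x2}, and the count of auxiliary points is verified case by case. Your additional observations --- that the central point follows directly from the defining rule applied to the four internal edges (and survives the removal step because it lies only on complex pixels), and the single-arc heuristic explaining why the boundary count is two --- are sound supplements, but the finite case check remains the actual proof, just as in the paper.
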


\begin{proof}
Each $2\times 2$-configuration of grey pixels sits in one of the configurations in Figure \ref{FigKonfigurationer13} (up to rotation, mirroring and switching of colours). Hence we get the above theorem by checking all possible cases.
\end{proof}

\begin{thm}
For each auxiliary point $p$, there are exactly two auxiliary points with the property that there is a pixel having both $p$ and that auxiliary point on its boundary.% on the boundary of the same pixel as $p$.
\end{thm}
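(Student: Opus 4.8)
The plan is to read the statement as asserting that the graph $G$ whose vertices are the auxiliary points, with an edge joining two of them whenever some pixel has both on its boundary, is $2$-regular; equivalently, that every auxiliary point has exactly two neighbours in $G$. I would first record the only two ways an auxiliary point can arise: either it is the midpoint of an edge shared by two grey pixels that do \emph{not} lie in a common $2\times2$ grey configuration (an \emph{edge point}, which lies on the boundary of exactly the two pixels sharing that edge), or it is the centre of a $2\times2$ grey configuration (a \emph{centre point}, a lattice vertex lying on the boundary of exactly the four pixels of that configuration and on no others). These two kinds sit at different loci --- edge midpoints versus lattice corners --- so they never coincide, and every pixel containing a given auxiliary point on its boundary is accounted for above. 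I would then split into these two cases.

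For a centre point $p$, the pixels containing $p$ are precisely the four pixels of its $2\times2$ grey configuration. By \reflemma{LemAuxiliaryPoints2x2} the total number of auxiliary points lying on these four pixels is exactly three: $p$ itself together with two points on the configuration boundary. Hence exactly two auxiliary points other than $p$ lie on a pixel that also contains $p$, which is the desired count.

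For an edge point $p$ on the common edge of grey pixels $A$ and $B$, these are the only two pixels containing $p$, so it suffices to show that $A$ and $B$ each carry exactly two auxiliary points and that the second one of $A$ differs from the second one of $B$. Since $p$ lies on $\partial A$, the pixel $A$ has at least one auxiliary point; if $A$ is simple it therefore has exactly two (simple pixels carry $0$ or $2$ after the clean-up), while if $A$ is complex it lies in some $2\times2$ grey block, so that block's centre is a second auxiliary point on $\partial A$ distinct from the edge midpoint $p$, and complex pixels carry at most two --- so again exactly two. The same holds for $B$. Writing $q_A,q_B$ for the respective second points, I would rule out $q_A=q_B$: such a common point would lie on $\partial A\cap\partial B$, i.e. on the shared edge, so it is either the midpoint of that edge (forcing $q_A=p$, excluded) or an endpoint of it; in the latter case it is a centre point, whence $A$ and $B$ are two of the four pixels of that $2\times2$ configuration, contradicting the defining property of an edge point. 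Thus the auxiliary points on $\partial A\cup\partial B$ are exactly $p,q_A,q_B$ with $q_A\neq q_B$, giving $p$ precisely two neighbours.

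The main obstacle I anticipate is the bookkeeping for complex pixels in the edge-point case: one must be sure that a pixel carrying an edge point is not simultaneously forced to carry three auxiliary points (which would break the count), and that $q_A$ and $q_B$ cannot collide at a shared corner. Both are handled by the observation that a shared edge between pixels lying in a common $2\times2$ block would have produced a centre point rather than an edge point, so the very existence of $p$ as an edge point excludes $A$ and $B$ from any common block. If one prefers to avoid this structural argument, the statement can instead be verified directly by inspecting each configuration in \cref{FigConfigurationsUsualResolution,FigKonfigurationer13}, exactly as in the preceding lemmas, but the case analysis above is shorter and explains why the count is always two.
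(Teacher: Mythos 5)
Your proof is correct and follows essentially the same route as the paper's: a case split between centre points of $2\times2$ grey configurations (handled via Lemma~\ref{LemAuxiliaryPoints2x2}) and edge midpoints (handled via the simple/complex pixel counts after the clean-up step). In fact your version is slightly more complete, since you explicitly rule out that the two pixels sharing an edge point could contribute the \emph{same} second auxiliary point --- a distinctness check the paper's proof leaves implicit.
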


\begin{proof}
Consider an auxiliary point $p$, sitting on the boundary of pixel $C$. If $p$ is the centre of some $2\times 2$ configuration of black pixels, then by Lemma \ref{LemAuxiliaryPoints2x2} there are only two auxiliary points on the boundary of this configuration as claimed. %If it is the only auxiliary point on $C$, then $C$ is complex, and hence $p$ is sitting in a corner of $C$ that is shared with three other pixels. At least two of these pixels have two auxiliary points on their boundary, so we may as well assume that $C$ has two auxiliary points

If $p$ is the midpoint of some pixel edge, consider a pixel $C$ having this edge. It is either simple (and hence has two auxiliary points on its boundary), or it is complex and hence has an auxiliary point in a corner of $C$. In both cases, there is exactly one other auxiliary point on the boundary of each of the pixels having $p$ on their boundaries.
\end{proof}

The above theorem means that there is a natural way of defining 'neighbouring auxiliary points': Two auxiliary points are neighbours if they sit on the boundary of the same pixel. %The above theorem means that all auxiliary points have exactly two neighbours.

The next step is to approximate the boundary of $X$ with curve segments: Consider an auxiliary point and its two neighbour auxiliary points. We approximate $\del X$ by circle arc segments through these three points (or, if the points are collinear, by line segments). Hence there are two curve segments through each two neighbouring auxiliary points sitting on the boundary of the same pixel $C$, one starting in one of the points, the other ending in the other point.
Each of these curve segments are graphs over the straight line $L$ through the points, so we may write them as $\gamma_1:[0,\abs{L}]\to \R$ and $\gamma_2:[0,\abs{L}]\to \R$. Then choosing a bump function $\phi:[0,\abs{L}]\to [0,1]$ with $\phi(0)=1$ and $\phi(\abs{L})=0$, we may patch a connected curve $\gamma$ together by putting $\gamma_C(t)=(1-\phi(t))\gamma_1(t)+\phi(t)\gamma_2(t)$ in each pixel $C$, see Figure \ref{FigGamma}. The resulting curve $\gamma_C$ is then also a graph over $L$, and the curve $\gamma$ is a smooth embedded submanifold of $\R^2$.
 %We may parametrise these two curves as constant speed curves $\gamma_1:[0,1]\to C$ and $\gamma_2:[0,1]\to C$, where $\gamma_1(0)$ is the start point of one of the curve segments, and $\gamma_2(1)$ is the endpoint of the other. Then choosing a bump function $\phi:[0,1]\to [0,1]$ with $\phi(0)=1$ and $\phi(1)=0$, we may patch a connected curve $\gamma$ together by putting $\gamma_C(t)=(1-\phi(t))\gamma_1(t)+\phi(t)\gamma_2(t)$ in each pixel $C$, see Figure \ref{FigGamma}. 
\begin{figure}
\includegraphics[scale=0.7]{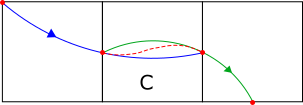}
\caption{The curve $\gamma_1$ (blue) and the curve $\gamma_2$ (green) are patched together inside $C$ using a bump function. This produces the curve $\gamma_C$ (dashed red curve).}
\label{FigGamma}
\end{figure}

\begin{lem}\label{LemGammaInConvexHull}
The path $\gamma_C$ is contained in the area bounded by $\gamma_1$ and $\gamma_2$.
\end{lem}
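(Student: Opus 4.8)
The plan is to exploit the fact that, by construction, both $\gamma_1$ and $\gamma_2$ are graphs over the common line $L$ and are parametrised over the same interval $[0,\abs{L}]$, so the whole statement reduces to a pointwise comparison of heights. Working in coordinates in which $L$ is the horizontal axis and the two shared auxiliary points sit at $t=0$ and $t=\abs{L}$, the region bounded by $\gamma_1$ and $\gamma_2$ is precisely the set of points $(t,y)$ with $t\in[0,\abs{L}]$ and $y$ lying between $\gamma_1(t)$ and $\gamma_2(t)$, i.e. $\min\{\gamma_1(t),\gamma_2(t)\}\le y\le\max\{\gamma_1(t),\gamma_2(t)\}$. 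This is the reading of ``the area bounded by $\gamma_1$ and $\gamma_2$'' that I would make explicit at the outset; it is well posed even in the degenerate case where the two graphs cross.

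The key step is then the observation that, for each fixed $t$, the defining formula
\[
\gamma_C(t)=(1-\phi(t))\gamma_1(t)+\phi(t)\gamma_2(t)
\]
expresses $\gamma_C(t)$ as a convex combination of $\gamma_1(t)$ and $\gamma_2(t)$: since $\phi(t)\in[0,1]$, the coefficients $1-\phi(t)$ and $\phi(t)$ are nonnegative and sum to $1$. Hence $\gamma_C(t)$ lies in the closed interval with endpoints $\gamma_1(t)$ and $\gamma_2(t)$, which is exactly the condition $\min\{\gamma_1(t),\gamma_2(t)\}\le\gamma_C(t)\le\max\{\gamma_1(t),\gamma_2(t)\}$. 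Letting $t$ range over $[0,\abs{L}]$, every point $(t,\gamma_C(t))$ of the graph of $\gamma_C$ then lies in the region described above, which is the assertion. As a consistency check, the boundary conditions $\phi(0)=1$ and $\phi(\abs{L})=0$ give $\gamma_C(0)=\gamma_2(0)$ and $\gamma_C(\abs{L})=\gamma_1(\abs{L})$; since $\gamma_1$ and $\gamma_2$ share both endpoints (the two auxiliary points on $\del C$), $\gamma_C$ is indeed a curve joining the same two points.

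I do not expect a genuine obstacle here: once the bounded region is identified with the set between the two graphs, the conclusion is immediate from the fact that $\phi$ takes values in the convex set $[0,1]$. The only point deserving a little care is purely expository, namely pinning down the meaning of ``the area bounded by $\gamma_1$ and $\gamma_2$'' so that the statement and the convexity argument remain valid verbatim in the situation where $\gamma_1$ and $\gamma_2$ meet between the two auxiliary points, rather than enclosing a single simple region.
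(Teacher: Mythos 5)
Your proof is correct and uses exactly the same idea as the paper's: since $\gamma_C$ is a graph over $L$ and $\gamma_C(t)=(1-\phi(t))\gamma_1(t)+\phi(t)\gamma_2(t)$ is, for each $t$, a convex combination of $\gamma_1(t)$ and $\gamma_2(t)$, it lies pointwise between the two curves. Your version merely spells out the coordinates and the meaning of ``the area bounded by $\gamma_1$ and $\gamma_2$'' more explicitly than the paper's one-line argument.
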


\begin{proof}
Since $\gamma_C$ is a graph over a line $L$ and it is a convex combination of a point on $\gamma_1$ and a point on $\gamma_2$, the curve $\gamma_C$ must lie between these two points, and hence also between the curves $\gamma_1$ and $\gamma_2$. 
\end{proof}

\begin{figure}
\includegraphics[scale=0.5]{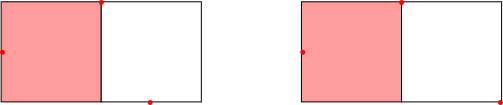}
\caption{Auxiliary points sitting in one of these configurations around the red pixel are exemptions to Lemma \ref{LemGammaInC}.}
\label{FigExceptions}
\end{figure}

\begin{lem}\label{LemGammaInC}
Consider two neighbour auxiliary points on the boundary of some pixel $C$ not sitting in a configuration like the ones in \ref{FigExceptions}. 
\begin{itemize}
\item If the two neighbour auxiliary points do not sit at the endpoints of some edge of $C$, then the curve $\gamma_C$ is contained in $C$.
\item If the two neighbour auxiliary points do sit at the endpoints of the edge shared by $C$ and some other pixel $C'$, then the curve $\gamma_C$ is contained in $C\cup C'$.
\end{itemize}
\end{lem}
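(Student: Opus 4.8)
The plan is to prove this geometrically by passing to a convenient coordinate frame and then controlling how far the two defining circle arcs can stray from the pixel in which they are patched. First I would invoke \reflemma{LemGammaInConvexHull}: it suffices to show that the two arcs $\gamma_1$ and $\gamma_2$ themselves stay inside $C$ (respectively $C\cup C'$), since $\gamma_C$ lies in the region they bound. Each arc is determined by three auxiliary points --- the two neighbour auxiliary points on $\partial C$ together with one further auxiliary point --- and is a circular arc (or a line segment in the degenerate collinear case) through them. So the geometric content reduces to bounding the ``sagitta'' of a circular arc: how far the arc bulges away from the chord $L$ joining the two neighbour points. Since the arc is a graph over $L$, this bulge is exactly the quantity I must keep below the available clearance inside the pixel.

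The key quantitative input is a lower bound on the radius of the circle through any three auxiliary points. Auxiliary points are either edge midpoints or centres of $2\times2$ grey configurations, so the distances between neighbouring auxiliary points are controlled multiples of $d$; moreover, by \reflemma{Lem2x3grey}(ii) the boundary $\del X$ --- and hence the triple of points it forces the arc to approximate --- is confined to a thin band of width $(\sqrt2-1)d$ near the relevant edges. Combining these, the three points are nearly collinear at the scale of the pixel, which forces the circumscribing circle to have large radius; a large radius means small curvature, hence a small sagitta. I would then split into the two cases of the statement. In the first case, where the two neighbour points are \emph{not} the endpoints of an edge of $C$, at least one of them is an interior edge-midpoint (or a configuration centre), so the chord $L$ sits strictly inside $C$ with a definite margin to $\partial C$; I bound the sagitta by that margin. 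In the second case, where the two points \emph{are} the endpoints of a shared edge, the chord runs along that edge and the arc may bulge into either $C$ or the adjacent $C'$, which is precisely why the conclusion is stated as containment in $C\cup C'$; here I bound the sagitta by the full width $d$ of a single pixel.

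The main obstacle I expect is the sagitta estimate in the borderline configurations: one must verify that $r-\sqrt{r^2-|L|^2/4}$ --- the spindle-width expression of \reflemma{LemSpindleWidth}, which is the right model for the worst-case bulge --- stays under the pixel clearance for \emph{every} admissible placement of the third auxiliary point. Because $|L|$ can be as large as $d\sqrt2$ (a diagonal of the $2\times2$ block) while the clearance can be as small as $(\sqrt2-1)d$, the inequality is genuinely tight, and it is exactly the placements that fail this inequality that are collected into the exceptional configurations of \cref{FigExceptions}. So the real work is a finite case-check: enumerate, using \refthm{ThmConfigurationsUsualResolution} and \refthm{Thm2x2}, the possible relative positions of three neighbouring auxiliary points around a pixel $C$; discard those matching \cref{FigExceptions}; and confirm the sagitta bound in each remaining case. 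I would organise this by the type of the two neighbour points (two edge-midpoints, midpoint-and-centre, or two endpoints of an edge) and reduce the number of cases by the symmetries --- rotation, reflection, and black/white interchange --- already used throughout the section.
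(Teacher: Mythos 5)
Your overall strategy is the paper's: reduce to the two arcs $\gamma_1,\gamma_2$ via Lemma~\ref{LemGammaInConvexHull} (together with convexity of $C$, resp.\ of the $1\times 2$ rectangle $C\cup C'$, to pass from the arcs to the region they bound, hence to $\gamma_C$), then enumerate the finitely many relative positions of three neighbouring auxiliary points and check containment case by case, excluding the configurations of Figure~\ref{FigExceptions}. That enumeration-plus-calculation is exactly what the paper does (Figure~\ref{FigCircleArcs}), so your concluding plan is sound.

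However, the step you present as ``the key quantitative input'' is a non sequitur and would fail if you actually relied on it. Lemma~\ref{Lem2x3grey}(ii) constrains where $\del X$ can run inside a block of grey pixels; it says nothing about the auxiliary points. Those are placed purely combinatorially --- at edge midpoints and at centres of $2\times 2$ grey blocks --- and they do not move no matter where $\del X$ passes through the grey pixels; moreover the arcs interpolate the auxiliary points, not points of $\del X$. So ``the triple of points it forces the arc to approximate is confined to a thin band'' has no basis, and the chain ``nearly collinear $\Rightarrow$ large circumscribed radius $\Rightarrow$ small sagitta'' is unfounded as a general mechanism: the radius lower bounds that do hold (for instance $\tfrac{\sqrt{65}}{8}d$ for arcs through the endpoints of a shared edge in Lemma~\ref{LemGammaPassing2x3Grey}, or $>d$ in Lemma~\ref{LemComplexPixelOppositeVertices}) are \emph{outputs} of the case enumeration, not consequences of where $\del X$ lies. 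A second, smaller imprecision: a sagitta bound alone is not sufficient in your first case, because the chord can pass close to a corner of $C$ (e.g.\ within $\tfrac{d}{2\sqrt{2}}$ of it when the two auxiliary points are midpoints of vertex-adjacent edges), so one must also track on \emph{which side} of the chord the arc bulges; the per-case check handles this automatically, a uniform margin-versus-sagitta comparison does not. None of this sinks the proposal --- your final paragraph (enumerate admissible triples via Theorems~\ref{ThmConfigurationsUsualResolution} and~\ref{Thm2x2}, discard Figure~\ref{FigExceptions}, verify each case) is self-sufficient and coincides with the paper's proof --- but the ``key quantitative input'' paragraph should be deleted and replaced by that check rather than treated as a shortcut around it.
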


\begin{figure}
\includegraphics[width=\linewidth]{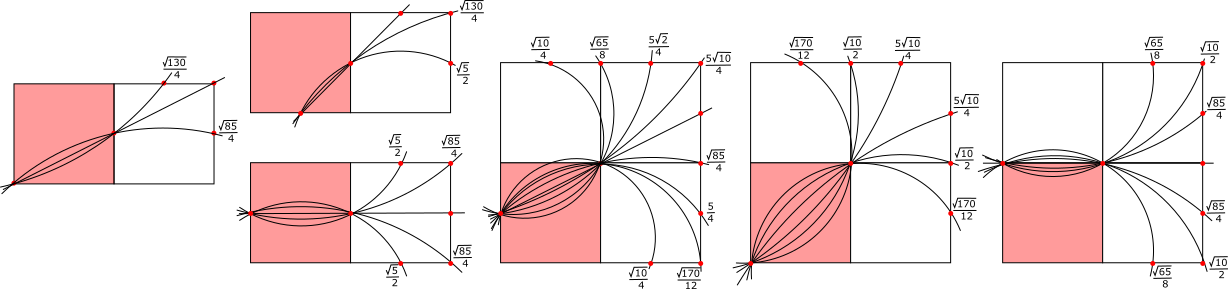}
\caption{The figure shows all possible positions of two auxiliary points on the boundary of some pixel $C$ (the red ones in the figure), and all circle arcs through these two and a third auxiliary point. The radii of the circle arcs are also calculated.}
\label{FigCircleArcs}
\end{figure}

\begin{proof}
We start by proving the lemma for the arc segments $\gamma_1$ and $\gamma_2$: We can consider all possible positions of three neighbour auxiliary points, see Figure \ref{FigCircleArcs}. Look at auxiliary points $p_1$, $p_2$ sitting in configurations around a red pixel like any of the five figures to the left. A calculation (or a look at the figures!) then show that with the exception of auxiliary points in configurations as the ones shown in Figure \ref{FigExceptions}, all possible circle arcs through two auxiliary points sitting on a red pixel are contained in that red pixel.

Now, lets argue that $\gamma_C$ is contained in $C$: By Lemma \ref{LemGammaInConvexHull}, $\gamma_C$ is contained in the area $A_C$ bounded by $\gamma_1$ and $\gamma_2$. Since $\gamma_1$ and $\gamma_2$ are both contained in $C$ which is convex, $A_C$ is also contained in $C$. Hence $\gamma_C$ must also be contained in $C$. This proves $i)$. Part $ii)$ is proved in the same way, but now looking at the right figure in Figure \ref{FigCircleArcs}. 
\end{proof}

\begin{prop}
The curves $\gamma_C$ have no self-intersections, and do not intersect each other. Hence each component of $\gamma$ is a simple closed curve.
\end{prop}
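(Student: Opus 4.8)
The plan is to establish the three assertions in order, reducing each to tools already in hand: the graph-over-a-line structure of the $\gamma_C$, the localisation provided by \reflemma{LemGammaInC}, and the degree-two property of auxiliary points proved in the theorem immediately preceding this proposition.

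I would first dispose of self-intersections, which is immediate. By construction $\gamma_C(t)=(1-\phi(t))\gamma_1(t)+\phi(t)\gamma_2(t)$ is a graph over the chord $L$ joining its two endpoint auxiliary points, so the underlying map $t\mapsto\gamma_C(t)$ is injective and $\gamma_C$ carries no self-intersection. No geometry beyond the definition is needed here.

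For mutual disjointness I would localise each segment to a single pixel using \reflemma{LemGammaInC}. Outside the exceptional configurations of Figure \ref{FigExceptions}, a segment $\gamma_C$ lies in its pixel $C$ (or, when its two endpoints are the corners of an edge shared with a pixel $C'$, in $C\cup C'$); being a graph over the chord $L$ it meets $\partial C$ only at its two endpoints, as one reads off from the admissible arcs listed in Figure \ref{FigCircleArcs}. Hence if the home pixels of two segments are non-adjacent, or meet only in a corner or a single edge, the segments can intersect only at a common auxiliary point lying on that shared boundary. For two segments sharing exactly one endpoint auxiliary point $p$, I would use that each is a graph meeting the common pixel boundary only at $p$ to conclude that they touch solely at $p$ and do not cross there.

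Granting these two steps, the conclusion is combinatorial: by the preceding theorem every auxiliary point is an endpoint of exactly two segments, so the neighbour relation decomposes into disjoint cycles, and $\gamma$ is the corresponding disjoint union of closed chains of embedded arcs meeting only at their successive shared endpoints --- that is, each component is a simple closed curve. The main obstacle is the bookkeeping for the special positions in which the clean ``disjoint pixels give disjoint segments'' reasoning breaks down, namely the $C\cup C'$ overlap of \reflemma{LemGammaInC}(ii) and the exceptions of Figure \ref{FigExceptions}, where a segment is permitted to bulge out of its home pixel. For these I would fall back on the finite case analysis of Figure \ref{FigCircleArcs}, together with the admissible local patterns classified in Figure \ref{FigConfigurationsUsualResolution} and Figure \ref{FigKonfigurationer13}, checking directly that the two curves entering any shared overlap region still fail to cross.
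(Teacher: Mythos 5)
Your proposal is correct and follows essentially the same route as the paper: self-intersections are excluded because each $\gamma_C$ is a graph over its chord, mutual disjointness is reduced to pixel-localisation of the segments (the paper phrases this via Lemma \ref{LemGammaInConvexHull} and the regions bounded by the arcs of Figure \ref{FigCircleArcs}, which is what Lemma \ref{LemGammaInC} encapsulates), the exceptional configurations of Figure \ref{FigExceptions} are handled by the same finite case check, and closedness follows from the degree-two property of auxiliary points. The only cosmetic difference is that you make the shared-endpoint case explicit, which the paper leaves implicit.
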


\begin{proof}
Since each segment $\gamma_C$ is a graph over some straight line $L$, we only need to show that two segments $\gamma_C$ and $\gamma_{C'}$ do not intersect. By Lemma \ref{LemGammaInConvexHull} it suffices to show that the area $A_C$ bounded by two arcs $\gamma_1$ and $\gamma_2$ in a pixel $C$ does not intersect the area $A_{C'}$ bounded by two arcs $\gamma'_1$ and $\gamma'_2$ in another pixel $C'$. 

Consider the possible circle arcs shown in Figure \ref{FigCircleArcs}. With the exception auxiliary points sitting in configurations like the ones in Figure \ref{FigExceptions}, all arc segments stays inside the pixel(s) containing both of the auxiliary points they join. Since no pixel can have more than two auxiliary points on their boundary, the only possible way that two curve segments can intersect is if one of them, say $\gamma_C$, is made using a curve $\gamma_1$ connecting points in a configuration like the one in Figure \ref{FigExceptions}. But going through all possible configurations where such a $\gamma_1$ could occur one can conclude that $\gamma$ cannot intersect itself in this case either.

That each segment of $\gamma$ is a simple closed curve follows from the fact that the segments $\gamma_C$ always connect two neighbour auxiliary points, and all auxiliary points have two neighbours. If a component of $\gamma$ were not a closed curve, it would have an endpoint (since all components of $\gamma$ are bounded) - but $\gamma$ is the join of curve segments between neighbour auxiliary points, so such an endpoint can only occur in one of the auxiliary points. But since all auxiliary points have two neighbour auxiliary points, this is impossible. % \fxfatal{Færdiggør}
 %Suppose that there was some curve segment $\gamma_C$ that intersected itself in a point $p$. Then there would be two times $t_1,t_2\in[0,1]$ with $t_1<t_2$ such that $p$ belonged to the line segment between $\gamma_(t_1)$ and $\gamma_2(t_1)$, and also to the line segment between $\gamma_(t_2)$ and $\gamma_2(t_2)$  
\end{proof}

\begin{thm}\label{ThmGammaProperties}
For each component of $\del X$, there is exactly one component of $\gamma$. Each component of $\gamma$ separates the boundary components of a connected component $A$ of the set of grey pixels.
\end{thm}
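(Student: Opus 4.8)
The plan is to prove the two assertions in turn: first match up the components of $\del X$ with the connected components of the grey region, and then analyse a single such component.

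\textbf{Step 1 (components of $\del X$ versus grey components).} Write $G$ for the union of all grey pixels. Since $X$ is $r$-regular, $\del X$ is a disjoint union of simple closed curves $\del_1 X,\dots,\del_k X$. First I would check that $\del X\subseteq\Int(G)$: by our convention $\del X$ meets no pixel corner, so it crosses only pixel edges, and both pixels sharing such an edge contain a piece of $\del X$ and are therefore grey. Consequently, traversing a single component $\del_i X$ one passes through a sequence of grey pixels that are successive d-neighbours, so $\del_i X$ lies in a single connected component $A_i$ of $G$. To see that distinct components land in distinct $A_i$, I would show $d(\del_i X,\del_j X)\ge 2r$ for $i\ne j$: taking a closest pair $p\in\del_i X$, $q\in\del_j X$, the segment $pq$ is normal to $\del X$ at $p$, so $q$ lies on the axis of the black or white $r$-ball tangent at $p$; if $d(p,q)<2r$ then $q$ would lie in the interior of that ball and hence in $\Int(X)$ or in $X^C$, contradicting $q\in\del X$. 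Because every grey pixel lies within $\sqrt2\,d$ of the boundary piece passing through it and $2r>2\sqrt2\,d$, the grey pixels of $\del_i X$ and of $\del_j X$ lie in disjoint closed $\sqrt2\,d$-neighbourhoods and are not even adjacent. Since every grey pixel contains a point of $\del X$, this yields a bijection $\del_i X\leftrightarrow A_i$ between components of $\del X$ and components of $G$.

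\textbf{Step 2 (each $A_i$ is an annulus carrying one core circle of $\gamma$).} Because $A_i\subseteq\overline{N_{\sqrt2 d}}\subseteq N_r$, the projection $\pi$ of \reflemma{LemProjection} is defined on $A_i$ and retracts it onto $\del_i X$; together with the absence of $3\times3$ grey blocks (\reflemma{LemNo9greys}), which keeps the band thin, this exhibits $A_i$ as a regular neighbourhood of the simple closed curve $\del_i X$, so $A_i$ is an annulus whose two boundary circles face $\Int(X)$ (black pixels) and $X^C$ (white pixels) respectively. By the preceding proposition each component of $\gamma$ is a simple closed curve, and since every arc $\gamma_C$ joins two auxiliary points lying on a common pixel, each component of $\gamma$ lies in a single $A_i$. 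It remains to show that $\gamma\cap A_i$ is exactly one circle separating the two boundary circles of $A_i$. For this I would argue that, in every admissible local configuration from \cref{FigConfigurationsUsualResolution,FigKonfigurationer13}, the auxiliary points together with the arc through them cut each pixel into a part adjacent to the black side and a part adjacent to the white side; assembling these local barriers along the band shows that any path in $A_i$ from the black-facing boundary circle to the white-facing one must cross $\gamma$, so $\gamma\cap A_i$ separates the two ends of the annulus and hence contains a core circle. That there is \emph{exactly} one component, with no extra contractible loop, would follow by checking that the two-neighbour graph on the auxiliary points of $A_i$ forms a single cycle. Combined with Step 1, this gives exactly one component of $\gamma$ for each component of $\del X$.

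\textbf{Main obstacle.} The distance estimate and the bijection of Step 1 are routine. The real work is the local-to-global claim in Step 2: converting the finite configuration analysis into the statement that the assembled auxiliary-point arcs form a \emph{single} loop winding once around the band and separating its two boundary circles, rather than several loops or a contractible one. In particular one must rule out a null-homotopic component of $\gamma$ inside $A_i$, and must verify, configuration by configuration, that the arcs always keep the black-facing and white-facing boundary edges of $A_i$ on opposite sides — this is precisely where the classification of \cref{SecConfigurations} and the placement rules for the auxiliary points are genuinely used.
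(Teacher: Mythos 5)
Your Step 1 is correct and essentially matches the opening of the paper's proof; your normal-segment argument for $d(\del_i X,\del_j X)\ge 2r$ is a legitimate alternative to the paper's, which instead notes that two points on different components of $\del X$ at distance less than $2\sqrt{2}d<2r$ would be joined by the path $\pi(L)\subseteq\del X$ (Corollary \ref{CorPathInSpindle} applied to $\pi$), so they must lie on the same component.

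The genuine gap is Step 2, and it is not a small one: every assertion that the theorem actually makes is there prefaced by ``I would argue'' or ``would follow by checking''. Concretely, three things are missing. First, the annulus claim is unsupported: the nearest-point projection $\pi\colon A_i\to\del_i X$ is a retraction but not a deformation retraction --- the straight-line homotopy from $x$ to $\pi(x)$ may leave the union of grey pixels through a black or white pixel --- so ``regular neighbourhood'' does not follow from what you wrote; you would also have to rule out pinch points where two pixels of $A_i$ meet only at a corner, which needs the $2\times2$ configuration classification (Figure \ref{Fig2x2Configurations}) that you never invoke. Second, the claim that the auxiliary points of $A_i$ form a \emph{single} cycle is precisely the ``exactly one component of $\gamma$'' statement being proved, so announcing that it ``would follow by checking'' is circular as a proof. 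Third, the separation claim is likewise deferred to an unexecuted configuration-by-configuration check. The paper closes exactly this hole with a combinatorial chain argument that your outline lacks: any two pixels of $A$ carrying auxiliary points are joined by a chain of grey pixels; a chain pixel with only one auxiliary point sits in a $2\times2$ grey block and can be replaced by a grey d-neighbour carrying two auxiliary points; the upgraded chain, all of whose pixels carry two auxiliary points, supports a single segment of $\gamma$ joining its two ends, so two distinct components of $\gamma$ inside $A$ would in fact be connected --- a contradiction. The same device applied to a chain following a boundary component $\del A'$ of $A$ produces a closed curve of $\gamma$ enclosing $\del A'$ and no other component of $\del A$, which shows $\del A$ has at most two components and that $\gamma$ separates them; the osculating black and white $\sqrt{2}d$-balls at a point of $\del X'$ then force at least, hence exactly, two. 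Unless you supply a substitute for this chain argument, your Step 2 restates the theorem rather than proving it.
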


\begin{proof}
Let $\del X'$ be a component of $\del X$, and let $A$ be the set of grey pixels containing points of $\del X'$. Note that $A$ cannot have any grey neighbour pixel $B$, because this would imply that $B$ contained a point from another component of $\del X$, which would mean that there were two points on different components of $\del X$ closer than $2d\sqrt{2}$ - a contradiction by Corollary \ref{CorPathInSpindle} applied to $\pi$. %; if it did, then $B$ would contain a point $x$ of $del X$, and its nearest neighbour pixel in $A$ would contain a point $y\in\del X'$, and since they belong to two neighbour pixels, $d(x,y)<2\sqrt{2}$. But then there would be a path in $\del X$ between them by Corollary \ref{CorPathInSpindle} applied to $\pi$, meaning that $y\in\del X'$ and hence $B\subseteq A$ - which is a contradiction. 
Hence $A$ is a connected component of the set of grey pixels.

Consider any chain of grey pixels in $A$, where each pixel in the chain is a neighbour of the previous and the next pixel in the chain, and each pixel appear in the chain no more than once. Assume that the start pixel and end pixel of the chain has at least two grey d-neighbours. We aim to show that the first and last pixel in such a chain is connected by a segment of $\gamma$. 

By construction, each pixel in such a chain has at least two grey d-neighbours, hence has at least one auxiliary point on its boundary. If a pixel $C$ in the chain has only one auxiliary point on its boundary, its two grey d-neighbours must sit in a $2\times2$ configuration with $C$, and hence one of its grey d-neighbours must have two auxiliary points on its boundary. Hence if we replace $C$ by its d-neighbours with two auxiliary points on their boundary, we still get a chain of pixels in $A$. Repeating this, we end up with a chain of pixels where all pixels in the chain have two auxiliary points on their boundary. The construction of $\gamma$ then yields a segment of $\gamma$ through this pixel chain.

Now by $r$-regularity of $X$, $A$ must be larger than 2 pixels and hence have at least one pixel with two grey d-neighbours. Hence $A$ contains at least one component of $\gamma$. If $A$ contained two components $\gamma'$, $\gamma''$ of $\gamma$, we could pick a chain of grey pixels connecting a pixel containing a point of $\gamma'$ with a pixel containing a point of $\gamma''$. Then by the above, the auxiliary points on the first pixel would be connected to the auxiliary points on the last pixel by a segment of $\gamma$. But then $\gamma'$ and $\gamma''$ would be connected - a contradiction. Thus for any component of $\del X$, there is exactly one component of $\gamma$.

For the second part of the statement, consider a chain of pixels following a boundary component $\del A'$ of $A$. By the above, this chain yields a segment of $\gamma'$ which is a closed curve containing $\del A'$, but not containing any other components of the boundary of $A$. Hence $\gamma'$ separates any component of $\del A$ from the others - in particular, there can be at most two boundary components of $A$, and $\gamma$ separates them. In fact, there are always two components of $\del A$: Any point $x\in\del X'$ has a black and a white $\sqrt{2}d$-ball osculating at $x$, and these balls contain the pixels in which they are centred. Since $\del X'$ separates the two balls and hence the two pixels where they are centred, so does $A$. But then $A$ must have two different boundary components, and both $\del X'$ and $\gamma'$ separates these two components.
\end{proof}

Since the set of grey pixels separates the white pixels from the black, the above theorem actually implies that $\gamma$ also separates the white pixels from the black (in the sense that any curve from a black to a white pixel must intersect $\gamma$).
We may conclude (via the Jordan Curve Theorem) that each component of $\gamma$ separates $\R^2$ into two sets, a bounded and an unbounded. From now on, $\gamma$ is the boundary of the reconstructed set, which we define as follows:

\begin{defn}
We define \emph{the reconstructed set} $\Gamma$ to be the bounded set having $\gamma$ as boundary.
\end{defn}%\fxfatal{Definer det evt. som den mængde, der indeholder de sorte pixels, når du har vist at sider bliver adskilt}

\section{Hausdorff distance between the boundaries of the original set and the reconstruction}

We are now ready to look at the Hausdorff distance between our reconstruction and the original object. Let us start by proving a lemma:

\begin{figure}
\includegraphics[scale=0.5]{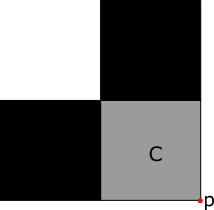}
\caption{A point $q\in\del X$ that belongs to a pixel with two black neighbour pixels sharing an edge must belong to the ball $B_d(p)$.}
\label{FigCornerPixel}
\end{figure}

\begin{lem}\label{LemCornerPixel}
Consider a grey pixel $C$ as the one in Figure \ref{FigCornerPixel}, with two black (or two white) neighbour pixels sharing a vertex. Let $p$ be the vertex of $C$ that does not belong to the two black (or white) neighbour pixels. If $q\in\del X\cap C$, then $q\in B_d(p)$.
\end{lem}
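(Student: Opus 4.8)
The plan is to reduce the statement to a purely planar estimate about a single $r$-ball that must avoid two squares, and then to settle that estimate by a tangency computation.

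First I would fix coordinates so that $C=[0,d]^2$, the two black neighbour pixels are $P_1=[-d,0]\times[0,d]$ and $P_2=[0,d]\times[-d,0]$ (meeting only at the common vertex $o=(0,0)$), and the distinguished vertex is $p=(d,d)$. Since the two pixels are black, the definition of a trinary image gives $P_1\cup P_2\subseteq X$. Now take $q\in\del X\cap C$. By $r$-regularity there is a white osculating ball $B_r(x_w)\subseteq X^C$ with $q\in\del B_r(x_w)$, so $d(x_w,q)=r$; and since $B_r(x_w)$ is disjoint from $X$, it is disjoint from both black squares, whence $d(x_w,P_1)\ge r$ and $d(x_w,P_2)\ge r$. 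Thus everything reduces to the following planar claim: if the interior of an $r$-ball with $r>d\sqrt2$ is disjoint from $P_1\cup P_2$ and its boundary meets $\overline C$ at a point $q$, then $d(q,p)<d$.

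To prove this claim I would first argue that the centre $x_w$ lies ``up and to the right'': because $P_1$ occupies $x\le 0$ and $P_2$ occupies $y\le 0$, the conditions $d(x_w,P_1),d(x_w,P_2)\ge r$ together with $q\in\overline C$ push $x_w$ out of the lower-left region (any $r$-ball reaching a point of $\overline C$ from the lower left sweeps through $P_1$ or $P_2$). The point $q$ farthest from $p$ should then occur, by symmetry across the diagonal $y=x$, with $x_w=(w,w)$ and $q=(a,a)$ on that diagonal and $w=a+r/\sqrt2$. The binding constraint is tangency to the inner corner $(0,d)$ of $P_1$, i.e. $w^2+(w-d)^2=r^2$; writing $r=kd$ with $k>\sqrt2$ and eliminating $w$, the inequality $d(q,p)<d$ becomes the one-variable statement
\[
2k+\sqrt2-2<\sqrt{4k^2-2},
\]
which holds for all $k>\sqrt2$ (the two sides agree to leading order as $k\to\infty$, and a derivative check shows the gap stays positive). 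The ``two white'' case follows by interchanging $X$ and $X^C$ and using the black osculating ball instead.

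The main obstacle is this planar estimate. The delicate point is that one genuinely needs the two-dimensional extent of the squares rather than just their corner points: a ball can avoid the three vertices $(0,0),(0,d),(d,0)$ of $C$ and still cut into the interior of $P_1$ or $P_2$, so a naive argument using only those vertices does not deliver the bound. Correctly identifying the binding tangency (the inner corner $(0,d)$), justifying that the centre sits on the upper-right side, and confirming that the extremal $q$ lies on the diagonal are the steps that require care; the resulting inequality itself is elementary. I also note that the bound is strict but not sharp — there is slack at $k=\sqrt2$ — so a somewhat lossy argument suffices and no exact optimisation is needed.
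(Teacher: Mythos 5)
Your reduction itself is sound: the white osculating ball at $q$ does have interior disjoint from $P_1\cup P_2$, and this is a genuinely different route from the paper's, which never looks at osculating balls at $q$ but instead uses the points $x_1,x_2$ where the component of $\del X$ through $q$ crosses the two edges of $C$ adjacent to $p$, and traps $q$ in $\pi(L)\subseteq S(L,r)\subseteq B_d(p)$ for the chord $L$ joining them (Corollary~\ref{CorPathInSpindle} together with Lemma~\ref{LemSpindleIsIntersection}). The problem is that the planar claim you reduce to is false as stated, and the extremal analysis you propose identifies the wrong worst case. Concretely, for any $r>d\sqrt 2$ take $x_w=(0,d)+r(\sin\phi,\cos\phi)$ with $\phi>0$ small. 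The nearest point of $P_1$ to $x_w$ is the corner $(0,d)$, at distance exactly $r$, and the nearest point of $P_2$ is $(r\sin\phi,0)$, at distance $d+r\cos\phi\ge r$ once $r(1-\cos\phi)\le d$; so the interior of $B_r(x_w)$ is disjoint from $P_1\cup P_2$, yet $q=(0,d)\in\del B_r(x_w)\cap\overline C$ and $d(q,p)=d$, not $<d$. Indeed your own binding configuration already violates the claim: the diagonal ball tangent to $P_1$ at $(0,d)$ is, by symmetry, also tangent to $P_2$ at $(d,0)$, so both corners lie on its boundary circle and in $\overline C$; you evaluated $d(q,p)$ only at the diagonal point $q=(a,a)$ of that sphere and overlooked these tangency points. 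Hence the supremum of $d(q,p)$ over admissible configurations equals $d$ exactly, attained at the pixel corners, and the symmetry argument ("the farthest $q$ occurs on the diagonal") is invalid — the maximizers form an asymmetric pair exchanged by the reflection, not a symmetric configuration.

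This is a genuine gap, for two reasons. First, the strict conclusion $q\in B_d(p)$ can only be recovered by invoking the paper's Convention that $\del X$ never passes through a pixel corner, which your argument never uses; without it, equality at $(0,d)$ or $(d,0)$ cannot be excluded. Second, and more seriously, your closing remark that "there is slack" and "a somewhat lossy argument suffices" is exactly wrong: since the bound degenerates to an equality at the corners, any proof along your lines must be sharp there. One would have to show that the two corners are the \emph{only} points of $\overline C\setminus B_d(p)$ reachable by the boundary of an admissible ball — that is, for every non-corner $q\in\overline C$ with $d(q,p)\ge d$, every centre at distance $r$ from $q$ comes within distance strictly less than $r$ of $P_1$ or of $P_2$ — and this requires a case analysis over all centre positions (centres above the circle of radius $r$ about $(0,d)$, centres tangent near $(d,0)$, etc.), not a single one-variable inequality on the diagonal. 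That analysis appears to be correct in substance (I could not produce a non-corner counterexample), but it is considerably more delicate than what you outline; the paper's spindle argument sidesteps the whole issue, because the crossing points $x_1,x_2$ lie on the two edges adjacent to $p$, so $L$, the spindle $S(L,r)$, and hence $q$ are all caught inside $B_d(p)$ in one step.
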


\begin{proof}
Let $q$ be as above. Then $q$ belongs to a component of $\del X$ that must enter and leave $C$ in two places, say in points $x_1$ and $x_2$. These points belong to one of the edges of $C$ having $p$ as a corner. Let $L$ be the line segment between $x_1$ and $x_2$. 

Now $q$ must belong to $\pi(L)$, which in turn must belong to $S(L,r)$ by Corollary \ref{CorPathInSpindle}, which again is contained in any ball of radius less than $r$ containing $L$ by Lemma \ref{LemSpindleIsIntersection}. The ball $B_d(p)$ contains both of the edges of $C$ that have $p$ as an endpoint, hence it also contains $x_1$ and $x_2$ and consequently $L$, $S(L,r)$ and $q$.
\end{proof}

Now that we have a suggestion for the boundary of the original set, we aim to show how good this approximation is. The first step will be to prove the following:

\begin{thm}\label{ThmHausdorffA}
Any point of $\del X$ has distance at most $d$ to the curve $\gamma$ consisting of curve segments $\gamma_C$. Hence $\sup _{y\in \del X}\inf _{x\in \gamma} d(x,y)\leq d$.
\end{thm}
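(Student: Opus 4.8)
The plan is to reduce the statement to a single uniform geometric estimate and then dispatch the remaining cases using the configuration tables. Fix $y\in\del X$. Since $\del X$ passes only through grey pixels, $y$ lies in some grey pixel $C$, and the arc of $\del X$ through $y$ inside $C$ has endpoints $x_1,x_2$ on $\del C$; because $\del X$ can only meet an edge whose opposite pixel then also contains a boundary point, these endpoints lie on edges of $C$ shared with other grey pixels. Writing $L=[x_1,x_2]$, we have $\abs{L}\le d\sqrt2<r$, so the spindle $S(L,r)$ is defined, and arguing exactly as in the proof of \reflemma{LemCornerPixel} the arc through $y$ is contained in $\pi(L)\subseteq S(L,r)$ by \refcor{CorPathInSpindle} applied to $\pi$.

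The key reduction is the following: it suffices to exhibit an auxiliary point $a$ with $d(a,x_1)\le d$ and $d(a,x_2)\le d$. Indeed, auxiliary points lie on $\gamma$ by construction, and the closed ball $\overline{B_d(a)}$ is convex and contains $x_1,x_2$, hence contains $L$; since $d<d\sqrt2<r$, the radius $d$ is at most $r$, so $\overline{B_d(a)}$ is one of the balls of radius at most $r$ containing $L$. Then \reflemma{LemSpindleIsIntersection} forces $S(L,r)\subseteq\overline{B_d(a)}$, and in particular $d(y,a)\le d$, giving $d(y,\gamma)\le d$.

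I would then locate such an $a$ by inspecting how the two crossing edges sit relative to $C$. If $x_1,x_2$ lie on the same edge $e$, the auxiliary point at the midpoint of $e$ is within $d/2$ of both and we are done; if they lie on adjacent edges meeting at a corner $v$ that carries an auxiliary point (the centre of a $2\times2$ grey block), then $d(v,x_i)\le d$ and $v$ works. The genuinely delicate situation is when $x_1,x_2$ sit on distinct edges whose common corner is \emph{not} an auxiliary point, most notably when $C$ has two same-colour d-neighbours sharing a vertex. Here I would invoke \reflemma{LemCornerPixel} directly: it confines $\del X\cap C$ to the ball $B_d(p)$ about the free corner $p$, and in each such configuration of \cref{FigConfigurationsUsualResolution} one verifies that $\gamma$ meets $\overline{B_d(p)}$. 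The cases with grey neighbours on opposite sides are controlled by \reflemma{Lem2x3grey}(ii), which forces the crossings within $(\sqrt2-1)d$ of the separating line and hence within $d$ of the relevant edge-midpoint auxiliary point.

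The main obstacle is thus not the geometric core—the spindle-in-ball containment of the second paragraph makes that a two-line argument—but the bookkeeping: verifying that in every admissible $3\times3$ (and $4\times4$-with-grey-centre) configuration from \cref{FigConfigurationsUsualResolution} and \cref{FigKonfigurationer13} at least one auxiliary point covers both endpoints $x_1,x_2$, or else that \reflemma{LemCornerPixel} applies with $p$ forcing $\gamma$ into $\overline{B_d(p)}$. I expect this to be a finite but lengthy case check, and the natural place where an exceptional configuration (of the type already excluded for $\gamma_C$) would need separate attention.
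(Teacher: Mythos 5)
Your spindle-in-ball reduction (second paragraph) is sound, and it is essentially the same mechanism the paper uses via Lemma~\ref{LemCornerPixel}, Corollary~\ref{CorPathInSpindle} and Lemma~\ref{LemSpindleIsIntersection}; where a suitable auxiliary point exists it disposes of several of the paper's cases more cleanly than the paper does. But the reduction cannot always be completed, and your first case is exactly where it breaks. The construction \emph{removes} the auxiliary point of every simple pixel that has only one, and the configuration of Figure~\ref{FigZeroAuxiliaries} is precisely this situation: $C$ is simple with a single grey d-neighbour $D$ and three non-grey d-neighbours, so the midpoint of the common edge $e$ is \emph{not} an auxiliary point of the final construction. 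Both crossings $x_1,x_2$ of the arc through $y$ lie on $e$, and nothing forces them towards the middle of $e$: in units $d=1$ with $C=[0,1]^2$, $D=[0,1]\times[1,2]$, an arc of a circle of radius $\sqrt2$ centred above $e$ and dipping about $0.09$ into $C$ crosses $e$ at $(\varepsilon,1)$ and $(1-\varepsilon,1)$, consistently with all the colours in that figure. The surviving auxiliary points nearest to $e$ are the midpoints $(0,\tfrac32)$ and $(1,\tfrac32)$ of $D$'s vertical edges, and each is at distance $\sqrt{(1-\varepsilon)^2+\tfrac14}\to\tfrac{\sqrt5}{2}>1$ from one of the two crossings; all other auxiliary points are farther still. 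So no auxiliary point covers both endpoints, your ``we are done'' in the same-edge case is false, and your fallback does not apply either ($\gamma$ is not contained in any of the corner balls of $C$). The theorem survives here only because the \emph{curve} $\gamma_D$ in the neighbouring pixel passes within $0.6d$ of $\del X\cap C$ --- an argument about arcs of $\gamma$, not about auxiliary points, which is how the paper's case iv) proceeds; this case cannot be absorbed into your framework.

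Two further gaps. First, in your ``delicate'' vertex-adjacent case, the criterion that $\gamma$ meets (or even is forced into) $\overline{B_d(p)}$ does not yield the bound: two points of a ball of radius $d$ can be $2d$ apart. What is needed, and what the paper proves via Lemma~\ref{LemGammaPassingThroughBall} and Lemma~\ref{LemGammaPassingCWithPointsOnOppositeEdges}, is that $\gamma_C$ is contained in $B_d(p)$ \emph{and} runs from one of the two edges to the other, so that the radius of $B_d(p)$ through $y$ must cross $\gamma_C$; two points on a common radius are within $d$ of each other. Second, the exceptional configurations of Figure~\ref{FigExceptions} are deferred rather than handled; in the paper these require a genuinely separate argument (showing that a rectangle $T$ of the pixel contains no points of $\del X$, then pushing vertically onto $\gamma_C$), and your reduction offers no substitute for it since there $\gamma_C$ even leaves the pixel. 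In short: the geometric core of your proposal is correct and, where it applies, more uniform than the paper's treatment, but one case admits a concrete counterexample to your key reduction and two others are left without a valid argument, so the proposal does not yet constitute a proof.
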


This theorem, however, requires some additional lemmas:

\begin{lem}\label{LemGammaPassing2x3Grey}
If two neighbour auxiliary points sit on the common boundary edge of two grey pixels $C_1$ and $C_2$, then the curve $\gamma_{C_1}=\gamma_{C_2}$ is contained in the set $C'$ of points in $C_1\cup C_2$ that are at a distance $0.133d$ from the common edge of $C$.
\end{lem}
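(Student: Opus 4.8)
The plan is to first pin down the geometric configuration forced by the hypothesis, and then reduce the statement to one elementary estimate for circular arcs. Recall that an auxiliary point of the construction is either the midpoint of an edge shared by two grey pixels (when those pixels do not lie in a common $2\times 2$ grey block) or the centre of a $2\times 2$ grey block. The only way two such points can both lie on a single pixel edge $e$ is for them to be the two endpoints of $e$, each arising as the centre of a $2\times 2$ grey block; equivalently, $C_1$ and $C_2$ belong to two different $2\times 2$ grey blocks, which places them inside a $2\times 3$ block of grey pixels, the situation of Lemma \ref{Lem2x3grey}. I would record this observation first, since it simultaneously shows $\gamma_{C_1}=\gamma_{C_2}$ — both pixels carry exactly the pair $p_1,p_2$ on their boundary, and the remaining neighbours of $p_1$ and of $p_2$ (the third points of the two arcs) are fixed globally — and it identifies the relevant chord $L=[p_1,p_2]$ with the edge $e$ itself, of length exactly $d$.

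Next I would reduce to an arc estimate. By construction $\gamma_{C_1}$ is the bump-function patch of two circular arcs $\gamma_1,\gamma_2$, each passing through $p_1$ and $p_2$ together with one further auxiliary point. By Lemma \ref{LemGammaInConvexHull} the patched curve lies in the region between $\gamma_1$ and $\gamma_2$, so it suffices to bound the distance of each arc $\gamma_i$ from the line carrying $e$. Every point of $\gamma_i$ projects orthogonally onto the chord $e$, so this distance is maximised at the midpoint and equals the sagitta $h=R-\sqrt{R^2-d^2/4}$ of an arc of radius $R$ with chord $d$; crucially, $h$ is strictly decreasing in $R$, so a lower bound on the radius yields an upper bound on the deviation.

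It then remains to exhibit a lower bound for the radii of all arcs that can occur through the two endpoints of a pixel edge. Here I would appeal to the enumeration of Figure \ref{FigCircleArcs}, which lists the admissible positions of a third auxiliary point relative to $p_1,p_2$ and the corresponding radii: running through the surviving cases, the smallest radius $R_{\min}$ that occurs, substituted into the sagitta formula, gives $h\le 0.133d$. Since this bounds the deviation of both $\gamma_1$ and $\gamma_2$ from $e$, and $\gamma_{C_1}=\gamma_{C_2}$ lies between them, every point of the curve lies within $0.133d$ of the common edge, i.e.\ $\gamma_{C_1}\subseteq C'$, as claimed.

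The main obstacle is this last step: one must check that no admissible placement of the third auxiliary point produces an arc of radius below $R_{\min}$, because a smaller radius would bulge the arc past the $0.133d$ strip. This is a finite case-check against the configurations surviving the classification of Section \ref{SecConfigurations}, and Lemma \ref{Lem2x3grey}(ii) is exactly what keeps the third points close enough to $e$ to force the radius lower bound; the sagitta arithmetic itself is routine once $R_{\min}$ is read off.
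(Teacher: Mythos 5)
Your proposal is correct and follows essentially the same route as the paper's proof: identify the two auxiliary points as the endpoints of the common edge $e$, reduce via Lemma \ref{LemGammaInConvexHull} to bounding the two circle arcs, and combine the radius lower bound read off from Figure \ref{FigCircleArcs} (the paper's $s=\tfrac{\sqrt{65}}{8}d$, your $R_{\min}$) with the sagitta/spindle-height computation $(s-\sqrt{s^2-\tfrac{1}{4}})d\approx 0.133d$. The only cosmetic difference is that the paper phrases the arc estimate via containment in the spindle $S(e,s)$ (together with Lemma \ref{LemGammaInC}(ii) for containment in $C_1\cup C_2$), whereas you phrase it directly as a monotone sagitta bound, which is the same calculation.
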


\begin{proof}
If two auxiliary points sit at the common boundary edge $e$ of $C_1$ and $C_2$, they must sit on the ends of $e$, i.e. they are the two common vertices of $C_1$ and $C_2$. %Since we only put auxiliary points at pixel vertices if said vertex is the centre of a $2\times2$-configuration of grey pixels, this means that $C_1$ and $C_2$ are the centre column of a $2\times3$-configuration of grey pixels (or the centre row of a $3\times2$-configuration of grey pixels, but we assume the first is the case).

%Then by Lemma \ref{Lem2x3grey}, part ii), $\del X\cap C_1\cup C_2$ is contained in the set of points in $C_1\cup C_2$ that are at most a distance $\frac{d}{2}$ from $e$. Let us name this set $C'$.

By Lemma \ref{LemGammaInC}, part ii), $\gamma_{C_1}=\gamma_{C_2}$ belongs to $C_1\cup C_2$. Let $\gamma_1$ and $\gamma_2$ be the two arc segments whose merge is $\gamma_{C_1}$. Then they are both circle arcs of radius no smaller than $s=\frac{\sqrt{65}}{8}$ (see Figure \ref{FigCircleArcs}), hence they are contained in the spindle $S(e,s)$ whose height is $(s-\sqrt{s^2-\frac{1}{4}})d\approx0.133d$. Thus no point of $\gamma_1$ or $\gamma_2$ is further away than $0.133d$ from $e$. Since the curve $\gamma_{C_1}$ belongs to he area bounded by $\gamma_1$ and $\gamma_2$ by Lemma \ref{LemGammaInConvexHull}, we must also have that $\gamma_{C_1}$ belongs to $C'$.
\end{proof}

\begin{lem}\label{LemGammaPassingThroughBall}
If two auxiliary points sit on two edges of a pixel $C$ sharing a corner $p$, then $\gamma_C$ is contained in $B_d(p)\cap C$.
\end{lem}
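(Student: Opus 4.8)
The plan is to reduce the claim to a statement about the two circle arcs $\gamma_1$ and $\gamma_2$ whose patching produces $\gamma_C$, and then to contain each arc in $B_d(p)$ by a spindle argument. Set up coordinates so that $C=[0,d]\times[0,d]$ with $p=(0,0)$; the two edges meeting at $p$ are the segments from $(0,0)$ to $(d,0)$ and from $(0,0)$ to $(0,d)$, and I write the two auxiliary points as $p_1,p_2$, one on each edge. Since each edge has $p$ as an endpoint and length $d$, every point of either edge lies within distance $d$ of $p$; in particular $p_1,p_2\in\overline{B_d(p)}$, and by convexity the chord $L=\overline{p_1p_2}$ lies in $\overline{B_d(p)}$ as well.

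First I would record containment in the pixel. As $p_1,p_2$ lie on two distinct edges of $C$, they are not the two endpoints of a single edge, so part (i) of \reflemma{LemGammaInC} applies and its proof gives $\gamma_1,\gamma_2\subseteq C$ (this uses that the present configuration is not one of the exceptions of \cref{FigExceptions}, which I would check). Now set $K:=B_d(p)\cap C$; as an intersection of two convex sets, $K$ is convex. It therefore suffices to prove $\gamma_1,\gamma_2\subseteq K$: by \reflemma{LemGammaInConvexHull} the curve $\gamma_C$ lies in the region $A_C$ bounded by $\gamma_1$ and $\gamma_2$, and since $K$ is convex and contains both arcs it contains $A_C$, whence $\gamma_C\subseteq K=B_d(p)\cap C$. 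This is exactly the convexity reduction already used in the proof of \reflemma{LemGammaInC}, now with the smaller convex set $K$ in place of $C$.

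The heart of the matter is thus to show each arc lies in $B_d(p)$. Since $\overline{B_d(p)}$ is a closed ball of radius $d$ containing $L$, \reflemma{LemSpindleIsIntersection} gives $S(L,d)\subseteq\overline{B_d(p)}$. On the other hand, $\gamma_1$ and $\gamma_2$ are, by construction, graphs over $L$, so each is automatically a \emph{minor} circular arc through $p_1$ and $p_2$; reading the radii off \cref{FigCircleArcs} I would confirm that in every configuration with the two auxiliary points on two edges of $C$ meeting at a corner, these arcs have radius $\rho\ge d$. Such a minor arc is one of the two boundary arcs of the spindle $S(L,\rho)$, and since a larger radius yields a thinner spindle, $S(L,\rho)\subseteq S(L,d)$ for $\rho\ge d$ (compare the monotone half-width $\rho-\sqrt{\rho^2-\abs{L}^2/4}$ of \reflemma{LemSpindleWidth}). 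Hence $\gamma_1,\gamma_2\subseteq S(L,\rho)\subseteq S(L,d)\subseteq\overline{B_d(p)}$, and combined with $\gamma_1,\gamma_2\subseteq C$ this gives $\gamma_1,\gamma_2\subseteq\overline{B_d(p)}\cap C$, completing the reduction.

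I expect the main obstacle to be the finite-but-fiddly verification, from \cref{FigCircleArcs}, that the relevant arcs really do have radius at least $d$ (and that no configuration here is an exception of \cref{FigExceptions}); this is the one step that genuinely invokes the classification of admissible configurations rather than soft convexity. A secondary, cosmetic point is the open-versus-closed distinction: the spindle argument delivers the closed ball $\overline{B_d(p)}$, and an arc can touch $\partial B_d(p)$ only at an auxiliary point sitting at a far endpoint $(d,0)$ or $(0,d)$ of an edge, i.e. at the centre of a $2\times2$ grey configuration anchored at that corner. Since the radii occurring in \cref{FigCircleArcs} are strictly greater than $d$, each arc lies in the open ball $B_d(p)$ away from its endpoints, which is all that the Hausdorff estimate in \refthm{ThmHausdorffA} requires.
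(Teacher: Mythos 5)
Your proposal is correct, and its skeleton coincides with the paper's own proof: both first invoke Lemma~\ref{LemGammaInC} to keep $\gamma_C$ inside $C$, then use Lemma~\ref{LemGammaInConvexHull} together with convexity (you with the convex set $B_d(p)\cap C$, the paper with $B_d(p)$ alone) to reduce the whole claim to the two circle arcs $\gamma_1,\gamma_2$. Where you genuinely differ is the final containment of the arcs: the paper settles this by ``a calculation for all possible cases, or by looking at Figure~\ref{FigGammaPassingThroughC}'', whereas you replace the per-case computation by one structural fact --- a minor arc through $p_1,p_2$ of radius $\rho\geq d$ lies in $S(L,\rho)\subseteq S(L,d)$, and $S(L,d)\subseteq\overline{B_d(p)}$ by Lemma~\ref{LemSpindleIsIntersection} because $\overline{B_d(p)}$ is a $d$-ball containing the chord $L$. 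This is precisely the technique the paper itself uses later, in the proof of Lemma~\ref{LemComplexPixelOppositeVertices}, so it is entirely in the paper's spirit; what it buys is that the case analysis collapses to the single inequality ``all radii occurring in Figure~\ref{FigCircleArcs} are at least $d$'' (data the figure records anyway), together with the minor-arc observation, which is justified since the construction makes $\gamma_1,\gamma_2$ graphs over $L$. What it does not remove is the reliance on the classification behind Figure~\ref{FigCircleArcs} and the exceptions of Figure~\ref{FigExceptions}, which you correctly flag as the residual finite verification. Your closing remark on open versus closed balls is also a legitimate observation rather than a defect of your argument: if an auxiliary point sits at a corner at distance exactly $d$ from $p$, the conclusion can only hold for the closed ball, a subtlety the paper's statement and proof pass over; in the situations where the lemma is actually applied (Lemmas~\ref{LemGammaPassingCWithPointsOnOppositeEdges} and \ref{LemSimplePixelVertexAdjacent}, with auxiliary points at edge midpoints, hence at distance $d/2$ from $p$) the open-ball statement is unaffected.
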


\begin{proof}
We already know from Lemma \ref{LemGammaInC} that $\gamma_C$ belongs to $C$. Hence we only need to show that $\gamma_C$ belongs to $B_d(p)$. By Lemma \ref{LemGammaInConvexHull} it suffices to show that the area $A$ between the two curves $\gamma_1$ and $\gamma_2$ belongs to $B_d(p)$ and, since $B_d(p)$ is convex, it is even enough to show that $\gamma_1$ and $\gamma_2$ both belong to $B_d(p)$. This can be done by a calculation for all possible cases, or by looking at Figure \ref{FigGammaPassingThroughC}.
\end{proof}

\begin{lem}\label{LemGammaPassingCWithPointsOnOppositeEdges}
If two neighbour auxiliary points on the boundary of some pixel $C$ do not sit on the same edge of $C$ and neither in a configuration as the ones in Figure \ref{FigExceptions}, then the distance between any point of $\del X$ in $C$ and the curve $\gamma_C$ is less than ${d}$.
\end{lem}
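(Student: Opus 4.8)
The plan is to reduce the statement to a claim about the two auxiliary points alone. Since the curve $\gamma_C$ joins the two neighbour auxiliary points, call them $m_1$ and $m_2$, both $m_1$ and $m_2$ lie on $\gamma_C$. Hence for every $q\in\del X\cap C$ we have the crude bound
\begin{equation*}
\inf_{z\in\gamma_C} d(q,z)\le \min\bigl(d(q,m_1),\,d(q,m_2)\bigr).
\end{equation*}
So it suffices to show that each point of $\del X\cap C$ lies within distance $d$ of one of the two auxiliary points. By the hypothesis that $m_1,m_2$ do not sit on the same edge and not in a configuration of Figure~\ref{FigExceptions}, I would split into the two geometric possibilities: the two auxiliary points lie either on two \emph{opposite} edges of $C$, or on two \emph{adjacent} edges meeting in a corner $p$.

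The opposite-edge case is the easy one and needs nothing about the location of $\del X$ beyond $\del X\cap C\subseteq C$. Placing $C=[0,d]^2$ with $m_1=(d/2,0)$ and $m_2=(d/2,d)$, a one-line computation shows that \emph{every} point of $C$ is within distance $d/\sqrt2<d$ of whichever of $m_1,m_2$ lies on the nearer edge, the extreme points being the midpoints $(0,d/2)$ and $(d,d/2)$ of the two side edges. This immediately gives the bound.

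For the adjacent-edge case I would first confine $\del X\cap C$ to a ball. The two edges carrying $m_1,m_2$ are shared with grey neighbours, while the opposite two edges are not; the configuration results of \cref{SecConfigurations} then force the two non-grey d-neighbours meeting at the far corner to have the \emph{same} colour, since a black and a white pixel cannot be i-neighbours (\reflemma{LemSkalfarvessorte1}, part (ii)). Thus \reflemma{LemCornerPixel} applies and gives $\del X\cap C\subseteq B_d(p)$, where $p$ is the corner shared by the two edges carrying $m_1,m_2$ (consistently, \reflemma{LemGammaPassingThroughBall} keeps $\gamma_C$ in the same ball). Normalising $p=(0,0)$, $m_1=(d/2,0)$, $m_2=(0,d/2)$, what remains is the elementary inclusion $B_d(p)\cap C\subseteq B_d(m_1)\cup B_d(m_2)$: writing $q=(x,y)$ with $x^2+y^2<d^2$, one checks that $d(q,m_1)<d$ as soon as $x>d/4$ and $d(q,m_2)<d$ as soon as $y>d/4$, while if $x,y\le d/4$ then $q$ lies within $\tfrac{\sqrt5}{4}d<d$ of $m_1$. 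In every case $q$ is within $d$ of an auxiliary point, as required.

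The main obstacle—and the part needing the most care—is not either computation but making the case split genuinely exhaustive. The auxiliary points need not be edge midpoints: one or both may sit at the centre of a $2\times2$ grey block, that is, at a \emph{corner} of $C$, and such positions blur the adjacent/opposite dichotomy and must be pushed through the same reduction individually. In particular the awkward sub-case of two auxiliary points at \emph{opposite} corners of $C$, where the bare bound $\min(d(q,m_1),d(q,m_2))$ degenerates to exactly $d$ at the other two corners, has to be excluded; I expect this to be precisely the situation already removed by the exceptional configurations of Figure~\ref{FigExceptions} together with the impossibility results of \cref{SecConfigurations} (e.g. \reflemma{LemImpossible2x3}, \reflemma{Lem4Intersections2x2}), so that in every admissible configuration $\del X\cap C$ stays \emph{strictly} inside the relevant ball and the strict inequality $d(q,\gamma_C)<d$ survives.
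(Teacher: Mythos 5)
Your reduction---replacing the distance from $q\in\del X\cap C$ to the curve $\gamma_C$ by the distance to its two endpoints $m_1,m_2$---is strictly weaker than what the lemma asserts, and it genuinely fails in a case the lemma must cover. Auxiliary points at pixel \emph{vertices} (centres of $2\times2$ grey blocks) are not removable pathologies: Figure \ref{FigKonfigurationer13} shows complex pixels occur, and the paper later treats exactly these cases (Lemmas \ref{LemComplexPixelOppositeVertices} and \ref{LemComplexPixelRemaining}). Take the admissible configuration of Lemma \ref{LemComplexPixelRemaining}: one auxiliary point at a vertex, say $m_1=(0,0)$, the other at the midpoint of an edge not containing that vertex, say $m_2=(d/2,d)$. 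These do not sit on a common edge, and such a pixel need not be in Figure \ref{FigExceptions}, so it is in the scope of the lemma. But points of $C$ near the corner $(d,0)$, e.g. $q=(d,\varepsilon)$ on the right edge, satisfy $d(q,m_1)=\sqrt{d^2+\varepsilon^2}>d$ and $d(q,m_2)=\sqrt{d^2/4+(d-\varepsilon)^2}>d$ for small $\varepsilon$; so your sufficient condition (``every point of $\del X\cap C$ is within $d$ of an auxiliary point'') is simply unavailable there, and excluding $\del X$ from that sliver would need precisely the configuration-specific work you defer. Your fallback---that all vertex cases are excluded by Figure \ref{FigExceptions} together with Lemmas \ref{LemImpossible2x3} or \ref{Lem4Intersections2x2}---is false: those results do not forbid complex pixels with vertex auxiliary points, neither diagonal (Figure \ref{FigComplexPixelOppositeVertices}) nor vertex-plus-midpoint (Figure \ref{FigComplexPixelRemaining}). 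Ironically, the diagonal-vertex case you singled out as the dangerous one is the one your bound survives: the two closed $d$-balls about opposite corners cover the square except the other two corners, through which $\del X$ cannot pass by the Convention.

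The paper's proof avoids endpoint distances altogether and instead uses that $\gamma_C$ \emph{crosses} the pixel. Whenever the two auxiliary points lie on opposite edges of $C$ (and once a vertex point is involved and the points are not on a common edge, this is automatic, since a vertex lies on two edges), Lemma \ref{LemGammaInC} keeps $\gamma_C$ inside $C$; then $\gamma_C$ meets every line parallel to those edges, so projecting any point of $C$ along such a line moves it at most $d$ --- this uniformly covers the mixed vertex case your argument misses. Only for two midpoints on adjacent edges does the paper invoke Lemma \ref{LemCornerPixel} and Lemma \ref{LemGammaPassingThroughBall} and project along radii of $B_d(p)$. Your computations in the two all-midpoint cases are correct (the inclusion $B_d(p)\cap C\subseteq B_d(m_1)\cup B_d(m_2)$ is a neat shortcut past the paper's radial estimate, modulo justifying, as the paper does via Lemma \ref{LemRemovedPoints}, that the two far d-neighbours are non-grey of equal colour), but as it stands the proposal has a genuine gap.
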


\begin{proof}
Consider two auxiliary points on the boundary of pixel $C$. Suppose they sit on two opposite edges $e_1$, $e_2$ of $C$. If furthermore the two points do not sit in one of the configurations of Figure \ref{FigExceptions}, then by Lemma \ref{LemGammaInC}, the curve $\gamma_C$ belongs to $C$. Hence the curve $\gamma_C$ must run from one side of $C$ to the other without leaving $C$, see Figure \ref{FigGammaPassingThroughC} left. Thus given any point in $C$, projecting it to $\gamma_C$ along a line parallel to $e_1$ moves it no further than a distance $d$. Hence all points of $C$ is closer than $d$ to $\gamma_C$.

\begin{figure}
\includegraphics[scale=0.7]{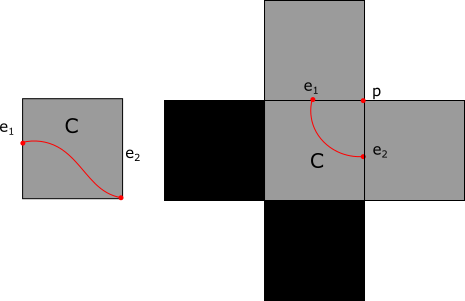}
\caption{Two different ways that a curve $\gamma$ can pass through a pixel $C$ with two auxiliary points on its boundary.}
\label{FigGammaPassingThroughC}
\end{figure}

On the other hand, suppose the two auxiliary points on the boundary of $C$ sit on the midpoint of two edges $e_1$, $e_2$ sharing a vertex $p$, see Figure \ref{FigGammaPassingThroughC} right. Then $C$ is a simple pixel, and since its auxiliary points do not sit on opposite edges, it cannot be one of the simple pixels that we removed auxiliary points from (by the proof of Lemma \ref{LemRemovedPoints}). So $C$ must have two grey d-neighbour pixels sharing the vertex $p$, and two non-grey d-neighbour pixels sharing the vertex opposite of $p$, as in Figure \ref{FigGammaPassingThroughC} right. Let us assume these two non-grey pixels to be black.

Consider at point $q\in C\cap\del X$. By Lemma \ref{LemCornerPixel}, $q$ must belong to $B_d(p)$.
Then since the path $\gamma_C$ is also contained in this ball by Lemma \ref{LemGammaPassingThroughBall} and runs from $e_1$ to $e_2$, we hit $\gamma_C$ somewhere if we move a point in $B_d(p)\cap C$ along a radius of $B_d(p)$. Such a movement displaces the point a distance of at most $d-\frac{1}{2\sqrt{2}}d$, since this is the maximal distance between a point on $\gamma_C$ and a point on $\del B_d(p)$ on the same radius of $B_d(p)$. Hence a point of $\del X\cap C$ is at most a distance $d$ from $\gamma_C$.
\end{proof}

\begin{proof}[Proof of Theorem \ref{ThmHausdorffA}]
By Lemma \ref{LemGammaPassingCWithPointsOnOppositeEdges} the theorem holds for any point of $\del X$ contained in a pixel with two auxiliary points not sitting on the same edge, and not sitting in one of the configurations of Figure \ref{FigExceptions}. Hence we need to show the result for points on $\del X$ contained in i) grey pixels with two auxiliary points sitting on the same edge, ii) the special cases in Figure \ref{FigExceptions}, iii) grey pixels with one auxiliary point on their boundary and iv) grey pixels with zero auxiliary points on their boundaries.

\emph{Ad i)}: %Consider first a pixel $C_1$ with two auxiliary points on the same boundary edge $e$. Since both of these points are the centres of $2\times 2$-configurations of grey pixels, there must be another grey pixel $C_2$ sharing $e$ (and hence also the auxiliary points of $e$) with $C_1$. Furthermore, $C_1\cup C_2$ must be the middle column of a $2\times 3$-configuration of grey pixels (or the middle row in a $3\times 2$-configuration of grey pixels, but let us assume the first). 
By Lemma \ref{LemGammaPassing2x3Grey}, $\gamma_{C_1}$ must belong to the set $C'$ of points in $C_1\cup C_2$ closer than $0.133d$ to $e$, and by Lemma \ref{Lem2x3grey}, all points of $\del X$ in $C_1\cup C_2$ must be closer than $(\sqrt{2}-1)d$ to $e$. Since the curve $\gamma_{C_1}$  runs from one side of $C'$ to the other, then pushing a point $p\in\del X\cap C'$ orthogonally to $e$ inside $C_1\cup C_2$, we must hit $\gamma_{C_1}$ at some point. The displacement made in this manner can be no larger than $(\sqrt{2}-1+0.133)d\approx 0.55d$, hence any point of $(C_1\cup C_2)\cap\del X$ is closer than $0.55d$ to $\gamma_{C_1}=\gamma_{C_2}$.

\begin{figure}
\includegraphics[scale=0.7]{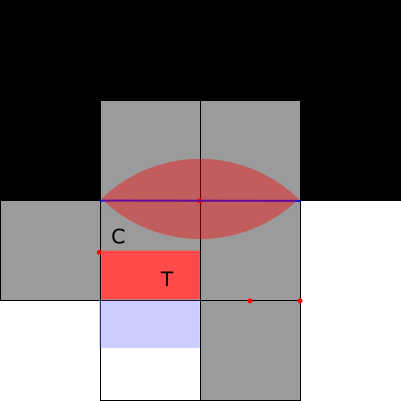}
\caption{Both of the configurations excepted from Lemma \ref{LemGammaInC} must sit in a configuration like the one shown above. We aim to show that the red rectangle cannot contain any points of $\del X$.}
\label{FigExceptedConfigurationsTriangle}
\end{figure}

\emph{Ad ii): } Now consider instead either of the cases from Figure \ref{FigExceptions}. Such a configuration must necessarily sit in a configuration like in Figure \ref{FigExceptedConfigurationsTriangle}, by looking at the possible configurations involving $2\times 2$ grey pixels in Figure \ref{FigKonfigurationer13}. We will aim to show that the rectangle $T$ in the figure, which shares two vertices with pixel $C$ and has the other two vertices at the midpoints of the vertical pixel edges of $C$, does not contain any points of $\del X$.

Look at the blue line $L$ separating the two upper grey pixels from the lower. Since there are grey pixels on both sides of $L$, $\del X$ must pass it somewhere, and since both endpoints of $L$ are black, $\del X$ must pass $L$ at least twice. Then there must be some point $p$ in one of the two upper pixels where $\del X$ has horisontal tangent, and hence the centres of the black and white $\sqrt{2}d$-balls meeting at this point sit on the vertical line through $p$. Since the pixels above the $2\times 2$ grey are black in the figure, the upper ball osculating $\del X$ at $p$ must be black, and the lower must be white.

By Corollary \ref{CorPathInSpindle} applied to $\pi$ the part of $\del X$ between two points in $\del X\cap L$ must be contained in the spindle $S(L,r)$ (shown in the figure), which contains points no further from $L$ than $(\sqrt{2}-1)d$. So $p$ cannot be further above $L$ than $(\sqrt{2}-1)d$.

Now if $p$ belonged to the right upper grey pixel, the centre of the white ball osculating $\del X$ at $p$ would belong to a grey pixel and hence colour that pixel white. So this is not possible. Therefore $p$ must belong to the upper left pixel, and be no further from $L$ then $(\sqrt{2}-1)d$. The centre of the white $\sqrt{2}d$-ball osculating $\del X$ at $p$ must therefore lie in the white pixel, no further than $\sqrt{2}d$ from $p$ and hence no further from the common edge between the white pixel and $C$ than $(\sqrt{2}-1)d$ (so somewhere in the light blue part of this white pixel). But any $\sqrt{2}d$-ball centred centred in the top half of the white pixel must contain $T$, the bottom half of pixel $C$, since $T$ and the top half of $C$ form a square with side length $d$%, meaning that any $\sqrt{2}d$-ball centred in the light blue part of the white pixel contains any point of $T$.
 Hence the white ball osculating $\del X$ at $p$ must contain all of $T$, so $T$ cannot contain any points of $\del X$.

A calculation shows that no point of $\gamma_C$ lies further above $L$ than $0.041d$. %(or a look at Figure \ref{FigCircleArcs}) now shows that
Hence if we take any point $q$ in $C\backslash T$ and push it along a vertical line to $\gamma_C$, we can do this without moving $q$ more than a distance $0.541d$ away. So any point of $\del X\cap C$ is closer than $d$ to $\gamma_C$.

\begin{figure}
\begin{minipage}{0.45\linewidth}
\includegraphics[scale=0.7]{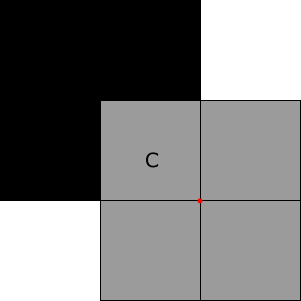}
\caption{A pixel $C$ with just one auxiliary point on its boundary must sit in a configuration as the above.}
\label{FigOneAuxiliary}
\end{minipage}
\hfill
\begin{minipage}{0.45\linewidth}
\includegraphics[scale=0.7]{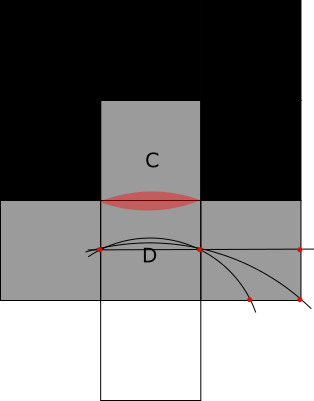}
\caption{A grey pixel with zero auxiliary points on its boundary must sit in this configuration of pixels. Then any circle arc through the auxiliary points of $D$ and one of its neighbours must look like one of the above}
\label{FigZeroAuxiliaries}
\end{minipage}
\end{figure}

\emph{Ad iii):} Consider a grey pixel $C$ with only one auxiliary point $p$ on its boundary. By construction $C$ must be complex and have two grey neighbour pixels, and two non-grey neighbour pixels, see Figure \ref{FigOneAuxiliary}. By Lemma \ref{LemCornerPixel} all points of $\del X\cap C$ must belong to the ball $B_d(p)$. Hence the distance from a point in $\del X\cap C$ to $p$ is less than $d$. Since $p$ belongs to $\gamma$, this shows the claim in this case.

%\begin{figure}\begin{centering}
%\includegraphics[scale=0.7]{ZeroAuxiliaries}
%\caption{A grey pixel with zero auxiliary points on its boundary must sit in this configuration of pixels. Then any circle arc through the auxiliary points of $D$ and one of its neighbours must look like one of the above}
%\label{FigZeroAuxiliaries}
%\end{centering}\end{figure}

\emph{Ad iv):} Consider a grey pixel $C$ without any auxiliary points on its boundary. By construction, it means that $C$ is a simple grey pixel with one grey d-neighbour pixel $D$, and three non-grey d-neighbour pixels, see Figure \ref{FigZeroAuxiliaries}.

Now, the boundary $\del X$ must pass the common edge $e$ of $C$ or $D$ in order to get into and out of $C$. Hence the part of $\del X$ that is in $C$ must be contained in $S(e,r)$.% since these are the only points of $C$ where we may find a white $r$-ball tangent to the point that do not contain any black point.% by Corollary \ref{CorPathInSpindle} applied to $\pi$ in stead of $\rho$, since the enter and exit points of $\del X$ belongs to $e$ (This spindle is shown in the figure)\fxfatal{Why?}. 
But $S(e,r)$ contains no points in $C$ that are further from $e$ than $\sqrt{2}d-\sqrt{2d^2-\frac{d^2}{4}}<0.1d$.

Furthermore, $D$ must have one auxiliary point on the midpoint of each vertical edge - let us call these $p_1$ and $p_2$. Then an arc segment $\gamma_1$ through $p_1$ and $p_2$ and a third auxiliary point of one of the grey pixels neighbouring $C$ must lie above the straight line connecting $p_1$ and $p_2$ (just look at all possible cases, as is done in Figure \ref{FigZeroAuxiliaries}).

Hence any point $p$ in $C\cap\del X$ is closer than $0.1d$ to $e$, and any point in $e$ is closer than $\frac{d}{2}$ to $\gamma_D$. So the distance from $p$ to $\gamma$ is less than $0.6d<d$. This finishes the proof that any point of $\del X$ is closer than $d$ to $\gamma$.
\vspace{0.5cm}

For any point $x$ in $\del X$, there is a point $y'$ in $\gamma$ that is no further than a distance $d$ from $x$, meaning that
\begin{equation*}
\inf_{y\in \gamma} d(x,y)\leq d(x,y')\leq d.
\end{equation*}
Thus we get
\begin{equation*}
\sup_{y\in\del X}\inf_{x\in \gamma} d(x,y)\leq d.
\end{equation*}
\end{proof}

This proof is the first step on our way to show that $\del X$ and $\gamma$ are close to each other in Hausdorff distance. The second step is taken when we prove the following

\begin{thm}\label{ThmHausdorffB}
Any point of $\gamma$ has distance at most $d$ to the boundary $\del X$ of the original set $X$. Hence $\sup _{y\in \gamma}\inf _{x\in \del X} d(x,y)\leq d$.
\end{thm}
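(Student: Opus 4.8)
The plan is to prove \refthm{ThmHausdorffB} as the mirror image of \refthm{ThmHausdorffA}: rather than pushing a point of $\del X$ onto $\gamma$, I take an arbitrary point $z\in\gamma$, observe that it lies on a single arc $\gamma_C$ joining two neighbour auxiliary points on the boundary of some grey pixel $C$, and project $z$ onto $\del X$. The whole argument is then a case analysis on the relative position of the two auxiliary points defining $\gamma_C$, paralleling the four cases in the proof of \refthm{ThmHausdorffA}. The two ingredients I would combine in every case are: first, the confinement of $\gamma_C$ to a thin strip or a ball supplied by \reflemma{LemGammaInC}, \reflemma{LemGammaPassing2x3Grey} and \reflemma{LemGammaPassingThroughBall}; and second, the fact that $C$ is grey, so $\del X$ passes through $C$ and is itself confined to essentially the same region by the spindle estimates \refcor{CorPathInSpindle}, \reflemma{Lem2x3grey} and \reflemma{LemCornerPixel}.

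Concretely, I would argue as follows. When the two auxiliary points lie on two opposite edges of $C$ (and $C$ is not one of the configurations of Figure~\ref{FigExceptions}), \reflemma{LemGammaInC} puts $\gamma_C$ inside $C$ running from one edge to the opposite one; since $C$ is grey and its grey neighbours force $\del X$ to traverse $C$ in the same direction, $\del X$ meets the line through $z$ perpendicular to those edges at a point of the same pixel, so $d(z,\del X)\le d$. When the two auxiliary points lie on two edges meeting at a corner $p$, \reflemma{LemGammaPassingThroughBall} gives $\gamma_C\subseteq B_d(p)\cap C$ while \reflemma{LemCornerPixel} gives $\del X\cap C\subseteq B_d(p)$; both curves then run from one of the two edges to the other, hence cross every radius of $B_d(p)$ leaving $p$ into $C$, and projecting $z$ radially onto $\del X$ moves it by at most the radial extent of $B_d(p)\cap C$, which is $\le d$. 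When the two auxiliary points are the two endpoints of a single edge $e$ shared by grey pixels $C_1,C_2$, \reflemma{LemGammaPassing2x3Grey} confines $\gamma_C$ to within $0.133d$ of $e$ and \reflemma{Lem2x3grey} confines $\del X\cap(C_1\cup C_2)$ to within $(\sqrt2-1)d$ of $e$; since both curves are graphs over the line of $e$ spanning its full length, the projection of $z$ onto $\del X$ perpendicular to $e$ has length at most $(\sqrt2-1+0.133)d<d$.

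The exceptional configurations of Figure~\ref{FigExceptions} I would handle exactly as in case ii) of the proof of \refthm{ThmHausdorffA}: such a pattern sits inside the configuration of Figure~\ref{FigExceptedConfigurationsTriangle}, where I already know that $\gamma_C$ stays within $0.041d$ of the line $L$ and that $\del X$ crosses $L$ and is trapped in the spindle $S(L,r)$, hence within $(\sqrt2-1)d$ of $L$; projecting $z$ to $\del X$ along the direction perpendicular to $L$ again displaces it by less than $d$. Collecting the four cases yields $\inf_{x\in\del X}d(z,x)\le d$ for every $z\in\gamma$, that is $\sup_{y\in\gamma}\inf_{x\in\del X}d(x,y)\le d$.

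The hard part will not be the arithmetic — each constant is comfortably below $d$ — but verifying in every case that $\del X$ actually passes \emph{near the particular point} $z$ rather than merely somewhere in the same grey pixel. This requires knowing that $\del X$ spans the relevant pixel(s) in the correct direction, i.e. that it crosses the specific line or radius through $z$, which I must extract from the neighbour colours dictated by the admissible configurations of Figure~\ref{FigConfigurationsUsualResolution} and Figure~\ref{FigKonfigurationer13} together with the connectedness of $\del X$ and its spindle confinement. The remaining bookkeeping is to enumerate the finitely many pairs of auxiliary points that can define a $\gamma_C$, reading them off those figures, so as to see that the three geometric cases plus the exceptions are exhaustive.
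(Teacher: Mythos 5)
Your overall strategy coincides with the paper's: decompose $\gamma$ into the arcs $\gamma_C$, classify the mutual position of the two auxiliary points defining each arc, confine $\gamma_C$ via \reflemma{LemGammaInC}, \reflemma{LemGammaPassing2x3Grey} and \reflemma{LemGammaPassingThroughBall}, and locate $\del X$ on a segment whose endpoints have different colours. Your three geometric cases do match the paper's \reflemma{LemSimplePixelOppositeEdges}, \reflemma{LemSimplePixelVertexAdjacent} and \reflemma{LemComplexPixelSameEdge}. But your case list is not exhaustive, and the omission is not cosmetic. A complex pixel can carry its two auxiliary points at two \emph{diagonally opposite vertices} of $C$ (\reflemma{LemComplexPixelOppositeVertices}, Figure \ref{FigComplexPixelOppositeVertices}). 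This configuration falls under none of your cases, and neither of your projection arguments applies to it: the segment $L$ joining the two auxiliary points is a diagonal of $C$; the line through $z\in\gamma_C$ perpendicular to an edge ends on edges shared with grey pixels (all four d-neighbours of $C$ are grey here, so no colour change is available along that line); and \reflemma{LemCornerPixel} is unusable because $C$ has no two same-coloured neighbour pixels sharing a vertex. The paper needs a different mechanism in this case: the circle arcs through the auxiliary points have radius greater than $d$, hence $\gamma_C\subseteq S(L,d)$ by \reflemma{LemSpindleIsIntersection}; the diagonal $M$ joining the black and the white vertex of $C$ must contain a point $q\in\del X$; and since every point of $M$ is within $d$ of both auxiliary points, $B_d(q)\supseteq S(L,d)\supseteq\gamma_C$.

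The second gap is the complex pixel with one auxiliary point at an edge midpoint and one at a vertex (\reflemma{LemComplexPixelRemaining}). You would subsume it under ``two edges meeting at a corner'', but there the radial-projection argument breaks down: this is precisely the situation in which $\gamma_C$ may leave $C$ (Figure \ref{FigComplexPixelRemainingGammaOutside}), and the paper handles it by sub-cases driven by where $\del X$ crosses $\del C$, using projected paths $\pi(L)$. Your fallback for the exceptional configurations --- recycling case ii) of the proof of \refthm{ThmHausdorffA} --- proves the wrong inequality: that argument shows points of $\del X$ are close to $\gamma$ by excluding $\del X$ from the rectangle $T$; it does not produce, for a given $z\in\gamma_C$ lying outside $C$, any nearby point of $\del X$, since nothing guarantees that the vertical line through such a $z$ meets $\del X$ at all. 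The paper closes this by showing $\del X$ must cross the common edge $e$ of $C$ and the pixel $D$ above it in some point $q$, and that $B_d(q')$ contains the whole region $A$ swept by the straying part of $\gamma_C$ for \emph{every} $q'\in e$, so that part of $\gamma_C$ is within $d$ of $q$. You correctly flagged ``does $\del X$ cross the specific line through $z$'' as the hard part; the two cases above are exactly where that step cannot be done with the projections you set up and requires arguments of a different shape.
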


The proof of this theorem is very similar to the proof of Theorem \ref{ThmHausdorffA}. Again we split the proof in a couple of lemmas.

\begin{lem}\label{LemSimplePixelOppositeEdges}
Consider a simple pixel $C$ with two auxiliary points on two opposite edges. Then any point of $\gamma_C$ is closer than $d$ to some point of $\del X$.
\end{lem}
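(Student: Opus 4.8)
The plan is to reduce everything to a single horizontal ``crossing'' estimate, after first pinning down the local configuration. Place $C=[0,d]^2$ and, rotating if necessary, assume the two auxiliary points are the midpoints of the top and bottom edges; these are opposite edges, so the line $L$ through them is the vertical centre line $x=\tfrac d2$, and $\gamma_C$ is a graph over $L$, i.e.\ $\gamma_C=\{(g(y),y):y\in[0,d]\}$ for some function $g$. Since the two points do not sit at the two endpoints of a single edge, Lemma~\ref{LemGammaInC} gives $\gamma_C\subseteq C$, so $g(y)\in[0,d]$ throughout (a simple pixel with opposite-edge auxiliary points is not one of the exceptional configurations of Figure~\ref{FigExceptions}). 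An auxiliary point sits on an edge exactly when the neighbour across that edge is grey; hence the top and bottom neighbours are grey, while the left and right neighbours are non-grey --- otherwise $C$ would carry a further auxiliary point. Each of these two flanking neighbours is therefore black or white.

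The main case is when the two flanking neighbours have different colours, say the left neighbour is black (hence $\subseteq X$) and the right neighbour white (hence $\subseteq X^C$). Fix $z=(g(y),y)\in\gamma_C$. On the horizontal segment $[0,d]\times\{y\}$ the left endpoint $(0,y)$ lies in the closed black pixel, so in $X$, and the right endpoint $(d,y)$ lies in the white pixel, so in $X^C$; hence $\partial X$ meets this segment at some $w=(w_x,y)$ with $w_x\in(0,d)$. Since $g(y)$ and $w_x$ both lie in $[0,d]$ we get $d(z,w)=|g(y)-w_x|<d$. As $z$ ranges over $\gamma_C$ this covers every height $y\in[0,d]$, so every point of $\gamma_C$ is within $d$ of $\partial X$. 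The mirror colouring is identical after reflection.

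It remains to rule out the case where the two flanking neighbours share a colour, say both black, so that $U_L:=[-d,0]\times[0,d]$ and $U_R:=[d,2d]\times[0,d]$ both lie in $X$. Here I would argue that this local pattern cannot occur at all, which is most cleanly obtained by checking it against the admissible list of Figure~\ref{FigConfigurationsUsualResolution}; the underlying geometric reason is the following. Since $C$ is grey and $\partial X$ crosses both horizontal edges (the top and bottom neighbours being grey), the white part of $C$ reaches every height in $(0,d)$, so there is a point $u\in\operatorname{Int}(X^C)\cap\operatorname{Int}(C)$ with $u_y=\tfrac d2$. By Proposition~\ref{PropEquivalentRregDefinitions}, $u$ lies in a closed $r$-ball contained in $\overline{X^C}$, disjoint from $\operatorname{Int}(U_L)\cup\operatorname{Int}(U_R)\subseteq\operatorname{Int}(X)$. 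But an $r$-ball with $r>d\sqrt2$ anchored at a mid-height point of the width-$d$ strip $C$ necessarily protrudes horizontally beyond $[0,d]$ at some height in $(0,d)$, i.e.\ into the interior of $U_L$ or $U_R$ --- a contradiction.

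The main obstacle is exactly this last step. A naive $r$-ball could in principle ``escape'' vertically through the grey top/bottom neighbours instead of bulging into a flank, and the point of choosing $u$ at mid-height (rather than near an edge) is precisely to forbid this: a ball of radius $>d\sqrt2$ reaching down to $y=\tfrac d2$ still reaches past $x=0$ or $x=d$ at some height in $(0,d)$, so it cannot avoid both flanking interiors. Once this protrusion fact is nailed down --- equivalently, once this configuration is cross-checked against the enumeration of Section~\ref{SecConfigurations} --- only the differently-coloured case survives, and the lemma follows.
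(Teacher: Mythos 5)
There is a genuine gap, and it lies in your very first reduction. You claim that both flanking neighbours of $C$ (the two across the edges \emph{without} auxiliary points) must be non-grey, ``otherwise $C$ would carry a further auxiliary point.'' This overlooks the removal step in the construction: after auxiliary points are first assigned, the auxiliary point of every simple pixel having only one is deleted, and by Lemma \ref{LemRemovedPoints} that deleted point always lies on the boundary of a simple pixel that originally had three. Hence a simple pixel $C$ with two surviving auxiliary points on opposite edges may well have a \emph{third} grey d-neighbour $D$: the point on the common edge $e$ of $C$ and $D$ was erased because $D$ is a simple pixel whose only grey neighbour is $C$ (this is exactly the situation the paper invokes in the proof of Lemma \ref{LemGammaPassingCWithPointsOnOppositeEdges}, and it is the second configuration of Figure \ref{FigSimplePixelOppositeEdges}). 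In that configuration your crossing argument breaks down: the endpoint of your transversal segment lying on $e$ borders a grey pixel, so its colour is unknown and you cannot force an intersection with $\del X$ inside $C$; extending the segment across $D$ does restore endpoints of different colours (by Lemma \ref{LemTwoIntersections} the two non-grey pixels capping the column through $C$ and $D$ must have opposite colours), but then the segment has length $2d$ and the trivial bound no longer gives $d$. The paper's proof spends most of its effort precisely here: since $\del X$ can enter $D$ only through $e$, Corollary \ref{CorPathInSpindle} and Lemma \ref{LemSpindleWidth} confine $\del X\cap D$ to within $0.1d$ of $e$, while a calculation confines $\gamma_C$ to within $0.5d$ of $e$, and only then does the crossing argument yield a distance below $d$.

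The parts you do prove are essentially correct. Your main case coincides with the paper's first configuration and is handled the same way. Your exclusion of the equal-colour case (both flanks black) is sound and is something the paper leaves implicit in its enumeration of admissible configurations; the protruding-ball argument works, though your claim that the white part of $C$ reaches mid-height itself needs justification (Lemma \ref{LemTwoIntersections} rules out $\del X$ crossing the top or bottom edge of $C$ twice, or touching its side edges at all, which forces $\del X\cap C$ to be a single arc joining the top edge to the bottom edge). But since the grey-flank configuration is a genuinely occurring case rather than an impossible one, the proof as written does not establish the lemma.
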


\begin{figure}
\begin{minipage}{0.55\linewidth}
\includegraphics[scale=0.4]{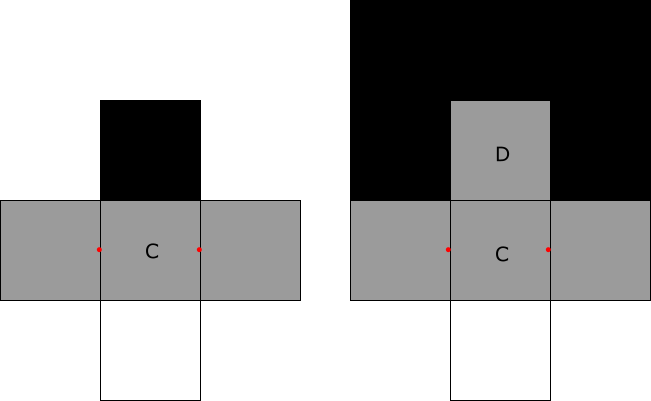}
\caption{A simple pixel $C$ with auxiliary points on opposite edges must sit in one of the configurations above}
\label{FigSimplePixelOppositeEdges}
\end{minipage}
\hfill
\begin{minipage}{0.40\linewidth}
\begin{center}
\includegraphics[scale=0.4]{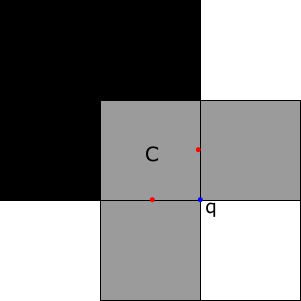}
\caption{A simple pixel $C$ with auxiliary points on vertex-adjacent edges must sit in the configuration above}
\label{FigSimplePixelVertexAdjacent}
\end{center}
\end{minipage}
\end{figure}

\begin{proof}
Notice that $C$ must sit in one of the two configurations of Figure \ref{FigSimplePixelOppositeEdges}. In the first case, pick a point $p\in\gamma_C\subseteq C$. The horisontal line in $C$ through $p$ has a black and a white endpoint, hence it must contain a point of $\del X$. Since $p$ is no further than $0.62d$ from the endpoints of this line, % this line has length $d$ and contains both $p$ and a point in $\del X$, 
there must be a point in $\del X$ that is closer than $0.62d<d$ to $p$. %In fact, since $p$ is no further than $0.62d$ from 

In the second case, pick again a point $p$ in $\gamma_C\subseteq C$. Let $D$ be the pixel above $C$. Notice that $\del X$ must enter and leave $D$ by crossing $e$ in order for $D$ to be grey and both endpoints of $e$ to be black. Then by Corollary \ref{CorPathInSpindle} with $\rho$ replaced by $\pi$, any point of $\del X$ in $D$ must belong to the spindle $S(e,r)$ which contains points no further from $e$ than $\sqrt{2}d-\sqrt{2d^2-\frac{d^2}{4}}<0.1d$ by Lemma \ref{LemSpindleWidth}. So any point of $\del X\cap(C\cup D)$ must either belong to $C$ or be no further from $e$ than $0.1d$. On the other hand, a calculation %(or a look at Figure \ref{FigCircleArcs}) 
shows that the path $\gamma_C$ is closer than $0.5d$ to $e$.

Look at a vertical line in $C\cup D$ through $p$. There must be a point on this line belonging to $\del X$, since its endpoints have different colours. Either this point belongs to $C$ (in which case they can be no further apart than $0.62d$ by the first part of the proof), or it belongs to $D$. If it belongs to $D$, it is no further from $e$ than $0.1d$, and since $p$ is no further than $0.5d$ from $e$, this point of $\del X$ must be closer than $d$ to $p$.
% two edges containing auxiliary points $e_1$ and $e_2$. Since the end points of $
\end{proof}

\begin{lem}\label{LemSimplePixelVertexAdjacent}
Consider a simple pixel $C$ with two auxiliary points, located at the midpoint of two vertex-adjacent edges of $C$. Then no point of $\gamma_C$ is further away from $\del X$ than $d$.
\end{lem}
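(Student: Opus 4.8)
The plan is to mirror the radial pushing argument used for the companion statement in the vertex-adjacent case of Lemma \ref{LemGammaPassingCWithPointsOnOppositeEdges}, but now moving points of $\gamma_C$ onto $\del X$ instead of the other way around. First I would fix the configuration. Since $C$ is simple with its two auxiliary points at the midpoints of two vertex-adjacent edges $e_1, e_2$, the two d-neighbours of $C$ sharing these edges are grey, while the two d-neighbours meeting the opposite vertex are non-grey; as in Figure \ref{FigSimplePixelVertexAdjacent} I may take the latter to be black. Let $p=e_1\cap e_2$ be the common vertex. Two confinement facts are then available for free: by Lemma \ref{LemCornerPixel} every point of $\del X\cap C$ lies in $B_d(p)$, and by Lemma \ref{LemGammaPassingThroughBall} the whole arc $\gamma_C$ lies in $B_d(p)\cap C$. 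Thus both $\gamma_C$ and $\del X\cap C$ are squeezed into the quarter-disc $B_d(p)\cap C$ at the corner $p$, and every point of either lies at distance strictly less than $d$ from $p$.

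The key geometric input I would establish next is that $\del X\cap C$ is a single arc running from $e_1$ to $e_2$. Since the black neighbours occupy the interiors adjacent to the other two edges of $C$, $\del X$ cannot cross those edges, so it can only enter and leave $C$ across $e_1$ and $e_2$. I would rule out a cap meeting only one of these edges by Lemma \ref{LemTwoIntersections}: a double crossing of $e_1$ (or $e_2$) forces one of the d-neighbours flanking that edge to be black and the other white, contradicting that the grey neighbour of $C$ on that side is grey. Hence $\del X$ meets each of $e_1, e_2$ once, and $\del X\cap C$ is an arc joining a point $x_1\in e_1$ to a point $x_2\in e_2$. Together with the edge-segments $[p,x_1]$ and $[p,x_2]$ this arc bounds a region containing $p$, so it acts as a radial barrier: every ray from $p$ into the $90^\circ$ sector spanned by $C$ meets $\del X\cap C$. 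The arc $\gamma_C$, joining the midpoints of $e_1$ and $e_2$, spans the same angular sector.

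With the barrier in place the estimate is immediate. Given $y\in\gamma_C$, let $\theta$ be its angle as seen from $p$ and push $y$ along the ray from $p$ at angle $\theta$ until it first meets $\del X\cap C$ at a point $x$. Both $x$ and $y$ lie on this ray and inside $B_d(p)$, so each is at distance less than $d$ from $p$; hence $d(x,y)=|\,|x-p|-|y-p|\,|<d$. The two extreme angles, where $y$ is an endpoint of $\gamma_C$ sitting on an edge, are handled directly by comparing with the crossing point $x_1$ or $x_2$ on the same edge. Since $x\in\del X$, this shows no point of $\gamma_C$ is further than $d$ from $\del X$, as required.

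I expect the main obstacle to be the middle step, namely establishing rigorously that $\del X\cap C$ is a single arc from $e_1$ to $e_2$ and hence a radial barrier across the full sector. This is essential rather than cosmetic: if $\del X$ could meet only one edge, a point of $\gamma_C$ near the opposite auxiliary point could sit at distance exceeding $d$ from $\del X$, and the bound would fail outright. The delicate bookkeeping is identifying exactly which flanking pixels Lemma \ref{LemTwoIntersections} constrains and checking these against the grey d-neighbours dictated by the configuration of Figure \ref{FigSimplePixelVertexAdjacent}; once that is settled, the radial comparison inside $B_d(p)$ delivers the bound with room to spare.
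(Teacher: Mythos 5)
Your overall frame coincides with the paper's: fix the configuration of Figure \ref{FigSimplePixelVertexAdjacent}, confine both $\gamma_C$ and $\del X\cap C$ to the quarter-disc $B_d(p)\cap C$ via Lemmas \ref{LemCornerPixel} and \ref{LemGammaPassingThroughBall}, and compare points lying on a common ray from the corner vertex $p$, which gives the bound $d(x,y)=\bigl|\,\abs{x-p}-\abs{y-p}\,\bigr|<d$. The difference lies in how you produce a point of $\del X$ on the ray through a given $y\in\gamma_C$, and that is where there is a genuine gap. You want $\del X\cap C$ to be a single arc from $e_1$ to $e_2$, and you exclude a double crossing of $e_1$ by claiming that Lemma \ref{LemTwoIntersections} ``forces one of the d-neighbours flanking that edge to be black and the other white, contradicting that the grey neighbour of $C$ on that side is grey.'' That is a misreading of the lemma: in the notation of Figure \ref{FigToSkaeringer}, the crossed edge is the one shared by $B$ and $C$, and the conclusion concerns the two \emph{outer} pixels $A$ and $D$, one step further out; it says nothing about the two pixels sharing the crossed edge (which are necessarily grey anyway, since $\del X$ passes through both, so your reading would make the lemma's hypothesis self-contradictory). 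Applied correctly to a double crossing of $e_1$ -- with $D$ the black d-neighbour of $C$ opposite $e_1$ -- the lemma only yields that the pixel beyond the grey neighbour of $C$ is white, which contradicts nothing you control: the configuration of Figure \ref{FigSimplePixelVertexAdjacent} only fixes the $3\times3$ block around $C$. (A parity remark would at least exclude exactly two crossings, since $e_1$ runs from a white vertex to a black one and hence is crossed an odd number of times, but triple crossings would remain, and your argument does not touch them.) So the step you yourself flag as essential is not established.

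The barrier structure is, moreover, unnecessary. The paper's proof obtains the boundary point on each ray by a direct intermediate-value argument: the maximal segment in $C$ through the vertex $p$ and the given point of $\gamma_C$ runs from $p$, which is white in this configuration, to a point of the far boundary of $C$, which lies on an edge shared with a black pixel and is therefore a point of $X$. Having endpoints of different colours, the segment must contain a point $s\in\del X$; then $s\in B_d(p)$ by Lemma \ref{LemCornerPixel}, the chosen point of $\gamma_C$ lies in $B_d(p)$ by Lemma \ref{LemGammaPassingThroughBall}, and the two sit on a common radius, giving the distance bound (the paper even sharpens it to $d-\frac{1}{2\sqrt{2}}d$ using the possible shapes of $\gamma_C$). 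If you replace your single-arc step by this colour-change observation -- which also subsumes your ``barrier'' claim, since every ray from $p$ into the sector then meets $\del X\cap C$ -- your proof becomes the paper's.
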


\begin{proof}
A pixel $C$ as the described must sit in a configuration as the one in Figure \ref{FigSimplePixelVertexAdjacent}. Let $q$ denote the vertex of $C$ where the two edges containing auxiliary points meet. Let $p\in \gamma_C\subseteq C$. 

Consider then line in $C$ through $p$ and $q$. Since it has endpoints of different colours, it must contain a point $s$ in $\del X$. By Lemma \ref{LemCornerPixel}, any point of $\del X\cap C$ must belong to the ball $B_d(q)$, and by Lemma \ref{LemGammaPassingThroughBall}, so must $p$. Hence $p$ and $s$ both sit on a radius of the ball $B_d(q)$. By looking at the possible curves $\gamma_C$, such two points cannot be further apart than a distance $d-\frac{1}{2\sqrt{2}}d$. So any point $p\in \gamma_C$ is closer than $d$ to $\del X$.
\end{proof}

\begin{lem}\label{LemComplexPixelSameEdge}
Consider a complex pixel $C$ with two auxiliary points located at the endpoints of some edge $e$ of $C$. Then no point of $\gamma_C$ is further away from $\del X$ than $d$.
\end{lem}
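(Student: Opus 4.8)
The plan is to first determine the pixel configuration forced by the hypothesis, and then to compare $\gamma_C$ with $\del X$ by a vertical-line argument in the spirit of Lemmas \ref{LemSimplePixelOppositeEdges} and \ref{LemSimplePixelVertexAdjacent}. The main obstacle is the configuration step: I must argue that the hypothesis pins down a $2\times3$ block of grey pixels, after which the distance estimate is routine.

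To identify the configuration, note that the two auxiliary points lie at the endpoints of an edge $e$ of $C$, hence at grid vertices; but an auxiliary point at a grid vertex can only arise as the centre of a $2\times2$ configuration of grey pixels, since the other type of auxiliary point sits at the midpoint of a shared edge and is therefore not a grid vertex. Thus both endpoints of $e$ are centres of $2\times2$ grey configurations. Writing $D$ for the pixel sharing the edge $e$ with $C$, both $C$ and $D$ touch each endpoint of $e$, so the two $2\times2$ grey configurations centred at those endpoints, together with $C$ and $D$, force a $2\times3$ block of grey pixels whose central column is $C\cup D$ with common edge $e$. This is exactly the setting of Lemma \ref{Lem2x3grey}, with $\{G_1,G_2\}=\{C,D\}$ and with $K_1,K_2$ the two pixels capping the central column, above $D$ and below $C$.

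I would then record two width bounds: by Lemma \ref{LemGammaPassing2x3Grey} the curve $\gamma_C=\gamma_D$ lies within $0.133d$ of $e$, while by Lemma \ref{Lem2x3grey}(ii) every point of $\del X\cap(C\cup D)$ lies within $(\sqrt{2}-1)d$ of $e$, and by Lemma \ref{Lem2x3grey}(i) the caps $K_1,K_2$ carry opposite colours.

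To finish, choose coordinates with $e$ on the $x$-axis from $(0,0)$ to $(d,0)$, so that $C\cup D=[0,d]\times[-d,d]$ and the caps are $[0,d]\times[d,2d]$ and $[0,d]\times[-2d,-d]$. For $p=(x_0,y_0)\in\gamma_C$ we have $x_0\in[0,d]$ and $\abs{y_0}\le0.133d$. When $x_0\in(0,d)$, the vertical segment through $p$ joins a point of one cap to a point of the other; as the caps have opposite colours it must meet $\del X$, and since the interiors of black and white pixels contain no boundary points, the meeting point $s=(x_0,y_s)$ lies in $C\cup D$, so $\abs{y_s}\le(\sqrt{2}-1)d$. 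Hence $d(p,s)=\abs{y_0-y_s}\le\bigl(0.133+(\sqrt{2}-1)\bigr)d\approx0.55d<d$, and the two endpoints of $\gamma_C$ are covered by letting $x_0\to0^+$ or $x_0\to d^-$. This yields the claim.
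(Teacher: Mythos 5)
Your proof is correct and follows essentially the same route as the paper's: both identify the forced $2\times3$ grey block so that Lemma~\ref{Lem2x3grey} gives opposite-coloured caps $K_1,K_2$ and the $(\sqrt{2}-1)d$ bound on $\del X\cap(C\cup D)$, combine this with the $0.133d$ bound on $\gamma_C$ from Lemma~\ref{LemGammaPassing2x3Grey}, and finish with a vertical-segment intermediate-value argument giving a distance of roughly $0.55d<d$. Your explicit derivation of the configuration from the fact that grid-vertex auxiliary points must be centres of $2\times2$ grey blocks is a welcome elaboration of a step the paper dispatches by citing Lemma~\ref{Lem2x3grey} and its figure.
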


\begin{figure}
\begin{minipage}{0.45\linewidth}
\includegraphics[scale=0.4]{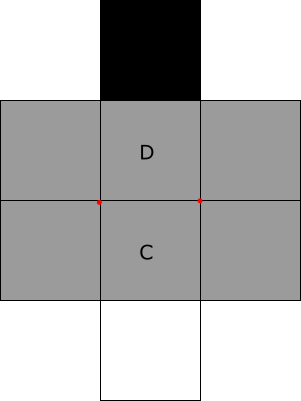}
\caption{A complex pixel $C$ with auxiliary points at the endpoints of one of its edges $e$ must sit in a configuration as the one above.}
\label{FigComplexPixelSameEdge}
\end{minipage}
\hfill
\begin{minipage}{0.45\linewidth}
\includegraphics[scale=0.4]{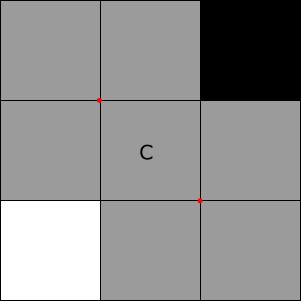}
\caption{A complex pixel $C$ with auxiliary points at two vertices of $C$ located opposite of one another must sit in a configuration as the one above.}
\label{FigComplexPixelOppositeVertices}
\end{minipage}
\end{figure}

\begin{proof}
The pixel $C$ must sit in a configuration as the one in Figure \ref{FigComplexPixelSameEdge}, by means of Lemma \ref{Lem2x3grey}. By Lemma \ref{LemGammaInC}, part ii) $\gamma_C$ must belong to $C\cup D$.

Now pick a point $p\in\gamma_C$, and look at the vertical line in $C\cup D$ through $p$. Since this line has endpoints of different colours, it must contain a point $q\in\del X$. By Lemma \ref{Lem2x3grey} again, $q$ must belong to the set of points in $C\cup D$ that are no further than $\frac{d}{2}$ from the common edge $e$ of $C$ and $D$, and by Lemma \ref{LemGammaPassing2x3Grey} $p$ is no further than $0.133d$ from $e$. Hence %since neither $p$ nor $q$ is further from $e$ than $\frac{d}{2}$, they certainly 
$p$ and $q$ cannot be further than $0.55d<d$ from each other.
\end{proof}

\begin{lem}\label{LemComplexPixelOppositeVertices}
Consider a complex pixel $C$ with two auxiliary points located at vertices of $C$ diagonally opposite each other. Then no point of $\gamma_C$ is further away from $\del X$ than $d$.
\end{lem}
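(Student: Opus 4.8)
The plan is to first read off the forced configuration, then to bound $d(p,\del X)$ for each $p\in\gamma_C$ by running a line through $p$ that is forced to meet $\del X$ close by. Since both auxiliary points of $C$ sit at vertices, each is the centre of a $2\times2$ block of grey pixels, so $C$ lies in two such blocks overlapping only in $C$, whose centres are two diagonally opposite corners of $C$; put these at $(0,0)$ and $(d,d)$ and let $\Delta$ be the diagonal joining them. All four d-neighbours of $C$ are then grey, so by \refremark{Rem3x3Pix} the two remaining i-neighbours of $C$ are one black and one white sitting diagonally across from each other — this is the configuration of \cref{FigComplexPixelOppositeVertices} — and we may assume the black one lies up-left of $\Delta$ and the white one down-right. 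By \reflemma{LemGammaInC}, $\gamma_C\subseteq C$, and it runs from $(0,0)$ to $(d,d)$ roughly along $\Delta$.

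For a point $p\in\gamma_C$ in the central part of $C$, I would take the line $\ell$ through $p$ orthogonal to $\Delta$. Followed in one direction $\ell$ runs into the black pixel and in the other into the white pixel, so $\ell$ meets $\del X$ in a point $s$; as $p$ and $s$ share the line $\ell$, the quantity $d(p,s)$ is simply their separation measured along $\ell$. The point is to localise $s$ near $\Delta$: at $s$ there are osculating black and white $\sqrt{2}d$-balls of equal radius whose centres lie in a black and a white pixel respectively, so $s$ is the midpoint of these two centres. Because the only black and white pixels near $C$ are the two distinguished i-neighbours, the centres lie in the up-left black pixel and the down-right white pixel, and an elementary computation shows the midpoint of any point of the one and any point of the other is within $d/\sqrt{2}$ of $\Delta$; hence $s$ is within $d/\sqrt{2}$ of $\Delta$. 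Adding the deviation of $\gamma_C$ from $\Delta$, which is controlled by the arc radii recorded in \cref{FigCircleArcs}, a calculation then gives $d(p,s)<d$.

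It remains to treat the points of $\gamma_C$ lying near an auxiliary point, that is, near the centre of one of the two grey blocks; for these $\ell$ need no longer reach the black and white pixels, so $\del X$ must be localised differently. Here I would use that $\del X$ enters and leaves the $2\times2$ grey block and, by \reflemma{Lem4Intersections2x2}, cannot cross all four of its outer edges, together with the spindle bounds of \refcor{CorPathInSpindle} (applied to $\pi$, in the manner of \reflemma{Lem2x3grey}); this confines the portion of $\del X$ inside the block to within a distance comfortably below $d$ of the block centre. Since that centre is exactly the auxiliary point through which $\gamma_C$ passes, every such $p$ has a point of $\del X$ within $d$. Together with the central case this shows no point of $\gamma_C$ is further than $d$ from $\del X$.

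The main obstacle is precisely this localisation of $\del X$ near $\Delta$. One has to verify, over the finitely many subconfigurations compatible with \cref{FigComplexPixelOppositeVertices}, that the osculating balls really do sit in the designated black and white pixels, and that near the two corners the block confinement applies, so that the deviation of $\del X$ together with that of $\gamma_C$ (bounded through the radii of \cref{FigCircleArcs}) never uses up the full budget $d$. The bookkeeping is tight because, among the cases of \cref{ThmHausdorffB}, this is the one where the grey region — and hence the room for $\del X$ to wander — is widest.
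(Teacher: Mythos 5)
Your reduction to the configuration of Figure~\ref{FigComplexPixelOppositeVertices} is correct and agrees with the paper, but the core of your argument --- localising a crossing point $s\in\del X\cap\ell$ to within $d/\sqrt{2}$ of the diagonal $\Delta$ --- has a genuine gap. The localisation rests on the claim that the osculating $\sqrt{2}d$-balls at $s$ are centred in the two distinguished i-neighbours of $C$, ``because the only black and white pixels near $C$ are the two distinguished i-neighbours.'' But the configuration only determines the colours of the eight pixels in the $3\times 3$ block around $C$; pixels outside that block may well be black or white, and they are within reach. Indeed, your line $\ell$ passes through two grey d-neighbours of $C$ before it ever reaches the black and white i-neighbours, and nothing in your argument rules out that the only crossing of $\ell$ by $\del X$ occurs in, say, the upper part of the left grey d-neighbour --- which is precisely where $s$ is far from $\Delta$ (up to roughly $\sqrt{2}d$ away). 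For such an $s$ the designated white pixel is at distance greater than $\sqrt{2}d$, so the white osculating centre must lie in some white pixel outside the $3\times3$ block, about which the configuration says nothing; the midpoint bound then collapses and $d(p,s)$ can exceed $d$. Even where your localisation does apply, the budget $d(1-\tfrac{1}{\sqrt{2}})+\tfrac{d}{\sqrt{2}}=d$ is used up exactly, so the strict inequality you claim is not available (though $\leq d$ is all the lemma needs). Finally, your treatment of points of $\gamma_C$ near the auxiliary points is only a sketch: that $\del X$ passes within a distance ``comfortably below $d$'' of the centre of each $2\times2$ grey block is exactly the kind of statement that requires proof here, and \reflemma{Lem4Intersections2x2} alone does not supply it.

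The paper's proof sidesteps all of this by never localising $\del X$ pointwise. Since the circle arcs through $p_1$, $p_2$ and a neighbouring auxiliary point have radius greater than $d$, all of $\gamma_C$ lies in the spindle $S(L,d)$, where $L$ is the diagonal joining $p_1$ and $p_2$; by \reflemma{LemSpindleIsIntersection} this spindle is contained in \emph{every} $d$-ball containing $L$. The paper then takes the anti-diagonal $M$ of $C$, joining the black vertex to the white vertex: its endpoints have different colours, so $M$ contains some $q\in\del X$, and every point of $M$ is within $d$ of both $p_1$ and $p_2$, so $B_d(q)$ contains $L$, hence $S(L,d)$, hence all of $\gamma_C$ --- endpoints included, so no separate corner case arises. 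A single witness point of $\del X$ serves every point of $\gamma_C$ simultaneously. To salvage your projection-style argument you would need to replace the osculating-ball localisation with something of this kind; as written, that step fails.
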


\begin{proof}
A pixel $C$ as in this lemma must sit in a configuration as the one in Figure \ref{FigComplexPixelOppositeVertices}, by means of Lemma \ref{Thm5greys}. Let $p_1$ and $p_2$ denote the two auxiliary points on the boundary of $C$.

A calculation shows that any circle arc through $p_1, p_2$ and an auxiliary point neighbouring $p_2$ has radius greater than $d$. Hence any such circle arc is contained in the spindle $S(L,d)$ where $L$ is the line segment between $p_1$ and $p_2$, by Lemma \ref{LemSpindleIsIntersection}. By the same lemma, this means that any such circle arc $\gamma_1$ is contained in any ball of radius $d$ containing $L$. The same holds for the area bounded by the two circle arcs $\gamma_1$ and $\gamma_2$ (since $S(L,d)$ is also convex), and hence also for $\gamma_C$. Thus, if we can find some point $q$ in $\del X$ such that the $d$-ball around $q$ contains $L$ and hence $S(L,d)$ and $\gamma_C$, then any point of $\gamma_C$ must be closer than $d$ to $\del X$.

Consider the line $M$ connecting the black and white vertex of $C$. Since its endpoints have different colours, it must contain some point $q\in \del X$. Since the distance between points of $M$ and $p_1$, $p_2$ is less than $d$ everywhere, the ball $B_d(q)$ contains $p_1$ and $p_2$ and hence the spindle $S(L,d)$ between them, and we are done.
\end{proof}

\begin{lem}\label{LemComplexPixelRemaining}
Suppose $C$ is a complex pixel with an auxiliary point on one of its edge midpoints and another auxiliary point at one of the vertices of $C$. Then any point of $\gamma_C$ is closer than $d$ to $\del X$.
\end{lem}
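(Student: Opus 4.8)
The plan is to mirror the argument used for \cref{LemComplexPixelOppositeVertices}: confine $\gamma_C$ to a thin spindle, produce a single point $q\in\del X$ whose $d$-ball swallows that spindle, and conclude. First I would fix coordinates so that $C=[0,d]^2$ and pin down the configuration. Since $C$ is complex, its vertex auxiliary point is the centre $v$ of the $2\times 2$ grey block containing $C$; say this block sits to the upper right, so $v$ is the upper right corner of $C$. The edge-midpoint auxiliary point $m$ then comes from a grey d-neighbour $E$ across an outer edge of $C$ (WLOG the left edge, so $m=(0,d/2)$), and the hypothesis that $C$ carries exactly these two auxiliary points forces $E$ to be the unique outside grey neighbour and, because $E$ and $C$ do not lie in a common $2\times 2$ grey block, forces the top-left diagonal neighbour and the bottom d-neighbour to be non-grey. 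Feeding this into the classification of \cref{FigKonfigurationer13} together with \cref{Lem2x3grey} and \cref{Thm5greys} fixes the colours around the grey blob; in particular it locates a black pixel and a white pixel flanking $C$ on opposite sides of the boundary component passing through $C$.

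Next I would locate $\gamma_C$. The two auxiliary points $v$ (a vertex) and $m$ (an edge midpoint) do not lie at the two endpoints of a single edge, so by \cref{LemGammaInC} (after checking that the surrounding pattern is none of the exceptions in \cref{FigExceptions}) we have $\gamma_C\subseteq C$. Reading the radii off \cref{FigCircleArcs} for the vertex/edge-midpoint position, I would verify that every circle arc through $v$, $m$ and the neighbouring third auxiliary point has radius at least $d$. By \cref{LemGammaInConvexHull} the curve $\gamma_C$ lies between its two defining arcs, and by \cref{LemSpindleIsIntersection} each such arc, having radius $\ge d$, lies in the spindle $S(L,d)$, where $L$ is the segment from $v$ to $m$; hence $\gamma_C\subseteq S(L,d)$.

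The decisive step is to exhibit $q\in\del X$ with $L\subseteq B_d(q)$. I would take the segment $M$ joining the black pixel and the white pixel identified above; since its endpoints have opposite colours it meets $\del X$ in a point $q$. Using \cref{CorPathInSpindle} applied to $\pi$ (to confine the arc of $\del X$ crossing $C$ to the spindle over its entry/exit chord), together with the osculating-ball bound of \cref{LemCornerPixel} and the width estimate of \cref{LemSpindleWidth}, I would show that $q$ is forced into the lens $B_d(v)\cap B_d(m)$, i.e. $d(q,v)<d$ and $d(q,m)<d$. Because $B_d(q)$ is convex this yields $L\subseteq B_d(q)$, and since $S(L,d)$ is the intersection of all radius-$\le d$ balls containing $L$ we obtain $\gamma_C\subseteq S(L,d)\subseteq B_d(q)$. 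Thus every point of $\gamma_C$ lies within $d$ of $q\in\del X$, which is exactly the assertion.

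I expect the main obstacle to be this last step. The points $v$ and $m$ are a distance $\frac{\sqrt5}{2}d\approx 1.118d$ apart, so the lens $B_d(v)\cap B_d(m)$ into which $q$ must fall is narrow, and one genuinely has to use the colour constraints on the surrounding pixels---not merely containment of $\del X$ in $C$, whose diameter $\sqrt2\,d$ already exceeds $d$---to force the crossing point simultaneously close to both auxiliary points. A secondary nuisance is that the exception configurations of \cref{FigExceptions}, where $\gamma_C$ may protrude out of $C$, fall outside \cref{LemGammaInC} and must be dispatched separately; there the direct line-transversal estimate used in \cref{LemSimplePixelVertexAdjacent} is the natural fallback.
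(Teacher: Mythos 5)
Your plan defers exactly the two steps that cannot be carried out as stated, so this is a genuine gap rather than a complete proof. The decisive step has no analogue of the trick that makes Lemma~\ref{LemComplexPixelOppositeVertices} work. There, the auxiliary points are opposite vertices, at distance $\sqrt{2}d$, and the \emph{other} diagonal $M$ of $C$ has the property that every one of its points is within $d$ of both auxiliary points, so whichever point of $M\cap\del X$ you pick automatically lies in the required lens. Here $d(v,m)=\tfrac{\sqrt{5}}{2}d$ and no such transversal exists: with $v=(d,d)$ and $m=(0,d/2)$, the entire left edge of $C$ (which $\del X$ is forced to cross, exactly once by Lemma~\ref{LemTwoIntersections}) lies at distance $\geq d$ from $v$, and the entire right edge lies at distance $\geq d$ from $m$, so neither forced crossing point lies in $B_d(v)\cap B_d(m)$. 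Worse, all three of the left, bottom and right edges of $C$ lie outside that lens, and they form a connected set; so an arc of $\del X$ entering through the left edge and leaving through the right edge could, a priori, hug the bottom of $C$ and avoid the lens entirely. Ruling this out needs configuration-specific control of where the path $\pi(L)$ of Corollary~\ref{CorPathInSpindle} runs, which your outline does not supply. The preliminary spindle step is also unverified: the radius bound ``every arc through $v$, $m$ and the third auxiliary point is $\geq d$'' is exactly the claim the paper makes \emph{only} in the opposite-vertices case; for instance, if the other boundary auxiliary point of the $2\times2$ block sat at the midpoint of the top edge of the pixel above $C$, the arc through it, $v$ and $m$ would have radius $\tfrac{\sqrt{10}}{4}d<d$, so whether such third points are excluded is itself part of the case analysis you postpone.

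The paper avoids both problems by abandoning the single-ball idea altogether. It splits on whether $\gamma_C\subseteq C$ (the protruding case of Figure~\ref{FigComplexPixelRemainingGammaOutside} is treated separately, not as a ``fallback''), locates the two forced edge crossings $q_1$, $q_2$ of $\del X$ on the boundary of $C$, and then matches each point $p\in\gamma_C$ with its \emph{own} nearby point of $\del X$: the point where the path $\pi(L)\subseteq\del X$ meets the radius of $B_d(s)$ through $p$ (with $s$ the upper-left vertex of $C$), or the vertical line through $p$. Since $\pi(L)$ crosses every such radius or vertical line inside $C$, each pair is at distance at most $d$, and no single point of $\del X$ ever needs to be simultaneously close to both auxiliary points. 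In the protruding case it shows instead that $B_d(q')$ contains the protruding region for \emph{any} $q'$ on the common edge of $C$ and the pixel above, where $\del X$ is forced to cross. If you want to rescue your approach you must prove that $\del X\cap C$ meets the lens $B_d(v)\cap B_d(m)$ in every admissible configuration, and it is not clear that this is true.
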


\begin{figure}
\begin{minipage}{0.45\linewidth}
\includegraphics[scale=0.4]{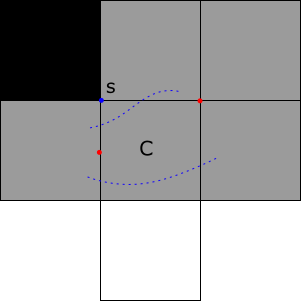}
\caption{A complex pixel as the one in Lemma \ref{LemComplexPixelRemaining} must sit in a configuration as the one above. Then $\del X$ must either intersect the upper edge of $C$ (the upper blue dashed line), or it must intersect the right vertical edge of $C$ (the lower blue line).}
\label{FigComplexPixelRemaining}
\end{minipage}
\begin{minipage}{0.45\linewidth}
\includegraphics[scale=0.4]{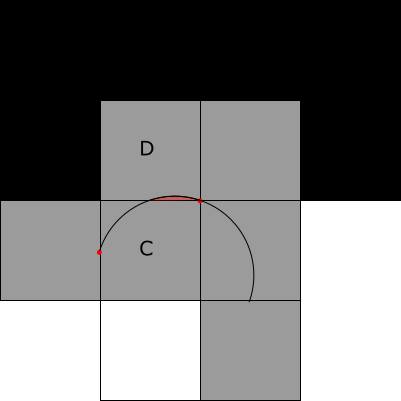}
\caption{The only case where $\gamma_C$ may not stay inside $C$ is when $C$ sit in a configuration as the one above. In this case, the part of $\gamma_C$ inside $D$ must belong to the red set.}
\label{FigComplexPixelRemainingGammaOutside}
\end{minipage}
\end{figure}

\begin{proof}
A pixel as the one in this lemma must sit in a configuration as the one shown in Figure \ref{FigComplexPixelRemaining}.

There are two cases: Either $\gamma_C$ is contained in $C$ or it is not.

Suppose first that $\gamma_C\subseteq C$. Note that $\del X$ must intersect the left vertical edge of $C$ once, since the endpoints of this edge has different colours. It cannot intersect the edge multiple times by Lemma \ref{LemTwoIntersections}. Let $q_1$ be the intersection between the left vertical edge of $C$ and $\del X$. 

Now, $\del X$ must intersect the boundary of $C$ in at least two points, one of which is $q_1$. Suppose that $\del X$ intersects $C$ somewhere on the upper edge of $C$, say in a point $q_2$. Let $L$ be the line between $q_1$ and $q_2$. Then by Corollary \ref{CorPathInSpindle} there is a path $\pi(L)$ in $\del X$ from $q_1$ to $q_2$ contained in $S(L,r)$, and by changing $q_2$ if necessary, we may assume that $\pi(L)$ does not intersect the upper edge of $C$ except at $q_2$, hence it stays inside $C$. Let $s$ be the upper left vertex of $C$.

Since $B_d(s)$ contains the left and upper edge of $C$, it contains both $q_1$, $q_2$ and hence $L$. Since $d<r$, this also means that it contains $S(L,r)$ by Lemma \ref{LemSpindleIsIntersection}, hence it contains the path $\pi(L)$. It also contains $\gamma_C$, which can be seen by considering the possible cases in Figure \ref{FigCircleArcs}.

Now, take any point $p$ on $\gamma_C$. Then it belongs to a radius of $B_d(s)$. Since $\pi(L)$ runs from one side of $C$ to another inside $C\cap B_d(s)$, there must also be a point $q$ on $\pi(L)$ lying on the same radius of $B_d(s)$ as $p$. But then $q$ and $p$ can be no further than $d$ apart, so the lemma is true in this case.

If on the other hand $\gamma_C$ belongs to $C$, but there are no points of $\del X$ on the upper edge of $C$, then there must be a point $q_2\in \del X$ on the right edge of $C$. Let $L$ be the line between $q_1$ and $q_2$. By Corollary \ref{CorPathInSpindle}, there must be a path $\pi(L)$ in $\del X$ connecting $q_1$ and $q_2$, and this path can nowhere intersect other edges of $C$.

Again pick a point $p$ in $\gamma_C$, and look at the vertical line in $C$ containing $p$. This line must be intersected by $\pi(L)$ in some point $q$, since $\pi(L)$ connects the two sides of $C$. But then $p\in \gamma_C$ and $q\in\del X$ both lie on the same line of length $d$, hence they can be no further than $d$ apart, as claimed. This concludes the proof in the case where $\gamma_C$ is contained in $C$.

Finally, assume $\gamma_C$ is not contained in $C$. Then one of the curve segments $\gamma_1$ and $\gamma_2$ are not contained in $C$ - let us say it is $\gamma_1$. Copying the results from before, we see that all points of $\gamma_C$ inside $C$ are closer than $d$ to some point of $\del X$, so it remains to show this for points of $\gamma_C$ outside $C$. Such points must lie in the set $A$ bounded by $\gamma_1$ and one of the edges of $C$ (the red set in Figure \ref{FigComplexPixelRemainingGammaOutside}).

Notice that the only case where the curve segment $\gamma_1$ is not contained in $C$ is when $C$ sit in a configuration as the one in Figure \ref{FigComplexPixelRemainingGammaOutside}. Let $D$ be the pixel above $C$ in this configuration.

The boundary $\del X$ must intersect the boundary of $D$ at least twice in order for $D$ to be grey. By Lemma \ref{LemTwoIntersections}, $\del X$ cannot intersect the right boundary of $D$ twice. Hence it must intersect the common edge $e$ of $C$ and $D$ at least once, say in a point $q$.

Now, a calculation shows for any point $q'\in e$, the ball $B_d(q')$ contains all of $A$. In particular, the ball $B_d(q)$ contains all of $A$ and hence any point of $\gamma_C$ in $D$, so any such point can be no further away from $\del X$ than $d$. This concludes the proof.
\end{proof}

\begin{proof}[Proof of Theorem \ref{ThmHausdorffB}]
The curve $\gamma$ consists of a curve segments $\gamma_C$ for each pixel $C$ with two auxiliary points on its boundary. Hence the theorem follows from the Lemmas \ref{LemSimplePixelOppositeEdges}, \ref{LemSimplePixelVertexAdjacent}, \ref{LemComplexPixelSameEdge}, \ref{LemComplexPixelOppositeVertices} and \ref{LemComplexPixelRemaining}.

Furthermore, for any point $x$ in $\gamma$, there is a point $y'$ in $\del X$ that is no further than a distance $d$ from $x$, meaning that
\begin{equation*}
\inf_{y\in \del X} d(x,y)\leq d(x,y')\leq d.
\end{equation*}
Thus we get
\begin{equation*}
\sup_{y\in\gamma}\inf_{x\in \del X} d(x,y)\leq d.
\end{equation*}
\end{proof}

\begin{cor}\label{CorHausdorff}
The reconstructed boundary $\gamma$ is closer than $d$ to the boundary of $X$.
\end{cor}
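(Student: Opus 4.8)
The plan is to read the corollary off directly from the two preceding theorems via the definition of the Hausdorff distance. Recall that
\[
d_H(\del X,\gamma)=\max\Bigl\{\sup_{y\in\del X}\inf_{x\in\gamma}d(x,y),\ \sup_{x\in\gamma}\inf_{y\in\del X}d(x,y)\Bigr\}.
\]
Theorem \ref{ThmHausdorffA} controls the first term inside the maximum, and Theorem \ref{ThmHausdorffB} controls the second, each bound being $d$. Combining the two theorems therefore yields $d_H(\del X,\gamma)\le d$ immediately, which is the content of the corollary up to the question of strictness.

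To obtain the strict inequality $d_H(\del X,\gamma)<d$ that is needed for weak $d$-similarity, I would exploit the fact that the case analyses underlying Theorems \ref{ThmHausdorffA} and \ref{ThmHausdorffB} actually deliver \emph{strict} pointwise bounds: every point of $\del X$ lies at distance strictly less than $d$ from $\gamma$, and symmetrically every point of $\gamma$ lies at distance strictly less than $d$ from $\del X$. The remaining task is then to upgrade these pointwise strict estimates to a strict bound on the two suprema.

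Here compactness does the work. Since $X$ is bounded and closed it is compact, so $\del X$ is compact; and $\gamma$ is a finite union of simple closed curves, hence also compact. The function $y\mapsto\inf_{x\in\gamma}d(x,y)$ is continuous on the compact set $\del X$, so it attains its maximum at some $y_0\in\del X$; by the strict pointwise bound its value there is below $d$, whence $\sup_{y\in\del X}\inf_{x\in\gamma}d(x,y)<d$. Applying the same argument to $x\mapsto\inf_{y\in\del X}d(x,y)$ on the compact set $\gamma$ gives $\sup_{x\in\gamma}\inf_{y\in\del X}d(x,y)<d$. Taking the maximum of these two strictly-sub-$d$ quantities yields $d_H(\del X,\gamma)<d$, as claimed.

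The only genuinely delicate point is this final upgrade: a supremum of quantities that are each strictly below $d$ can, in general, still equal $d$, so the strict conclusion is not purely formal. The main obstacle is therefore to justify that the extremal distances are actually attained, which is exactly what compactness of $\del X$ and of $\gamma$ (together with continuity of the distance-to-a-compact-set function) provides; once attainment is secured, the strict pointwise estimates from Theorems \ref{ThmHausdorffA} and \ref{ThmHausdorffB} finish the proof.
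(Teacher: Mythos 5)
Your first paragraph is precisely the paper's own proof: the paper combines Theorems \ref{ThmHausdorffA} and \ref{ThmHausdorffB} with the definition of the Hausdorff distance and concludes $d_H(\gamma,\del X)\leq d$ --- and it stops there. Where you genuinely go beyond the paper is the strictness upgrade. The paper never addresses the mismatch you spotted: the corollary reads ``closer than $d$'', and the later proof of weak $d$-similarity needs the strict bound $d_H(\del A,\del B)<r$ from the definition, yet the paper's argument (and the statements of Theorems \ref{ThmHausdorffA} and \ref{ThmHausdorffB} themselves, which only say ``at most $d$'') deliver just the non-strict inequality. Your repair is sound and buys exactly what the later application needs: you correctly bypass the theorem statements and appeal to the underlying case analysis, whose lemma-level conclusions (Lemma \ref{LemGammaPassingCWithPointsOnOppositeEdges}, Lemmas \ref{LemSimplePixelOppositeEdges}--\ref{LemComplexPixelRemaining}, with bounds like $0.55d$, $0.6d$, $0.62d$, $d-\tfrac{d}{2\sqrt{2}}$) are indeed stated as strict; and your compactness argument ($\del X$ compact since $X$ is compact, $\gamma$ a finite union of simple closed curves, continuity of the distance-to-a-compact-set function, hence attainment of both suprema) is exactly the right way to pass from strict pointwise bounds to strict bounds on the suprema, a step that is, as you note, not purely formal. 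One caveat: a fully rigorous version still requires tightening one or two of the paper's sub-arguments, e.g.\ in the proof of Lemma \ref{LemComplexPixelRemaining}, where the paper only argues that two points on a common radius of $B_d(s)$, or on a common vertical segment of length $d$, are ``no further than $d$ apart''; the lemma statement claims strictness, but the written estimate is only $\leq d$, so your blanket assertion that every pointwise bound is strict inherits this small gap from the paper rather than resolving it.
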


\begin{proof}
Combining Theorems \ref{ThmHausdorffA} and \ref{ThmHausdorffB}, we get that
\begin{equation*}
\sup_{y\in\gamma}\inf_{x\in \del X}d(x,y)\leq d\text{ and }\sup_{x\in \del X}\inf_{y\in\gamma} d(x,y)\leq d,
\end{equation*}
hence
\begin{equation*}
d_H(\gamma,\del X)=\text{max}\left(\sup_{y\in\gamma}\inf_{x\in \del X} d(x,y),\sup_{x\in \del X}\inf_{y\in\gamma} d(x,y))\right)\leq d.
\end{equation*}
\end{proof}

{\color{black}\section{Homeomorphism between Object and Reconstruction}
%\fxfatal{Færdiggør}
Let us now finish the proof of Theorem \ref{ThmMainResult}. We need to show that there is homeomorphism taking the reconstructed set $\Gamma$ to the original set $X$. To do so, let us start with a lemma:

\begin{lem}\label{LemHomeomorphism}
Let $M\subseteq \R^2$ be a set homeomorphic to $S^1\times [-1,1]$, and let $m\subseteq M$ be the subset homeomorphic to $S^1\times\sett{0}$. Let $\gamma: S^1\to M$ be a closed curve. Then there is a homeomorphism $f:\R^2\to\R^2$ taking $\gamma$ to $m$ fixing points in the unbounded component of $\R^2\backslash M$.
\end{lem}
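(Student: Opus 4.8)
Let me understand what's being claimed. We have a set $M$ homeomorphic to an annulus $S^1\times[-1,1]$, with $m$ being the "core circle" $S^1\times\{0\}$. We have a closed curve $\gamma$ living inside $M$ (I assume $\gamma$ is embedded — a simple closed curve — and separates the two boundary components of $M$, which is the situation from Theorem \ref{ThmGammaProperties}). We want a homeomorphism $f$ of the whole plane taking $\gamma$ to $m$, and fixing everything in the unbounded complementary region of $M$.

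The key geometric facts I'd want: $\gamma$ is a simple closed curve inside the annulus $M$, and it should separate the two boundary circles of $M$ (otherwise the statement would be false — a small contractible loop in $M$ couldn't be taken to the core circle while fixing outside $M$). So implicitly $\gamma$ is isotopic to $m$ within $M$.

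**How I'd prove it.**

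The natural approach is to use the classification of embedded circles in an annulus, together with the Jordan–Schoenflies theorem and an isotopy extension / Alexander-trick type argument. Let me sketch the steps.

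*Step 1: Reduce to the annulus.* First I'd argue that $\gamma$ lies in the interior of $M$ and separates the two boundary components of $M$. This is where I'd invoke the structure coming from the reconstruction: each component of $\gamma$ separates the boundary components of the connected grey region (Theorem \ref{ThmGammaProperties}), so $\gamma$ is homotopically nontrivial in $M$, hence (in an annulus) it's isotopic to the core $m$.

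*Step 2: Cut $M$ along $\gamma$ and along $m$.* Both $\gamma$ and $m$ are simple closed curves separating $M$ into two annular pieces (a collar on each side). The circle $\gamma$ splits $M$ into two annuli $M^+$ (between $\gamma$ and the outer boundary $\partial^+ M$) and $M^-$ (between $\gamma$ and the inner boundary $\partial^- M$). Similarly $m$ splits $M$ into two annuli.

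*Step 3: Build the homeomorphism piece by piece.* On $M^+$: it's an annulus with boundary $\gamma \sqcup \partial^+ M$, and I want to map it to the annulus bounded by $m \sqcup \partial^+ M$, fixing $\partial^+ M$ pointwise and sending $\gamma$ to $m$. Any two annuli are homeomorphic, and using the fact that a homeomorphism of $S^1$ extends to the disk (Alexander) / the annulus has no "obstruction," I can realize such a map rel the outer boundary. Do the same on $M^-$ rel $\partial^- M$. The two maps agree on the cuts (both send $\gamma \to m$ via a prescribed homeomorphism $S^1 \to S^1$ — I'd fix one identification and use it on both sides), so they glue to a homeomorphism of $M$ that's the identity on $\partial M$ and carries $\gamma$ to $m$.

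*Step 4: Extend by the identity.* Since the constructed map is the identity on all of $\partial M$, I extend it to all of $\R^2$ by the identity outside $M$. This is continuous (it matches on the boundary), a homeomorphism (the inverse is built the same way), fixes the unbounded component of $\R^2 \setminus M$, and takes $\gamma$ to $m$. Done.

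**The main obstacle.** The delicate point is Step 3: constructing, on an annulus, a homeomorphism that is the identity on one boundary circle and sends the *other* boundary circle to a prescribed circle via a prescribed parametrization — i.e. making the two half-maps agree on $\gamma$. The cleanest route is to parametrize: identify $M \cong S^1\times[-1,1]$ with $m = S^1\times\{0\}$, and realize $\gamma$ as the graph of a continuous function $\theta \mapsto (\theta, h(\theta))$ with $-1 < h(\theta) < 1$ (this uses that $\gamma$ is embedded and separates the boundaries, so it meets each radial segment $\{\theta\}\times[-1,1]$ in exactly one point — a fact I'd need to verify, possibly after an isotopy straightening $\gamma$). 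Once $\gamma$ is a graph, the "vertical" reparametrization
$$
f(\theta, t) = \left(\theta,\ g_\theta(t)\right),
$$
where $g_\theta$ is the piecewise-linear homeomorphism of $[-1,1]$ sending $h(\theta)$ to $0$ and fixing $\pm 1$, does everything at once and is manifestly continuous, invertible, and identity on $\partial M$. So really the crux is the reduction "$\gamma$ is a graph over the core," and I'd expect to spend most of the effort either proving that directly from the reconstruction geometry or arguing it via an isotopy (classification of embeddings of $S^1$ in the annulus) before writing down the explicit shear.
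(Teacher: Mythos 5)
Your proposal is correct in substance and rests on the same foundations as the paper's proof (Schoenflies, the Annulus Theorem, isotopy extension near a boundary circle, extension by the identity), but it is organized quite differently. The paper does not cut $M$ along both curves and glue rel-boundary maps. Instead it normalizes each curve separately: it takes the annulus $A$ between the outer boundary $M_o$ and $\gamma$, maps it to the standard annulus $S^1\times[\frac{1}{2},1]$ by a homeomorphism $g$ arranged (via an ambient isotopy supported near $M_o$, after possibly reversing orientation) to be the identity on $M_o$, and extends $g$ to all of $\R^2$ by the identity outside $M_o$ and by filling in the disc bounded by $\gamma$ (Alexander's trick), obtaining $g_1$ with $g_1(\gamma)=\frac{1}{2}S^1$; the same construction applied to $m$ gives $g_2$, and the composition $g_1^{-1}\circ g_2$ (or its inverse) is the desired homeomorphism. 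Your cut-and-paste construction buys a slightly stronger conclusion: your $f$ is the identity on \emph{every} component of $\R^2\setminus M$, including the bounded disc inside the inner boundary, whereas the paper's $f$ only fixes the unbounded component, since the disc inside $\gamma$ is filled by a cone extension rather than by the identity. The price is that you must produce two annulus homeomorphisms with prescribed behaviour on \emph{both} boundary circles and make them agree along $\gamma\to m$, which is exactly where the orientation-compatibility issue you gesture at ("fix one identification and use it on both sides") has to be checked; the paper's normalization trick sidesteps this matching problem because each $g_i$ only needs prescribed behaviour on one circle. Finally, you are right, and more explicit than either the lemma statement or the paper's proof, that the claim needs $\gamma$ to be embedded and essential (separating the two boundary circles of $M$): the paper uses this silently when it applies the Annulus Theorem to the region between $M_o$ and $\gamma$, and without it the lemma is false.
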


\begin{proof}
Since there exists a homeomorphism of $\R^2$ taking the outer boundary component $M_o$ of $M$ to the unit circle by Schoenflies' Theorem, we may assume that $M_o$ is the unit circle. By the Annulus Theorem, the set $A$ between $M_o$ and $\gamma$ is homeomorphic to the annulus $S^1\times[\frac{1}{2},1]$ - let $g$ denote this homeomorphism. We may assume that $g$ is the identity on $M_o$ - if this is not the case, then after reversing the orientation of the map $g\vert_{M_o}$ if necessary there is an isotopy from $g(M_o)$ to $M_o$ which we may extend to an ambient isotopy of $A$ in a small tubular neighbourhood of $g(M_o)$ in $M$, and composing the result of this isotopy with $g$ we get a homeomorphism that is the identity on $M_o$.

We may continuously extend $g$ to a map $g_1$ of all of $\R^2$ by extending it by the identity on the unbounded component of $M_o^C$ (since the map $g\vert_\gamma\to g(\gamma)$ may be extended to a map of the disc bounded by $\gamma$) . Thus we get a map $g_1:\R^2\to \R^2$ taking $\gamma$ to $\frac{1}{2}S^1$ and fixing points in the unbounded component of $\R^2\backslash M_o$.

Repeating the above with $\gamma$ replaced by $m$, we also get a map $g_2:\R^2\to\R^2$ taking $S^1\times\sett{0}$ to $\frac{1}{2}S^1$. Hence the composition $g_1^{-1}\circ g_2:\R^2\to\R^2$ takes $S^1\times\sett{0}$ to $\gamma$ and fixes points in the unbounded component of $M_o$.
\end{proof}

\begin{thm}
There is a homeomorphism $H:\R^2\to\R^2$ taking $X$ to $\Gamma$. Hence $X$ and $\Gamma$ are weakly $d$-similar.
\end{thm}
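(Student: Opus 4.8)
The plan is to build $H$ one boundary circle at a time and patch the pieces together. Since $X$ is $r$-regular it is a $2$-manifold with boundary, so each component $\del X'$ of $\del X$ is a smooth simple closed curve. By Theorem~\ref{ThmGammaProperties} there is exactly one component $\gamma'$ of $\gamma$ attached to $\del X'$, and by the proposition preceding that theorem $\gamma'$ is itself a simple closed curve; moreover both $\del X'$ and $\gamma'$ lie in, and separate the two boundary circles of, the grey component $A$ associated with $\del X'$. It therefore suffices to produce, for each such pair, a homeomorphism of $\R^2$ carrying $\gamma'$ onto $\del X'$ with support confined to a small neighbourhood of $\del X'$, and to take $H$ to be the composite of these.

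Fix a component and choose $\delta$ with $d<\delta<r$; set $M=\sett{x\in\R^2\mid d(x,\del X')\le\delta}$. Because $\delta<r$, the pair consisting of $\pi$ and the signed distance to $\del X'$ identifies $M$ with $\del X'\times[-\delta,\delta]$ by Lemma~\ref{LemProjection}, so $M$ is a genuine annulus, $M\cong S^1\times[-1,1]$, with core $m=\del X'$ in the role of $S^1\times\sett{0}$. Every point $z$ of $\gamma'$ lies in $A$, hence within $\sqrt2\,d$ of $\del X'$; since the convention $d\sqrt2<r$ forces any other component of $\del X$ to be farther from $z$, the nearest boundary point of $z$ lies on $\del X'$, and then $d(z,\del X')=d(z,\del X)\le d<\delta$ by Corollary~\ref{CorHausdorff}, so $\gamma'\subseteq M$. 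Finally $\gamma'$ is essential in $M$: it separates black pixels from white ones, hence separates the inner ($X$-side) boundary circle of $M$ from the outer ($X^C$-side) one. Lemma~\ref{LemHomeomorphism} now yields a homeomorphism of $\R^2$ taking $\gamma'$ to $m=\del X'$ and fixing the unbounded component of $\R^2\setminus M$; since $\gamma'$ and $m$ are both essential simple closed curves in the annulus $M$, they are ambient isotopic within $M$ rel $\del M$, so we may in fact arrange this homeomorphism $f'$ to equal $\id$ on all of $\R^2\setminus M$.

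Distinct components of $\del X$ lie at distance at least $2r$ from one another (an outer $r$-ball osculating one component, guaranteed by Proposition~\ref{PropEquivalentRregDefinitions}, cannot be met by another component), so, as $\delta<r$, the collars $M$ are pairwise disjoint and the maps $f'$ have disjoint supports. Their composite $H\colon\R^2\to\R^2$ is therefore a well-defined homeomorphism with $H(\gamma)=\del X$ and $H=\id$ off the grey collars, independently of how the components are nested. To see that $H(\Gamma)=X$ rather than $H(\Gamma)=X^C$, note that each component of $X$ contains the centre $x_b$ of a black osculating $r$-ball; since $d(x_b,\del X)=r>\delta$ the point $x_b$ lies outside every collar, so $H(x_b)=x_b$, while $x_b\in\Gamma$ because it lies at distance $r>d$ from $\del X$ on the black side and $\gamma$ is within $d$ of $\del X$. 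As $H$ carries $\del\Gamma=\gamma$ onto $\del X$ and fixes the interior point $x_b\in\Gamma\cap X$ of each component, it must carry the bounded region $\Gamma$ onto $X$. Thus $x\in\Gamma\iff H(x)\in X$, and together with $d_H(\del\Gamma,\del X)<d$ from Corollary~\ref{CorHausdorff} this is precisely weak $d$-similarity of $X$ and $\Gamma$, completing the proof of Theorem~\ref{ThmMainResult}.

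I expect the main obstacle to be the bookkeeping in the third step: the bare statement of Lemma~\ref{LemHomeomorphism} fixes only the unbounded complementary region, so for nested boundary circles a naive composite could drag one collar through another. Localising each homeomorphism to its own collar $M$—which requires knowing that $\gamma'$ and the core $m$ are isotopic rel $\del M$ inside the annulus, and that the collars are genuinely disjoint via the distance bound above—is what makes the patching legitimate and simultaneously fixes the deep points $x_b$, which is exactly what pins down the orientation $H(\Gamma)=X$.
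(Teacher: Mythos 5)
Your proof follows the same core strategy as the paper's: a tubular-neighbourhood annulus $M$ around each component $\del X'$ of $\del X$, the observation that exactly one component $\gamma'$ of $\gamma$ lies in $M$ and is essential there, Lemma \ref{LemHomeomorphism} to straighten $\gamma'$ onto $\del X'$, and a composition over components. Where you genuinely differ is the patching: the paper processes components outermost-first and relies on each $f_i$ fixing the unbounded complementary region, whereas you upgrade each homeomorphism to have support inside its own collar (via the standard fact that two essential simple closed curves in an annulus are ambient isotopic rel boundary, a black box comparable to the paper's use of the Annulus Theorem) and check that the collars are pairwise disjoint, so that the composition is order-independent and nesting needs no bookkeeping. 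You also pin down $H(\Gamma)=X$, rather than the image being some other set bounded by $\del X$, by exhibiting deep points $x_b$ that $H$ fixes; the paper is more cursory here (``sends bounded sets to bounded sets''), so this part of your argument is a genuine improvement in care.

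Two steps need repair, though both are fixable. First, your justification that distinct components of $\del X$ are $2r$-separated is not valid as stated: the osculating $r$-balls at a point $p$ give no control in directions tangent to $\del X$ at $p$, so ``another component cannot meet the outer osculating ball'' does not yield a $2r$ bound. The correct argument is the one the paper uses inside the proof of Theorem \ref{ThmGammaProperties}: if two boundary points satisfy $d(x,y)<2r$, then Corollary \ref{CorPathInSpindle} applied to $\pi$ produces a connected path in $\del X$ joining them, so they lie on the same component; hence distinct components are at distance at least $2r$. Second, your essentiality argument for $\gamma'$ (``it separates black pixels from white ones, hence separates the two boundary circles of $M$'') requires the boundary circles of $M$ to lie in black, respectively white, pixels; with only $d<\delta<r$ this can fail, since a grey pixel may contain points at depth up to nearly $d\sqrt{2}$, so a circle at depth $\delta\le d\sqrt{2}$ may run through grey pixels and the separation property of $\gamma$ says nothing about it. Either choose $d\sqrt{2}<\delta<r$ (possible since $d\sqrt{2}<r$; this is essentially the paper's choice $\delta=d\sqrt{2}$, which additionally needs the convention that $\del X$ avoids pixel corners), or keep your $\delta$ but extend any hypothetical path joining the two circles of $\del M$ in $M\setminus\gamma'$ normally to depth beyond $d\sqrt{2}$ on both ends, so that it joins a black pixel to a white one while still avoiding $\gamma$, contradicting Theorem \ref{ThmGammaProperties}. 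With these two patches your argument is complete.
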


\begin{proof}
Note that by Theorem \ref{ThmGammaProperties} $\gamma$ separates black pixels from white ones, and there is a 1-1 correspondence between components of $\del X$ and components of $\gamma$.

Consider an outermost component $\del X'$ of $\del X$. Since $\del X'$ is a manifold of dimension $1$, it is homeomorphic to $S^1$. Thus its tubular neighbourhood $N_{d\sqrt{2}}(\del X')$ is homeomorphic to $S^1\times [-1,1]$ (see \cite{TC}, Proposition A.10) via a map $h$ that takes the points of each normal line of length $2\sqrt{2}$ to a fiber $\sett{x}\times[-1,1]$ in $S^1\times[-1,1]$, and takes $\del X'$ to $S^1\times\sett{0}$. Moreover, $h^{-1}(S^1\times\sett{-1})$ is a subset of the set of white pixels, and $h^{-1}(S^1\times\sett{-1})$ is a subset of the black pixels. Since the boundary $\gamma$ of the reconstructed set $\Gamma$ separates black and white pixels, this means that there is a component $\gamma'$ of $\gamma$ in $N_{d\sqrt{2}}(\del X')$.

Then by Lemma \ref{LemHomeomorphism} there is a homeomorphism $f_1:\R^2\to R^2$ taking $\gamma'$ to $\del X'$ and fixing points in the unbounded component of $h^{-1}(S^1\times\sett{1})$. Since $\del X'$ and $\gamma'$ both separates black pixels from white, any component of $\del X$ inside $\del X'$ also lies inside $\gamma'$. Hence $f_1$ also takes any component of $\del X$ inside $\del X'$ to the inside of $\gamma'$.

Applying the above technique to the other components of $\del X$, we thus get a series of homeomorphisms $f_1,f_2,\dots, f_n$ that each takes one component $\del X_i$ of $\del X$ to a component $\gamma_i$ of $\gamma$. Since each homeomorphism fixes the points of the unbounded component of $\del X_i^C$, the composition $f_n\circ\dots\circ f_1:\R^2\to \R^2$ that starts by mapping the outer component(s) of $\del X$ to $\gamma$ and then works its way in, sends $\del X$ to $\gamma$. Since it also sends bounded sets to bounded sets and $\Gamma$ was the bounded set bounded by $\gamma$ and $X$ was compact, this means that $H:=f_n\circ\dots\circ f_1$ takes $X$ to $\Gamma$.

Since there is a map of $\R^2$ taking $X$ to $\Gamma$, and since $d_H(\del X,\gamma)\leq d$ by Corollary \ref{CorHausdorff}, they are weakly $d$-similar.
\end{proof}
}

\section{Example of the reconstruction algorithm}
\begin{example}
An example of this algorithm is shown in Figure \ref{FigExampleAlgorithm}. In this figure, we used the bump function
\begin{equation*}
\phi(t)=\begin{cases}
1 & \text{if $x=0$}\\
0 & \text{if $x=1$}\\
1-\frac{1}{1+exp(\frac{6}{7x}-\frac{6}{7-7x})} & \text{otherwise}
\end{cases}.
\end{equation*}
It seems from our example that the curves $\del X$ and $\gamma$ may be even closer than $d$.

\begin{figure}
\includegraphics[scale=0.8]{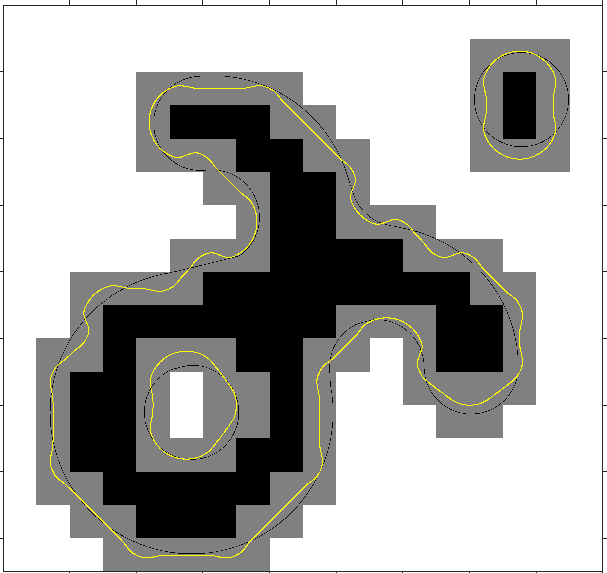}
\caption{Example of the algorithm: The thin black line is the outline of the original $r$-regular set, from which the image came. The yellow line is the reconstructed boundary of the original set.}
\label{FigExampleAlgorithm}
\end{figure}
\end{example}

%\begin{proof}
%Note first that the curve $\gamma$ is always contained in the set of grey pixels: This is true for the convex hull $\text{conv}(\gamma_1,\gamma_2)$ of the two arc segments through two neighbouring auxiliary points (look at Figure \ref{FigCircleArcs}!), and hence also for $\gamma$ by Lemma \ref{LemGammaInConvexHull}.
%
%Now, take a point $p\in \gamma$ contained in some simple pixel $C$ with two auxiliary points on its boundary. Since $C$ is simple, we know which pixel edges of $C$ are intersected by $\del X$, and we have constructed $\gamma$ in such a way that $\gamma$ and $\del X$ connect the same pixel edges of simple pixels.
%\end{proof}
\section{Conclusion}
In this paper we have presented restrictions on pixel configurations in digital images of $r$-regular objects at a reasonable resolution. We have used these restrictions to reconstruct the original object by constructing an object with smooth boundary that is {\color{black}weakly $d$-similar to }%no further than $d$ from 
the original object (where $d$ is the side length of each pixel). This tells us that our reconstruction is not far from the original object {\color{black}and has the right topology, and though that ensures that the sets are not fundamentally different}, sadly {\color{black}it is not quite enough to cover all aspects of human perception of similarity, as discussed in \cite{SK}, \cite{SLS}.}%it does not say anything about the topology of the reconstructed object. 
Ongoing work is aiming at showing that the reconstructed object is in fact strongly $s$-similar to the original one for a suitable $s$. %(where the term $s$-\emph{similar} is borrowed from Stelldinger and Köthe, \cite{SK}, who introduce this notion as a tool for capturing perceived similarity between sets).

%In this paper, we have been treating images formed by letting the intensity of each pixel be the fraction of that pixel covered by the object. However, we mainly treat pixels as being either white, grey or black, hence the above results would also hold if we considered each pixel intensity to be merely a function of the fraction of the pixel covered by the object, provided that that function takes grey pixels to grey pixels. In particular, the above results would also hold true in the kind of digitisations studied by Latecki et al. \cite{LateckiConradGross}. 

We do by no means believe that a Hausdorff distance of $d$ between the original object and our reconstructed object is the optimal  - in fact, we are working on obtaining even stronger bounds on their Hausdorff difference. We also believe that taking the actual intensities of each pixel into account can result in even more precise reconstruction, though there is still a lot of work to be done before we are ready to prove this.

The object that our reconstruction method outputs will in general not be $r$-regular. We may, since the boundary of the reconstructed set consists of the join of finitely many curve segments, calculate the maximal curvature of a reconstructed set - but note that a maximal curvature or $\frac{1}{s}$ is not enough to ensure that our reconstructed set is $s$-regular. Thus we have left the question of regularity of the reconstruction out of this paper, though it is also be an interesting aspect of the reconstruction process.
%\bibliographystyle{plain}
%\begin{thebibliography}
%\bibitem[SLS]{SLS}, P. Stelldinger and L. J. Latecki and M. Siqueira,  \emph{Topological Equivalence between a 3D Object and the Reconstruction of Its Digital Image}, in IEEE Transactions on Pattern Analysis and Machine Intelligence, vol.27, 2007
%\end{thebibliography}
\newpage
\appendix
\section{Proofs of the lemmas from Section \ref{SecConfigurations}}
We here include the proofs that were omitted in Section \ref{SecConfigurations}. To lighten the notation, we will measure distances in units of $d$, so that each grid square has side length 1, and the assumption $r>d\sqrt{2}$ becomes $r>\sqrt{2}$.

\begin{proof}[Proof of Lemma \ref{LemTwoIntersections}]
  Let $x$ and $y$ be two points on the common edge of $B$ and $C$ that belongs to $\del X$, and let $L$ be the line segment
  between them. Note that since $X$ is $r$-regular and $r>\sqrt{2}$,
  $X$ is in particular $\sqrt{2}$-regular (cf. \cite{TC}, Proposition
  A.2).

  Since the distance from $x$ to $y$ is less that $\sqrt{2}$, there
  must be a path $\pi(L)$ in $\del X$ between them, where $\pi$ is the
  projection onto $\del X$, see Section \ref{SecRBasics}. Since the
  projection is continuous and fixed at the endpoints, there must be
  a point $p$ on $\pi(L)$ such that the tangent to $\del X$ at $p$ is
  horisontal. Let $p=(p_1,p_2)$. Since $p\in\del X$ and $X$ is an
  $\sqrt{2}$-regular set, there are balls
  $B_{\sqrt{2}}(x_b)\subseteq X$ and $B_{\sqrt{2}}(x_w)\subseteq X^c$
  such that
  $\overline{B_{\sqrt{2}}(x_b)}\cap\overline{B_{\sqrt{2}}(x_w)}=\sett{p}$,
  and since the tangent to $\del X$ at $p$ is horisontal, the centres
  $x_b$ and $x_w$ must lie on the vertical line through $p$.

  Note that $p\in\pi(L)\subseteq S(L,\sqrt{2})$. By Lemma
  \ref{LemSpindleWidth}, the thickness of $S(L,\sqrt{2})$ is
  $\sqrt{2}-\sqrt{2-\frac{L^2}{4}}\leq
  \sqrt{2}-\sqrt{2-\frac{1}{4}}$. So $d(p, L)\leq \sqrt{2}-\sqrt{2-\frac{1}{4}}$. Then 
  \begin{equation*}
d(x_b,L)\leq d(x_b,p)+d(p,L)\leq 2\sqrt{2}-\sqrt{2-\frac{1}{4}}<1.51
  \end{equation*}
  and
  \begin{equation*}
d(x_b,L)> d(p,L)-d(p,x_b)>\sqrt{2}-\sqrt{2-\frac{1}{4}}-\sqrt{2}=\sqrt{2-\frac{1}{4}}>1
  \end{equation*}
  So $x_b$ belongs to either $A$ or $D$ - let us say $D$. Then $D$ must be black. In fact, since the first
  and last inequality are sharp, the common edge of $B$ and $C$ must
  be interior points of $X$, and hence it contains no intersection
  points. A symmetric argument for $x_w$ shows that $x_w$ must belong to $A$, hence
  $A$ must be white, and that the common edge of $A$ and $B$ cannot
  contain any points of $\del X$.
  
  If $\del X$ is tangent to $L$ at a point $p'$, replacing $p$ with $p'$ in the above argument shows the result.
\end{proof}

\begin{proof}[Proof of Lemma \ref{LemYieldsTwoIntersections}]
Let us name the two grey pixels in the configuration $B$ and $C$, as in the right part of Figure \ref{FigYieldsTwoIntersections}. Choose boundary points $x_C\in C$ and $x_D\in D$. Both of these points are contained in a ball $B_{\sqrt{5}/2}(p)$, where $p$ is the midpoint of the common edge $e$ of pixel $B$ and $C$. By Corollary \ref{CorPathInSpindle} and Lemma \ref{LemSpindleIsIntersection} applied to the projection $\pi$ in stead of $\rho$, there is a path $\gamma$ in $\del X$ from $x_C$ to $x_D$ contained in $B_{\sqrt{5}/2}(p)$. This path must pass the line containing $e$, and since it cannot do so if passing this line means entering a black pixel, it must in fact pass the edge $e$. The endpoints of $e$ are both black, hence if $\del X$ passes $e$ once, it must also pass $e$ a second time, as $\del X$ separates black points from white ones.

But then by Lemma \ref{LemTwoIntersections} $A$ must be black and $D$ must be white (it cannot be the other way around, because then a black and a white pixel would share a corner, meaning that $\del X$ passes through that corner - which is against our assumptions).
\end{proof}

%\begin{figure}
%  \includegraphics[scale=0.5]{CentreGrey}
%  \caption{\emph{Left:} In Lemma \ref{LemNo9greys}, we are considering
%    9 pixels with the middle one grey. We argue that not all of the
%    neighbouring pixels can be grey.  \emph{Right:} The aim is to show
%    that there is a black or a white ball whose center lies closer
%    than $\sqrt{2}$ to $c$, and hence belongs to one of the
%    neighbouring pixels.}
%  \label{FigCentreGrey}
%\end{figure}

\begin{proof}[Proof of Lemma \ref{LemNo9greys}]
Let $c$ be the centre point of the grey pixel $C$. Assume $c\in X$ (the other case is similar). Then $c$ belongs to a black ball of radius $\sqrt{2}$ by Proposition \ref{PropEquivalentRregDefinitions}, hence the centre of this black ball belongs to $B_{\sqrt{2}}(c)$ and thus to either $C$ or one of its neighbours. Since the pixel containing the centre of the black ball must be entirely contained in the black ball, said pixel must be black. Hence one of the neighbours of $C$ must be black.
%  Let $p$ denote the point of $\del X$ that is closest to the centre
%  point $c$ of the middle pixel. Since the centre pixel is grey, it
%  must contain points from $\del X$, so
%  $d(p,c)<\frac{\sqrt{2}}{2}$. There is a black ball
%  $B_{\sqrt{2}}(x_b)$ and a white ball $B_{\sqrt{2}}(x_w)$ such that
%  $\overline{B_{\sqrt{2}}(x_b)}\cap\overline{B_{\sqrt{2}}(x_w)}=\sett{p}$,
%  and since $p$ was the point of $\del X$ closest to $c$, $x_b$ and
%  $x_w$ must both belong to the line through $p$ and $c$, and lie in a
%  distance ${\sqrt{2}}$ from $p$, see Figure \ref{FigCentreGrey},
%  right. Suppose $x_b$ lies on the same side of $p$ as $c$ (if it does
%  not, then $x_w$ does, and then the argument works
%  symmetrically). Then, since $d(p,c)\leq\frac{{\sqrt{2}}}{2}$,
%  \begin{equation*} {\sqrt{2}}=d(x_b,p)=d(x_b,c)+d(c,p)\Rightarrow
%    d(x_b,c)={\sqrt{2}}-d(c,p)\leq {\sqrt{2}}.
%  \end{equation*}
%  So there is a black ball with centre $x_b$ closer than ${\sqrt{2}}$
%  to $c$. But since $\overline{B_{\sqrt{2}}(c)}$ is contained in the 9
%  pixels of our configuration, this means that $x_b$ must belong to
%  one of these pixels. Hence the corresponding pixel must be black.
\end{proof}

\begin{figure}
\includegraphics[scale=0.4]{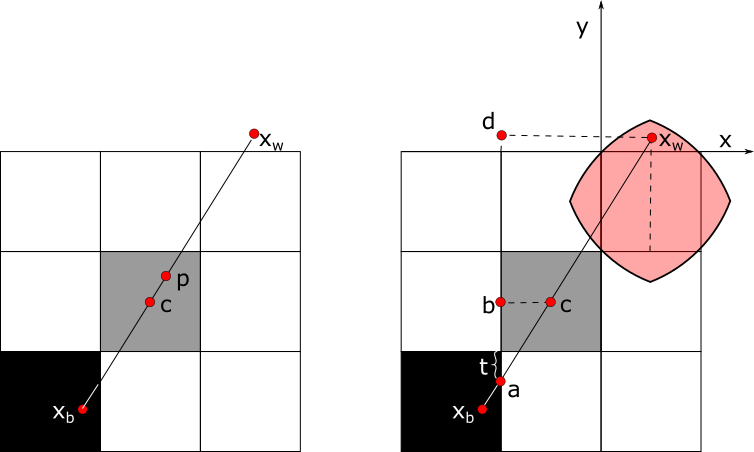}
\caption{We aim to show that the centre of the white ball tangent to $\del X$ at $p$ belongs to the red set $Y$.}
\label{Fig5greys}
\end{figure}

\begin{proof}[Proof of Lemma \ref{Thm5greys}]
Place the configuration in a coordinate system as in the figure,
  such that the pixels has side length 1. The aim will be to show that
  $x_w$ lies so close to the upper right pixel that the white ball
  $B_r(x_w)$ contains the pixel.

  Note first that the line $l$ through $x_b$ and $c$ passes an edge of
  the black pixel, say the right edge. Let $a$ be this intersection
  point, see Figure~\ref{Fig5greys}. To study the limit case, we
  will first assume that $x_b=a$.

  Since $d(a,c)<r$, then rotating the line segment from $a$ to $c$
  about $c$ with an angle $\pi$ one sees that the line segment of
  length $2r$ from $a$ through $c$ has an endpoint $x_w=(x_1,x_2)$
  with $x_1\geq 0$.

  Let $b$ be a point directly above $a$ such that
  $\angle abc=\frac{\pi}{2}$, and let $d$ be a point directly above
  $a$ such that $\angle adx_w=\frac{\pi}{2}$, see
  Figure~\ref{Fig5greysAdd}. Finally, let
  $t=d(a,b)-\frac{1}{2}\geq 0$.

  Now,
  $\tan(\angle bac)=\frac{d(b,c)}{d(a,b)}=\frac{1/2}{1/2+t}\leq 1$, so
  $\angle bac\leq\frac{\pi}{4}$. Thus we have that
  $\frac{d(d,x_w)}{2\sqrt{2}}=\sin(\angle bac)\leq
  sin\left(\frac{\pi}{4}\right)=\frac{1}{\sqrt{2}}$, so
  $x_1=d(d,x_w)-1\leq 1$. So $0\leq x_1\leq1$. Also, by rotational
  symmetry and since $d(a,c)\leq r$, we must have $x_2\geq -1+t$.

  Let $Y$ be the intersection of the four $r$-balls with centres
  $(0,0)$, $(1,0)$, $(0,-1)$ and $(1,-1)$ (the corners of the upper
  right pixel) - this is the red set in Figure~ \ref{FigCloseup5greys}. Then
  $Y$ is a convex set containing $(0,0)$, $(1,0)$, $(0,-1)$ and
  $(1,-1)$, and hence the entire upper right pixel. Any
  point in $Y$ is closer than $r$ to all the points $(0,0)$, $(1,0)$,
  $(0,-1)$ and $(1,-1)$, so an $r$-ball with centre in $Y$ contains
  all of the upper right pixel. Hence we aim to show that $x_w\in Y$.

  Notice that the two triangles $abc$ and $adx_w$ are
  equiangular. Hence
  \begin{equation*}
    \frac{1+x_1}{2\sqrt{2}}=\frac{\frac{1}{2}}{d(a,c)}=\frac{\frac{1}{2}}{\sqrt{(t+\frac{1}{2})^2+\frac{1}{4}}}\Rightarrow t=\sqrt{\frac{2}{(x_1+1)^2}-\frac{1}{4}}-\frac{1}{2}.
  \end{equation*}
  Furthermore, we have that
  \begin{equation*}
    8=(2+t+x_2)^2+(1+x_1)^2,
  \end{equation*}
  so
  \begin{equation*}
    x_2=\sqrt{8-(1+x_1)^2}-2-t=\sqrt{8-(1+x_1)^2}-\frac{3}{2}-\sqrt{\frac{2}{(x_1+1)^2}-\frac{1}{4}}
  \end{equation*}
  So we may express $x_2$ as a function of $x_1$. Since
  $0\leq x_1\leq 1$ and $x_2\geq -1$, we need only check that $x_w$
  lies under the upper border of $Y$ on the interval $[0,1]$, i.e. we
  must check that $x_2(x_1)$ lies under the function
  \begin{equation*}
    \tilde{f}(x_1)=\begin{cases}
      \sqrt{2-(x_1-1)^2}-1 & x_1\leq\frac{1}{2},
      \\
      \sqrt{2-x_1^2)}-1 & x_1\geq\frac{1}{2},
    \end{cases}
  \end{equation*}
  on $[0,1]$. However, it turns out to be easier to check that
  $x_2(x_1)$ lies under the graph of the function
  \begin{equation*}
    f(x_1)=\begin{cases}
      (\sqrt{7}-\frac{1}{2})x_1 & 0\leq x_1\leq\frac{1}{2},
      \\
      -(\sqrt{7}-\frac{1}{2})x_1+(\sqrt{7}-\frac{1}{2}) & \frac{1}{2}\leq x_1\leq1.
    \end{cases}
  \end{equation*}
  Note that $f([0,1])\subseteq Y$, since the image of $f$ is just the
  union of two line segments, both of which have endpoints in the
  convex set $Y$.

  To show that $x_2$ lies somewhere below $f$, note first that
  \begin{align*}
    \frac{d^2}{dx_1^2}x_2={} &-\sqrt{8-(x_1+1)^2}-\frac{(x_1+1)^2}{\sqrt{8-(x_1+1)^2}^3}\\
    &+\frac{4}{\sqrt{2(x_1+1)^2-\frac{1}{4}(x_1+1)^4}^3}-\frac{6}{\sqrt{2(x_1+1)^6-\frac{1}{4}(x_1+1)^8}}\\
    \leq {} & \frac{4}{\sqrt{2(x_1+1)^2-\frac{1}{4}(x_1+1)^4}^3}-\frac{6}{\sqrt{2(x_1+1)^6-\frac{1}{4}(x_1+1)^8}}\\
    = {} &\frac{1}{2(x_1+1)^3}\left( \frac{3x_1^2+6x_1-13}{\sqrt{2(x_1+1)^2-\frac{1}{4}(x_1+1)^4}^3}\right)\\
    \leq {}  & 0,
  \end{align*}
  where the first inequality comes from the fact that the first two
  terms of the derivative is negative, and the last inequality comes
  from observing that $3x_1^2+6x_1-13\leq 0$ on $[0,1]$.

  Now, we want to show that $f-x_2\geq 0$. Note that since $f$ is
  (piecewise) linear, we get that
  \begin{equation*}
    \frac{d^2}{dx_1^2}(f-x_2)=\frac{d^2}{dx_1^2}(-x_2)\geq 0
  \end{equation*}
  on $[0,1/2]$ and $[1/2,1]$, so $\frac{d}{dx_1}(f-x_2)$ is
  increasing. Now,
  \begin{equation*}
    \frac{d}{dx_1}(x_2)=-\frac{1+x_1}{\sqrt{8-(1+x_1)^2}}+\frac{2}{\sqrt{\frac{2}{(1+x_1)^2}-\frac{1}{4}}}\frac{1}{(1+x_1)^3},
  \end{equation*}
  so on $[0,1/2]$
  \begin{align*}
    \frac{d}{dx_1}(f-x_2)&=\sqrt{7}-\frac{1}{2}+\frac{1+x_1}{\sqrt{8-(1+x_1)^2}}-\frac{2}{\sqrt{\frac{2}{(1+x_1)^2}-\frac{1}{4}}}\frac{1}{(1+x_1)^3}
    \\
    &\geq \frac{d}{dx_1}(f-x_2)\vert_{x_1=0}
    \\
    &=\sqrt{7}-\frac{1}{2}+\frac{1}{\sqrt{7}}-\frac{4}{\sqrt{7}}
    \\
    &>0,
  \end{align*}
  and on $[1/2,1]$,
  \begin{align*}
    \frac{d}{dx_1}(f-x_2)&=-\sqrt{7}+\frac{1}{2}+\frac{1+x_1}{\sqrt{8-(1+x_1)^2}}-\frac{2}{\sqrt{\frac{2}{(1+x_1)^2}-\frac{1}{4}}}\frac{1}{(1+x_1)^3}\\
    &\leq \frac{d}{dx_1}(f-x_2)\vert_{x_1=1}\\
    &=-\sqrt{7}+\frac{1}{2}+{1}-\frac{1}{2}\\
    &<0.
  \end{align*}
  So $f-x_2$ is increasing on $[0,1/2]$ and decreasing on
  $[1/2,1]$. Hence, on $[0,1/2]$
  \begin{equation*}
    (f-x_2)(x_1)\geq (f-x_2)(0)=-\frac{\sqrt{7}}{2}+\frac{3}{2}>0
  \end{equation*}
  and similarly, on $[1/2,1]$
  \begin{equation*}
    (f-x_2)(x_1)\geq (f-x_2)(1)=0.
  \end{equation*}
  Putting the last two equations together we see that
  $f(x_1)-x_2(x_1)\geq 0$ everywhere on $[0,1]$, so $x_2\leq f$ as
  claimed. So if $x_b=a$, then $x_w$ belongs to the set $Y$.

  \begin{figure}
    \includegraphics[scale=0.5]{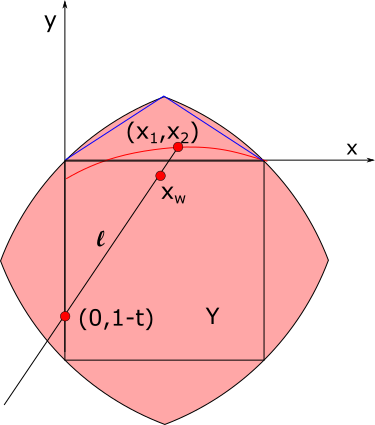}
    \caption{Close-up on the upper right pixel. The red graph is the
      graph of $x_2(x_1)$, and the blue graph is the graph of
      $f(x_1)$. The point $(x_1,x_2)$ on $l$ is chosen such that
      $d((x_1,x_2),a)=2r$, and hence $x_b$ lies closer to $c$ than
      $(x_1,x_2)$ does.}
    \label{FigCloseup5greys}
  \end{figure}

  Suppose now that $x_b$ is just any point in the lower left pixel
  that is closer than $\sqrt{2}$ to $c$, and suppose that the line $l$
  through $x_b$ and $c$ leaves the lower left pixel in a point $a$ on
  the right pixel edge.  Then $(0,1-t)$ lies on $l$ and inside $Y$. By
  what we just showed, the point $(x_1,x_2)$, $x_1\geq0$, on $l$ that
  is at a distance $2r$ from $a$ is also in $Y$, hence the entire line
  segment from $(0,1-t)$ to $(x_1,x_2)$ is in $Y$, since $Y$ was
  convex, see Figure~\ref{FigCloseup5greys}.  But noticing that
  $r\geq d(a,c)=d(c,(0,1-t))$, we get that
  \begin{equation*}
    d(x_b,(0,1-t))=d(x_b,c)+d(a,c)\leq 2r
  \end{equation*}
  and
  \begin{equation*}
    d(x_b,(x_1,x_2))=d(x_b,a)+d(a,x_w)\geq d(a,x_w)=2r.
  \end{equation*}
  Combining these equations, we see that
  $d(x_b,(0,1-t))\leq d(x_b,x_w)\leq d(x_b,(x_1,x_2))$. Hence $x_w$
  belongs to the line segment between $(0,1-t)$ and $(x_1,x_2)$ which
  was contained in $Y$, so $x_w\in Y$.
\end{proof}

\begin{proof}[Proof of Lemma \ref{LemSkalfarvessorte1}]
Let us show $(i)$. Let $x$ and $y$ be
  corner points of the two pixels as in Figure
  \ref{FigSkalfarvessorte1}, and let $L$ denote the line between them.

  If there are points of $ X^C$ in $L$, then $\del X$ must either be tangent to $X$ or intersect $L$ in several points (since the endpoints of $L$ clearly all
  belong to $X$). By Lemma \ref{LemTwoIntersections} this means that
 either the pixel above $C$ or the pixel below pixel $B$ is white. Both of these pixels share a corner with a black pixel. But by the proof of Lemma \ref{LemTwoIntersections}, the black corner point must be an interior point of $X^C$ and hence white - a contradiction. %Let $W$ be the pixel containing $p$. The corners of $C$
%  are interior points of $B_{\sqrt{2}}(p)$ and hence of $W$. But $W$
%  had a black corner, so this is impossible. 
So $L\subseteq\Int (X)$.%, because if $L$ contained a point $q$ of
%  $\del X$ but no points of $X^C$, $\del X$ would have to have a
%  horisontal tangent at $q$. But then there would be a white
%  $\sqrt{2}$-ball tangent to $q$ and centred directly above or
%  below~$q$. However, such a ball would have to contain a point from a
%  black pixel -- a contradiction. So $L\subseteq\Int (X)$.

  If $B$ is not black, pick white points $b\in \Int (B)$ and
  $c\in\Int (C)$. Let $L_{bc}$ denote the line between them. Then
  there is a path $\rho_{X^C}(L_{bc})$ in $X^C\cup\del X$ connecting
  $b$ and $c$, and this path belongs to all balls of radius less than
  $r$ that contains both $b$ and $c$, (cf. Section
  \ref{SecRBasics}). In particular,
  $\rho_{X^C}(L_{bc})\subseteq {B_{\sqrt{2}}(x)}$, since this ball
  contains all of $B$ and all of $C$.

  Let $\gamma$ be the piecewise linear path from $z$ though $x$ and
  $y$ to $w$. Then $\gamma$ is contained in $\Int(X)$ and separates
  $B$ from $C$ inside $B_{\sqrt{2}}(x)$. Hence $\rho_{X^C}(L_{bc})$
  must intersect $\gamma$ somewhere, but this is impossible, since
  $\gamma\subseteq\Int(X)$ and
  $\rho_{X^C}(L_{bc})\subseteq X^C\cup\del X$. So $B$ cannot contain
  any white points, and hence it must be black.

  A similar reasoning can be applied to $A$: If $A$ is not black, pick
  a white point $a\in A$, and let $L_{ac}$ denote the line between $a$
  and $c$ (where $c\in C$ is the point we chose earlier). Then there
  is a path $\rho_{X^C}(L_{ac})$ in $X^C\cup\del X$ connecting $a$ and
  $c$, and this path must belong to the ball $B_{\sqrt{2}}(x)$. But
  since $\gamma$ also separates $A$ from $C$ inside
  $B_{\sqrt{2}}(x)$, $\rho_{X^C}(L_{ac})$ must intersect $\gamma$
  somewhere. But this is impossible, since $\gamma\subseteq\Int(X)$
  and $\rho_{X^C}(L_{ac})\subseteq X^C\cup\del X$ as before. So $A$
  cannot contain any white points, and hence it must also be black.
  
  The second part of this proof also proves $(ii)$. To prove $(iii)$, we apply Lemma \ref{Thm5greys} to argue that one of the pixels $A_1$, $A_2$, $A_3$, $B_1$, $B_2$, $B_3$ is black.
  
Indeed, suppose none of the pixels $A_1$, $A_2$, $A_3$, $B_1$, $B_2$, $B_3$ were black. Then $A_1$, $A_3$, $B_1$ and $B_3$ would have to be grey, since black and white pixels cannot be neighbours by assumption. But then by Remark \ref{Rem3x3Pix}, either $A_2$ or $B_2$ would have to be black - a contradiction. So one of the pixels has to be black.
   If $A_1$, $A_3$, $B_1$ or $B_3$ is black, we are in situation $(i)$ and may use this proof to complete the proof of $(iii)$. If $A_2$ is black, and neither $A_1$ or $A_3$ is black Lemma \ref{LemYieldsTwoIntersections} shows that both $B_1$ and $B_3$ is black, and we are again in the situation of case $(i)$. If $A_2$ and either $A_1$ or $A_3$ is black, we are in the situation of case $(i)$. The proof works equivalently if $B_2$ is black. So $(iii)$ is true when one of the pixels $A_1$, $A_2$, $A_3$, $B_1$, $B_2$, $B_3$ are black.
\end{proof}

\begin{figure}
  \begin{subfigure}{0.3\linewidth}
    \centering
    \includegraphics[scale=0.4]{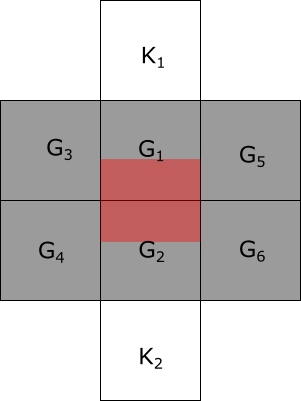}
    \caption{We are considering 6 grey pixels in a $2\times3$
      combination, and we wish to show that $\del X\cap(G_1\cup G_2)$
      belongs to the red set in the figure, and that one of the pixels
      $K_1$, $K_2$ must be black, and the other one white.}
    \label{Fig6greysAdd}
  \end{subfigure}\hspace{0.5cm}
  \begin{subfigure}{0.3\linewidth}
    \centering
    \includegraphics[scale=0.4]{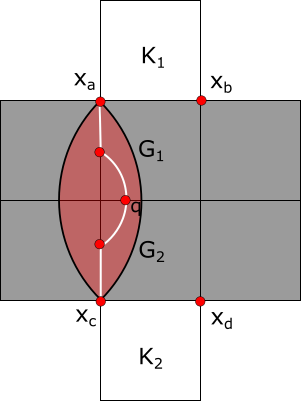}
    \caption{If both $K_1$ and $K_2$ were white, then points $x_a$
      and $x_c$ would be joined by path in $X^C\cup \del X$, and so
      would points $x_b$, $x_d$ (these are the white paths in the
      figure).}
    \label{Fig6greys2spindlesAdd}
  \end{subfigure}
  \hspace{0.5cm}
  \begin{subfigure}{0.3\linewidth}
    \includegraphics[scale=0.4]{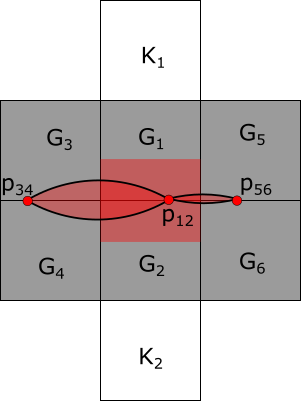}
    \caption{The projection $\pi$ yields a path $\gamma$ in
      $\del X$ from $p_{34}$ through $p_{12}$ to $p_{56}$, and this
      path lives inside the spindles between points $p_{34}$ and
      $p_{12}$, and points $p_{12}$ and $p_{56}$.}
    \label{Fig6greysWithSpindlesAdd}
  \end{subfigure}
\end{figure}

\begin{proof}[Proof of Lemma \ref{Lem2x3grey}]
  Let us start by discussing (i).

  Consider $G_1$ as in Figure \ref{Fig6greys2spindlesAdd}, and look at the configuration of $3\times3$ pixel with $G_1$ as the centre pixel. Then all but the three upper neighbours of $G_1$ are grey. By Remark \ref{Rem3x3Pix}, the upper d-neighbour $K_1$ of $G_1$ cannot be grey, hence it must be either black or white. By the same reasoning, $K_2$ must be either black or white.% the centre pixel in a configuration of $3\times3$
%  pixels. By the proof of Lemma~\ref{LemNo9greys}, one of the 8 neighbours of $G_1$ contains the centre of either a white or a black
%  ball.  By Theorem~\ref{Thm5greys}, it cannot be one of the pixels
%  sharing only a vertex with $G_1$, hence it has to be the remaining
%  pixel $K_1$. But if $K_1$ contains the centre of a white or a black
%  ball, $K_1$ must itself be all black or all white.

  It remains to prove that $K_1$ and $K_2$ cannot have the same
  colour, so suppose that $K_1$, $K_2$ are both white. Let $x_a$,
  $x_b$ be the lower corners of $K_1$ and $x_c$, $x_d$ the upper
  corners of $K_2$, as in Figure~\ref{Fig6greys2spindlesAdd}. Note
  that these points are all elements of $X^C\cup\del X$.

  Let $L_{ac}$ be the line segment between $x_a$ and $x_c$, and let
  $L_{bd}$ be the line segment between $x_b$ and $x_d$. Since
  $d(x_a,x_c)=d(x_b,x_d)=2<2\sqrt{2}$, the map $\rho$ maps these line
  segments to continuous paths in $X^C\cup\del X$ by projecting points
  of $\text{Int}(X)$ to $\del X$ and fixing all other points.

  Now, $L_{ac}$ and $L_{bd}$ cannot both be contained entirely in
  $\overline{X^C}$, since this would imply that $\rho_{X^C}$ kept them
  fixed. But since $G_1$, $G_2$ were grey, they must contain a point
  of $\text{Int}(X)$ which in turn would belong to some black $r$-ball
  $B_r(x)\subseteq X$. However, an interior point of such an $r$-ball would
  have to intersect the boundary of $G_1\cup G_2$, which hence cannot
  be a subset of $X^C\cup\del X$.

  So assume that $\rho_{X^C}$ does not fix $L_{ac}$. Then there is a
  point $q$ on $\rho_{X^C}(L_{ac})$ that is furthest from $L_{ac}$ and hence is a point on $\del X$ with a vertical tangent. This point must
  belong to $S(L_{ac},r)$ since all of $\rho_{X^C}(L_{ac})$ does, and there
  must be a black and a white ball that are tangent to $\del X$ at
  $q$. Since the thickness of $S(L_{ac},r)\leq\sqrt{2}-1$, we must have $d(q,L_{ac})\leq\frac{1}{2}$. But then the centre $x$ of the left $\sqrt{2}$-ball tangent to $\del X$ at $q$ must satisfy 
  \begin{equation*}
  1<d(x,q)-d(q,L_{ac})\leq d(x,L_{ac})\leq d(x,q)+d(q,L_{ac})<2\sqrt{2}-1
  \end{equation*}  
   So the centre $x$ of a white or black $\sqrt{2}$-ball belongs to one of the grey pixels $G_5$, $G_6$ left of $L_{ac}$, which is
  impossible, since that pixel would thus be entirely contained in the ball, and hence not grey. So we cannot have that both $K_1$ and $K_2$ have the
  same colour, completing the proof of (i).

  For (ii), let $N$ be the line separating the upper three grey pixels
  from the lower three grey pixels.  We wish to prove that $\del X$
  intersects $N$ at least two times inside the four leftmost grey pixels,
  and also at least two times in the four rightmost grey pixels.

  Let $L_{ac}$ be the vertical line separating $G_3$ and $G_4$ from the
  other grey pixels, and similarly, let $L_{bd}$ be the vertical line
  separating $G_5$ and $G_6$ from the others. Pick a boundary point
  $x_i$ in each of the grey pixels $G_i$, $i=1,\dots, 6$, and let
  $L_{ij}$ be the line segment joining $x_i$ and $x_j$,
  $i,j=1,\dots,6$.

  Using the projection $\pi:N_r\to\del X$, we know that there is a
  path $\pi(L_{12})$ in $\del X$ from $x_1$ to $x_2$, and this path
  must necessarily cross $N$ somewhere. If it passes $L_{ac}$ on the way,
  it must do so at least twice. By an argument similar to the one we
  used to prove part (i), there must then be a boundary point $q$ with
  vertical tangent line. Since $q$ belongs to $S(L_{12},r)$, the right
  osculating ball at $q$ must belong to either $G_5$ or $G_6$, which
  yields a contradiction. Hence $\pi(L_{12})$ does not intersect
  $G_L$, and by a symmetric argument, it does not intersect $L_{bd}$
  either. So it must intersect $N$ at a point $p_{12}$ on the common
  edge of $G_1$ and $G_2$.

  A similar argument shows that the line segments $\pi(L_{34})$ must intersect $N$ in a point $p_{34}$ the common edge
  of $G_3$ and $G_4$, and hat $\pi(L_{56})$ intersects $N$ in a point $p_{56}$ the common edge of $G_5$
  and $G_6$, respectively.

  So we have three points of $\del X$ on $N$. Using the projection on the
  line segments between them, we get a path $\gamma$ in $\del X$ from
  $p_{34}$ through $p_{12}$ to $p_{56}$, and this path must live
  inside the spindles $S(\sqrt{2},p_{12},p_{34})$, and
  $S(\sqrt{2},p_{12},p_{56})$, see
  Figure~\ref{Fig6greysWithSpindlesAdd}.

  Since the maximum height of such a spindle is $\sqrt{2}-1$,
  then $\gamma$ must belong to the red part of the pixels
  $G_1\cup G_2$. Note that there cannot be any other elements of
  $\del X$ in $G_1\cup G_2$ than those of $\gamma$, for suppose $y_1$
  was a point in $G_1\cap \del X$ that did not belong to $\gamma$, and
  let $y_2$ be a point on $\gamma$ on the same vertical line as
  $y_1$. Let $L_y$ be the line segment between $y_1$ and $y_2$. Since
  $y_1$ and $y_2$ would be closer than $2r$ to each other, they would
  be connected through the path $\pi(L_y)$, and by an argument similar
  to the former ones, there would be a point $y$ on $\pi(L_y)$ where
  $X$ has horisontal normal vector. Suppose $y$ is located to the left
  of the vertical line through the centre of $G_1$. There is a black
  and a white ball that are tangent to $\del X$ at $y$, and one of
  them would have a centre $x$ at a distance $\sqrt{2}$ to the right
  of $y$. But since $y$ belonged to the left half of the
  configuration, this means that the centre $x$ would belong to one of
  the grey pixels $G_1$, $G_2$, $G_5$ or $G_6$ - which would give a
  contradiction.

  So since the path $\gamma$ belongs to the red part of $G_1\cup G_2$,
  the proof is complete.
\end{proof}

\begin{proof}[Proof of Lemma \ref{LemImpossible2x3}]
Suppose this configuration did occur, and look at the $3\times3$-configuration that also includes the three pixels to the right of it (see Figure \ref{Fig2x3Impossible}, right). Not all of the red pixels can be grey, since this would violate Lemma \ref{Thm5greys} and Lemma \ref{LemNo9greys}. Hence one of them must be another colour, say black.

If the upper red pixel were black, Lemma \ref{LemYieldsTwoIntersections} would require the bottom grey pixel to be white - a contradiction.

If the middle red pixel were black, then by the first part of Lemma \ref{LemSkalfarvessorte1} one of the grey ones in the middle column of the configuration would be so, too - a contradiction again.

If the bottom red pixel were black, then by the third part of Lemma \ref{LemSkalfarvessorte1} one of the grey ones in the middle column of the configuration would be so, too - yet another contradiction.

Hence there can be no legal way to colour the red pixels, so this configuration cannot occur.
\end{proof}

\begin{figure}
\includegraphics[scale=0.4]{Impossible3x3}\hfill
\includegraphics[scale=0.4]{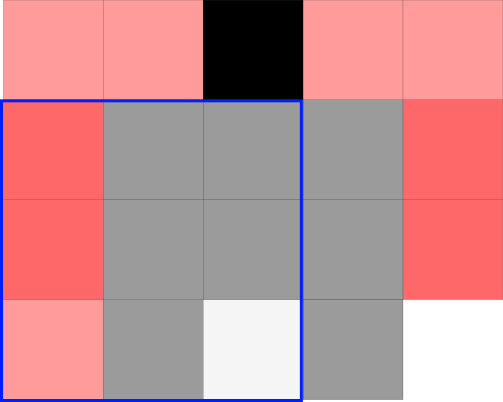}\hfill
\includegraphics[scale=0.4]{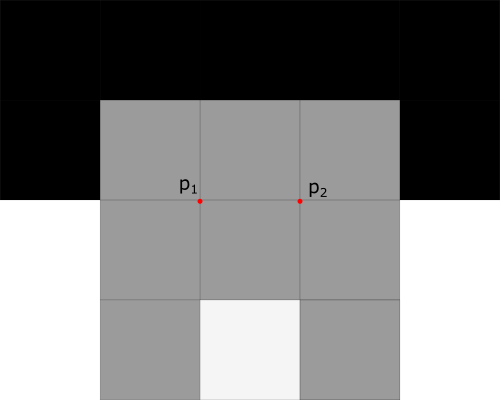}
\caption{The configuration to the left in this figure cannot occur in the image of an $r$-regular object with $d\sqrt{2}<r$. As a first part of the proof, we aim to show that the configuration is part of a larger configuration looking like the one on the right.}
\label{FigImpossible3x3}
\end{figure}

\begin{proof}[Proof of Lemma \ref{LemTrickyConfiguration}]
Before we go on to the proof, we will state and prove the following lemma for later use:
\begin{lem}\label{LemCircleIntersections}
Let $A\subseteq\R^2$ be a convex polygon, $r>0$. The intersection of all $r$-balls centred in $A$ is equal to the intersection of all $r$-balls centred at the vertices of $A$.
\end{lem}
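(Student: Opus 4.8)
The plan is to prove the two inclusions separately. Write $v_1,\dots,v_k$ for the vertices of $A$. The inclusion $\bigcap_{c\in A}B_r(c)\subseteq\bigcap_{i=1}^k B_r(v_i)$ is immediate and requires no work: each vertex $v_i$ is itself a point of $A$, so the family of balls indexed by the vertices is a subfamily of the balls indexed by all of $A$, and intersecting over a larger family can only give a smaller set.

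The substance is in the reverse inclusion, and the hard part (such as it is) is to see that being within distance $r$ of every \emph{vertex} already forces being within distance $r$ of every \emph{point} of $A$. First I would fix a point $x\in\bigcap_{i}B_r(v_i)$, so that $d(x,v_i)\le r$ for each $i$. The key structural fact I would invoke is that a convex polygon is the convex hull of its vertices, so an arbitrary $c\in A$ may be written as $c=\sum_i\lambda_i v_i$ with $\lambda_i\ge 0$ and $\sum_i\lambda_i=1$. Using $\sum_i\lambda_i=1$ to rewrite $x-c=\sum_i\lambda_i(x-v_i)$ and then applying the triangle inequality (equivalently, convexity of the Euclidean norm) gives
\[
d(x,c)=\left\|\sum_i\lambda_i(x-v_i)\right\|\le\sum_i\lambda_i\norm{x-v_i}\le\max_i d(x,v_i)\le r.
\]
Hence $x\in B_r(c)$, and since $c\in A$ was arbitrary, $x\in\bigcap_{c\in A}B_r(c)$, which is exactly the inclusion we wanted.

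I do not expect a genuine obstacle here; the whole statement is really just the remark that the maximum of the convex function $c\mapsto d(x,c)$ over a convex polytope is attained at a vertex. The only point needing a word of care is the open-versus-closed convention for the balls: since there are only finitely many vertices, if the balls are taken open then $\max_i d(x,v_i)<r$ strictly, and the displayed chain of inequalities still yields $d(x,c)<r$, so the argument is insensitive to the convention.
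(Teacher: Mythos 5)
Your proof is correct, but it takes a different route from the paper's. The paper reduces the statement to the one-dimensional case: it first observes that it suffices to prove the claim for a line segment (if the intersection of the $r$-balls at the two endpoints of a segment lies in the $r$-ball at every point of that segment, the polygon case follows by applying this first to edges and then, via convexity, to segments through an arbitrary interior point), and then verifies the segment case by a coordinate computation, placing the segment on the $x$-axis and comparing $(x-p_1)^2+p_2^2$ with $(x_2-p_1)^2+p_2^2$. You instead argue globally: writing an arbitrary $c\in A$ as a convex combination $c=\sum_i\lambda_i v_i$ of the vertices and using the triangle inequality, $\norm{x-c}\leq\sum_i\lambda_i\norm{x-v_i}\leq\max_i d(x,v_i)$, which is exactly the observation that $c\mapsto d(x,c)$ is convex and so attains its maximum over the polygon at a vertex. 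Your argument is shorter, avoids both the coordinate computation and the two-stage reduction (whose bootstrapping step the paper leaves rather terse), and generalises verbatim to convex polytopes in $\R^n$; the paper's approach buys nothing extra here beyond staying entirely within elementary plane geometry, in keeping with the style of the surrounding appendix. Your closing remark on the open-versus-closed ball convention is a genuine point of care, handled correctly: finiteness of the vertex set keeps the strict inequality $\max_i d(x,v_i)<r$ available when the balls are open, which is in fact the convention the paper sets up in Section~\ref{SecRBasics}.
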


\begin{proof}
It suffices to show the theorem for lines, since if it holds for lines, it holds for any edge of $A$, and hence for all of $A$ by convexity.

So let $A$ be a line with endpoints $(x_1,0)$, $(x_2,0)$. Let $(p_1,p_2)\in B_r((x_1,0))\cap B_r((x_2,0))$ and $(x,0)\in A$. Assume $p_1\leq x$ - the other case is symmetrical. Then $\norm{(x,0)-(p_1,p_2)}^2=(x-p_1)^2+p_2^2<(x_2-p_1)^2+p_2^2=\norm{(x_2,0)-(p_1,p_2)}^2<r^2$, because $x_2$ must be further from $p_1$ than $x$ when $p_1<x<x_2$. But this means that $B_r((x_1,0))\cap B_r((x_2,0))\subseteq B_r((x,0))$, so $B_r((x_1,0))\cap B_r((x_2,0))\cap B_r((x,0))=B_r((x_1,0))\cap B_r((x_2,0))$.
\end{proof}

Now we turn to the proof of Lemma \ref{LemTrickyConfiguration}.

Suppose the configuration did occur. Let us first argue that then it must be a part of a larger configuration looking like the one in Figure \ref{FigImpossible3x3}, right.

Since the top 6 pixels of the configuration to the left in the figure are grey and the lower middle one is white, the middle pixel just above the configuration must be black by Lemma \ref{Lem2x3grey}, as indicated in Figure \ref{FigImpossible3x3} middle.

Similarly, look at the left neighbour configuration of the one in the figure (this is the one with the blue frame in Figure \ref{FigImpossible3x3}, middle). This configuration has a grey middle and a white corner. Combining Remark \ref{Rem3x3Pix} and Lemma \ref{LemSkalfarvessorte1}, this means that either the top left or the middle left pixel (the two darker red pixels in the figure) must also be black. The same is true for the right neighbour configuration. By Lemma \ref{LemSkalfarvessorte1} $(i)$ and $(iii)$, if one of the two dark red pixels to the left is black, then the upper dark red pixels must be black, and so must the 4 red pixels in the top row in Figure \ref{FigImpossible3x3}, too. So if the configuration did occur in a digital image, it would have to sit in a configuration like the one in Figure \ref{FigImpossible3x3}, right.

Now, consider the two red points $p_1$ and $p_2$ at the top corners of the centre pixel in Figure \ref{FigImpossible3x3} right. These cannot be black: If they were, then $\del X$ must intersect one of the edges of the top left and right grey pixel at least twice, which would violate Lemma \ref{LemTwoIntersections}. So they must both be white.

Now, since all corners of the centre pixel $C$ are white and the pixel itself is grey, the boundary $\del X$ must intersect at least one edge of $C$ at least twice. It cannot be the bottom edge, and it cannot be either of the two vertical edges either by Lemma \ref{LemTwoIntersections}, so $\del X$ must intersect the line between $p_1$ and $p_2$ at least twice. We now aim to show that this line is in fact contained in the set of white points, so that it cannot contain any points of $\del X$, giving us a contradiction.

\begin{figure}
\includegraphics[scale=0.4]{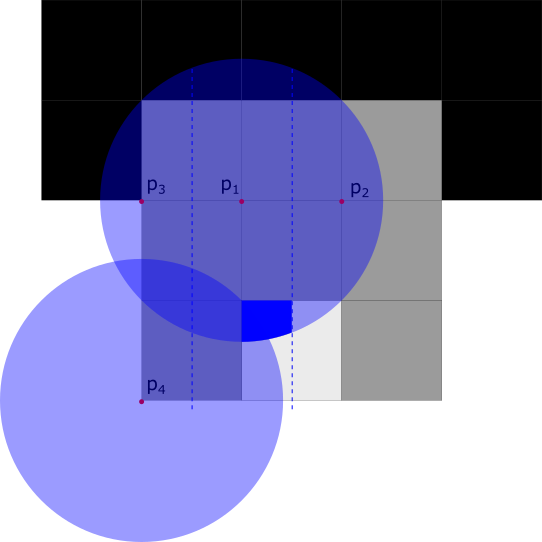}
\hfill
\includegraphics[scale=0.4]{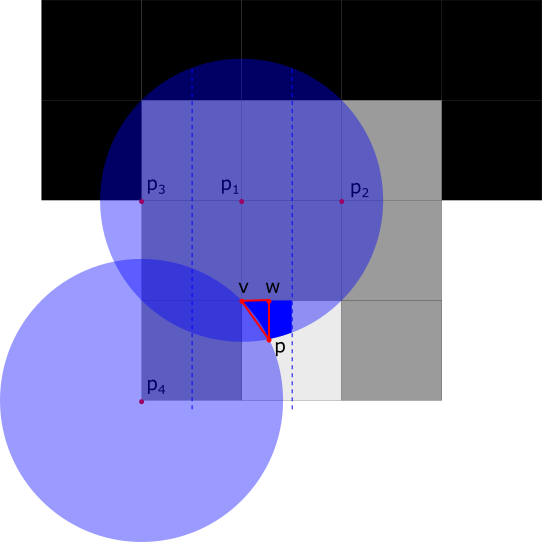}
\caption{Left: Since $p_1$ is white, it must lie in a white $\sqrt{2}$-ball $B$ centred somewhere inside $B_{\sqrt{2}}(p_1)$ (the blue circle. Since $B$ must contain neither $p_2$ nor $p_3$, the centre of $B$ must lie between the two dashed lines. Right: A point in the left part of the blue set $T'$ must belong to the triangle with vertices $p$, $v$ and $w$.}
\label{FigImpossible3x3B}
\end{figure}

Since $p_1$ is white, it is contained in some white $\sqrt{2}$-ball $B$. The centre of $B$ lies somewhere inside $B_{\sqrt{2}}(p_1)$. Since $B$ cannot contain the black corner $p_3$ (see Figure \ref{FigImpossible3x3B} left), its center must lie closer to $p_1$ than to $p_2$, hence it must lie to the right of the vertical line midway between the two points. Likewise, $B$ cannot contain both $p_1$ and $p_2$ without also containing the entire line between them, so the centre of $B$ must also lie closer to $p_1$ than to $p_2$, that is, to the left of the vertical line midway between the two points.

Finally, the centre of $B$ can only belong to a white pixel. Hence it must belong to the bright blue part of the white pixel in Figure \ref{FigImpossible3x3B}, left. Let us call this set $T$.

Next, consider the bottom left grey pixel $C$, and let $p_4$ be its lower left corner. Since $T$ is a part of the upper left quarter of the white pixel next to it, any point in $T$ is further from $p_4$ than from any of the other three corners of $C$. Hence if $B$ contained $p_4$, it would also contain all the corners of $C$ that are closer to its centre, and hence it would contain all of $C$ which would then not be grey. So the centre of $B$ must lie further away from $p_4$ than $\sqrt{2}$, hence outside the ball $B_{\sqrt{2}}(p_4)$. Let $T'=T\backslash B_{\sqrt{2}}(p_4)$, the blue set in Figure \ref{FigImpossible3x3B}, right.

By calculating the intersection $p$ between the boundaries of $B_{\sqrt{2}}(p_1)$ and $B_{\sqrt{2}}(p_4)$ inside the white pixel, we find that they intersect in a point that is a distance $\frac{1}{10}(2\sqrt{15}-5)\approx 0.27$ from the left edge of the white pixel and a distance $\sqrt{3/20}\approx 0.39$ from the top edge of the white pixel.

Let $m$ be the midpoint of the line between $p_1$ and $p_2$. A calculation shows that $d(p,m)<\sqrt{2}$, so $m\in B_{\sqrt{2}}(p)$. Any point of $T'$ to the right of $p$ is closer to $m$ than $p$ is, so any ball centred here must also contain $m$.

Consider a point in $T'$ left of $p$. Such a point must be contained in the triangle with corners $p$, $v$ and $w$ as seen in Figure \ref{FigImpossible3x3B}, right. Here $v$ is the upper left vertex of the white pixel, and $w$ is the point on the edge of the white pixel that is directly above $p$. A calculation shows that $m\in B_{\sqrt{2}}(p)\cap B_{\sqrt{2}}(v)\cap B_{\sqrt{2}}(w)$, which by Lemma \ref{LemCircleIntersections} means that $m$ belongs to $B_{\sqrt{2}}(x)$ for any $x$ in the left part of $T'$. Since the same was true for any point right of $p$, $m$ belongs to $B_{\sqrt{2}}(x)$ for any $x\in T'$. But since $p_1$ also belongs to $B_{\sqrt{2}}(x)$ for any $x$ in $T'$, any white ball containing $p_1$ also contains $m$, and hence the line segment from $p_1$ to $m$.

Repeating this argument for $p_2$, any white ball containing $p_2$ also contains $m$, and hence it contains the entire line segment from $m$ to $p_2$.

But then each point on the line segment from $p_1$ to $p_2$ is contained in a white ball - a contradiction.
\end{proof}

\begin{proof}[Proof of Theorem \ref{ThmConfigurationsUsualResolution}]
Combining Lemmas \ref{LemSkalfarvessorte1}, \ref{LemTwoIntersections}, \ref{LemYieldsTwoIntersections},  \ref{LemNo9greys}, \ref{Thm5greys}, \ref{LemImpossible2x3} and \ref{LemTrickyConfiguration}, we get the result with the exception of the configuration located at $(23,17)$ in Figure \ref{FigMany3x3Configurations}. But this configuration is also impossible: Let $C$ be the grey centre pixel. $C$ must contain some boundary point $p\in \del X$, so there would have to be a white ball of radius $r$ tangent to $\del X$ at $p$, and in particular, there would have to be a white ball of radius $1$ with $p$ on its boundary. But any point that is closer than $1$ to a point $p$ in $C$ would either have to belong to either $C$ or to one of the neighbouring black pixels, so the same must be true for the centre of the white $1$-ball tangent to $p$. If the centre belonged to a black pixel it would not be black, and if it belonged to $C$, the white ball of radius $d$ would contain some set of interior points of a pixel adjacent to $C$ - but these are all black. Hence we reach a contradiction.
%Since the centre pixel is grey, it contains a point $x\in \del X$. Then somewhere on the boundary of $B_{\sqrt{2}}(x)$ there is the centre $c$ of a white $\sqrt{2}$-ball. Since any point that is closer than $\sqrt{2}$ to the centre pixel either belongs to a black pixel or to a pixel with a black neighbour, so must in particular $c$. But $c$ is the centre of a white $\sqrt{2}$-ball, so the pixel containing $c$ must be white. Sine black and white pixels cannot be neighbours by assumption, the contradiction follows.%
\end{proof}

\begin{proof}[Proof of Lemma \ref{LemLeftoverconfiguration}]
  Look at the centre point $c$ of the $4\times4$ pixels. Suppose
  it is white (the case where it is black is symmetric). Then one of the grey
  pixels having $c$ as a vertex has only white vertices (in the
  figure, it would be the lower left pixel). Call this pixel $A$.

  Since $A$ s grey, the boundary $\del X$ must intersect one of its
  edges, and since all of its corners are white, an edge intersected
  by $\del X$ must be intersected at least twice. Note that only the
  edges of $A$ that are shared with another grey pixel can be
  intersected by $\del X$. But then by Lemma
  \ref{LemTwoIntersections}, one of the grey pixels in the figure
  would have to be non-grey - a contradiction. So this configuration
  cannot occur.
\end{proof}

\begin{proof}[Proof of Lemma \ref{Lem2x4Configuration}]
  The proof follows from Lemma \ref{Lem2x3grey}: Look at the two
  pixels in the column to the right of the configuration (the red ones
  in Figure \ref{Fig2x4Configuration}, right). These cannot both be
  grey, since that would violate Theorem \ref{Lem2x3grey}, so at least
  one of them must have another colour. Say that one of them is black
  (the other case is symmetric). Depending on which one of the red
  pixels is black, some part of Lemma \ref{LemSkalfarvessorte2} tells
  us that the $2\times4$ configuration in Figure
  \ref{Fig2x4Configuration} must have more black pixels than what is
  the case - a contradiction. So the configuration cannot occur.
\end{proof}

\begin{proof}[Proof of Lemma \ref{Lem4Intersections2x2}]
  Let $C$ denote the $2\times 2$ configuration of grey pixels. By a proof copying the proof of Lemma \ref{LemTwoIntersections}, if any edge of $C$ is intersected
  by $\del X$ multiple times, then one of the pixels in $C$ would not be grey - a contradiction. So if $\del X$ intersects all edges of
  $C$, it only intersects each edge once. Hence $C$ has two black
  vertices on one diagonal and two white vertices on the other. Let
  $L$ be the line connecting the black vertices and $M$ the line
  connecting the white vertices.

  There is a black path $\rho_X(L)$ in $X$ connecting the two
  black corners of the pixel. By Corollary \ref{CorPathInSpindle} and Lemma \ref{LemSpindleIsIntersection}, this path must belong to $B_{\sqrt{2}}(p)$, where $p$ is the centre of $C$.

  Similarly, there is a path $\rho_{X^C}(M)$ in $X^C\cup\del X$
  connecting the two white vertices and contained in $B_{\sqrt{2}}(p)$. If
  $\rho_{X^C}(M)$ contains points of $\del X$, we may push these
  points a little along the normal vector field of $\del X$ to get a
  path $\tilde{\rho}$ in $X^C$ connecting the white vertices of
  $C$. This alteration can be made in the interior of $B_{\sqrt{2}}(p)$ since the endpoints of $M$ are not boundary points, hence $\tilde{\rho}$ is also contained in $C$.

  But then we have a black path in $B_{\sqrt{2}}(p)$ separating the white vertices
  of $C$, and a white path in $B_{\sqrt{2}}(p)$ connecting them. This means that the
  two paths must necessarily intersect each other in a point that must
  be both black and white -- a contradiction. So $\del X$ cannot
  intersect all four edges of $C$.
\end{proof}

\begin{proof}[Proof of Theorem \ref{Thm2x2}]
Combining Lemmas \ref{LemLeftoverconfiguration}, \ref{Lem2x4Configuration} and \ref{Lem4Intersections2x2} yields most of the result. The only configuration remaining is the one centred at $(45,22)$ in Figure \ref{FigKonfigurationer}. But this is also not possible: If it where, the middle grey pixels would contain some boundary point $p\in\del X$. Then there would be a white $\sqrt{2}$-ball with $p$ in its boundary, and such a ball would either be centred inside the $4\times4$ pixels of the configuration, or in one of the pixels neighbouring the configuration. Since the pixel containing the centre of the white ball is white itself, the white ball cannot be centred inside the configuration. But it also cannot be centred in one of the pixels neighbouring the configuration, since this would mean that a white pixel and a black one were sharing boundary points, which is against our assumption.
\end{proof}
\newpage
%\printbibliography
\bibliographystyle{plain}
\bibliography{ms}
\end{document}